\numberwithin{equation}{section}
\newtheorem{theorem}{Theorem}
\newtheorem{lemma}{Lemma}[section]
\newtheorem{corollary}{Corollary}[section]
\newtheorem{proposition}{Proposition}[section]
\newtheorem{remark}{Remark}[section]
\numberwithin{equation}{section}
\newcommand{\de}{\delta}
\newcommand{\Ga}{\Gamma}
\newcommand{\La}{\Lambda}
\def\R{{\mathbb{R}}}
\def\N{{\mathbb{N}}}
\def\Z{{\mathbb{Z}}}
\def\T{{\mathbb{T}}}
\def\Cy{{\mathcal{C}}}
\def\D{{\mathcal{D}}}
\def\E{{\mathcal{E}}}
\def\G{{\mathcal{G}}}
\def\F{{\mathcal{F}}}
\newcommand{\triple}{|\!|\!|}
\begin{document}
\title[Energy Diffusion]
{Macroscopic energy diffusion for a chain of anharmonic
oscillators}
\author{Stefano Olla}
\address{Stefano Olla\\
CEREMADE, UMR CNRS 7534\\
Universit\'e Paris-Dauphine\\
75775 Paris-Cedex 16, France \\
}%\emph{and} % }
% \address{
 % INRIA - Universit\'e Paris Est, CERMICS, Projet MICMAC, Ecole des
 % Ponts ParisTech\\ 
 %  6 \& 8 Av. Pascal, 77455 Marne-la-Vall\'ee Cedex 2, France }
\email{{\tt olla@ceremade.dauphine.fr}}
\author{Makiko Sasada}
\address{Makiko Sasada\\
Department of Mathematics, Keio University\\
3-14-1, Hiyoshi, Kohoku-ku, Yokohama-shi, Kanagawa, 223-8522, Japan}
\email{{\tt sasada@math.keio.ac.jp}}
\date{\today}
\begin{abstract}
We study the energy diffusion in a chain of anharmonic
oscillators where the Hamiltonian dynamics is perturbed by a local energy
conserving noise. We prove that under diffusive rescaling of
space-time, energy fluctuations diffuse and evolve following an
infinite dimensional linear stochastic differential equation driven by
the linearized heat equation. We also give variational expressions for the
thermal diffusivity and some upper and lower bounds. 
\end{abstract}
\thanks{ This paper has been partially supported by the
  European Advanced Grant {\em Macroscopic Laws and Dynamical Systems}
  (MALADY) (ERC AdG 246953), by Agence Nationale de la Recherche, 
  under grant ANR-2010-BLAN-0108 (SHEPI).\\
  We thank Professor T. Funaki for insightful discussions and his interest in this work.} 
\keywords{}
\subjclass[2000]{}
%%%%%%    TEXT START    %%%%%%

\maketitle

\section{Introduction}
\noindent
The deduction of the heat equation or the Fourier law for the macroscopic
evolution of the energy through a diffusive space-time scaling limit
from a microscopic dynamics given by Hamilton or Schr\"odinger
equations, is one of the most important problem in non-equilibrium
statistical mechanics (\cite{BRL}). One dimensional chains of
oscillators have been used as simple models for this study. In the context of the
classical (Hamiltonian) dynamics, it is clear that non-linear
interactions are crucial for the diffusive behavior of the energy. In
fact, in a chain of harmonic oscillators the energy evolution is
ballistic (\cite{RLL}). In this linear system, the energy of each mode
of vibration is conserved. Non-linearities introduce interactions
between different modes and destroy these conservation laws and give a certain
ergodicity to the microscopic dynamics.  

In order to describe the mathematical problem, let us introduce some
notation we will use in the rest of the paper.
We study a system of anharmonic oscillators, each is denoted by an
integer $i$. We denote by $(q_i, p_i)$ the corresponding position and
momentum (we set the mass equal to 1). Each pair of consecutive 
particles $(i, i + 1)$ are connected by a spring which can be anharmonic. The
interaction is described by a potential energy  
$\bar{V}(q_{i+1}- q_{i})$. 
We assume that $\bar{V}$ is a nonnegative 
%symmetric 
smooth function satisfying 
\begin{displaymath}
Z_{\beta}:=\int_{\R}e^{-\beta \bar{V}(r)}dr < \infty
\end{displaymath}
for all $\beta > 0$. 
Let $a$ be the equilibrium inter-particle distance, where $\bar{V}$
attains its minimum that we assume to be $0: \bar{V} (a) = 0$. 
It is convenient to work with inter-particle distances as coordinates,
rather than absolute particle positions, so we 
define $r_j = q_j - q_{j-1} - a$. 
We denote the translated function $\bar{V}(\cdot + a )$ by $V(\cdot)$ hereafter. Namely, we
assume $V(0)=0$. 
The configuration of the system is given by $\{p_j , r_j\}_j$, and  
energy function (Hamiltonian) defined for each configuration is
formally given by
\begin{equation*}
H =\sum_{j} \E_j, \qquad \E_j = \frac{1}{2} p_j^2 + V (r_j). 
\label{eq:16}
\end{equation*}
The choice of $\E_j$ as the
energy of each oscillator is a bit arbitrary, 
because we associate the potential energy of the bond $V (r_j)$ to the
particle $j$. Different choices can be 
made, but this one is notationally convenient.

The corresponding Hamiltonian dynamics is given by the equations of
motion: 
\begin{equation}
\label{hequ}
\begin{split}
r'_j(t) &= p_j(t)- p_{j-1}(t),   \\
p'_j(t) &= V'(r_{j+1}(t))- V'(r_j(t)).
\end{split}
\end{equation}
We are interested in the macroscopic evolution of the empirical energy
profile under a diffusive macroscopic space-time scaling. More
precisely the limit, as $N\to\infty$, of the energy distribution on
$\mathbb R$ defined by 
\begin{equation}
  \label{eq:15}
  \frac 1N \sum_{i} \E_i(N^2 t) \delta_{i/N}(dy).
\end{equation}
Energy is not the only conserved quantity under the dynamics
\eqref{hequ}. Formally length and 
momentum are also integral of the motion.  
In one dimensional system, even for anharmonic interaction, generically we expect a
superdiffusion of the energy, essentially because of the momentum
conservation (\cite{LLP,BBO1}). Adding a pinning potential
$U(q_i)$ on each particle, it will break the translation
invariance of the system and the momentum conservation, and we expect
a diffusive behavior for the energy, i.e. the energy profile defined
by \eqref{eq:15} would converge to the solution $\E(t,y)$ of a heat equation
\begin{equation}
  \label{eq:nonl}
  \partial_t \E = \partial_y \left(D(\E) \partial_y \E\right)
\end{equation}
under specific conditions on the initial configuration. 
The diffusivity $D= D(\E)$ is defined by the Green-Kubo formula
associated to the corresponding infinite dynamics in equilibrium at
average energy $\E$ (see below for the definition). 

As the deterministic problem is out of reach mathematically, it has
been proposed an approach that models the chaotic effects of the
non-linearities by stochastic perturbations of the dynamics
that conserves energy. In the harmonic case, random exchanges of momentum of nearest
neighbor particles that conserve total energy but not momentum 
have been studied (\cite{bo, fno, ber}). 
Stochastic exchanges that also conserve total momentum have been
considered in \cite{BBO1, BBO2}, where a divergence of the diffusivity
is proven for unpinned harmonic chains. The stochastic perturbations
considered in these papers are very degenerate (of hypoelliptic type),
since they act only on the momenta of the particles, and not on the
positions. In particular these stochastic dynamics conserve also the
total length $\sum_j r_j$.

In this article we want to deal with anharmonic chains with noise that
conserves only energy. For reasons we will explain in a moment, we
need more elliptic stochastic perturbations that act also on the
positions. In the case of one-dimensional unpinned chains, there is a
way to define these perturbations locally (see the next section for
details) just using 
squares of vector fields that appear in the Liouville vector field that
generates the Hamiltonian dynamics. With these perturbations,
we have a dynamics that conserves \emph{only} the total energy. 
As a result, 
the dynamics has a one-parameter family of invariant measures given by
the grand canonical measures $\{\nu_{\beta}, \beta >0\}$ defined by 
\begin{align*}
\nu_{\beta}=\prod_j \frac{e^{-\beta \E_i}}{\sqrt{2\pi
    \beta^{-1}}Z_{\beta}} dp_j dr_j. 
\end{align*}
Notice that $\{r_j, p_j\}_j$ are independently
distributed under these probability measures. 
% The ergodic probabilities of the dynamics are the
% corresponding conditioned measures on the energy surfaces
% (microcanonical Gibbs measures).
% As $N\to\infty$ the
% microcanonical measure of energy $e$ converges to the corresponding 
% $\nu_{\beta(e)}(=\nu^{\infty}_{\beta(e)})$, in the sense of the finite
% dimensional distribution (equivalence of ensembles), with
% corresponding inverse temperature $\beta(e)$ given by the usual
% thermodynamic relation.
  
So we can consider the system starting with the equilibrium distribution
$\nu_{\beta}$ at temperature $T = \beta^{-1}$.  
We can prove the diffusive scaling limit by the results in this article in the following
linearized sense: 
define the space time energy covariance in equilibrium at temperature
$\beta^{-1}$:
\begin{equation*}
  C(i,j,t) = \mathbb E\left( \E_i(t) \E_j(0) \right) - \bar\E_\beta^2
\end{equation*}
where $\mathbb E$ denotes the expectation for the stochastic dynamics
starting with the grand-canonical measure at temperature $\beta^{-1}$
and $\bar\E_\beta$ is the expectation value of $\E_0$ under $\nu_{\beta}$.
%inverse of $\beta(e)$.
In the following we will
denote simply by 
$e$ the corresponding value. 
Clearly $C(i,j,0) = \delta_{i,j} \chi_\beta$, where $\chi_\beta$ is the
variance of $\E_0$ under $\nu_\beta$. 
 Then it follows by our results that
 \begin{equation*}
   \lim_{N\to\infty} N C([Ny], [Nx], N^2 t) \ =\  
\frac{\chi_\beta}{\sqrt{2\pi D t}} e^{-(x-y)^2/2\mathcal D t}
 \end{equation*}
weakly, in the sense of the convergence of $N^{-1} \sum_{i,j} G(i/N)
F(j/N) C(i,j, N^2 t)$ for good test functions $G$ and $F$ of $\mathbb R$.
% weakly, where $u(y,t)$ is the solution of the linear heat equation
% \begin{equation*}
%   \partial_t u \ = \mathcal D \partial^2_y u
% \end{equation*}
% with initial conditions $u(y,0) = \delta(y)\chi(\beta)$.
Here 
$ D = D(\bar\E_\beta)$ which is formally given by
\begin{equation}
  \label{eq:17}
  D\ = \ \lim_{t\to\infty} \frac 1t \sum_{i\in \mathbb Z} \ i^2\  C(i,0,t).
\end{equation}

We actually prove a stronger result at the level of fluctuation
fields. 
If the system is in equilibrium at temperature $\beta^{-1}$, then standard central
limit theorem for independent variables tells us that as $N\to\infty$
energy has Gaussian fluctuations, i.e. the energy fluctuation field
\begin{equation*}
  \label{eq:20}
  Y^N = \frac{1}{\sqrt N} \sum_i \delta_{i/N} \left\{\E_i(0) -
    \bar\E_\beta  \right\}
\end{equation*}
converges in law to a delta correlated centered Gaussian field $Y$
\begin{equation*}
   \mathbb E\left[ Y(F) Y(G) \right] = \chi_\beta \int
   F(y) G(y) dy. 
\end{equation*}
% where $\chi=\chi(\beta)$ is the variance of $\E_0$ under
% $\nu_\beta$.  
 
In this article we prove that these \emph{macroscopic} energy
fluctuations evolve diffusively in time (after a diffusive space-time
scaling) , i.e. that the time dependent distribution
\begin{equation*}
  \label{eq:22}
   Y^N_t = \frac{1}{\sqrt N} \sum_i \delta_{i/N} \left\{\E_i(N^2 t) -
     \bar\E_\beta\right\} 
\end{equation*}
converges in law to the solution of the linear SPDE
\begin{equation*}
  \label{eq:23}
  \partial_t Y = D\; \partial^2_y Y \; dt +
 \sqrt{2D \chi_\beta} \ \partial_y B(y,t)
\end{equation*}
where $B$ is the standard normalized space-time white noise.
In this sense, in equilibrium, energy fluctuation evolves
macroscopically following the linearized
heat equation.

The main point in the proof of this result is the following. Since
total energy is conserved, locally the energy of each particle is
changed by the energy currents with its neighbors, i.e. applying the
generator $L$ of the process to the energy $\E_i$ we obtain
\begin{equation}
  \label{eq:24}
  L \E_i = W_{i-1, i} - W_{i,i+1} 
\end{equation}
 where $W_{i,i+1} = -p_i V'(r_{i+1}) + W^S_{i,i+1}$. Here $-p_i
 V'(r_{i+1})$ is the instantaneous energy current associated to the
 Hamiltonian mechanism, while $W^S_{i,i+1}$ is the instantaneous
 energy current due to the stochastic part of the dynamics.
While \eqref{eq:24} provides automatically one space derivative
already at the microscopic level, $W_{i,i+1}$ is not a space-gradient.
In this sense this model falls in the class of the \emph{non-gradient
models}. Some of these non-gradient models have been studied with a
method introduced by Varadhan \cite{V}. 
The main point of this method is to prove that $W_{i,i+1}$ can be
approximated by a fluctuation-dissipation decomposition
\begin{equation*}
  \label{eq:25}
  W_{i,i+1} \ \sim \ D \nabla \E_i \ + \ L F
\end{equation*}
for a properly chosen sequence of local functions $F$. 
In the harmonic case of our model,
% with noise conserving energy, 
this decomposition
is exact for every configuration, i.e. there exists a local second
order polynomial $F$ such that $ W_{i,i+1}= D \nabla \E_i + L F$ for a
constant $D$ (cf. Remark \ref{rmk-harmonic} that contains an
equivalent decomposition). In the anharmonic case, such
decomposition can be only approximated by a sequence of local functions
$F_K$ in the sense that the difference has a small space-time
variance with respect to the dynamics in equilibrium at given temperature
(consequently $D$ is a function of this temperature). 

In order to do such decomposition, we have to use Varadhan's
approach to non-gradient systems \cite{V} and the generalization
to non-reversible dynamics \cite{X,K}. The main ingredients of the
methods are a \emph{spectral gap} for the stochastic part of the
dynamics, and a \emph{sector condition} for the generator $L$ of the
dynamics. It is in order to prove these properties that we need such
elliptic noise acting also on the positions.

We have to limit ourselves to these results on the equilibrium
fluctuation and we are not able to prove the full non-linear equation 
\eqref{eq:nonl} starting from a global non-stationary
profile. Unfortunately most of known techniques  to prove such
hydrodynamic limits in diffusive scaling are based on relative entropy
techniques (cf. \cite{GPV}, \cite{Y}, \cite{KL}) that do not work for
the energy diffusion in this model. 

The article is organized as follows: In Section \ref{model} we
introduce our model and state main results. In Section
\ref{sec-strategy}, we give
the strategy for proving the convergence of the finite dimensional
distributions.
 The complete proof is divided into several sections
\ref{sec:space-time-variance-bounds},
\ref{sec-hilbert}, \ref{sec-bgp}, \ref{sec:closedformch2} and \ref{sec:lie-algebra},
with sector condition proved in Section \ref{sec-sec}, and the spectral
gap in Section \ref{sec:spectral-gap-bound}.
The tightness shown in Section \ref{sec:tightness} concludes the proof.
Variational expressions for the thermal diffusivity are obtained in Section \ref{sec-diff-coeff} and some bounds on it are proven in Section
\ref{sec:upperboundch3}.

\section{Model and results}
\label{model}

We will now give a precise description of the dynamics. 
We consider a system of $N$ anharmonic oscillators in one-dimensional
space, whose 
hamiltonian dynamics is perturbed by a random dynamics that conserves
total energy. We consider a periodic boundary condition, but the
results can be generalized to different boundary conditions or also to
the infinite system.

Let $\T := (0, 1]$ be the $1$-dimensional torus, and for a positive
integer $N$ denote by $\T_N$ the lattice torus of length $N$ : $\T_N =
\{1, \ldots N\}$. The configuration space 
is denoted by $\Omega^N= (\R^2)^{\T_N}$ and a typical configuration is denoted by $\omega = (p_i,r_i)_{i \in \T_N}$
where $r_i= q_i - q_{i-1}$ represents the inter-particle distance
between the particles $i-1$ and $i$ (here we assume $a=0$ without loss of generality), and $p_i$ represents the velocity
of the particle $i$. All particles have mass equal to 1. 
The configuration evolves in time as a Markov process on $\R^{2N}$
with infinitesimal generator given by
\begin{displaymath}
L_N^{\gamma} = A_N+ \gamma S_N
\end{displaymath}
where 
\begin{displaymath}
A_N=  \sum_{i \in \T_N} (X_i-Y_{i,i+1}),  
\quad
S_N = \frac{1}{2} \sum_{i \in \T_N} \{(X_i)^2+(Y_{i,i+1})^2\}, 
\end{displaymath}
\begin{equation*}
  \label{eq:1}
  Y_{i,j} = p_i\partial_{r_j} - V'(r_j) \partial_{p_i}, \qquad X_i = Y_{i,i},
\end{equation*}
and $N+1 \equiv 1$. Notice that $A_N$ is the generator of the Hamiltonian dynamics (the
Liouville operator) while $S_N$ is the generator of the stochastic
perturbation. Here $\gamma>0$ is the strength of the stochastic
perturbation. We do not need any condition on $\gamma$, as
long as it is strictly positive. 

We assume that the function $V:\R \to \R_+$ satisfies the following
three properties: 
\begin{enumerate}[(i)]
\item  $V(r)$ is a smooth symmetric
  function.
\item $0 < \delta_{-} \le V''(r) \le \delta_+ < + \infty$.
% \item  $V(r)$ is strictly increasing in $\R_+$.
% \item There exist some constants $d_+$ and $d_-$ such that 
% \[
% 0 < d_- \le \frac{\sqrt{2V(r)}}{V^{\prime}(r)} \le d_+ < \infty
% \]
% for all $r >0$.
% \item
% The pair of constants $d_+$ and $d_-$ in (iii) satisfies $d_-/d_+ >
% (3/4)^{1/16}$.
\item \label{tech-ass} $\delta_-/\delta_+ > (3/4)^{1/16}$.
\end{enumerate}

% \begin{remark}
% By the assumptions (i), (ii) and (iii), it is easy to show that  
% \[
%  \frac{r^2}{2d_+^2} \le V(r) \le \frac{r^2}{2d_-^2}
% \]
% for all $r \in \R$.
% \end{remark}
\begin{remark}
The assumption (\ref{tech-ass}) is quite technical and required only in the proof of the spectral gap estimate in Section \ref{sec:spectral-gap-bound}.
\end{remark}

We denote the energy associated to the particle $i$ by 
$$\E_i = \frac {p_i^2}{2} + V(r_i)
$$
and the total energy defined by 
$H = \sum_{i \in \T_N} \E_i$ 
which denotes the Hamiltonian of the original deterministic dynamics.

\begin{remark}
  The total energy satisfies $L_N^{\gamma}(H) = 0$,
  i.e. total energy is a conserved quantity.
\end{remark}
\begin{remark}
  The other important conservation laws of the Hamiltonian dynamics, for the
  total length $\sum_{i \in \T_N} r_i$ and the total momentum $\sum_{i \in \T_N} p_i$, are
  destroyed by the stochastic noise $S_N$. In fact 
  $L_N (\sum_i r_i )= \gamma S_N (\sum_i r_i) = - \gamma\sum_i V'(r_i)$, 
 and   $L_N (\sum_i p_i) = \gamma S_N (\sum_i p_i) = -
 \frac{\gamma}{2}\sum_i (p_{i-1} +  p_i)V''(r_i)$. 
\end{remark}

We define a probability measure $\nu^N_{\beta}$ on $\Omega^N$ by
\begin{align*}
\nu^N_{\beta}(dpdr)=\prod_{i=1}^{N}\frac{\exp \left(-\beta \left(\frac{p_i^2}{2}+V \left(r_i \right) \right) \right)}{\sqrt{2\pi \beta^{-1}}Z_{\beta}}dp_i dr_i
\end{align*}
where
\begin{displaymath}
Z_{\beta}:=\int_{\R}e^{-\beta V(r)}dr < \infty.
\end{displaymath}

Denote by $L^2(\nu^N_{\beta})$ the Hilbert space of functions $f$ on $\Omega^N$ such that $\nu^N_{\beta}(f^2) < \infty$.
$S_N$ is formally symmetric on $L^2(\nu^N_{\beta})$ and $A_N$ is
formally antisymmetric on $L^2(\nu^N_{\beta})$. In fact, it is easy to
see that for smooth functions f and g in a core of the operators $S_N$
and $A_N$, we have for all $\beta > 0$ 
\begin{displaymath}
\int_{\R^{2N}} S_N(f) g \ \nu^N_{\beta}(dpdr) =\int_{\R^{2N}} f S_N(g) \
\nu^N_{\beta}(dpdr), 
\end{displaymath}
and
\begin{displaymath}
\int_{\R^{2N}} A_N(f)g \  \nu^N_{\beta}(dpdr) =-\int_{\R^{2N}} f A_N(g) \
\nu^N_{\beta}(dpdr). 
\end{displaymath}
In particular, the diffusion is invariant with respect to all the measures $\nu^N_{\beta}$. The distribution $\nu_{\beta}^N$ is called canonical
Gibbs measure at temperature $T =\beta^{-1}$. Notice that ${r_1,
  . . . , r_N, p_1, . . . , p_N}$ are independently distributed under
this probability measure. 

On the other hand, for every $\beta > 0$ the Dirichlet form of the diffusion with respect to $\nu^N_{\beta}$ is given by
\begin{displaymath}
\D_{N, \beta}(f)= \frac{\gamma}{2} \int_{\R^{2N}}  \sum_{i \in \T_N} \{ [X_i(f)]^2+[Y_{i,i+1}(f)]^2 \} \nu^N_{\beta}(dpdr).
\end{displaymath}

We will use the notation $\nu_\beta$ for the product measures on the configuration space $\Omega:=(\R^2)^{\Z}$, namely on the infinite lattice with marginal given by $\nu_\beta |_ {\{1,2, \dots, N\}} = \nu^N_{\beta}$. The expectation with respect to $\nu_\beta$ will be sometimes denoted by 
\begin{displaymath}
\int_{\Omega}  f \nu_{\beta}(dpdr) =  \langle f \rangle _{\beta}.
\end{displaymath}

Denote by $\{\omega(t)=(p(t),r(t)); t \ge 0 \}$ the Markov process
generated by $N^2L_N^{\gamma}$ (the factor $N^2$ 
corresponds to an acceleration of time). Let $C (\R_+, \Omega^N)$ be
the space of continuous 
trajectories on the configuration space. For any fixed time $T > 0$ and for a given measure $\mu^N$ on $\Omega^N$, the probability measure on $C([0, T], \Omega^N)$ induced by this Markov process starting from $\mu^N$ will be denoted by $\mathbb{P}_{\mu^N}$ . As usual, expectation with respect to  $\mathbb{P}_{\mu^N}$ will be denoted by $\mathbb{E}_{\mu^N}$.
The diffusion generated by $N^2 L_N^{\gamma}$ can also be described by
the following system 
of stochastic differential equations
\begin{align*}
dp_i(t) =  N^2 [V'(r_{i+1} )-V'(r_i)-\frac{\gamma p_i}{2}& \{V''(r_i)+V''(r_{i+1}) \} ] dt  \\
&- \sqrt{\gamma}N \{V'(r_{i+1})dB_i^1 + V'(r_i)dB_i^2 \}, \\
dr_i(t) = N^2 [p_i- p_{i-1}- \gamma V'(r_i) ] dt  + & \sqrt{\gamma}N
\{p_{i-1}dB_{i-1}^1 + p_i dB_i^2 \}   
\end{align*}
where $ \{B_i^1, B_i^2\}_{i \in \T_N}$ are $2N$-independent standard
Brownian motions.

Since total energy is conserved, the movement is constrained on the
\emph{microcanonical} surface of constant energy
\begin{equation}
  \label{eq:mcsurf}
  \Sigma_{N,E} = \left\{\omega \in \Omega^N ; \sum_{i=1}^N \E_i = N E \right\}.
\end{equation}
Our conditions on $V$ assure that these surfaces are always
connected. 
The vector fields $\{X_i, Y_{i,i+1}, i=1, \dots, N\}$ are tangent to
this surface, and as we show in Section \ref{sec:lie-algebra}, $\text{Lie}\{X_i, Y_{i,i+1}, i=1,
\dots, N\}$ generates the all tangent space. Consequently the
\emph{microcanonical} measures 
$$
\nu_{N,E}(\cdot) = \nu^N_{\beta}(\cdot|\Sigma_{N,E})
$$
are ergodic for our dynamics. We could have chosen $\nu_{N,E}$ as
initial distribution, but since by the equivalence of ensembles it
converges to $\nu_{\beta(E)}$ as $N\to\infty$, it would have been
irrelevant. Here $\beta(E)$ is defined as the inverse function of
\begin{equation}
  \label{eq:4}
  \bar\E_\beta := E(\beta) = \int \E_0 d\nu_\beta =\frac 1{2\beta} +
  \langle V(r_0) \rangle _{\beta}\ .
\end{equation}
%where $\tilde V(\beta)=\langle V(r_0) \rangle _{\beta}$.

By Ito's formula, we have
\begin{equation}
d\E_i(t) =  N^2 [W_{i-1,i}-W_{i,i+1}] dt  + N \{ \sigma_{i-1,i} dB_{i-1}^1 - \sigma_{i,i+1} dB_i^1 \}\label{eq:3}
\end{equation}
where 
\begin{equation}
  \begin{split}
    W_{i,i+1} &= W^A_{i,i+1}+W^S_{i,i+1},\\
    W^A_{i,i+1} &= -p_iV'(r_{i+1}),\\
    W^S_{i,i+1} &= \frac{\gamma}{2}\{p_i^2V''(r_{i+1})-V'(r_{i+1})^2 \},\\
    \sigma_{i,i+1} &= \sqrt{\gamma}p_iV'(r_{i+1}).\label{eq:2}
  \end{split}
\end{equation}

We can think of $W_{i,i+1}$ as being the instantaneous microscopic
current of energy 
between the oscillator $i$ and the oscillator $i + 1$. 
Observe that the current $W_{i,i+1}$ cannot be written as the
gradient of a local function, neither by an exact
fluctuation-dissipation equation, 
i.e. as the sum of a gradient and a dissipative term of the form
$L_N^{\gamma}(\tau_i h)$. That means, we are in the nongradient case.

Define the 
empirical energy distribution associated to the process by
\begin{equation*}\label{eq:empiricalch3}
\pi^N_t( \omega, du)=\frac{1}{N} \sum_{i \in \T_N} \E_{i}(t) \de_{\frac{i}{N}}(du),\quad  0 \le t \le T, \quad u \in \T,
\end{equation*}
and $ \langle \pi^N_t, f \rangle $ stands for the integration of $f$
with respect to $\pi^N_t$.

Notice that we are using here as space variable the \emph{material}
coordinate $i/N$, and not the physical positions $q_i$.
% I.e. we are describing the mouvement of the energy in Lagrangian coordinates, not in Euler
% coordinates.
These two descriptions are equivalent but in our model the
material (Lagrangian) coordinates simplify notations.

It is easy to prove that, starting with the equilibrium measure
$\nu^N_{\beta}$ (or with the microcanonical $\nu_{N,\bar\E_\beta}$), we have 
$\pi^N_t \longrightarrow \bar\E_\beta du$
as weak convergence in probability.

We want to investigate the fluctuation of the empirical measure
$\pi^N$ with respect to this limit.
 Denote by $Y^N_t$ the \emph{empirical energy fluctuation field} acting on a smooth function $H : \T \to \R$ as
\begin{displaymath}
Y^N_t (H) = \frac{1}{\sqrt{N}} \sum_{i \in \T_N} H(\frac{i}{N}) \{\E_{i}(t) - \bar\E_\beta\}.
\end{displaymath}
The limit process will be described by $\{Y_t\}_{ t \ge 0}$, the
stationary generalized Ornstein-Uhlenbeck process with zero mean and
covariances given by 
\begin{displaymath}
\mathbb{E}[Y_t(H_1)Y_0(H_2)] = 
\frac{\chi_\beta}{\sqrt{4\pi tD_\beta}}\iint_{\R^2}du dv
\bar{H_1}(u) e^{-\frac{(u-v)^2}{4 tD_\beta}} \bar{H_2}(v) 
\end{displaymath}
for every $t \ge 0$. Here $\chi_\beta$ stands for the variance
of the energy (the thermal capacity in this context) given by 
$$
\chi_\beta=\langle \E_0^2 \rangle _{\beta}- \langle \E_0 \rangle
_{\beta} ^2 = \frac{1}{2 \beta^2}- \frac d{d\beta} <V(r_0)>_\beta
$$ 
and $\bar{H_1} (u)$ (resp. $\bar{H_2} (u)$) is the periodic
extension of the smooth function $H_1$ (resp. $H_2$) to the real line,
and $D_\beta$ is the diffusion coefficient determined later 
(see Corollary \ref{cor:decompositionch3} and formula
\eqref{eq:variationalch3}).  

Consider for $k > \frac{5}{2}$ the Sobolev space
$\mathfrak{H}_{-k}$ of the distributions $Y$ on $\mathbb T$ such that
they have finite norm
\begin{equation*}
  \|Y\|_{-k}^2 = \sum_{n\ge 1} (\pi n)^{-2k} |Y(e_n)|^2
\end{equation*}
with $e_n(x) = \sqrt 2 \sin(\pi n x)$. Here $Y(e_n)$ denotes the
distribution $Y$ applied to the function $e_n(x)$.

Denote by $\mathbb{Q}_N$ the probability measure
on $C([0,T] , \mathfrak{H}_{-k})$ induced by the energy fluctuation
field $Y^N_t$ and the Markov process $\{ \omega^N(t), t \ge 0\}$
defined at the beginning of this section, starting from the
equilibrium probability measure $\nu^N_{\beta}$. Let $\mathbb{Q}$ be
the probability measure on the space $C([0, T], \mathfrak{H}_{-k})$
corresponding to the generalized Ornstein-Uhlenbeck process $Y_t$
defined above. 
We are now ready to state the main result of this work.

\begin{theorem} \label{thm:main}
The sequence of the probability measures $\{\mathbb{Q}_N\}_{N \ge 1}$ converges weakly to the probability measure $\mathbb{Q}$.
\end{theorem}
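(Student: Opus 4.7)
The plan is to follow the classical martingale approach to equilibrium fluctuations, adapted here to the non-gradient nature of the energy current. First, I would apply It\^o's formula to $Y^N_t(H)$ using \eqref{eq:3} and perform a summation by parts on $\T_N$ to obtain the decomposition
\begin{equation*}
Y^N_t(H) \;=\; Y^N_0(H) \;+\; \int_0^t \frac{1}{\sqrt N} \sum_{i \in \T_N} (\nabla^N H)(i/N)\, N\, W_{i,i+1}(s)\, ds \;+\; M^N_t(H),
\end{equation*}
where $\nabla^N H(i/N) = N\{H((i+1)/N)-H(i/N)\}$ and $M^N_t(H)$ is a martingale whose predictable quadratic variation is computed directly from the noise coefficients $\sigma_{i,i+1}$ in \eqref{eq:2}. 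In equilibrium, the expectation of $\langle M^N(H)\rangle_t$ will converge to $2 D_\beta \chi_\beta \int_{\T} (\partial_y H)^2\, dy \cdot t$ once $D_\beta$ is identified.

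Second, because $W_{i,i+1}$ is not a discrete gradient of a local function, I would invoke Varadhan's non-gradient method in its non-reversible extension. The central step is to establish a fluctuation-dissipation decomposition
\begin{equation*}
W_{i,i+1} \;\approx\; D_\beta\, \nabla \E_i \;+\; L_N^{\gamma} F_K,
\end{equation*}
where $F_K$ is a sequence of local functions whose error has vanishing space-time variance under the equilibrium dynamics. The coefficient $D_\beta$ is fixed by a Green-Kubo variational formula (see Corollary \ref{cor:decompositionch3} and \eqref{eq:variationalch3}), and the existence of the decomposition reduces to showing that the cokernel of $S_N$ on an appropriate germ space of local currents is one-dimensional, spanned by $\nabla \E_i$. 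This uses three technical ingredients developed in the paper: the spectral gap for $S_N$ restricted to the microcanonical surface \eqref{eq:mcsurf} proved in Section \ref{sec:spectral-gap-bound} (where the assumption $\delta_-/\delta_+ > (3/4)^{1/16}$ enters), the sector condition for $A_N$ with respect to $S_N$ proved in Section \ref{sec-sec}, and the bracket generation of $\{X_i,Y_{i,i+1}\}$ of Section \ref{sec:lie-algebra} which ensures ergodicity of the microcanonical measures. A Boltzmann-Gibbs principle (Section \ref{sec-bgp}) then transfers the decomposition to statements about space-time averages of the fluctuation field using the variance bounds of Section \ref{sec:space-time-variance-bounds}.

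Third, once the replacement is justified, the $L_N^{\gamma} F_K$ contribution to the integral, rewritten via another application of It\^o's formula, becomes a telescoping boundary term of order $N^{-1/2}$ plus a martingale increment whose $L^2$ norm vanishes as first $N\to\infty$ and then $K\to\infty$. A further summation by parts on the remaining gradient term yields
\begin{equation*}
\int_0^t D_\beta \sum_{i \in \T_N} (\Delta^N H)(i/N)\, \frac{\E_i(s) - \bar\E_\beta}{\sqrt N}\, ds \;\longrightarrow\; \int_0^t D_\beta\, Y_s(\partial_y^2 H)\, ds.
\end{equation*}
Any limit point of $\{\mathbb{Q}_N\}$ therefore satisfies the martingale problem associated to $\partial_t Y = D_\beta \partial_y^2 Y\, dt + \sqrt{2 D_\beta \chi_\beta}\, \partial_y B$, which is well-posed (Holley-Stroock uniqueness for linear SPDEs), yielding convergence of finite-dimensional distributions.

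Finally, to upgrade to weak convergence in $C([0,T],\mathfrak{H}_{-k})$, I would invoke Mitoma's criterion: it suffices to prove tightness of the real-valued processes $Y^N_\cdot(H)$ for each smooth test function (or each basis element $e_n$), which follows from Aldous' criterion applied to the martingale decomposition above, combined with a uniform bound on $\sum_{n \ge 1} (\pi n)^{-2k}\, \mathbb{E}_{\nu^N_\beta}[Y^N_t(e_n)^2]$ valid for $k > 5/2$ using the independence of coordinates under $\nu^N_\beta$. This tightness argument is carried out in Section \ref{sec:tightness}. The hardest part of the whole program is unquestionably the fluctuation-dissipation decomposition: the delicate interplay of the spectral gap, the sector condition, and the Lie-algebraic ergodicity that pins down $D_\beta$ is what makes the theorem nontrivial, and it is precisely where the structural hypotheses on $V$ are consumed.
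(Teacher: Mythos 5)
Your proposal follows the same architecture as the paper: martingale decomposition via It\^o, a fluctuation--dissipation replacement obtained through Varadhan's non-gradient method with the $\mathcal H_{-1}$ Hilbert-space machinery, the spectral gap of Section~\ref{sec:spectral-gap-bound}, the sector condition of Section~\ref{sec-sec}, the Lie-algebra ergodicity of Section~\ref{sec:lie-algebra}, a Boltzmann--Gibbs step, and tightness of the $\mathfrak H_{-k}$-valued processes. This is essentially the paper's proof.

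One place where the description is slightly misaligned with what actually happens, and which is worth being precise about because it affects what still has to be proved: the one-dimensional complement of $\overline{L\Cy}$ in $\mathcal H_{-1}$ is spanned by the \emph{momentum} gradient $p_1^2-p_0^2$, not by $\nabla\E_i$ (Corollary~\ref{cor:directgradient}, Lemma~\ref{lem:decompositionch3}); the decomposition actually proved is $W_{0,1}+\kappa_\beta(p_1^2-p_0^2)\in\overline{L\Cy}$. Consequently the Boltzmann--Gibbs principle of Section~\ref{sec-bgp} is not a device for passing from $\mathcal H_{-1}$ estimates to space-time variance estimates --- that transfer is Theorem~\ref{gbg}, built on Proposition~\ref{prop:variancech3} --- but precisely the additional step needed to replace the resulting $p_i^2$-fluctuation by the energy fluctuation $(\chi_\beta\beta^2)^{-1}(\E_i-\bar\E_\beta)$, i.e.\ the term $I^3_{N,t}$ in~\eqref{eq:decomp}. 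If you write the replacement directly as $W_{i,i+1}\approx D_\beta\nabla\E_i+L^\gamma F_K$ you will not find a matching element of $\Cy_0$ and you will miss the reason the separate Lemma~\ref{lem:Boltzmann-Gibbs} is required at all; the two conventions agree only after that lemma is in place.
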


\begin{remark}
For each $H \in C^{\infty}(\T)$, 
\begin{align}
M^{H}_t:=Y_t (H) -Y_0(H) - \int^t_0 D_\beta Y_s(H'')ds  \label{eq:martingaleproblem1}, 
\end{align}
and
\begin{align}
N^{H}_t:=(M^{H}_t)^2 - 2t\chi_\beta D_\beta \int_\T H'(u)^2
du \label{eq:martingaleproblem2} 
\end{align}
are $L^1(\mathbb{Q})$-martingales.
\end{remark}

\section{Strategy of the proof of the main theorem}
\label{sec-strategy}

 We follow the argument in Section 11 in \cite{KL}. Theorem \ref{thm:main} follows from the following three statements:
\begin{enumerate}[(i)]
\item \label{tight} $\{ \mathbb{Q}_N\}_{N \ge 1}$ is tight,
\item  \label{clt} the one--time marginal of any limit point
  $\mathbb{Q}^*$ of a convergent subsequence of $\{ \mathbb{Q}_N\}_{N
    \ge 1}$ is the law of a centered Gaussian field $Y$ with
  covariance given by  
\begin{displaymath}
\mathbb{E}[Y(H_1)Y(H_2)] = \chi_\beta \int_\T H_1(u) H_2(u) du,
\end{displaymath}
\item \label{limp} all limit points $\mathbb{Q}^*$ of convergent subsequences of $\{ \mathbb{Q}_N\}_{N \ge 1}$ solve the martingale problems (\ref{eq:martingaleproblem1}) and (\ref{eq:martingaleproblem2}).
\end{enumerate}

The proof of (\ref{clt}) is obtained by a direct consequence of the central limit
theorem for independent variables. We will prove (\ref{tight}) in
section \ref{sec:tightness}.
We prove here the main point, i.e. (\ref{limp}).

For a given smooth function $H : \T \to \R$, we begin
by rewriting $Y^N_t(H)$ as 
\begin{equation}\label{eq-evo}
Y^N_t (H)=  Y^N_0 (H) + \int^t_0 \sqrt{N}\sum_{i \in \T_N}\nabla^NH
(\frac{i}{N})W_{i,i+1} ds + M^{H,N}(t) 
\end{equation}
where $\nabla^NH$ represents the discrete derivative of $H$:
\begin{displaymath}
(\nabla^NH)(\frac{i}{N})=N[H(\frac{i+1}{N})-H(\frac{i}{N})]
\end{displaymath}
and the martingale $M^{H,N}(t)$ is
\begin{displaymath}
M^{H,N}(t) = \int^t_0 \frac{1}{\sqrt{N}}\sum_{i \in \T_N}\nabla^NH (\frac{i}{N})\sigma_{i,i+1} dB^1_{i}.
\end{displaymath}

Denote by $\Cy$ the set of all smooth local functions $F$ on
$\Omega=(\R^2)^{\Z}$ with compact support and zero average with
respect to $\nu_\beta$, and
define the formal sum 
\begin{equation}
  \label{eq:5}
  \Gamma_F = \sum_{j\in\Z} \tau_j F
\end{equation}
where $\tau_j$ is the shift on $\Z$. 
Observe that, since $F$ is local, $X_{i} \Gamma_F$ and $Y_{i,i+1} \Gamma_F$
are always well-defined functions.

We want to introduce here a fluctuation-dissipation approximation of
the current $W_{i,i+1}$ in $\kappa_\beta (p^2_{i,i+1} - p^2_{i}) +
L^{\gamma} \tau_i F$, for a proper constant $\kappa_\beta
:=D_\beta\chi_\beta\beta^2$, and for that  
purpose we can decompose \eqref{eq-evo} with any fixed $F \in \Cy$
as follows:  
\begin{align}
\label{eq:decomp}
Y^N_t(H) =  Y^N_0(H) & + \int^t_0 Y^N_s ( D_\beta \Delta^NH) ds +
M^1_{N,F,t}(H) \\
\nonumber
& + I^1_{N,F,t}  +  I^2_{N,F,t}  + M^2_{N,F,t}
+ \kappa_\beta I^3_{N,t} 
\end{align}
where
\begin{align*}
&I^1_{N,F,t} (H) = \int^t_0 \sqrt{N}\sum_{i \in \T_N}\nabla^NH
(\frac{i}{N})[W_{i,i+1} + \kappa_\beta (p_{i+1}^2-p_i^2
) - L_N^{\gamma} (\tau_i F)]ds, \\ 
& I^2_{N,F,t} (H) = \int^t_0 \sqrt{N}\sum_{i \in \T_N}\nabla^NH
(\frac{i}{N}) L_N^{\gamma} (\tau_i F)ds, \\ 
& I^3_{N,t} (H) =   \int^t_0 \frac{1}{\sqrt{N}}\sum_{i \in \T_N}\Delta^NH (\frac{i}{N})[ (p_i^2 - \frac{1}{\beta}) -\frac{1}{\chi_\beta \beta^2} \{ \E_i - \bar\E_\beta \} ],  \\
& M^1_{N,F,t}(H) = \int^t_0 \frac{1}{\sqrt{N}}\sum_{i \in \T_N}\nabla^NH (\frac{i}{N})[ (\sigma_{i,i+1} - \sqrt{\gamma} Y_{i,i+1}( \Gamma_F) )dB^1_{i}  - \sqrt{\gamma} X_{i} (\Gamma_F) dB^2_{i}] , \\
& M^2_{N,F,t}(H) = \int^t_0 \frac{\sqrt{\gamma}}{\sqrt{N}}\sum_{i \in
  \T_N}\nabla^N H (\frac{i}{N})[ Y_{i,i+1} (\Gamma_F) dB^1_{i}  +
X_{i} (\Gamma_F) dB^2_{i}] . 
\end{align*}

The proof of (\ref{limp}) is reduced to the following lemmas:

\begin{lemma}\label{lem:rest}
For every smooth function $H: \T \to \R$ and every function $F \in \Cy$,
\begin{equation*}
\lim_{N \to \infty} \mathbb{E}_{\nu^N_{\beta}}\left[ 
\sup_{ 0 \le t \le T}(I^2_{N,F,t} (H)+M^2_{N,F,t}(H))^2
\right]=0.
\end{equation*}
\end{lemma}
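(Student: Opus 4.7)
The plan is a Dynkin martingale decomposition designed so that the drift $I^2$ and the martingale $M^2$ cancel each other up to a vanishing boundary term and a small additional martingale. Set
\[
G_N(\omega):=\sqrt N\sum_{i\in\T_N}(\nabla^N H)(i/N)\,\tau_i F,
\]
which lies in the domain of $L_N^\gamma$ because $F$ is local. By linearity of the generator, $I^2_{N,F,t}(H)=\int_0^t L_N^\gamma G_N(\omega(s))\,ds$, and since the process is generated by $N^2 L_N^\gamma$, Dynkin's formula gives
\[
I^2_{N,F,t}(H)=\frac{G_N(\omega(t))-G_N(\omega(0))}{N^2}-\frac{\mathcal M_t}{N^2},
\]
where $\mathcal M_t$ is the It\^o martingale associated with $G_N$. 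A direct computation from It\^o's formula and the SDE for $(p,r)$ yields
\[
\mathcal M_t=\sqrt\gamma\,N\int_0^t\sum_{j\in\T_N}\bigl[(Y_{j,j+1}G_N)\,dB_j^1+(X_jG_N)\,dB_j^2\bigr].
\]

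Expanding $Y_{j,j+1}G_N=\sqrt N\sum_i(\nabla^N H)(i/N)\,Y_{j,j+1}(\tau_iF)$ (and similarly for $X_jG_N$) and comparing with the definition of $M^2_{N,F,t}$, I would obtain
\[
I^2_{N,F,t}+M^2_{N,F,t}=\frac{G_N(\omega(t))-G_N(\omega(0))}{N^2}+\widetilde M_t,
\]
where $\widetilde M_t:=M^2_{N,F,t}-\mathcal M_t/N^2$ is a martingale whose $dB_j^1$-integrand equals
\[
\frac{\sqrt\gamma}{\sqrt N}\sum_i\bigl[(\nabla^N H)(j/N)-(\nabla^N H)(i/N)\bigr]Y_{j,j+1}(\tau_iF).
\]
Because $F$ is local, $Y_{j,j+1}(\tau_iF)$ vanishes outside a bounded neighborhood of $j$, on which the smoothness of $H$ forces $|(\nabla^N H)(j/N)-(\nabla^N H)(i/N)|=O(|i-j|/N)=O(1/N)$. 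Squaring each integrand, summing over $j\in\T_N$, and integrating in time produces $\mathbb E\langle\widetilde M\rangle_T=O(T/N^2)$; Doob's $L^2$ maximal inequality then gives $\mathbb E[\sup_t\widetilde M_t^2]\to 0$.

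The main obstacle is the uniform-in-$t$ control of the boundary term $[G_N(\omega(t))-G_N(\omega(0))]/N^2$. Stationarity of $\nu^N_\beta$ together with the fact that $G_N$ is a sum of $O(N)$ locally correlated mean-zero translates gives $\mathbb E[G_N^2]=O(N^2)$, so the term is pointwise $O(1/N)$ in $L^2$. To upgrade this to a supremum estimate I would combine stationarity with the Kolmogorov-type increment bound $\mathbb E[(G_N(t)-G_N(s))^2]\le CN^4|t-s|$, which follows from the It\^o representation above since the quadratic variation rate of $\mathcal M$ is $O(N^4)$: discretizing $[0,T]$ at scale $N^{-2}$, a union bound over grid points (using the concentration of the normalized sum $G_N/N$ around zero under $\nu^N_\beta$) controls the maximum at grid points, while the Kolmogorov modulus handles the fluctuations between them, so that $\mathbb E[\sup_t(G_N(\omega(t))-G_N(\omega(0)))^2/N^4]\to 0$. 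Together with the estimate on $\widetilde M_t$, this proves the lemma.
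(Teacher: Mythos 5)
Your decomposition is exactly the paper's (your $G_N$ is $N^2\zeta_{N,F}$), and the treatment of the small martingale $\widetilde M_t$ — Ito isometry plus Doob, exploiting the $O(1/N)$ cancellation in $\nabla^N H(j/N)-\nabla^N H(i/N)$ over the finite support of $F$ — is precisely what the paper does. Where you diverge is the boundary term $[G_N(\omega(t))-G_N(\omega(0))]/N^2 = \zeta_{N,F}(t)-\zeta_{N,F}(0)$. You propose a discretization at scale $N^{-2}$ with union bounds over grid points, concentration of $G_N/N$, and a Kolmogorov modulus for the in-between fluctuations. This machinery is unnecessary, and moreover the modulus-of-continuity step is not actually justified as stated: the $L^2$ increment bound $\mathbb{E}[(G_N(t)-G_N(s))^2]\le CN^4|t-s|$ alone (exponent $1$ in $|t-s|$) does not yield a pathwise Kolmogorov modulus — you would need higher moments or exponential-martingale tails within each grid interval, which you do not spell out. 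The much simpler observation, which the paper relies on, is that $F\in\Cy$ is a \emph{compactly supported} smooth local function, hence bounded; therefore $|\zeta_{N,F}(t)|\le N^{-3/2}\cdot N\cdot\|\nabla^N H\|_\infty\|F\|_\infty = O(N^{-1/2})$ \emph{deterministically} for all $t$, and $\sup_{0\le t\le T}(\zeta_{N,F}(t)-\zeta_{N,F}(0))^2 = O(1/N)$ almost surely. This one line replaces your entire Step on the boundary term. (The paper's stated rate ``$O(1/N^2)$'' refers to the stationary $L^2$ size at a single time — using locality and zero mean — but for the supremum the deterministic $O(1/N)$ bound from boundedness is what closes the argument.) So: correct strategy, correct martingale estimate, but an overcomplicated and slightly incomplete handling of a step that is trivial once you use that $F$ is bounded.
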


\begin{lemma}\label{lem:Boltzmann-Gibbs}
For every smooth function $H: \T \to \R$,
\begin{equation*}
\lim_{N \to \infty} \mathbb{E}_{\nu^N_{\beta}}\left[\sup_{0\le t\le T}
(I^3_{N,t} (H))^2\right] = 0.
\end{equation*}
\end{lemma}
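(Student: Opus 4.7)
The function $f_i := (p_i^2 - \beta^{-1}) - (\chi_\beta \beta^2)^{-1}(\E_i - \bar\E_\beta)$ is by design a mean-zero single-site observable orthogonal in $L^2(\nu_\beta)$ to the conserved energy field: a direct computation gives $\langle p_0^2 \E_0\rangle_\beta - \langle p_0^2\rangle_\beta \bar\E_\beta = 1/\beta^2$, so the coefficient $(\chi_\beta\beta^2)^{-1}$ is exactly the projection of $p_0^2-\beta^{-1}$ onto $\E_0-\bar\E_\beta$ in $L^2(\nu_\beta)$; orthogonality to $\E_j$ for $j\neq i$ is automatic from the product form of $\nu_\beta$. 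This identifies the lemma as an instance of the static Boltzmann--Gibbs principle, so my plan is to follow the standard route based on the Kipnis--Varadhan inequality combined with a one-block replacement and the equivalence of ensembles.

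First, I apply the Kipnis--Varadhan inequality (see the appendix of \cite{KL}) to the additive functional $I^3_{N,t}(H)$ of the stationary process. Setting $V_N := N^{-1/2}\sum_{i\in\T_N}\Delta^N H(i/N) f_i$, this gives
\begin{equation*}
\mathbb{E}_{\nu^N_\beta}\Bigl[\sup_{0\le t\le T}(I^3_{N,t}(H))^2\Bigr] \; \le \; \frac{C_0\, T}{\gamma N^2}\, \langle V_N,\, (-S_N)^{-1} V_N\rangle_\beta,
\end{equation*}
so it suffices to show $\langle V_N, (-S_N)^{-1} V_N\rangle_\beta = o(N^2)$. Choose a block size $\ell = \ell_N \to \infty$ with $\ell_N/N \to 0$, and split
\begin{equation*}
f_i = \bigl(f_i - F^\ell(\E^\ell_i)\bigr) + F^\ell(\E^\ell_i),
\end{equation*}
where $\E^\ell_i := (2\ell+1)^{-1}\sum_{|j-i|\le\ell}\E_j$ and $F^\ell(e) := \langle f_i \mid \E^\ell_i = e\rangle_\beta$; write $V_N = V_N^{(1)} + V_N^{(2)}$ for the corresponding splitting.

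For $V_N^{(1)}$, the first summand has vanishing conditional expectation on every microcanonical surface of the block $\{j:|j-i|\le\ell\}$, so the block spectral-gap estimate of order $\ell^{-2}$ proven in Section \ref{sec:spectral-gap-bound}, inserted into the variational formula for the $H_{-1}$-norm, bounds its contribution by $C \ell^3/N^2$ after accounting for the $\ell$-range correlations. For $V_N^{(2)}$, which is a function of the block energies and therefore close to the kernel of $S_N$, one bounds the time integral directly by Cauchy--Schwarz and stationarity and uses equivalence of ensembles: the orthogonality $\langle f_i(\E_i-\bar\E_\beta)\rangle_\beta=0$ forces the Taylor expansion of the canonical expectation around $e=\bar\E_\beta$ to vanish through first order, and combined with the $O(\ell^{-1/2})$ CLT fluctuations of $\E^\ell_i$ this yields $\langle F^\ell(\E^\ell_i)^2\rangle_\beta = O(\ell^{-2})$, so the contribution of $V_N^{(2)}$ is $O(\ell^{-1})$. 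Both errors vanish provided $\ell_N\to\infty$ and $\ell_N = o(N^{2/3})$, e.g.\ $\ell_N = N^{1/2}$. The main technical obstacle is the quantitative equivalence-of-ensembles estimate on $F^\ell$: the cancellation of the first-order term uses exactly the coefficient $(\chi_\beta\beta^2)^{-1}$ built into the definition of $f_i$, and the block spectral gap invoked for $V_N^{(1)}$ is the non-trivial estimate established separately in Section \ref{sec:spectral-gap-bound}.
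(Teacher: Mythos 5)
Your proposal follows essentially the same strategy as the paper's proof---the Kipnis--Varadhan bound \eqref{eq-genbound}, localization to blocks, the spectral gap from Section~\ref{sec:spectral-gap-bound}, and the equivalence-of-ensembles Taylor cancellation enforced by the coefficient $(\chi_\beta\beta^2)^{-1}$---and the estimates you announce ($O(\ell^3/N^2)$ and $O(\ell^{-1})$) are consistent. The two organizational differences from the paper are: (1) the paper first replaces the single-site observable $f_i$ by the block average $\tau_i\varphi_K = \frac{1}{2K+1}\sum_{|j-i|\le K}\tau_j f_0$ (error $o(1)$) and only then centers it on the microcanonical surface, whereas you directly decompose $f_i = (f_i - F^\ell(\E_i^\ell)) + F^\ell(\E_i^\ell)$; this costs you an $L^2$-norm of order $1$ instead of $O(\ell^{-1/2})$ for the first piece, hence the $\ell^3$ in your bound rather than a smaller power, but it is harmless. (2) The paper keeps $K$ fixed, takes $\limsup_{N\to\infty}$, and then sends $K\to\infty$, so it never needs to track the $K$-dependence of the spectral-gap constant; you instead take $\ell=\ell_N\to\infty$ simultaneously and must---and do---verify the compatibility window $\ell_N\to\infty$, $\ell_N^3/N^2\to 0$. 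Both routes are valid. One minor slip in your exposition: you open by saying it suffices to bound $\langle V_N, (-S_N)^{-1}V_N\rangle$, but the $V_N^{(2)}$ term (a function of block energies, hence with ill-controlled $H_{-1}$-norm with respect to $-S_N$) is then handled by a direct Cauchy--Schwarz on the time integral, not through the KV functional; you should decompose $I^3_{N,t} = I^{3,(1)}_{N,t} + I^{3,(2)}_{N,t}$ first and apply \eqref{eq-genbound} only to the first piece, which is exactly what the paper does with $\hat\varphi_K$ versus $\bar\varphi_K$.
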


\begin{lemma}\label{lem:CLT}
There exists a sequence of functions 
$\{F_{K}\}_{K \in \N} \in \mathcal C$ such that, 
for every smooth function $H: \T \to \R$,
\begin{equation*}
\lim_{K \to \infty} \lim_{N \to \infty}
\mathbb{E}_{\nu^N_{\beta}}\left[\sup_{0\le t\le T} (I^1_{N,F_K, t}
  (H))^2\right]=0.  
\end{equation*}
Moreover, for this sequence $\{F_{K}\}_{K \in \N}$, 
\begin{equation*}
\lim_{K \to \infty} E_{\nu_{\beta}}[(\sigma_{0,1} -
\sqrt{\gamma}Y_{0,1}( \Gamma_{F_K}) )^2  + (\sqrt{\gamma}X_0
(\Gamma_{F_K}) )^2]
= 2D_\beta\chi_\beta=\frac{2\kappa_\beta}{\beta^2}
\end{equation*}
\end{lemma}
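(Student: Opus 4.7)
The plan is to follow the non-gradient method of Varadhan \cite{V}, extended to the non-reversible setting as in \cite{X, K}.

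\textbf{Step 1 (reduction to an $H_{-1}$-type bound).} A Kipnis--Varadhan-type estimate, combined with the sector condition for $L^\gamma_N$ established in Section \ref{sec-sec}, yields for every local mean-zero function $V$ an inequality of the form
$$\mathbb{E}_{\nu^N_\beta}\!\left[\sup_{0\le t\le T}\!\left(\int_0^t \sqrt N \sum_{i\in\T_N} \nabla^N H(i/N)\, \tau_i V\, ds\right)^{\!\!2}\right] \le C\, T\, \|H'\|_\infty^2 \cdot |||V|||_{-1,\beta}^{\,2},$$
where $|||\cdot|||_{-1,\beta}$ is the $H_{-1}$ semi-norm associated with the Dirichlet form of $\gamma S_N$. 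Specialising to $V = V_F := W_{0,1} + \kappa_\beta(p_1^2 - p_0^2) - L^\gamma(\tau_0 F)$, the first claim of the lemma is reduced to producing a sequence $F_K \in \Cy$ along which $|||V_{F_K}|||_{-1,\beta} \to 0$.

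\textbf{Step 2 (Hilbert analysis and identification of $\kappa_\beta$).} Introduce the pre-Hilbert space of germs of local functions modulo constants, with inner product built from the Dirichlet form of $S_N$, and let $\mathcal H$ be its completion. Using the closed-form classification of Section \ref{sec:closedformch2}, the Lie-algebra spanning property of Section \ref{sec:lie-algebra} and the spectral gap of Section \ref{sec:spectral-gap-bound}, one shows that the orthogonal complement of $\overline{\{L^\gamma G : G \in \Cy\}}$ in $\mathcal H$ is one-dimensional, reflecting the uniqueness of the energy as a conserved quantity. The scalar $\kappa_\beta$ is then the unique value making $W_{0,1} + \kappa_\beta(p_1^2 - p_0^2)$ orthogonal to this complement, and one identifies it as $\kappa_\beta = D_\beta \chi_\beta \beta^2$ through the variational formula to be derived in Section \ref{sec-diff-coeff} (formula \eqref{eq:variationalch3}),
$$2\chi_\beta D_\beta = \inf_{F\in\Cy}\, \langle(\sigma_{0,1} - \sqrt\gamma\, Y_{0,1}(\Ga_F))^2 + \gamma(X_0\Ga_F)^2\rangle_\beta.$$
Any minimising sequence for this variational problem serves as the sought $F_K$, and Step 1 then delivers the first claim.

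\textbf{Step 3 (the ``moreover'' part and main obstacle).} A direct computation of the quadratic variation in the stationary regime yields
$$\mathbb{E}_{\nu^N_\beta}\!\left[\langle M^1_{N,F,\cdot}(H)\rangle_T\right] = T\cdot \frac 1N \sum_{i\in\T_N} (\nabla^N H(i/N))^2 \cdot \langle(\sigma_{0,1} - \sqrt\gamma\, Y_{0,1}(\Ga_F))^2 + \gamma(X_0\Ga_F)^2\rangle_\beta,$$
so that taking $F = F_K$ and letting $K \to \infty$ drives the per-site average to the variational infimum $2\chi_\beta D_\beta = 2\kappa_\beta/\beta^2$, which is precisely the stated identity. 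The main obstacle is Step 2: identifying the orthogonal complement as exactly one-dimensional, and pinning down the corresponding $\kappa_\beta$, requires assembling the closed-form classification, the spectral gap, and the hypoellipticity of the Liouville vector fields. The sector condition plays the auxiliary role of reducing the analysis with the full generator $L^\gamma_N$ to a tractable one involving only $S_N$, but the subsequent Hilbert space analysis is what makes the proof genuinely non-trivial.
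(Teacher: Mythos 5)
Your Step 1 is essentially correct: the Kipnis--Varadhan bound \eqref{eq-genbound} and the refinement in Theorem \ref{gbg} reduce the first claim of Lemma \ref{lem:CLT} to producing a sequence $F_K$ along which $\triple W_{0,1} + \kappa_\beta (p_1^2 - p_0^2) - L^\gamma F_K \triple_{-1} \to 0$, and your Step 2 correctly identifies the structural input — the decomposition $\mathcal{H}_{-1} = \overline{L\Cy}|_{\mathcal{N}} \oplus \{p_1^2 - p_0^2\}$ of Corollary \ref{cor:decompositionch3} — that furnishes such a sequence. (A minor quibble: \eqref{eq-genbound} already involves only $(-S_N)^{-1}$; the sector condition is not used to pass from $L_N^\gamma$ to $S_N$ there, but rather inside the Hilbert-space analysis, e.g.\ Proposition \ref{prop:aboundch3}, to control $\triple Ag\triple_{-1}$ by $\triple Sg\triple_{-1}$.)

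The genuine gap is in Steps 2--3, where you misidentify what the quantity $\lim_K \langle (\sigma_{0,1} - \sqrt\gamma Y_{0,1}\Ga_{F_K})^2 + \gamma(X_0\Ga_{F_K})^2\rangle_\beta$ is. You assert a variational formula
\[
2\chi_\beta D_\beta = \inf_{F\in\Cy}\, \langle(\sigma_{0,1} - \sqrt\gamma\, Y_{0,1}(\Ga_F))^2 + \gamma(X_0\Ga_F)^2\rangle_\beta = 2\inf_{F\in\Cy}\triple W^S_{0,1} - S F\triple_{-1}^2
\]
and claim that any minimizing sequence of this problem serves as the $F_K$. This is not formula \eqref{eq:variationalch3}: that formula reads $\kappa_\beta = \beta^2 \inf_f \triple W^*_{0,1} - L^* f\triple_{-1}^2$, i.e.\ the infimum is over the full norm $\triple W^S_{0,1} - S f\triple_{-1}^2 + \triple W^A_{0,1} - A f\triple_{-1}^2$, not over the symmetric part alone. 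Since $\triple W^*_{0,1} - L^* f\triple_{-1}^2 \ge \triple W^S_{0,1} - S f\triple_{-1}^2$ with equality only in degenerate cases, the infimum you propose is strictly smaller than $\kappa_\beta/\beta^2$ in general, and a sequence minimizing it has no reason to drive $\triple W_{0,1} + \kappa_\beta(p_1^2-p_0^2) - LF\triple_{-1}$ to zero. So your choice of $F_K$ does not, as you claim, ``deliver the first claim.''

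What the lemma actually asserts, and what the paper proves in Proposition \ref{prop:CLT2}, is that \emph{along the specific sequence $F_K$ selected by the decomposition of $\mathcal{H}_{-1}$} one has $\triple W^S_{0,1} - SF_K\triple_{-1}^2 \to \kappa_\beta/\beta^2$, which is a limit and not an infimum. This requires the bounded operator $T : \mathcal H_{-1} \to \mathcal H_{-1}$, $T(aW_{0,1} + Lf) = a W^S_{0,1} + Sf$, whose construction rests on the decomposition $\mathcal H_{-1} = \overline{L\Cy}|_{\mathcal N} \oplus \{W_{0,1}\}$ and the identity $\triple aW_{0,1}+Lf\triple_{-1}^2 = \triple aW^S_{0,1}+Sf\triple_{-1}^2 + \triple aW^A_{0,1}+Af\triple_{-1}^2$. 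Applying $T$ to the vanishing sequence $W_{0,1} + \kappa_\beta(p_1^2-p_0^2) - LF_K$ gives $W^S_{0,1} - SF_K + \kappa_\beta T(p_1^2-p_0^2) \to 0$, and then the computation $\kappa_\beta = Q_\beta = (\beta^2\triple T(p_1^2-p_0^2)\triple_{-1}^2)^{-1}$ from Lemma \ref{prop:qvariationalch3} pins down the limiting value. This intermediate construction is entirely absent from your proposal; without it there is no route from ``first claim holds along $F_K$'' to ``quadratic variation along $F_K$ converges to $2\kappa_\beta/\beta^2$.''
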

Note that 
\[
I^1_{N,F,t} (H) = \int^t_0 \sqrt{N}\sum_{i \in \T_N}\nabla^NH (\frac{i}{N})[W_{i,i+1} + \kappa_\beta (p_{i+1}^2-p_i^2 ) - L_N (\tau_i F)] ds.
\]

As a consequence of Lemma \ref{lem:CLT}, the martingale
$M^1_{N,F_K,t}(H)$ will converge, as 
$N\to \infty$ and $K\to \infty$ to a martingale $M_t^{H}$ of quadratic
variation $2 t D_\beta\chi_\beta \int_\T H'(u)^2 du$, and the limit
$Y_t(H)$ of $Y^N_t(H) $ will satisfy the equation
\begin{equation}
  \label{eq:6}
  Y_t(H) = Y_0(H) + \int_0^t Y_s(D_\beta H'')\; ds + M_t^{H}.
\end{equation}
This martingale problem is solved uniquely by the generalize
Ornstein-Uhlenbeck process $Y_t$ defined above. 

Now we proceed to give a proof of Lemma \ref{lem:rest}.
Lemma \ref{lem:Boltzmann-Gibbs} will be proven in Section \ref{sec-bgp},
while Lemma \ref{lem:CLT} will be the content of the rest of the article.

\begin{proof}[Proof of Lemma \ref{lem:rest}]
Let us define
\begin{equation*}
\zeta_{N,F}(t) =\frac{1}{N^{\frac{3}{2}}} \sum_{i \in \T_N} \nabla^NH(\frac{i}{N}) \tau_iF(\omega^N_t).
\end{equation*}
From Ito's formula, we obtain
\begin{align*}
\zeta_{N,F}(t) &= \zeta_{N,F}(0) + I^2_{N,F,t} (H) \\
& + \int^t_0 \frac{\sqrt\gamma}{\sqrt N} 
\sum_{j \in \T_N} \Big[ Y_{j,j+1} \left(\sum_{i \in
  \T_N}\nabla^N H(\frac{i}{N}) \tau_i F\right)
dB^1_{j} \\
&\qquad\qquad\qquad + X_{j} \left(\sum_{i \in
  \T_N}\nabla^N H(\frac{i}{N}) \tau_i F\right) dB^2_{j} \Big] . 
\end{align*}
Therefore, 
\begin{align*}
& (I^2_{N,F,t} (H) +M^2_{N,F,t}(H))^2 \le   2 (\zeta_{N,F}(t) - \zeta_{N,F}(0))^2  \\
+  & 2 \Big\{  \int^t_0 \frac{\sqrt{\gamma}}{\sqrt{N}}\sum_{j \in
    \T_N} \Big[ Y_{j,j+1} \left(\sum_i \nabla^N H(\frac{i}{N}) \tau_i F
    -\nabla^N H(\frac{j}{N}) \Gamma_F \right) dB^1_{j}  \\
& + X_{j} \left(\sum_i \nabla^N H(\frac{i}{N}) \tau_i F
    -\nabla^N H(\frac{j}{N}) \Gamma_F \right) dB^2_{j}\Big] \Big\}^2  .
\end{align*}
Since $F$ is local and of null average and $H$ is smooth, it is easy
to see that the expectation of first
term is of order $\frac{1}{N^2}$. 
The expectation of the second term is equal to 
\begin{equation*}
  \begin{split}
    \frac{{2\gamma t}}{N} \sum_{j\in \T_N} \Big[ \left< Y_{j,j+1} \left(\sum_i \nabla^N H(\frac{i}{N}) \tau_i F
    -\nabla^N H(\frac{j}{N}) \Gamma_F \right)^2 \right>_\beta\\
  + \left< X_{j} \left(\sum_i \nabla^N H(\frac{i}{N}) \tau_i F
    -\nabla^N H(\frac{j}{N}) \Gamma_F \right)^2 \right>_\beta
    \Big]
  \end{split}
\end{equation*}

Since $F$ is local, it is easy to see that this is also of order
$\frac{1}{N^2}$. 
\end{proof}

The proof of Lemma \ref{lem:CLT} is the hard part of the paper. We
will need some tools to estimate space-time variances through
variational formulas, as explained in the next section. In order to
establish these variational formula we need some finite dimensional
approximations of the solutions, see section \ref{sec-sec}, where a
bound on the spectral gap of $S$ is
needed. This is proven in section \ref{sec:spectral-gap-bound}.

\section{Space-Time Variance bounds}
\label{sec:space-time-variance-bounds}

In the following we will simply denote by $\langle \cdot \rangle$ the
expectation with respect to the grand-canonical measure $\nu_\beta$.

In order to prove lemmas \ref{lem:Boltzmann-Gibbs} and \ref{lem:CLT},
we will make use of the following general bound for time variances: 

\begin{proposition}
  Let $F$ be a smooth function in $L^2(\nu_{\beta}^N)$ satisfying $E_{\nu_{N,E}}[F]=0$ for all $E>0$. Then
  \begin{equation}\label{eq-genbound}
    \mathbb E_{\nu_\beta^N} \left(\sup_{0\le t \le T}\left[  \int_0^t
      F(\omega_s^N) ds \right]^2\right) \ \le \ \frac{16 T}{\gamma N^2}
  \langle F  \left(-S_N\right)^{-1} F \rangle.
  \end{equation}
\end{proposition}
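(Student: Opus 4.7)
The plan is to deduce this as a special case of the classical non-reversible variance inequality (as in Lemma 2.4, or Proposition~6.1 of Appendix~1.6, of Chapter~11 in \cite{KL}) applied to the accelerated Markov process with generator $\mathcal{L}_N := N^2 L_N^\gamma$. Since $S_N$ is symmetric and $A_N$ antisymmetric in $L^2(\nu_\beta^N)$, the symmetric part of $\mathcal{L}_N$ is $\mathcal{S}_N := N^2\gamma S_N$. The general bound to invoke is that, for $F$ orthogonal to $\ker\mathcal{L}_N$,
\begin{equation*}
\mathbb{E}_{\nu_\beta^N}\Big[\sup_{0\le t\le T}\Big(\int_0^t F(\omega_s^N)\,ds\Big)^2\Big]\ \le\ 16\,T\,\|F\|_{-1,\mathcal{S}_N}^2,
\end{equation*}
where the variational $H_{-1}$ norm is
\begin{equation*}
\|F\|_{-1,\mathcal{S}_N}^2 \ :=\ \sup_{g}\bigl\{2\langle F,g\rangle_{\nu_\beta^N}-\langle g,(-\mathcal{S}_N)g\rangle_{\nu_\beta^N}\bigr\},
\end{equation*}
and the supremum runs over smooth local $g$. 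The constant $16$ comes from pairing a Schwarz-type bound on the quadratic variation of the Dynkin martingale with Doob's $L^2$ maximal inequality needed to pull the supremum in time outside the expectation.

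Next I would identify $\|F\|_{-1,\mathcal{S}_N}^2$ with the resolvent expression on the right-hand side of \eqref{eq-genbound}. The hypothesis $E_{\nu_{N,E}}[F]=0$ for every $E>0$ is exactly the assertion that $F$ is orthogonal in $L^2(\nu_\beta^N)$ to $\ker S_N$: by the ergodicity of the microcanonical measures $\nu_{N,E}$ proven in Section~\ref{sec:lie-algebra} (via the Lie-bracket generation of the tangent space of $\Sigma_{N,E}$ by the fields $\{X_i,Y_{i,i+1}\}$), $\ker S_N$ consists of functions that are $\nu_{N,E}$-a.s.\ constant on each $\Sigma_{N,E}$, and conditioning on $\Sigma_{N,E}$ annihilates $F$ against any such function. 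Consequently the supremum is attained at $g^\ast=(-\mathcal{S}_N)^{-1}F$, and
\begin{equation*}
\|F\|_{-1,\mathcal{S}_N}^2\ =\ \langle F,(-\mathcal{S}_N)^{-1}F\rangle_{\nu_\beta^N}\ =\ \frac{1}{N^2\gamma}\langle F\,(-S_N)^{-1}F\rangle,
\end{equation*}
and plugging in gives exactly the stated constant $16T/(\gamma N^2)$.

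To justify the underlying general inequality, the standard route is to introduce the resolvent $u_\lambda:=(\lambda-\mathcal{L}_N)^{-1}F$ for $\lambda>0$ and write, via Ito's formula,
\begin{equation*}
\int_0^t F(\omega_s^N)\,ds\ =\ u_\lambda(\omega_0)-u_\lambda(\omega_t)+\lambda\int_0^t u_\lambda(\omega_s^N)\,ds+M_t^\lambda,
\end{equation*}
where $M_t^\lambda$ is a martingale with $\langle M^\lambda\rangle_t = 2\int_0^t\langle u_\lambda,(-\mathcal{S}_N)u_\lambda\rangle_{\nu_\beta^N}\,ds$; the antisymmetric piece $\mathcal{A}_N=N^2A_N$ drops out of the quadratic variation because $\langle u_\lambda,\mathcal{A}_N u_\lambda\rangle=0$. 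The resolvent identity combined with the variational definition of the norm yields $\lambda\langle u_\lambda,u_\lambda\rangle+\langle u_\lambda,(-\mathcal{S}_N)u_\lambda\rangle\le \|F\|_{-1,\mathcal{S}_N}^2$; Doob's $L^2$ inequality applied to $M_t^\lambda$, together with Cauchy--Schwarz on the boundary pair $u_\lambda(\omega_0)-u_\lambda(\omega_t)$ and the Schwarz bound on $\lambda\int_0^t u_\lambda$, produce the stated $16T$ after taking $\lambda\downarrow 0$. The main subtlety is this last passage: one must know $\lambda\|u_\lambda\|_{\nu_\beta^N}^2\to 0$, and it is precisely the hypothesis $E_{\nu_{N,E}}[F]=0$ (i.e.\ $F\perp\ker\mathcal{L}_N$) that keeps $u_\lambda$ in the orthogonal complement of the kernel and makes the limit legitimate.
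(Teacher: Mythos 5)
The paper does not actually prove this proposition --- it simply refers to \cite{clo} and \cite{klo}. Your reduction to the Kipnis--Landim-type variance bound, your identification of the variational $H_{-1}$ norm with the resolvent pairing $\frac{1}{\gamma N^2}\langle F,(-S_N)^{-1}F\rangle$, and your reading of the hypothesis $E_{\nu_{N,E}}[F]=0$ as orthogonality to $\ker S_N$ (via the ergodicity of the microcanonical surfaces established in Section~\ref{sec:lie-algebra}) are all correct and are exactly the ingredients those references use.

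However, your sketch of the underlying maximal inequality has a genuine gap. Taking the full-generator resolvent $u_\lambda=(\lambda-\mathcal{L}_N)^{-1}F$ and writing $\int_0^t F\,ds = u_\lambda(\omega_0)-u_\lambda(\omega_t)+\lambda\int_0^t u_\lambda\,ds + M_t^\lambda$, you propose to handle the boundary pair by ``Cauchy--Schwarz.'' That controls $\mathbb{E}\bigl[(u_\lambda(\omega_0)-u_\lambda(\omega_t))^2\bigr]$ for each \emph{fixed} $t$, but says nothing about $\mathbb{E}\bigl[\sup_{0\le t\le T}(u_\lambda(\omega_0)-u_\lambda(\omega_t))^2\bigr]$, which is what the supremum in \eqref{eq-genbound} demands. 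Trying to close this by reinserting the Dynkin decomposition of $u_\lambda(\omega_t)$ brings $\sup_t\bigl(\int_0^t F\,ds\bigr)^2$ back on the right-hand side with a coefficient larger than one, so it cannot be absorbed. Moreover the ``main subtlety'' you flag, $\lambda\|u_\lambda\|^2\to 0$, is not where the difficulty lies: even if that limit holds (and in the finite-$N$ setting with spectral gap it is automatic), $\|u_\lambda\|$ itself stays bounded away from zero, so the boundary term is a fixed obstruction that Cauchy--Schwarz alone does not remove.

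The route the references actually take, and the one that delivers the stated constant, uses the resolvent of the \emph{symmetric} part, $u_\lambda=(\lambda-\mathcal{S}_N)^{-1}F$, together with \emph{two} Dynkin martingales: the forward one $M_t$ for $u_\lambda$ along $(\omega_s)$, and the backward one $\hat M_s$ for $u_\lambda$ along the time-reversed process $(\omega_{T-s})$, which has generator $\mathcal{L}_N^*=\mathcal{S}_N-\mathcal{A}_N$. Adding the two decompositions, the boundary term $u_\lambda(\omega_0)-u_\lambda(\omega_t)$ and the antisymmetric drift $\int_0^t\mathcal{A}_N u_\lambda\,ds$ cancel identically, leaving
\begin{equation*}
  \int_0^t F(\omega_s)\,ds \;=\; \tfrac12\bigl(M_t + \hat M_T - \hat M_{T-t}\bigr) \;+\; \lambda\int_0^t u_\lambda(\omega_s)\,ds.
\end{equation*}
Now Doob's $L^2$ maximal inequality applies separately to $M$ and to $\hat M$ (both have quadratic variation $2T\langle u_\lambda,(-\mathcal{S}_N)u_\lambda\rangle$), the $\lambda$-term vanishes in the limit, and $\langle u_\lambda,(-\mathcal{S}_N)u_\lambda\rangle\le\|F\|_{-1,\mathcal{S}_N}^2$ from the variational formula; this yields the bound $16T\|F\|_{-1,\mathcal{S}_N}^2 = \frac{16T}{\gamma N^2}\langle F,(-S_N)^{-1}F\rangle$. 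That forward--backward pairing is the ingredient missing from your sketch; the rest of your argument is sound.
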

A proof of \eqref{eq-genbound} can be found in \cite{clo} or in
\cite{klo}. 

Observe that, by the spectral gap for $S_N$ proven in Section
\ref{sec:spectral-gap-bound},
 the right hand side is always well defined.
We want to use the bound \eqref{eq-genbound} for functions of the type
$F = \sum_j G(j/N) \tau_j \phi$, for a certain class of local
functions $\phi$.

First, we introduce some notations.  We denote $\tilde{\Cy}$ the set of smooth local functions $f$ on $\Omega=(\R^2)^{\Z}$ satisfying that
 \[
f \in L^2(\nu_{\beta}), \ X_i f(p,r)  \in L^2(\nu_{\beta}), \ Y_{i,i+1}f \in L^2(\nu_{\beta})
 \]
 for all $i \in \Z$. Note that $\Cy \subset \tilde{\Cy}$.
Here and after, we consider operators $L^{\gamma}$, $S^{\gamma}$ and
$A$ acting on functions $f$ in $\Cy$ as   
\[
L^{\gamma}f= S^{\gamma}f+Af, \quad S^{\gamma}f=\frac{\gamma}{2}\sum_{i \in \Z}\{(X_i)^2f+ (Y_{i,i+1})^2 f\}, \quad Af=\sum_{i \in \Z}X_i f - Y_{i,i+1}f .
\]
For a fixed positive integer $l$, we define $\La_l:=\{-l, -l+1,..., l-1,l\}$ and $L^{\gamma}_{\La_l}$, $S_{\La_l}^{\gamma}$ the restriction of the generator $L^{\gamma}$, $S^{\gamma}$ to $\La_l$ respectively. For $\Psi$ in $\Cy$, denote by $s_{\Psi}$ the smallest positive integer $s$ such that $\La_s$ contains the support of $\Psi$.
Let $\Cy_0$ be a subspace of local functions defined as follows:
\begin{displaymath}
\Cy_0 = \{\ f ;  f=\sum_{i \in  \Lambda}[ X_i (F_i) + Y_{i,i+1} (G_i) ] \ \text{for some} \ \Lambda \subset \subset \Z \text{ and} \{F_i\}_{i \in  \Lambda}, \{G_i \}_{i \in  \Lambda} \in \tilde{\Cy}  \}.
\end{displaymath}
% For a finite subset $\La$ of $\Z$, we denote by $|\La|$ the
% cardinality of $\La$ and by $ \langle  \cdot  \rangle _{\La,E}$ the
% expectation with respect to the microcanonical measure
% $\nu_{\La,E}:=\nu_{\beta}( \ \cdot \ | \sum_{i \in \La} \E_i=|\La|E)$
% for $E>0$ which is indeed independent of the choice of $\beta$. For a
% finite set $\La$ and a canonical measure $\nu_{\La,E}$, denote by $
% \langle  \cdot, \cdot  \rangle _{\La,E}$ (resp.$ \langle  \cdot, \cdot
% \rangle$) the inner product in $L^2(\nu_{\La,E})$
% (resp. $L^2(\nu_{\beta})$). 

First, we note some useful properties of the space $\Cy_0$.
\begin{lemma}\label{lem:cy0}
  \begin{enumerate}[(i)]
  \item For any $f \in \Cy_0$, $l \ge s_f+1$ and $E>0$,  $E_{\nu_{l,E}}[f]=0$.
\item $W^S_{0,1}$, $W^A_{0,1}$ and $p_1^2 - p_0^2$ are elements of
  $\Cy_0$.
\item For any $F \in \Cy$, $L^{\gamma}F$, $S^{\gamma}F$ and $AF$ are
  elements of $\Cy_0$.
% \item For any $f \in \Cy_0$, there exists a constant $C_{f,E}$ depending only on $f$ and $E>0$ such that for any $u \in D(L^{\gamma}_{\La_l})$,
% \begin{displaymath}
% \int_{(\R^2)^{\La_l}} u (\sum_{|i| \le l - s_f-1}\tau_i f ) d\nu_{\La_l,E} \le C_{f,E} \sqrt{l \D_{l,E} (u) } 
% \end{displaymath}
% where 
% $$
% \D_{l,E} (u):= \int_{(\R^2)^{\La_l}} \sum_{|k| \le l } (X_{k} u)^2
% d\nu_{\La_l,E} + \int_{(\R^2)^{\La_l}} \sum_{k = -l}^{l-1}  (Y_{k,k+1}
% u)^2 d\nu_{\La_l,E} . 
% $$
  \end{enumerate}
\end{lemma}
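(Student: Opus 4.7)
The plan is to treat the three items in order, using two elementary properties of the vector fields $X_i$ and $Y_{i,i+1}$. First, each is divergence-free with respect to Lebesgue measure: $\partial_{r_i}(p_i) - \partial_{p_i}(V'(r_i)) = 0$, and similarly for $Y_{i,i+1}$. Second, each annihilates the local Hamiltonian $H_{\Lambda_l} = \sum_{j \in \Lambda_l}\E_j$ whenever the indices involved lie inside $\Lambda_l$, since $X_i \E_j = 0$ for every $j$ and $Y_{i,i+1}(\E_i + \E_{i+1}) = 0$ by direct computation.

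For part~(i), given $f = \sum_{i} [X_i(F_i) + Y_{i,i+1}(G_i)]$, the hypothesis $l \ge s_f + 1$ guarantees that all indices $i$ and $i+1$ appearing in the sum lie in $\Lambda_l$. I would first prove that for every smooth compactly supported $g : \R \to \R$,
\[
\int g(H_{\Lambda_l})\,(X_i F_i)\, d\nu_\beta^{\Lambda_l} = 0,
\]
by writing $X_i\!\bigl(g(H_{\Lambda_l}) F_i\, e^{-\beta H_{\Lambda_l}}\bigr) = g(H_{\Lambda_l})\, e^{-\beta H_{\Lambda_l}}\,(X_i F_i)$, which uses $X_i H_{\Lambda_l} = 0$, and then applying the divergence theorem on $\R^{2|\Lambda_l|}$. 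The argument for $Y_{i,i+1} G_i$ is identical. Since $g$ is arbitrary, conditioning $\nu_\beta^{\Lambda_l}$ on the value of $H_{\Lambda_l}$ yields $E_{\nu_{l,E}}[f] = 0$ for every $E > 0$.

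For part~(ii), I would exhibit the explicit decompositions
\begin{align*}
W^A_{0,1} &= Y_{0,1}\bigl(-V(r_1)\bigr),\\
W^S_{0,1} &= \tfrac{\gamma}{2}\, Y_{0,1}\bigl(p_0 V'(r_1)\bigr),\\
p_1^2 - p_0^2 &= X_1(r_1 p_1) - Y_{0,1}(r_1 p_0),
\end{align*}
each checked by direct differentiation; the third identity relies on the cancellation $X_1(r_1 p_1) = p_1^2 - r_1 V'(r_1)$ and $Y_{0,1}(r_1 p_0) = p_0^2 - r_1 V'(r_1)$. For part~(iii), the formulas
\[
S^{\gamma} F = \sum_{i\in\Z} \tfrac{\gamma}{2}\bigl[X_i(X_i F) + Y_{i,i+1}(Y_{i,i+1} F)\bigr], \qquad A F = \sum_{i\in\Z}\bigl[X_i(F) + Y_{i,i+1}(-F)\bigr],
\]
directly display the required form; locality of $F \in \Cy$ makes only finitely many terms nonzero and places the inner functions in $\tilde{\Cy}$, so $S^\gamma F$, $A F$, and $L^\gamma F = A F + S^\gamma F$ all lie in $\Cy_0$.

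The only point of genuine content is the integration-by-parts and conditioning step in part~(i); parts~(ii) and (iii) reduce to explicit algebraic identities that, once guessed, admit a one-line verification.
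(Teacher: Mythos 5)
Your proof is correct, and its overall strategy matches the paper's. The one place you genuinely depart from the paper is the decomposition of $p_1^2 - p_0^2$: the paper uses a single function $h = (p_0+p_1)r_1$ and writes $p_1^2 - p_0^2 = X_1(h) - Y_{0,1}(h)$, while you use two distinct inner functions, $p_1^2 - p_0^2 = X_1(r_1 p_1) - Y_{0,1}(r_1 p_0)$. Both are valid; yours is marginally simpler to verify but the paper's choice of a common $h$ is aesthetically cleaner. For parts (i) and (iii) the paper simply declares them straightforward, and you supply the expected arguments: (iii) is a direct unwinding of the definitions of $A$ and $S^{\gamma}$, and (i) is the Liouville-type integration by parts you describe — each vector field is divergence-free for Lebesgue measure and annihilates the local Hamiltonian $H_{\Lambda_l}$ (using $l \ge s_f+1$ so that every $i$ and $i+1$ appearing lies inside $\Lambda_l$), so $\int g(H_{\Lambda_l}) X_i(F_i)\, d\nu_\beta^{\Lambda_l} = 0$ for all test functions $g$, which after conditioning on $H_{\Lambda_l}$ yields $E_{\nu_{l,E}}[f]=0$ for (almost every, hence by continuity every) $E > 0$. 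The only small point worth being explicit about is that last continuity step, which turns "a.e.\ $E$" into "every $E$"; otherwise the argument is complete.
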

\begin{proof}
(i) and (iii) are straightforward.

\noindent
(ii): We have
\begin{equation*}
  \begin{split}
    W^S_{0,1}  &=\frac{\gamma}{2} \{p_0^2V''(r_1)-V'(r_1)^2\}
    =\frac{\gamma}{2} Y_{0,1} (p_0V'(r_1))\\
    W^A_{0,1} &= -p_0V'(r_1)= Y_{0,1} (-V(r_1))\\
    p_1^2 - p_0^2 &= X_{1} \{(p_0+p_1) r_1\} - 
    Y_{0,1} \{(p_0+p_1) r_1\}.
  \end{split}
\end{equation*}
\end{proof}

Next, we study the variance
\begin{displaymath}
(2l)^{-1} \langle (-S_{\La_l}^{\gamma})^{-1}\sum_{|i| \le l_{\psi}}\tau_i \psi, \sum_{|i| \le l_{\psi}}\tau_i\psi \rangle
\end{displaymath}
for $\psi \in \Cy_0$ where $l_{\psi}= l-s_{\psi}-1$. 
We start with introducing a semi-norm on $\Cy$, which is closely
related to the central limit theorem variance. For cylinder functions
$g$, $h$ in  $\Cy$, let 
 \begin{equation}
\ll g,h \gg_{*}= \sum_{i \in \Z} \langle g, \tau_i h \rangle \quad
\text{and} \quad \ll g \gg_{**}=\sum_{i \in \Z}i \langle g,\E_i
\rangle.        \label{eq:7} 
\end{equation}
$\ll g,h \gg_{*}$ and $\ll g \gg_{**}$ are well defined because $g$
and $h$ are local and null average, therefore all but a finite number of
terms vanish.  

Notice that if $h \in \Cy_0$ with $h = \sum_{j\in\Lambda} X_j F_j +
Y_{j,j+1} G_j$ then we can compute
\begin{equation*}
  \begin{split}
    \ll h \gg_{**} &= \lim_{l\to\infty} - \sum_{j\in\Lambda}
    \sum_{i=-l}^l i \langle 
    G_j, Y_{j,j+1} \E_i \rangle \\
    &=  - \sum_{j\in\Lambda} \left(j \langle
    G_j, Y_{j,j+1} \E_j \rangle + (j+1)  \langle
    G_j, Y_{j,j+1} \E_{j+1} \rangle \right)\\
  &= -\sum_{j\in\Lambda} \langle p_j V'(r_{j+1}) G_j \rangle
  = - \langle p_0 V'(r_{1}) \sum_{j\in\Lambda} \tau_{-j} G_j \rangle .
  \end{split}
\end{equation*}
and for any $g\in \Cy$
\begin{equation*}
  \begin{split}
    \ll g, h \gg_{*} = -\sum_{j\in\Lambda} \left[ <F_j, X_j \Gamma_g> + <G_j,
    Y_{j,j+1} \Gamma_g> \right] \\
  =  - \left[ \left<\sum_{j\in\Lambda} \tau_{-j} F_j, X_0 \Gamma_g \right> +
    \left<\sum_{j\in\Lambda}\tau_{-j} G_j, Y_{0,1} \Gamma_g \right> \right]\ .
  \end{split}
\end{equation*}

For $h$ in $\Cy_0$, define the semi-norm 
$\triple h \triple_{-1}$ by  
\begin{align}
\triple h \triple_{-1}^2 =\sup_{g \in \Cy, a \in \R} & \{2\ll g,h
\gg_{*}+2a \ll h \gg_{**}  \nonumber \\ 
 & -\frac{\gamma}{2} \langle (a p_0V'(r_1)+ Y_{0,1}\Gamma_g)^2 \rangle
 - \frac{\gamma}{2} \langle (X_0\Gamma_g)^2 \rangle \}. \label{eq:seminormdef}
\end{align}
Observe that denoting $\tilde F = \sum_{j\in\Lambda} \tau_{-j} F_j,
\tilde G = \sum_{j\in\Lambda} \tau_{-j} G_j$, we can express this norm
as
\begin{align}
\triple h \triple_{-1}^2 =\sup_{g \in \Cy, a \in \R} & \{-2 <\tilde F,
X_0 \Gamma_g>  - 2 <\tilde G, Y_{0,1} \Gamma_g + a p_0
V'(r_1)>\nonumber  \\ 
 & -\frac{\gamma}{2} \langle (a p_0 V'(r_1)+ Y_{0,1}\Gamma_g)^2 \rangle
 - \frac{\gamma}{2} \langle (X_0\Gamma_g)^2 \rangle
 \}. \label{eq:seminormdef2} 
\end{align}
and clearly $\triple h \triple_{-1}^2 \le \frac 2\gamma (<\tilde F^2>
+ <\tilde G^2>)$.
  
We investigate several properties of the semi-norm $\triple \cdot
\triple_{-1}$ in the next section, while in this section we prove the
following 
key proposition:

\begin{proposition}\label{prop:variancech3}
Consider a local function $\psi$ in $\Cy_0$. % and a sequence $E_N$
% such that $\lim_{N \to \infty}E_N= E(\beta)$. 
Then,
\begin{displaymath}
\lim_{l \to \infty} (2l)^{-1} \langle
(-S_{\La_l}^{\gamma})^{-1}\sum_{|i| \le l_{\psi}}\tau_i \psi,
\sum_{|i| \le l_{\psi}}\tau_i \psi \rangle = \triple \psi
\triple_{-1}^2.  
\end{displaymath}
\end{proposition}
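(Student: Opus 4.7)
The plan is to combine the standard variational formula for resolvent quadratic forms with carefully chosen canonical test functions. Since $\psi\in\Cy_0$ has zero microcanonical mean (Lemma \ref{lem:cy0}(i)), the function $F_l := \sum_{|i|\le l_\psi}\tau_i\psi$ is orthogonal to $\ker S^\gamma_{\La_l}$ (functions of the total energy), so the dual characterization
$$
\langle F_l, (-S^\gamma_{\La_l})^{-1} F_l\rangle =\sup_{h}\Bigl\{2\langle F_l,h\rangle-\frac{\gamma}{2}\sum_{|i|\le l}\bigl[\langle(X_i h)^2\rangle+\langle(Y_{i,i+1}h)^2\rangle\bigr]\Bigr\}
$$
holds. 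The task is to identify the limit of $(2l)^{-1}$ times this supremum, as $l\to\infty$, with $\triple\psi\triple_{-1}^2$ as given in \eqref{eq:seminormdef}.

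For the lower bound I would restrict the supremum to the two-parameter family
$$
h^{g,a}_l \;=\;\sum_{|j|\le l-s_g-1}\tau_j g \;+\; a\sum_{|k|\le l} k\,\E_k,\qquad g\in\Cy,\ a\in\R.
$$
The key algebraic identities are $X_j\E_k\equiv 0$ for all $j,k$ and $Y_{j,j+1}(j\E_j+(j+1)\E_{j+1})=p_j V'(r_{j+1})$, so that for $i$ away from the boundary of $\La_l$ one has $X_i h^{g,a}_l=\tau_i(X_0\Gamma_g)$ and $Y_{i,i+1}h^{g,a}_l=\tau_i(Y_{0,1}\Gamma_g + a\,p_0V'(r_1))$. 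Translation invariance of $\nu_\beta$ then yields the correct Dirichlet-form limit, while regrouping pairs $(i,j)$ by their difference $i-j$, and using the fact that $\sum_{|i|\le l_\psi}i=0$ by symmetry of the range, gives $2\langle F_l,h^{g,a}_l\rangle/(2l)\to 2\ll g,\psi\gg_*+2a\ll\psi\gg_{**}$. Boundary corrections are of order $O(1)/l$. Taking the supremum over $(g,a)\in\Cy\times\R$ yields $\liminf_l(2l)^{-1}\langle F_l,(-S^\gamma_{\La_l})^{-1}F_l\rangle\ge\triple\psi\triple_{-1}^2$.

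For the matching upper bound one must show that no competitor outperforms the canonical family. Writing $\psi=\sum_{k}X_kF_k+Y_{k,k+1}G_k$ from the definition of $\Cy_0$ and integrating by parts,
$$
\langle h,F_l\rangle=-\sum_{k,|i|\le l_\psi}\bigl[\langle X_{i+k}h,\tau_iF_k\rangle+\langle Y_{i+k,i+k+1}h,\tau_iG_k\rangle\bigr],
$$
and a Cauchy-Schwarz estimate against the Dirichlet form reduces the problem to controlling the weak limit of the translation-averaged gradient field $(X_\cdot h,Y_{\cdot,\cdot+1}h)$. The uniform spectral gap proved in Section \ref{sec:spectral-gap-bound} guarantees $\langle F_l,(-S^\gamma_{\La_l})^{-1}F_l\rangle=O(l)$, so that normalized near-optimizers $h$ form a tight family in an appropriate Hilbert space; any weak limit point of the averaged gradient field must then lie in the closure of the linear span of $\{X_0\Gamma_g,\,Y_{0,1}\Gamma_g:g\in\Cy\}$ augmented by the single extra direction $p_0V'(r_1)$, the coefficient of which is the parameter $a$ in \eqref{eq:seminormdef2}.

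The main obstacle is exactly this identification of admissible weak limits: proving that $p_0V'(r_1)$, the Hamiltonian current, is the unique ``cocycle'' that escapes the image of $g\mapsto(X_0\Gamma_g, Y_{0,1}\Gamma_g)$. This is the heart of the non-gradient content of the problem, and I expect it to rely on the closed-form decomposition of Section \ref{sec:closedformch2} and the Lie-algebra analysis of Section \ref{sec:lie-algebra}, together with a two-block type estimate to dispose of the $O(1)$ boundary contributions at the endpoints $\pm l$.
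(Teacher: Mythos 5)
Your proposal follows the same strategy as the paper: the paper splits Proposition \ref{prop:variancech3} into Lemma \ref{lem:clt1} (lower bound) and Lemma \ref{lem:clt2} (upper bound), proves the lower bound by exhibiting the same two-parameter family (through the equivalent device of computing $(-S^\gamma_{\La_l})^{-1}$ of $aA_l + H^F_l$, where $A_l = S^\gamma_{\La_l}\sum_j j\E_j$), and proves the upper bound by exactly the route you sketch: restricting to near-optimizers with Dirichlet form $O(l)$, extracting weak limits of the averaged gradient fields $(\xi_0^l(h),\xi_1^l(h))$, and invoking the closed-form characterization of Theorem \ref{thm:closedformch3} to show the limits lie in $\overline{\mathcal B + \{(0,p_0V'(r_1))\}}$. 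The one small imprecision is your phrase ``two-block type estimate'' for the boundary terms: the paper just bounds the boundary contribution by $C_\psi l^{-1/2}$ directly via Cauchy-Schwarz against the Dirichlet form, no two-block argument is needed.
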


The proof is divided into two lemmas.
\begin{lemma}\label{lem:clt1}
For $\psi$ in $\Cy_0$%  and a sequence $E_N$
% such that $\lim_{N \to \infty}E_N= E(\beta)$. Then,
\begin{displaymath}
\liminf_{N \to \infty} (2l)^{-1} \langle
(-S_{\La_l}^{\gamma})^{-1}\sum_{|i| \le l_{\psi}}\tau_i \psi,
\sum_{|i| \le l_{\psi}}\tau_i \psi \rangle \ge  \triple \psi
\triple_{-1}^2.
\end{displaymath}
\end{lemma}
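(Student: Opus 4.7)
The plan is to apply the standard Dirichlet--Rayleigh variational identity
$$
\langle (-S_{\Lambda_l}^\gamma)^{-1} G_l, G_l\rangle = \sup_{f}\bigl\{\,2\langle G_l, f\rangle - \langle f, (-S_{\Lambda_l}^\gamma) f\rangle\,\bigr\}, \qquad G_l := \sum_{|i|\le l_\psi}\tau_i\psi,
$$
with the supremum over smooth functions $f$ on $\Omega^{\Lambda_l}$. This is justified because $\psi \in \Cy_0$ and Lemma~\ref{lem:cy0}(i) gives $E_{\nu_{l,E}}[G_l]=0$ for every $E>0$, placing $G_l$ in the orthogonal complement of $\ker S_{\Lambda_l}^\gamma$; the spectral gap proved later then makes the pseudoinverse well defined. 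To produce the two free parameters $(g,a)$ appearing in the definition \eqref{eq:seminormdef} of $\triple\psi\triple_{-1}^2$, I would test against the two-parameter family
$$
f_{g,a} := \sum_{|j|\le l-s_g-1}\tau_j g \;+\; a\sum_{|i|\le l} i\,\E_i, \qquad g \in \Cy,\ a\in\R.
$$

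For the linear term, translation invariance of $\nu_\beta$ and locality give, on the one hand, $\sum_{|i|\le l_\psi}\sum_{|j|\le l-s_g-1} \langle \tau_i\psi,\tau_j g\rangle = (2l_\psi+1)\ll g,\psi\gg_* + O(1)$. On the other hand, setting $k=i-j$ in $\langle G_l, a\sum_i i\E_i\rangle$ yields $a\sum_{|j|\le l_\psi}\sum_{|k|\le s_\psi}(k+j)\langle\psi,\E_k\rangle$; the $j$-linear piece vanishes by antisymmetry of $\{|j|\le l_\psi\}$, leaving $(2l_\psi+1)\,a\ll\psi\gg_{**}+ O(1)$. Since $l_\psi/l\to 1$, multiplying by $2$ and dividing by $2l$ yields the limit $2\ll g,\psi\gg_* + 2a\ll\psi\gg_{**}$, exactly the linear part of \eqref{eq:seminormdef}.

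For the Dirichlet form I would use $X_j\E_i = 0$ (so the $X$-squares see only the $g$-piece) together with $Y_{j,j+1}(a\sum_i i\E_i) = a\,p_jV'(r_{j+1})$ for $-l\le j\le l-1$. In the bulk $|j|\le l-2s_g$, locality gives $X_j\sum_k\tau_k g=\tau_jX_0\Gamma_g$ and $Y_{j,j+1}\sum_k\tau_k g=\tau_jY_{0,1}\Gamma_g$, and stationarity of $\nu_\beta$ collapses the $j$-th term of the Dirichlet form to
$$
\langle(X_0\Gamma_g)^2\rangle + \bigl\langle(Y_{0,1}\Gamma_g + a\,p_0V'(r_1))^2\bigr\rangle.
$$
There are $\sim 2l$ such bulk sites; the remaining $O(s_g)$ boundary terms vanish after dividing by $2l$. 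Consequently $\tfrac{1}{2l}\langle f_{g,a},-S_{\Lambda_l}^\gamma f_{g,a}\rangle$ tends to $\tfrac{\gamma}{2}\bigl[\langle(X_0\Gamma_g)^2\rangle + \langle(Y_{0,1}\Gamma_g+a\,p_0V'(r_1))^2\rangle\bigr]$.

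Putting the two computations together, $(2l)^{-1}\{2\langle G_l,f_{g,a}\rangle-\langle f_{g,a},-S_{\Lambda_l}^\gamma f_{g,a}\rangle\}$ converges as $l\to\infty$ to precisely the expression inside the supremum defining $\triple\psi\triple_{-1}^2$ in \eqref{eq:seminormdef}. Each fixed pair $(g,a)\in\Cy\times\R$ thus produces a lower bound on the liminf, so taking the supremum concludes the proof. The principal technical obstacle is bookkeeping at the boundary of $\Lambda_l$: the energy test function $a\sum_{|i|\le l}i\E_i$ has coefficients of order $l$, so one must verify that the edge contributions (where $\tau_j g$ crosses $\partial\Lambda_l$, or where $Y_{j,j+1}$ straddles the cutoff $|i|\le l$) remain $O(1)$ rather than $O(l)$. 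Symmetry of the cutoff $\{|i|\le l\}$ together with the locality of $\psi$ and $g$ are exactly what is needed to keep these corrections negligible after dividing by $2l$.
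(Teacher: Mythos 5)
Your proposal is correct and takes essentially the same route as the paper. The paper phrases the lower bound through the auxiliary quantities $A_l = \sum_{i=-l}^{l-1}\tau_i W^S_{0,1} = S^\gamma_{\Lambda_l}\sum_{j=-l}^l j\E_j$ and $H^F_l = \sum_{|i|\le l-s_F-1}\tau_i S^\gamma F = S^\gamma_{\Lambda_l}\sum_{|i|\le l-s_F-1}\tau_i F$, applying the inequality $\langle(-S)^{-1}\phi,\phi\rangle \ge 2\langle(-S)^{-1}\phi,-\chi\rangle - \langle(-S)^{-1}\chi,\chi\rangle$ with $\chi = aA_l + H^F_l = -S^\gamma_{\Lambda_l}f_{F,a}$; unwinding the $(-S)^{-1}$ shows this is identical to testing the Dirichlet--Rayleigh formula against your $f_{g,a}$, and both computations of the limiting linear and quadratic pieces agree term by term.
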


\begin{lemma}\label{lem:clt2}
For $\psi$ in $\Cy_0$%  and a sequence $E_N$
% such that $\lim_{N \to \infty}E_N= E(\beta)$. Then,
\begin{displaymath}
\limsup_{N \to \infty}(2l)^{-1} \langle
(-S_{\La_l}^{\gamma})^{-1}\sum_{|i| \le l_{\psi}}\tau_i \psi,
\sum_{|i| \le l_{\psi}}\tau_i \psi \rangle \le \triple \psi
\triple_{-1}^2. 
\end{displaymath}
\end{lemma}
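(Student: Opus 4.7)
The plan is to start from the Rayleigh variational characterization of the resolvent
\begin{equation*}
\frac{1}{2l}\langle \Psi_l,\,(-S^\gamma_{\La_l})^{-1}\Psi_l\rangle \;=\; \sup_{g_l}\frac{1}{2l}\Big\{2\langle \Psi_l, g_l\rangle \;-\; \tfrac{\gamma}{2}\sum_{j\in\La_l}\big[\langle (X_j g_l)^2\rangle + \langle (Y_{j,j+1}g_l)^2\rangle\big]\Big\},
\end{equation*}
where $\Psi_l=\sum_{|i|\le l_\psi}\tau_i\psi$ and $g_l$ ranges over smooth functions on the restricted configuration space. The goal is to show that every admissible test function produces, in the limit, at most $\triple\psi\triple_{-1}^2$.

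Because $\psi\in\Cy_0$, I would write $\psi=\sum_{k\in\Lambda_\psi}[X_k F_k + Y_{k,k+1}G_k]$ with $F_k,G_k\in\tilde\Cy$. Since $X_k\E_k=0$ and $Y_{k,k+1}(\E_k+\E_{k+1})=0$, the fields $X_k$ and $Y_{k,k+1}$ are antisymmetric on $L^2(\nu_\beta)$. Integration by parts and reindexing $j=i+k$ then give
\begin{equation*}
\langle \Psi_l, g_l\rangle \;=\; -\sum_{|j|\le l_\psi}\big\{\langle \tau_j\tilde F,\, X_j g_l\rangle + \langle \tau_j\tilde G,\, Y_{j,j+1}g_l\rangle\big\} \;+\; O(1),
\end{equation*}
with $\tilde F=\sum_k\tau_{-k}F_k$ and $\tilde G=\sum_k\tau_{-k}G_k$; the boundary contribution is of order $1$ since $\tilde F,\tilde G$ are local, and is negligible after dividing by $2l$.

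The core of the argument is to extract from any near-optimizer $g_l$ a pair $(g,a)\in\Cy\times\R$ which realizes the infinite-volume variational formula. Using translation invariance of $\nu_\beta$ to rewrite $\langle \tau_j\tilde F, X_j g_l\rangle = \langle \tilde F, X_0\tau_{-j}g_l\rangle$ (and similarly for the $Y$-term), I would show that along a subsequence the averages in $j\in\La_l$ of $X_0\tau_{-j}g_l$ and $Y_{0,1}\tau_{-j}g_l$ are approximated in $L^2(\nu_\beta)$ by $X_0\Gamma_g$ and $Y_{0,1}\Gamma_g + a\,p_0V'(r_1)$ respectively. The additional parameter $a$ is forced by the fact that $\sum_j j\E_j$ is not a sum of shifts of a local function yet interacts non-trivially with the Hamiltonian current; this is exactly the content of the computation $\ll\psi\gg_{**}=-\langle p_0V'(r_1)\tilde G\rangle$ carried out in the paper. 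Compactness for the extraction comes from the a priori bound (only test functions with Dirichlet form of order $l$ are relevant) combined with the spectral gap for $S^\gamma_{\La_l}$ of Section~\ref{sec:spectral-gap-bound}, which controls $g_l$ modulo the conserved energy on each microcanonical surface.

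Combining, the linear term becomes $2\ll g,\psi\gg_* + 2a\ll\psi\gg_{**} + o(1)$ per unit site, while translation invariance and Fatou yield
\begin{equation*}
\frac{\gamma}{2l}\sum_{j\in\La_l}\big[\langle(X_j g_l)^2\rangle + \langle(Y_{j,j+1}g_l)^2\rangle\big] \;\ge\; \frac{\gamma}{2}\langle(X_0\Gamma_g)^2\rangle + \frac{\gamma}{2}\langle(Y_{0,1}\Gamma_g + a\,p_0V'(r_1))^2\rangle - o(1).
\end{equation*}
Passing to $l\to\infty$ and then taking the supremum over $(g,a)\in\Cy\times\R$ reproduces the defining formula \eqref{eq:seminormdef2} of $\triple\psi\triple_{-1}^2$. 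The main obstacle is the structural decomposition in the previous paragraph: one needs to prove that the only directions along which a bulk-normalized test function $g_l$ can fail to be a spatial average of shifts of a single local function are the one-parameter family generated by the linear drift $\sum_j j\,\E_j$. It is here that both the specific $\Cy_0$-form of $\psi$ and the spectral gap bound enter decisively, ruling out the appearance of additional slow modes in the scaling limit.
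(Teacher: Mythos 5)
Your proposal is correct and follows essentially the same route as the paper: the variational formula for the resolvent, integration by parts using the antisymmetry of $X_j$ and $Y_{j,j+1}$, restriction to test functions with Dirichlet form of order $l$, extraction of weak $L^2$-limit points of the averaged observables $\xi_0^l = X_0\bigl(\tfrac{1}{2l}\sum_j\tau_{-j}g_l\bigr)$ and $\xi_1^l = Y_{0,1}\bigl(\tfrac{1}{2l}\sum_j\tau_{-j}g_l\bigr)$, and a lower-semicontinuity argument for the quadratic part. The ``main obstacle'' you flag — characterizing the weak limits as elements of the $L^2$-closure of $\{(X_0\Gamma_g,\,Y_{0,1}\Gamma_g + a\,p_0V'(r_1)) : g\in\Cy,\,a\in\R\}$ — is precisely what the paper supplies as Theorem~\ref{thm:closedformch3} (the closed-form characterization), so your roadmap matches the paper's proof, which cites that theorem at exactly this point.
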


\begin{proof}[Proof of Lemma \ref{lem:clt1}]
Define 
$$
A_l:= \sum_{ i =  -l}^{l-1}\tau_i W^S_{0,1}
$$ 
and for $F \in \Cy$,  let 
$$
H^F_l:= \sum_{|i| \le l-s_F-1}\tau_i S^{\gamma}F .
$$
It is easy to see that 
\begin{align}
 \lim_{N \to \infty} (2l)^{-1} \langle (- S^\gamma_{\La_l})^{-1}
 &\sum_{|i| \le l_{\psi}} \tau_i \psi, A_l \rangle = - \ll \psi
 \gg_{**}, \label{eq-var1}\\ 
 \lim_{l \to \infty} (2l)^{-1} \langle (-S_{\La_l}^{\gamma})^{-1}
 &\sum_{|i| \le l_{\psi}}\tau_i \psi, H^F_l \rangle = -
 \ll \psi,F \gg_{*}, \\   
 \lim_{l \to \infty} (2l)^{-1} \langle (- S^\gamma_{\La_l})^{-1} & (a
 A_l + H^F_l ) , a A_l + H^F_l \rangle  \nonumber \\ 
& = \frac{\gamma}{2} \langle (a p_0V'(r_1)+ Y_{0,1}\Gamma_F)^2 \rangle
+ \frac{\gamma}{2} \langle (X_0\Gamma_F)^2 \rangle.
\end{align}
We just prove here \eqref{eq-var1}, the other relations are proven in
similar way. Assume for the simplicity that $\psi = X_0 F + Y_{0,1}G$, the
general case follows by linearity.
Since $A_l = S^\gamma_{\La_l} \sum_{j=-l}^l j \E_j$
\begin{equation*}
  \begin{split}
     (2l)^{-1} \langle (- S^\gamma_{\La_l})^{-1}
     &\sum_{|i| \le l_{\psi}} \tau_i \psi, A_l \rangle
     = - (2l)^{-1}  \sum_{j=-l}^l  \sum_{|i| \le l_{\psi}} j \langle
     \psi, \E_{j-i} \rangle \\
     & =  (2l)^{-1} \sum_{|i| \le l_{\psi}} \sum_{j=-l}^{l}  j 
     \langle G, Y_{0,1} \E_{j-i} \rangle \\
     & =  (2l)^{-1} \sum_{|i| \le l_{\psi}} \left(i \langle G,
       Y_{0,1} \E_0\rangle + (i+1) \langle G,
       Y_{0,1} \E_1\rangle  \right)    \\
     & = (2l)^{-1} (2l_\psi +1) \langle G, p_0 V'(r_1) \rangle
  \mathop{\longrightarrow}_{l\to\infty} - \ll \psi \gg_{**} 
  \end{split}
\end{equation*}

Then, obviously, 
\begin{align*}
 \liminf_{l \to \infty} (2l)^{-1} &  \langle
 (-S_{\La_l}^{\gamma})^{-1}\sum_{|i| \le l_{\psi}} \tau_i \psi,
 \sum_{|i| \le l_{\psi}} \tau_i \psi \rangle \\
 \ge   \liminf_{l \to \infty}  & [ 2(2l)^{-1} \langle
 (-S_{\La_l}^{\gamma})^{-1}\sum_{|i| \le l_{\psi}} \tau_i \psi, -(a
 A_l + H^F_l) \rangle    \\
& - (2l)^{-1} \langle (-S_{\La_l}^{\gamma})^{-1} (a A_l + H^F_l), (a
A_l + H^F_l) \rangle  ]\\
= 2\ll \psi,F \gg_{*} & +2a \ll \psi \gg_{**} -\frac{\gamma}{2}
\langle (a p_0V'(r_1)+ Y_{0,1}\Gamma_F)^2 \rangle - \frac{\gamma}{2}
\langle (X_0\Gamma_F)^2 \rangle. 
\end{align*}
Then, taking the supremum of $a \in \R$ and $F \in \Cy$ we obtain the desired inequality.
\end{proof}

\begin{proof}[Proof of Lemma \ref{lem:clt2}]

Let us assume for simplicity of notation that $\psi = X_0 F + Y_{0,1}
G$. The general case will follow straightforwardly. 
We use the variational formula 
\begin{equation*}
  \begin{split}
    (2l)^{-1} & \langle (-S_{\La_l}^{\gamma})^{-1}\sum_{|i| \le
      l_{\psi}} \tau_i \psi, \sum_{|i| \le l_{\psi}} \tau_i \psi
    \rangle \\
    &= \sup_{h} \left\{2 \langle \psi , \frac{1}{2l}\sum_{|i| \le l_{\psi}}
    \tau_i h\rangle  - \frac{\gamma}{4l} \mathcal D_l(h)\right\} \\
    &= \sup_{h} \left\{2 \langle F
    X_0 (\frac{1}{2l}\sum_{|i| \le l_{\psi}}\tau_i h) + G
    Y_{0,1}(\frac{1}{2l} \sum_{|i| \le l_{\psi}}\tau_i h) \rangle  -
    \frac\gamma{4l} \mathcal D_l(h) \right\}
  \end{split}
\end{equation*}
where 
\[
\mathcal D_l(h) = \sum_{|i| \le l}  \langle  (X_i h)^2  \rangle +  \sum_{ i = -l}^{l-1} \langle (Y_{i,i+1}h)^2 \rangle.
\]
The supremum can be restrained in the class of functions $h$ that are
localized in $\Lambda_l$ and such that $ \mathcal D_l(h) \le C_{\psi} l$.

Notice that
\begin{equation*}
 \left| \langle F X_0 (\frac{1}{2l}\sum_{l_\psi\le |i| \le l}\tau_i
  h) + G Y_{0,1} (\frac{1}{2l}\sum_{l_\psi\le |i| \le l}\tau_i
  h)\rangle\right| \le \frac {C_{\psi}}{2l} \mathcal D_l(h)^{1/2}
\end{equation*}
so, calling 
\begin{equation*}
  \xi_0^l(h) = X_0 (\frac{1}{2l}\sum_{|i| \le l}\tau_i h), \qquad
  \xi_1^l(h) = Y_{0,1} (\frac{1}{2l}\sum_{|i| \le l}\tau_i h)
\end{equation*}
and observing that, by Schwarz inequality
\begin{equation*}
  \langle (\xi_0^l(h))^2 +(\xi_1^l(h))^2 \rangle \
 \le \frac 1{2l} \mathcal D_l(h)
\end{equation*}
we obtain the upper bound
\begin{equation*}
  \begin{split}
    (2l)^{-1}  \langle (-S_{\La_l}^{\gamma})^{-1}&\sum_{|i| \le
      l_{\psi}} \tau_i \psi, \sum_{|i| \le l_{\psi}} \tau_i \psi
    \rangle \\
\le \sup_h &\Big\{ 2 \langle F, \xi_0^l(h) \rangle + 2
      \langle G , \xi_1^l(h) \rangle % + \left(
%         \frac {c}{2l} \langle (\xi_0^l(h))^2 +(\xi_1^l(h))^2
%         \rangle\right)^{1/2}\\
%         &
        - \frac{\gamma}{2} \left(\langle (\xi_0^l(h))^2 +(\xi_1^l(h))^2
        \rangle\right)\Big\} + C_{\psi}l^{-1/2}.
    \end{split}
\end{equation*}
Since for any choice of a sequence $\{h_l\}_{l}$ satisfying $\mathcal D_l(h_l) \le C_{\psi} l$, we have that the sequence 
$(\xi_0^l(h_l),\xi_1^l(h_l))$ is uniformly bounded in $L^2(\nu_{\beta})$, we can extract
convergent subsequences in $L^2(\nu_{\beta})$.
  All limit vectors $(\xi_0, \xi_1)$ that we obtain as limit points of 
$(\xi_0^l(h_l),\xi_1^l(h_l))$ are \emph{closed} in the sense specified in
Section \ref{sec:closedformch2}. We call this set of closed functions
$\frak h_c$, and we have obtained that 
\begin{equation*}
  \begin{split}
    \limsup_{l\to\infty} (2l)^{-1} \langle
    (-S_{\La_l}^{\gamma})^{-1}&\sum_{|i| \le l_{\psi}} \tau_i \psi,
    \sum_{|i| \le l_{\psi}} \tau_i \psi \rangle \\\le
    &\sup_{(\xi_0,\xi_1)\in \frak h_c} \Big\{ 2 \langle F, \xi_0
    \rangle + 2 \langle G , \xi_1 \rangle -\frac{\gamma}{2} \left(\langle
      (\xi_0)^2 +(\xi_1)^2 \rangle\right)\Big\}
  \end{split}
\end{equation*}
and the desired upper bound follows from the characterization of $\frak
h_c$ proved by Theorem
\ref{thm:closedformch3} in Section \ref{sec:closedformch2}. More
specifically we prove there that a closed function $(\xi_0,\xi_1)$ can
be approximated in $L^2(\nu_\beta)$ by functions of the type
$(X_0\Gamma_g, Y_{0,1} \Gamma_g + a p_0 V'(r_1))$, with $g\in \Cy$ and
$a\in \R$.
\end{proof}  

We are now in the position to state the main result of this section:
\begin{theorem}\label{gbg}
  Let $\psi \in \Cy_0$, and $G$ a smooth function on $\T$. Then
  \begin{equation}
    \label{eq:8}
    \begin{split}
      \limsup_{N\to\infty} \mathbb E_{\nu_\beta^N} \left( \sup_{0\le
          t\le T} \left[ N^{1/2} \int_0^t \sum_{i\in \T_N} G(i/N)
          \tau_i \psi(\omega_s) ds \right]^2 \right) \\
      \le \frac{C
        T}{\gamma} \triple \psi \triple_{-1}^2 \int_\T G(u)^2 du .
    \end{split}
  \end{equation}
\end{theorem}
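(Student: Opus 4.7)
Write $F_N=\sum_{i\in\T_N}G(i/N)\tau_i\psi$. Each $\tau_i\psi$ has vanishing microcanonical mean by Lemma \ref{lem:cy0}(i), and by the $\T_N$-translation invariance of $\nu_{N,E}$ this passes to $F_N$, so the hypothesis of \eqref{eq-genbound} is met. Pulling $N^{1/2}$ out of the time integral, the left-hand side of \eqref{eq:8} equals $N\cdot\mathbb{E}_{\nu_\beta^N}(\sup_t[\int_0^t F_N\,ds]^2)$, and \eqref{eq-genbound} therefore reduces the theorem to showing
\begin{equation*}
\limsup_{N\to\infty}\ \frac{1}{N}\,\langle F_N\,(-S_N)^{-1}F_N\rangle\ \le\ C\,\triple\psi\triple^2_{-1}\int_\T G(u)^2\,du.
\end{equation*}

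\textbf{Block-and-Young argument.} Fix a scale $l=l(N)$ with $l\to\infty$ and $l/N^{1/2}\to0$ (e.g.\ $l=\lfloor N^{1/3}\rfloor$), partition $\T_N$ into consecutive blocks $B_k$ of size $2l$ with centers $c_k$, and let $B_k^\circ$ be the concentric sub-block of width $2l-2s_\psi-2$, so that $\tau_i\psi$ is supported in $B_k$ whenever $i\in B_k^\circ$. Split $F_N=\tilde F_N+R_N$ with $\tilde F_N=\sum_k G(c_k/N)\psi_k$, $\psi_k:=\sum_{i\in B_k^\circ}\tau_i\psi$, and $R_N$ absorbing the boundary sites together with the smoothness errors $(G(i/N)-G(c_k/N))\tau_i\psi$; by the triangle inequality for the seminorm $\langle\cdot,(-S_N)^{-1}\cdot\rangle^{1/2}$ it suffices to bound the two pieces separately. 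For $\tilde F_N$, insert the variational representation
\begin{equation*}
\langle\tilde F_N(-S_N)^{-1}\tilde F_N\rangle=\sup_h\bigl\{2\langle\tilde F_N,h\rangle-\langle h(-S_N)h\rangle\bigr\}
\end{equation*}
and apply, block by block, the elementary Young inequality
\begin{equation*}
2\,G(c_k/N)\langle\psi_k,h\rangle\le G(c_k/N)^2\langle\psi_k(-S_{B_k})^{-1}\psi_k\rangle+\langle h(-S_{B_k})h\rangle,
\end{equation*}
valid because $\psi_k\in\Cy_0$ is localized in $B_k$ and hence orthogonal to $\ker(S_{B_k})$. Using $\sum_k\langle h(-S_{B_k})h\rangle\le\langle h(-S_N)h\rangle$ (since $S_N$ contains additional positive squared-field terms from the inter-block bonds), summing over $k$ and taking the sup in $h$ yields
\begin{equation*}
\langle\tilde F_N(-S_N)^{-1}\tilde F_N\rangle\ \le\ \sum_k G(c_k/N)^2\,\langle\psi_k(-S_{B_k})^{-1}\psi_k\rangle.
\end{equation*}
By Proposition \ref{prop:variancech3} each factor equals $2l\triple\psi\triple^2_{-1}(1+o_l(1))$, while $\frac{2l}{N}\sum_k G(c_k/N)^2\to\int_\T G^2\,du$ as a Riemann sum, giving the desired bound for $\tilde F_N$.

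\textbf{Remainder and main obstacle.} The remainder $R_N=\sum_i d_i^N\tau_i\psi$ has coefficients with $|d_i^N|\le\|G'\|_\infty\,l/N$ on interior sites and $|d_i^N|\le\|G\|_\infty$ on an $O(s_\psi/l)$-fraction of boundary sites; repeating the block-and-Young argument on $R_N$ gives $N^{-1}\langle R_N(-S_N)^{-1}R_N\rangle=O(l^2/N^2)+O(s_\psi/l)\to0$ under our choice of $l$. Combining via the triangle inequality produces $C=16$ (up to $o(1)$ corrections). The main technical obstacle is the Dirichlet-form comparison $\sum_k\langle h(-S_{B_k})h\rangle\le\langle h(-S_N)h\rangle$ together with the uniform $l$-control of $(2l)^{-1}\langle\psi_k(-S_{B_k})^{-1}\psi_k\rangle$; the latter follows from the $\frac{2}{\gamma}\langle\tilde F^2+\tilde G^2\rangle$-upper bound noted after \eqref{eq:seminormdef2}. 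A secondary subtlety is the invertibility of $S_{B_k}$ on the subspace where $\psi_k$ lives, which is provided by Lemma \ref{lem:cy0}(i) together with the spectral gap of $S_{B_k}$ established in Section \ref{sec:spectral-gap-bound}.
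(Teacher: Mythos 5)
Your proof is correct, and it implements the same Kipnis--Varadhan skeleton as the paper (apply the $H_{-1}$ bound \eqref{eq-genbound}, localize, identify the per-block variance via Proposition \ref{prop:variancech3}), but the localization step is carried out differently. The paper replaces $\sum_i G(i/N)\tau_i\psi$ by the moving average $\frac1{2K+1}\sum_j G(j/N)\sum_{|i-j|\le K}\tau_i\psi$, controls the difference with the crude a-priori bound \eqref{eq:9-1}, and then does a per-site Young inequality in which each bond is counted $\sim 2K+1$ times, producing the normalization $\frac1{2K+1}\langle\hat\psi_K,(-S_{\Lambda_K})^{-1}\hat\psi_K\rangle$. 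You instead partition $\T_N$ into \emph{disjoint} blocks $B_k$, use Young's inequality per block together with the Dirichlet-form monotonicity $\sum_k\langle h(-S_{B_k})h\rangle\le\langle h(-S_N)h\rangle$, and sweep the block-boundary sites and the $G$-smoothing error into a remainder $R_N$ that is controlled by the crude bound. The disjoint-block version trades the overlap-counting argument for bookkeeping of the $O(s_\psi/l)$-fraction of boundary sites; the two are interchangeable variants of the same technique. The only imprecision worth flagging is your appeal to the $\frac2\gamma\langle\tilde F^2+\tilde G^2\rangle$ bound for the uniform $l$-control of $(2l)^{-1}\langle\psi_k(-S_{B_k})^{-1}\psi_k\rangle$: that estimate controls the \emph{limit} $\triple\psi\triple_{-1}^2$, not the finite-$l$ quantity; the uniform-in-$l$ bound you actually need is what the paper isolates as the simple a-priori estimate \eqref{eq:9-1} (a one-line Schwarz argument using $\psi\in\Cy_0$), and you should invoke that explicitly both here and for the remainder $R_N$.
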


\begin{proof}[Proof of Theorem \ref{gbg}]
We follow the argument in \cite{clo}, Theorem 4.2. 

First we prove the simpler bound
\begin{equation}
  \label{eq:9-1}
  \mathbb E_{\nu_\beta^N} \left( 
      \sup_{0\le t\le T} \left[ N^{1/2} \int_0^t \sum_{i\in \T_N}
        G(i/N) \tau_i \psi(\omega_s) ds \right]^2 \right)
    \le \frac{C_\psi T}{\gamma} \frac 1N \sum_{i\in\T_N} G(i/N)^2
\end{equation}
for some finite constant $C_\psi$.

By \eqref{eq-genbound}, the left side of \eqref{eq:9-1} is bounded by
\begin{equation*}
  {16 T} \langle N^{1/2}\sum_{i\in \T_N}
        G(i/N) \tau_i \psi, (-N^2 \gamma S_N)^{-1}   N^{1/2} \sum_{i\in \T_N}
        G(i/N) \tau_i \psi \rangle
\end{equation*}
 that can be written with the variational formula
 \begin{equation*}
  {16 T}  \sup_f \left\{ N^{1/2}\sum_{i\in \T_N}
        G(i/N) \langle f \tau_i \psi\rangle - N^2\gamma \langle f, (-S_N)
        f\rangle \right\}.
 \end{equation*}
Since $\psi \in \Cy_0$, there exists $\Psi_j$ and $\Phi_j$ belonging
to $\Cy$, for $j\in \Lambda \subset \subset \Z$, such that 
$\psi = \sum_{j\in\Lambda} \left[ X_j(\Psi_j) + Y_{j,j+1}
  (\Phi_j)\right]$, and we can bound by integration by parts
\begin{equation*}
  \langle f \tau_i \psi\rangle \le C_{\psi} \langle \sum_{j \in \La} [(X_{-i-j}f) ^2 + (Y_{-i-j,-i-j+1}f) ^2] \rangle^{1/2}
\end{equation*}
and again by Schwarz inequality
\begin{equation*}
   N^{1/2}\sum_{i\in \T_N}
        G(i/N) \langle f \tau_i \psi\rangle \le 
        \left(\frac 1N \sum_{i\in\T_N} G(i/N)^2\right)^{1/2}
        \left(N^2C_{\psi} \langle f, (-S_N) f\rangle\right)^{1/2}
\end{equation*}
and maximizing on $f$ we obtain \eqref{eq:9-1}.
  
Now we have to refine the bound by showing that the constant on the
right hand side is proportional to $\triple \psi \triple_{-1}^2$.
In order to do this, we have to perform a further microscopic average:
given $K<< N$, in \eqref{eq:8} we want to substitute
\begin{equation*}
  \sqrt N \sum_{i\in \T_N} G(i/N) \tau_i \psi
\end{equation*}
with 
\begin{equation*}
  \sqrt N \sum_{j\in \T_N} G(j/N) \frac 1{2K+1} \sum_{|i-j|\le K} \tau_i \psi.
\end{equation*}
Then the difference is estimated by 
\begin{equation*}
   \mathbb E_{\nu_\beta^N} \left( 
      \sup_{0\le t\le T} \left[ N^{1/2} \int_0^t 
        \sum_{i,j\in \T_N, |i-j| \le K}
        \frac 1{2K+1} (G(i/N)- G(j/N)) \tau_i \psi(\omega_s) ds \right]^2 \right)
\end{equation*}
that by \eqref{eq:9-1} is bounded by $C_G K/N$ and tends to $0$ as $N\to
\infty$.  

So we are left with 
\begin{equation*}
   \mathbb E_{\nu_\beta^N} \left( 
      \sup_{0\le t\le T} \left[\frac{ N^{1/2}}{2K+1} \int_0^t 
        \sum_{j\in \T_N} G(j/N) \tau_j \hat\psi_K (\omega_s) ds \right]^2 \right) 
\end{equation*}
where $\hat\psi_K = \sum_{ |i| \le K} \tau_i \psi$. By
(\ref{eq-genbound}) this is bounded by
\begin{equation*}
  \begin{split}
    \frac {CT}{2K+1} \sum_{j\in \T_N} \sup_f \left\{ \sqrt N G(j/N)
      \langle \tau_j \hat\psi_K f \rangle - N^2 \frac{\gamma}{2} \sum_{|i-j|\le K}
      \langle (X_i f)^2 + (Y_{i,i+1} f)^2\rangle \right\}\\
    \le \frac {CT}{\gamma N}\sum_{j\in \T_N}  G(j/N)^2 \frac{1}{2K+1}
    \langle \hat\psi_K, (-S_{\Lambda_K})^{-1} \hat\psi_K \rangle .  
  \end{split}
\end{equation*}
Taking the limit as $N\to\infty$ and $K\to \infty$ we obtain (\ref{eq:8}). 
\end{proof}

Applying Theorem \ref{gbg} to $I^1_{N,F,t}(H)$ we have
\begin{equation}
  \label{eq:10}
  \begin{split}
    \limsup_{N\to\infty} \mathbb E_{\nu_\beta^N} &\left( \sup_{0\le
        t\le T}(I^1_{N,F,t}(H))^2 \right) \\
    &\le \frac{C T}{\gamma}
    \triple W_{0,1} + \kappa_\beta (p_1^2 - p_0^2) - L^\gamma F
    \triple_{-1}^2 \int_\T H'(u)^2 du .
  \end{split}
\end{equation}
To conclude the proof of Lemma \ref{lem:CLT}, we need to show that
there exists a sequence of local functions $F_K$ in $\Cy$ such that 
 $$
\triple W_{0,1} + \kappa_\beta (p_1^2 - p_0^2) - L^\gamma F_K \triple_{-1}
\to 0
$$ 
as $K\to \infty$. This will be proven in the next section.

\section{Hilbert space}
\label{sec-hilbert}

In this section, to prove the first statement of Lemma \ref{lem:CLT}, we investigate the properties of the semi norm $\triple \cdot
    \triple_{-1}$  introduced in the previous section and the
    structure of the Hilbert space that it generates. 
 
We first define from $\triple \cdot \triple_{-1}$ a semi-inner product
on $\Cy_0$ through polarization: 
\begin{equation}
\ll g,h \gg_{-1} = \frac{1}{4}\{\triple g+h \triple_{-1}^2 - \triple g-h \triple_{-1}^2 \}. \label{eq:semiinch3}
\end{equation}

It is easy to check that (\ref{eq:semiinch3}) defines a semi-inner
product on $\Cy_0$. Denote by $\mathcal{N}$ the kernel
of the semi-norm $\triple \cdot \triple_{-1}$ on $\Cy_0$. Since  
$\ll \cdot , \cdot \gg_{-1}$ is a semi-inner product on
$\Cy_0$, the completion of $\Cy_0|_{\mathcal{N}}$,
denoted by $\mathcal{H}_{-1}$, is a Hilbert space. 

In the following, in order to simplify notations, we will set $L=
L^\gamma$ and $S = S^\gamma$.
 
By Lemma \ref{lem:cy0}, the linear space generated by $W^S_{0,1}$ and
$S\Cy := \{ Sg; \ g \in \Cy \}$ are subsets of
$\Cy_0$. The first main result of this section consists in showing
that $\mathcal{H}_{-1}$ is the completion of
$S\Cy|_{\mathcal{N}}+\{ W^S_{0,1} \}$, in
other words, that all elements of $\mathcal{H}_{-1}$ can be
approximated by $aW^S_{0,1}+Sg$ for some $a$ in $\R$ and $g$
in $\Cy$. To prove this result we derive two elementary identities: 
\begin{equation}
\ll h,Sg \gg_{-1} = - \ll h,g \gg_{*}, \quad \ll h,W^S_{0,1}
\gg_{-1} = -\ll h \gg_{**} \label{eq:keyidch3}
\end{equation}
for all $h$ in $\Cy_0$ and $g$ in $\Cy$.

By Proposition \ref{prop:variancech3} and (\ref{eq:semiinch3}), the
semi-inner product $\ll h,g \gg_{-1}$ is the limit of the covariance
$(2N)^{-1} \langle (-S_{\La_N})^{-1}\sum_{|i| \le N_g}\tau_i g,
\sum_{|i| \le N_h}\tau_i h \rangle$ as $N \uparrow
\infty$%  and $E_N \to E(\beta)$
. 
In particular, if $g=Sg_0$, for some
$g_0$ in $\Cy$, the inverse of the operator $S$ cancels with the
operator $S$. Therefore
\begin{displaymath}
\ll h,Sg_0 \gg_{-1} = -\lim_{N \to \infty} (2N)^{-1} \langle \sum_{|i|
  \le N_{g_0}}\tau_i g_0, \sum_{|i| \le N_h}\tau_i h \rangle =
- \ll g_0,h \gg_{*}. 
\end{displaymath}
The second identity is proved by similar way with the elementary
relation 
 $$
S_{\La_N}\sum_{i \in \La_N}i \E_i=\sum_{i,i+1 \in \La_N}W^S_{i,i+1}.
$$ 

The identities of (\ref{eq:keyidch3}) permit to compute the following
elementary relations 
\begin{align*}
\ll W^S_{0,1}, Sh \gg_{-1} & =-\sum_{i \in\Z} i  \langle \E_i
S h \rangle 
%& = - \frac{\gamma}{2} \sum_{i \in\Z} i  \langle \E_i  [(Y_{i-1,i})^2+ (Y_{i,i+1})^2]h \rangle  
 =  \gamma \langle p_0 V'(r_1) Y_{0,1} \Gamma_h \rangle, \\
\ll p_1^2-p_0^2, Sh \gg_{-1} &=0
\end{align*}
for all $h \in \Cy$, and
  \begin{equation*}
    \begin{split}
      \ll W^S_{0,1}, W^S_{0,1} \gg_{-1} &= \frac{\gamma}{2} \langle
      (p_0 V'(r_1) )^2 \rangle = \frac{\gamma}{2\beta} \langle
      V'(r_1)^2 \rangle, \\
      \ll W^S_{0,1}, p_1^2-p_0^2 \gg_{-1} &= -\frac{1}{\beta^2}.
    \end{split}
\end{equation*}
Furthermore,
\begin{displaymath}
\triple a W^S_{0,1} +Sg \triple_{-1}^2 = \frac{\gamma}{2}
\langle (a p_0V'(r_1)+ Y_{0,1}\Gamma_g)^2 \rangle +
\frac{\gamma}{2} \langle  ( X_{0}\Gamma_g )^2 \rangle    
\end{displaymath}
for $a$ in $\R$ and $g$ in $\Cy$. In particular, the variational
formula (\ref{eq:seminormdef}) for $\triple h \triple_{-1}^2$ is reduced to the expression 
\begin{equation}
\triple h \triple_{-1}^2 = \sup_{a \in \R, g \in \Cy}\{ -2 \ll h, a
W^S_{0,1} +Sg  \gg_{-1} - \triple a W^S_{0,1} +Sg
\triple_{-1}^2 \}. \label{eq:normvariationalch3} 
\end{equation}
\begin{proposition}\label{prop:directch3}
Recall that we denote by $S\Cy$ the space $\{ Sg; \ g \in \Cy
\}$. Then we have 
\begin{displaymath}
\mathcal{H}_{-1} = \overline{S\Cy}|_{\mathcal{N}} \plus \{
W^S_{0,1} \}. 
\end{displaymath}
\end{proposition}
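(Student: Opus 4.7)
The plan is to show that $\mathcal V := \overline{S\Cy}|_{\mathcal N} + \R W^S_{0,1}$ is both closed and dense in $\mathcal H_{-1}$, with directness of the decomposition coming as a separate, easy check. Closedness is immediate: adding the one-dimensional subspace $\R W^S_{0,1}$ to a closed subspace of a Hilbert space gives a closed subspace. The only real content is density, which I would extract directly from the variational characterization (\ref{eq:normvariationalch3}) of the semi-norm $\triple \cdot \triple_{-1}$ together with the identities (\ref{eq:keyidch3}).

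For density, I would show $\mathcal V^\perp = \{0\}$ in $\mathcal H_{-1}$. Take any $h \in \Cy_0$ orthogonal to $\mathcal V$, so that $\ll h,\, aW^S_{0,1} + Sg\gg_{-1} = 0$ for every $a \in \R$ and $g \in \Cy$. Plugging this into (\ref{eq:normvariationalch3}),
\begin{equation*}
\triple h \triple_{-1}^2 \;=\; \sup_{a \in \R,\, g \in \Cy} \left\{\,-2\ll h,\, aW^S_{0,1}+Sg\gg_{-1} \;-\; \triple aW^S_{0,1}+Sg\triple_{-1}^2\,\right\} \;\le\; 0,
\end{equation*}
since the first summand is zero by hypothesis and the second is non-positive (the bound is saturated at $a=0$, $g=0$). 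Hence $h \in \mathcal N$, i.e., $h$ vanishes in $\mathcal H_{-1}$. Since $\Cy_0/\mathcal N$ is dense in $\mathcal H_{-1}$ by construction, this forces $\mathcal V^\perp = \{0\}$, and therefore $\mathcal V = \mathcal H_{-1}$.

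For directness of the sum, suppose that $aW^S_{0,1}$ lies in $\overline{S\Cy}|_{\mathcal N}$ for some $a \in \R$; I would produce a continuous linear functional on $\mathcal H_{-1}$ that kills $S\Cy$ but not $W^S_{0,1}$. The natural candidate is $\phi(h) := \ll h,\, p_1^2-p_0^2\gg_{-1}$, which is continuous by Cauchy--Schwarz once we recall from Lemma \ref{lem:cy0} that $p_1^2 - p_0^2 \in \Cy_0$. Using $\ll Sg,\, p_1^2-p_0^2\gg_{-1} = -\ll g,\, p_1^2-p_0^2\gg_*$ together with the telescoping identity $\sum_i \langle g(p_{i+1}^2-p_i^2)\rangle = 0$ (which holds since $g$ is local with $\nu_\beta$-mean zero, so $\langle g\,p_k^2\rangle = \langle g\rangle\langle p^2\rangle = 0$ for $|k|$ large enough), one obtains $\phi(Sg)=0$ for every $g \in \Cy$. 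On the other hand, the computation of $\ll W^S_{0,1},\, p_1^2-p_0^2\gg_{-1} = -\beta^{-2}$ recorded in the displayed equation just before the proposition gives $\phi(W^S_{0,1}) \neq 0$, forcing $a = 0$.

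The hardest part of the argument has already been carried out earlier in this section: establishing the variational formula (\ref{eq:normvariationalch3}) and the identities (\ref{eq:keyidch3}) via Proposition \ref{prop:variancech3}. Once those are in hand the proposition is a clean Hilbert-space manipulation, and the only step requiring any care is the telescoping computation verifying $\phi|_{S\Cy} \equiv 0$, which is elementary given the locality of $g$.
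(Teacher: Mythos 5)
Your density argument is essentially the paper's proof: the authors also argue that an $h$ orthogonal to $W^S_{0,1}$ and to all $Sg$ must satisfy $\triple h\triple_{-1}=0$, citing the variational formula (\ref{eq:normvariationalch3}) exactly as you do. Your extra observation that $\overline{S\Cy}|_{\mathcal{N}} + \R W^S_{0,1}$ is already closed (finite-dimensional extension of a closed subspace) is a sensible explicit remark that the paper leaves implicit.

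One thing worth flagging: in the paper the symbol $\plus$ does \emph{not} assert directness. The remark immediately following Corollary~\ref{cor:directgradient} states explicitly that Proposition~\ref{prop:directch3} only claims $\mathcal H_{-1}$ is generated by $S\Cy$ and $\{W^S_{0,1}\}$, while the \emph{direct}-sum claim (trivial intersection) is reserved for the corollary. So your paragraph on directness, built around the functional $\phi(\cdot) = \ll\cdot, p_1^2-p_0^2\gg_{-1}$ together with $\ll Sg, p_1^2-p_0^2\gg_{-1}=0$ and $\ll W^S_{0,1}, p_1^2-p_0^2\gg_{-1}=-\beta^{-2}$, is not part of this proposition's proof at all; it is, almost verbatim, the paper's proof of Corollary~\ref{cor:directgradient}. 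No harm done, but you have proved more than was asked, and you should be aware that the weaker statement suffices here.

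A small point of rigor you might tighten: as phrased, you show that any $h\in\Cy_0$ orthogonal to $\mathcal V$ lies in $\mathcal N$ and then invoke density of $\Cy_0/\mathcal N$ to conclude $\mathcal V^\perp=\{0\}$. Strictly, $\mathcal V^\perp\cap(\Cy_0/\mathcal N)=\{0\}$ plus density does not, in general, force $\mathcal V^\perp=\{0\}$. The clean route (which (\ref{eq:normvariationalch3}) in fact delivers directly) is to read the supremum as $\triple h\triple_{-1}^2 - \inf_{a,g}\triple h - aW^S_{0,1}-Sg\triple_{-1}^2$, so that the formula literally says $\dist(h,\mathcal V)=0$, hence $\Cy_0/\mathcal N\subseteq\mathcal V$; since $\mathcal V$ is closed and $\Cy_0/\mathcal N$ is dense, $\mathcal V=\mathcal H_{-1}$ follows. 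The paper's own phrasing has the same slight looseness, so this is a refinement rather than a correction, but it is the argument you actually want.
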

\begin{proof}
The inclusion $\mathcal{H}_{-1} \supset
\overline{S\Cy}|_{\mathcal{N}} \plus \{ W^S_{0,1} \}$ is
obvious. Then we have only to show that if $h\in \mathcal{H}_{-1}$
such that 
$\ll h, W^S_{0,1}\gg =0$ and $\ll h, Sg \gg =0$ for all $g\in \Cy$,
then $\triple h\triple_{-1} = 0$. This follows directly from the
variational formula (\ref{eq:normvariationalch3}).
\end{proof}

\begin{corollary}\label{cor:directgradient}
We have
\begin{displaymath}
\mathcal{H}_{-1} =
\overline{S\Cy}|_{\mathcal{N}} \oplus \{ W_S^{0,1} \} =
\overline{S\Cy}|_{\mathcal{N}} \oplus \{ p_1^2 - p_0^2 \}. 
\end{displaymath}
\end{corollary}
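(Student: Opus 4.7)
The plan is to derive the corollary directly from Proposition~\ref{prop:directch3} by identifying the orthogonal complement of $V := \overline{S\Cy}|_{\mathcal{N}}$ inside $\mathcal{H}_{-1}$ as the one-dimensional space spanned by $p_1^2 - p_0^2$. Once this is done, both equalities follow cleanly: the second is the orthogonal decomposition $\mathcal{H}_{-1} = V \oplus V^\perp$, while the first records that $W^S_{0,1}$ projects nontrivially onto $V^\perp$ (so its span intersects $V$ trivially). I would emphasize that the two $\oplus$'s are of slightly different nature: the second is truly orthogonal, while the first is only an algebraic direct sum, because $\ll W^S_{0,1}, Sh \gg_{-1} \ne 0$ in general.

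First I would observe that $p_1^2 - p_0^2$ is orthogonal to $V$. This is immediate from the identity
\begin{equation*}
\ll p_1^2 - p_0^2 ,\, Sh \gg_{-1} = 0 \qquad \text{for all } h \in \Cy,
\end{equation*}
already recorded before Proposition~\ref{prop:directch3}, combined with the continuity of $\ll \cdot, \cdot \gg_{-1}$, which transports this orthogonality to the whole closure $V$.

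Next I would check that $p_1^2 - p_0^2 \notin \mathcal{N}$, so that it represents a nonzero element of $\mathcal{H}_{-1}$. Inserting $h = p_1^2 - p_0^2$ into the variational formula \eqref{eq:normvariationalch3} and using the identities $\ll p_1^2 - p_0^2, Sg \gg_{-1} = 0$ and $\ll p_1^2 - p_0^2, W^S_{0,1} \gg_{-1} = -\beta^{-2}$, the choice $g = 0$ yields
\begin{equation*}
\triple p_1^2 - p_0^2 \triple_{-1}^{\,2} \;\ge\; \sup_{a \in \R}\Bigl\{ \tfrac{2a}{\beta^2} - \tfrac{\gamma a^2}{2\beta}\, \langle V'(r_1)^2 \rangle \Bigr\} \;=\; \frac{2}{\gamma\,\beta^3\,\langle V'(r_1)^2 \rangle} \;>\; 0.
\end{equation*}

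Finally I would conclude by a short dimension argument. Proposition~\ref{prop:directch3} gives $\mathcal{H}_{-1} = V + \mathrm{span}\{W^S_{0,1}\}$, so the orthogonal projection $\pi : \mathcal{H}_{-1} \to V^\perp$ kills $V$ and has image $\pi(\mathrm{span}\{W^S_{0,1}\})$, hence $\dim V^\perp \le 1$. Combined with the nonzero vector $p_1^2 - p_0^2 \in V^\perp$ produced above, we get $V^\perp = \mathrm{span}\{p_1^2 - p_0^2\}$, which is the second (orthogonal) decomposition of the corollary. For the first, since $\ll W^S_{0,1}, p_1^2 - p_0^2 \gg_{-1} = -\beta^{-2} \ne 0$, the component of $W^S_{0,1}$ in $V^\perp$ is nonzero, hence $W^S_{0,1} \notin V$ and the sum $\mathcal{H}_{-1} = V \oplus \mathrm{span}\{W^S_{0,1}\}$ is direct. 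I do not anticipate any serious obstacle: the only delicate point is that the first $\oplus$ is not orthogonal, and the argument pivots entirely on the two elementary semi-inner-product identities already at hand.
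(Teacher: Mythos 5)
Your proof is correct and follows essentially the same approach as the paper, which simply cites the two identities $\ll p_1^2-p_0^2, Sh \gg_{-1}=0$ and $\ll W^S_{0,1}, p_1^2-p_0^2 \gg_{-1} = -\beta^{-2}$ together with Proposition~\ref{prop:directch3} and declares the result "straightforward"; you have merely filled in the bookkeeping (the dimension count on $V^\perp$ and the check that $p_1^2-p_0^2 \notin \mathcal{N}$) that the paper leaves implicit.
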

\begin{proof}
Since $\ll p_1^2-p_0^2, Sh \gg_{-1}=0$ for all $h \in \Cy$ and
$\ll W^S_{0,1}, p_1^2-p_0^2 \gg_{-1} = -\frac{1}{\beta^2}$, the result follows from
Proposition \ref{prop:directch3} straightforwardly.
\end{proof}

\begin{remark}
While the statement of Proposition \ref{prop:directch3} claims that $\mathcal{H}_{-1}$ is generated by the spaces $S\Cy$ and $\{
W^S_{0,1} \} (=\{a W^S_{0,1}; a \in \R \})$, the statement of Corollary \ref{cor:directgradient} claims also that the intersection of them is the trivial set.
Note that $S\Cy$ and $\{W^S_{0,1} \}$ are not orthogonal.

\end{remark}

Next, to replace the space $S\Cy$ by $L\Cy$, we show some useful lemmas.
\begin{lemma}\label{lem:antich3}
For all $g,h \in \Cy$, $\ll Sg, A h \gg_{-1} = -\ll
Ag, Sh \gg_{-1}$.  Especially, $\ll Sg, A
g \gg_{-1}=0$. 
\end{lemma}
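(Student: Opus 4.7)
The plan is to deduce the identity directly from the elementary relations in \eqref{eq:keyidch3}, combined with the fact that $A$ is antisymmetric in $L^2(\nu_\beta)$ and commutes with spatial translations. Both $Sg, Sh$ and $Ag, Ah$ lie in $\Cy_0$ by Lemma \ref{lem:cy0}(iii), so all semi-inner products below make sense, and $\ll\cdot,\cdot\gg_{-1}$ is symmetric by its polarization definition \eqref{eq:semiinch3}.

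First, I would apply the first identity in \eqref{eq:keyidch3} with ``$h$'' replaced by $Ah\in\Cy_0$ to obtain
\[
\ll Sg, Ah \gg_{-1} \;=\; \ll Ah, Sg\gg_{-1} \;=\; -\ll Ah, g\gg_{*}.
\]
Next, I would rewrite the right-hand side using the antisymmetry of $A$ on $L^2(\nu_\beta)$ and its translation invariance ($A\tau_i = \tau_i A$):
\[
\ll Ah, g\gg_{*} \;=\; \sum_{i\in\Z}\langle Ah,\tau_i g\rangle_\beta \;=\; -\sum_{i\in\Z}\langle h,\tau_i Ag\rangle_\beta \;=\; -\ll h, Ag\gg_{*} \;=\; -\ll Ag, h\gg_{*},
\]
the last equality being the obvious symmetry of $\ll\cdot,\cdot\gg_{*}$.

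Finally, applying \eqref{eq:keyidch3} once more, this time with ``$h$'' replaced by $Ag\in\Cy_0$, gives
\[
\ll Ag, Sh\gg_{-1} \;=\; -\ll Ag, h\gg_{*}.
\]
Chaining the three identities yields $\ll Sg, Ah\gg_{-1} = \ll Ag, h\gg_{*} = -\ll Ag, Sh\gg_{-1}$, which is the desired equality (using symmetry of $\ll\cdot,\cdot\gg_{-1}$ on the right-hand side). Taking $g=h$ gives $\ll Sg, Ag\gg_{-1} = -\ll Ag, Sg\gg_{-1} = -\ll Sg, Ag\gg_{-1}$, hence $\ll Sg, Ag\gg_{-1}=0$.

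There is no real obstacle here; the only subtle point is verifying that $Ah\in\Cy_0$ so that \eqref{eq:keyidch3} is applicable, and then carefully accounting for the two sign changes (one from the antisymmetry of $A$, one from the minus sign in \eqref{eq:keyidch3}) that combine to give the claimed antisymmetry between $S$ and $A$ in the $\ll\cdot,\cdot\gg_{-1}$ pairing.
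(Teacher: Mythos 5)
Your proof is correct and follows essentially the same route as the paper's: both arguments invoke the first identity of \eqref{eq:keyidch3} twice (with $Ah$ and then $Ag$ playing the role of the $\Cy_0$ element), and in between use the antisymmetry of $A$ on $L^2(\nu_\beta)$ together with its translation-covariance to move $A$ from one argument of $\ll\cdot,\cdot\gg_{*}$ to the other. The only cosmetic difference is the order in which you apply symmetry of the pairings; the computation is the same.
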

\begin{proof}
By the first identity of (\ref{eq:keyidch3}),
\begin{align*}
\ll Sg, Ah \gg_{-1}&=-\ll g, Ah \gg_{*}=-\sum_{i \in \Z}\langle  \tau_i g, Ah \rangle \\
 &=\sum_{i \in \Z}\langle A \tau_ i g, h \rangle = \sum_{i \in
   \Z}\langle \tau_i Ag, h \rangle \\ 
&=\sum_{i \in \Z}\langle Ag, \tau_{-i}h \rangle = 
\sum_{i \in \Z}\langle Ag, \tau_i h \rangle
=-\ll Ag, Sh \gg_{-1}.
\end{align*}
\end{proof}

\begin{lemma}\label{lem:anti2ch3}
For all $g \in \Cy$, $\ll Sg, W^A_{0,1}
\gg_{-1}=-\ll  Ag, W^S_{0,1} \gg_{-1}$. 
\end{lemma}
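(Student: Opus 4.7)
The plan is to derive the identity purely algebraically from the two key identities \eqref{eq:keyidch3} established earlier, using antisymmetry of $A$ together with the explicit gradient structure of $A\E_i$. Note that both $W^A_{0,1} = -p_0 V'(r_1)$ and $W^S_{0,1}$ lie in $\Cy_0$ (Lemma \ref{lem:cy0}(ii)) and that $Ag, Sg \in \Cy_0$ for $g \in \Cy$ (Lemma \ref{lem:cy0}(iii)), so both inner products are well defined.

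First, I apply the first identity of \eqref{eq:keyidch3}, with $h=W^A_{0,1}$, to the left-hand side:
\begin{equation*}
\ll Sg, W^A_{0,1} \gg_{-1} \;=\; \ll W^A_{0,1}, Sg \gg_{-1} \;=\; -\ll W^A_{0,1}, g \gg_{*} \;=\; -\sum_{i\in\Z}\langle \tau_i W^A_{0,1},\, g\rangle,
\end{equation*}
where in the last step I used translation invariance of $\nu_\beta$ to rewrite $\ll W^A_{0,1}, g\gg_{*} = \sum_i\langle W^A_{0,1},\tau_i g\rangle = \sum_i\langle \tau_i W^A_{0,1},g\rangle$.

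Second, I apply the second identity of \eqref{eq:keyidch3} with $h=Ag \in \Cy_0$:
\begin{equation*}
\ll Ag, W^S_{0,1}\gg_{-1} \;=\; -\ll Ag\gg_{**} \;=\; -\sum_{i\in\Z} i\,\langle Ag, \E_i\rangle \;=\; \sum_{i\in\Z} i\,\langle g, A\E_i\rangle,
\end{equation*}
using the formal antisymmetry of $A$ on $L^2(\nu_\beta)$; this sum has only finitely many non-zero terms because $g$ is local. Now the crucial computational input is the microscopic current identity
\begin{equation*}
A\E_i \;=\; W^A_{i-1,i} - W^A_{i,i+1},
\end{equation*}
which follows from a direct computation: $X_j\E_i\equiv 0$ and $\sum_j Y_{j,j+1}\E_i = p_{i-1}V'(r_i) - p_iV'(r_{i+1}) = W^A_{i,i+1} - W^A_{i-1,i}$, so that $A\E_i = -\sum_j Y_{j,j+1}\E_i = W^A_{i-1,i} - W^A_{i,i+1}$.

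Plugging this in and performing a one-step Abel reindexing (legitimate because only finitely many terms are nonzero), $\sum_i i\langle g, W^A_{i-1,i}-W^A_{i,i+1}\rangle = \sum_i \langle g, W^A_{i,i+1}\rangle = \sum_i \langle g, \tau_i W^A_{0,1}\rangle$, which gives $\ll Ag, W^S_{0,1}\gg_{-1} = \sum_i\langle g, \tau_i W^A_{0,1}\rangle$. Comparing with the expression obtained for $\ll Sg, W^A_{0,1}\gg_{-1}$ above yields the claimed identity. There is no real obstacle here; the only point requiring a small check is the explicit form $A\E_i = W^A_{i-1,i} - W^A_{i,i+1}$, which just reflects that the Hamiltonian vector field $A$ conserves energy locally and produces the deterministic part of the energy current, and the rearrangement of the telescoping sum is elementary since $g$ has compact support.
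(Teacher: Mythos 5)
Your proof is correct and follows essentially the same route as the paper's: both use the two identities of \eqref{eq:keyidch3}, the formal antisymmetry of $A$, the microscopic continuity relation $A\E_i = W^A_{i-1,i}-W^A_{i,i+1}$, and an Abel reindexing to close the loop. The only cosmetic difference is that you work inward from both sides of the claimed identity whereas the paper writes it as a single chain of equalities, and you spell out the computation of $A\E_i$ explicitly rather than asserting it.
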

\begin{proof}
By the first identity of (\ref{eq:keyidch3}),
\begin{align*}
\ll Sg, W^A_{0,1} \gg_{-1}& 
=-\ll g, W^A_{0,1} \gg_{*}= -\sum_{i \in \Z}\langle  \tau_i g,
W^A_{0,1} \rangle\\ 
&=-\sum_{i \in \Z}\langle g, W^A_{i,i+1} \rangle = 
-\sum_{i \in \Z}i \langle g, W^A_{i-1,i}-W^A_{i,i+1}  \rangle \\
&=-\sum_{i \in \Z} i \langle g, A \E_i \rangle 
=\sum_{i \in \Z}i \langle Ag, \E_i \rangle
=-\ll Ag, W^S_{0,1} \gg_{-1}.
\end{align*}
\end{proof}

\begin{lemma}\label{lem:anti3ch3}
For all $a \in \R$ and $g \in \Cy$, 
\[
\ll aW^S_{0,1}+S g, aW^A_{0,1}+Ag \gg_{-1}=0.
\]
\end{lemma}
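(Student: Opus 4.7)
The plan is to expand the expression by bilinearity and eliminate each resulting term separately, relying on Lemmas \ref{lem:antich3} and \ref{lem:anti2ch3} for the pieces involving $Sg$ and $Ag$, and on the identities (\ref{eq:keyidch3}) for the purely current-current piece $\ll W^S_{0,1},W^A_{0,1}\gg_{-1}$.

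First, bilinearity of $\ll\cdot,\cdot\gg_{-1}$ gives
\begin{align*}
\ll aW^S_{0,1}+Sg,\ aW^A_{0,1}+Ag\gg_{-1}
&= a^{2}\ll W^S_{0,1},W^A_{0,1}\gg_{-1} + \ll Sg,Ag\gg_{-1} \\
&\quad + a\bigl\{\ll W^S_{0,1},Ag\gg_{-1} + \ll Sg,W^A_{0,1}\gg_{-1}\bigr\}.
\end{align*}
The term $\ll Sg,Ag\gg_{-1}$ vanishes by Lemma \ref{lem:antich3} applied with $h=g$. For the bracketed cross term, Lemma \ref{lem:anti2ch3} rewrites $\ll Sg,W^A_{0,1}\gg_{-1}=-\ll Ag,W^S_{0,1}\gg_{-1}$, and the symmetry of the polarized form $\ll\cdot,\cdot\gg_{-1}$ (inherited from the seminorm $\triple\cdot\triple_{-1}$) then cancels this with $\ll W^S_{0,1},Ag\gg_{-1}$.

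It remains to verify $\ll W^S_{0,1},W^A_{0,1}\gg_{-1}=0$. Since $W^A_{0,1}=-p_0V'(r_1)$ lies in $\Cy_0$ by Lemma \ref{lem:cy0}(ii) and has zero $\nu_\beta$-mean, the second identity of (\ref{eq:keyidch3}) applies with $h=W^A_{0,1}$ and yields
\[
\ll W^A_{0,1},W^S_{0,1}\gg_{-1} \ =\ -\ll W^A_{0,1}\gg_{**} \ =\ \sum_{i\in\Z} i\,\langle p_0V'(r_1)\,\E_i\rangle.
\]
Each summand vanishes by the product structure of $\nu_\beta$ and the fact that $p_0$ has a centered Gaussian law: for $i\notin\{0,1\}$ the factor $\E_i$ decouples from $p_0V'(r_1)$ and $\langle p_0\rangle=0$; for $i=0$, after factoring out the independent coordinates $r_0,r_1$, only $\langle p_0\rangle$ and $\langle p_0^3\rangle$ appear, both zero; and for $i=1$ the factor $p_0$ splits off with $\langle p_0\rangle=0$ against the bounded random variable $V'(r_1)(p_1^2/2+V(r_1))$. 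Hence the sum is zero.

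I do not foresee a genuine obstacle: once Lemmas \ref{lem:antich3} and \ref{lem:anti2ch3} are in hand, the statement reduces to a bilinear expansion plus a short Gaussian-parity computation. The conceptual input -- the reason such an exact orthogonality can hold -- is the invariance of $\nu_\beta$ under the reflection $p_i\mapsto -p_i$, which kills all odd moments in the momenta.
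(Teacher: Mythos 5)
Your proposal is correct and follows exactly the same route as the paper: establish $\ll W^S_{0,1},W^A_{0,1}\gg_{-1}=0$ via the second identity of (\ref{eq:keyidch3}), then invoke Lemmas \ref{lem:antich3} and \ref{lem:anti2ch3} (together with the symmetry of the polarized form) to kill the remaining terms in the bilinear expansion. The paper leaves the parity computation for $\ll W^A_{0,1}\gg_{**}=0$ implicit, and your version simply fills in those details.
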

\begin{proof}
By the second identity of (\ref{eq:keyidch3}), it is easy to see that
$\ll W^S_{0,1}, W^A_{0,1} \gg_{-1}=0$. Then, Lemma \ref{lem:antich3}
and Lemma \ref{lem:anti2ch3} conclude the proof straightforwardly.  
\end{proof}
\begin{proposition}\label{prop:aboundch3}
There exists a positive constant $C$ such that for all $g \in \Cy$,
$$
\triple Ag \triple_{-1}^2 \le C \triple S g \triple_{-1}^2 .
$$ 
\end{proposition}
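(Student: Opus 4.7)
The plan is to combine the variational formula \eqref{eq:normvariationalch3} for $\triple Ag\triple_{-1}^2$ with the duality identities of Lemmas \ref{lem:antich3} and \ref{lem:anti2ch3}, and a Cauchy--Schwarz step in $\mathcal{H}_{-1}$. Starting from
$$\triple Ag\triple_{-1}^2 = \sup_{a\in\R,\,h\in\Cy}\bigl\{-2\ll Ag, aW^S_{0,1}+Sh\gg_{-1}-\triple aW^S_{0,1}+Sh\triple_{-1}^2\bigr\},$$
the two lemmas yield the key algebraic identity $-\ll Ag, aW^S_{0,1}+Sh\gg_{-1}=\ll Sg, aW^A_{0,1}+Ah\gg_{-1}$, and Cauchy--Schwarz in $\mathcal{H}_{-1}$ bounds this by $\triple Sg\triple_{-1}\cdot\triple aW^A_{0,1}+Ah\triple_{-1}$.

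The argument thus reduces to an auxiliary uniform estimate
$$\triple aW^A_{0,1}+Ah\triple_{-1}^2\;\le\;C\,\triple aW^S_{0,1}+Sh\triple_{-1}^2,\qquad a\in\R,\ h\in\Cy,$$
for some $C=C(\beta,\gamma,\delta_\pm)$ independent of $(a,h)$. Given this, setting $x=\triple aW^S_{0,1}+Sh\triple_{-1}$ and maximizing $2\sqrt{C}\triple Sg\triple_{-1}\,x - x^2$ in $x\ge 0$ yields $\triple Ag\triple_{-1}^2\le C\triple Sg\triple_{-1}^2$, which is exactly the proposition.

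To prove the auxiliary estimate I use the closed-form representation
$$\triple aW^S_{0,1}+Sh\triple_{-1}^2 = \tfrac{\gamma}{2}\langle (ap_0V'(r_1)+Y_{0,1}\Gamma_h)^2\rangle+\tfrac{\gamma}{2}\langle(X_0\Gamma_h)^2\rangle$$
on the symmetric side. On the antisymmetric side I apply \eqref{eq:normvariationalch3} to $aW^A_{0,1}+Ah$ and, via integration by parts (using $X_i^*=-X_i$, $Y_{i,i+1}^*=-Y_{i,i+1}$ in $L^2(\nu_\beta)$) combined with the already-derived identities $\ll Sf, W^A_{0,1}\gg_{-1}=\ll Af\gg_{**}$ and $\ll Ah, W^S_{0,1}\gg_{-1}=-\ll Ah\gg_{**}$, rewrite the resulting supremum as an $L^2$-quadratic form in the basic variables $ap_0V'(r_1)$, $X_0\Gamma_h$, $Y_{0,1}\Gamma_h$. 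The uniform bound $\delta_-\le V''\le\delta_+$ from Assumption~(ii) and the explicit Gaussian moments of $p_0$ under $\nu_\beta$ then supply the constant $C$.

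The main obstacle is precisely this auxiliary estimate. Since the special case $a=0$, $h=g$ already recovers the proposition itself, pure duality cannot close the argument, and the proof must carry out a genuine direct comparison of two quadratic forms in $L^2(\nu_\beta)$. The cross-term between $aW^A_{0,1}$ and $Ah$ in the antisymmetric norm requires delicate handling via the cancellation identities above before the uniform constant can be extracted; it is here that the specific structure of the vector fields $X_i, Y_{i,i+1}$ (both antisymmetric in $L^2(\nu_\beta)$ with $X_i=Y_{i,i}$) enters essentially.
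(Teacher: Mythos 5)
Your reduction to the auxiliary estimate $\triple aW^A_{0,1}+Ah\triple_{-1}^2\le C\triple aW^S_{0,1}+Sh\triple_{-1}^2$ is not a reduction at all: setting $a=0$, $h=g$ in it recovers exactly the proposition, so any proof of the auxiliary estimate must already contain a proof of the proposition, and the variational/Cauchy--Schwarz preamble you set up beforehand contributes nothing. You acknowledge this, but the remedy you then sketch --- applying the variational formula \eqref{eq:normvariationalch3} to $aW^A_{0,1}+Ah$ and ``rewriting the resulting supremum as an $L^2$-quadratic form in the basic variables $ap_0V'(r_1)$, $X_0\Gamma_h$, $Y_{0,1}\Gamma_h$'' --- is not carried out and would not work in the form stated, because that supremum runs over an arbitrary $(b,f)\in\R\times\Cy$ and so contains the cross term $\ll Ah, Sf\gg_{-1}$, which has no closed expression in those three variables: it involves $\Gamma_f$ essentially.

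The piece you are missing is the sector condition, Lemma \ref{lem:cyboundch3}: $|\ll Sf,Ag\gg_{-1}|\le C\triple Sf\triple_{-1}\triple Sg\triple_{-1}$ for all $f,g\in\Cy$. That estimate is the one genuine new analytic input, proved separately in Section \ref{sec-sec} by writing $\ll Sf,Ag\gg_{-1}=-\langle\Gamma_f(X_0-Y_{0,1})\Gamma_g\rangle$, splitting into even/odd parts in $p_0$, and using Schwarz together with the one-site spectral gap for $X_0^2$; it gives a direct $L^2(\nu_\beta)$ bound on the off-diagonal term that does not pass through $\triple A\,\cdot\,\triple_{-1}$ at all, which is exactly what breaks the circularity. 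The paper's proof then splits the variational supremum into the $a=0$ piece (handled directly by the sector condition) and the $a\neq 0$ piece (handled by normalizing $a=1$, applying the duality $\ll Ag,W^S_{0,1}\gg_{-1}=\ll Sg,W^A_{0,1}\gg_{-1}$ of Lemma \ref{lem:anti2ch3} plus Cauchy--Schwarz, and separately showing via the triangle inequality and Corollary \ref{cor:directgradient} that $\sup_{h\in\Cy}\triple Sh\triple_{-1}/\triple W^S_{0,1}+Sh\triple_{-1}<\infty$). Your proposal identifies the correct duality identities, but by omitting the sector condition it collapses into a claim that is at least as strong as the conclusion.
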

\begin{proof}
By Proposition \ref{prop:directch3}, we have the following variational formula for $\triple Ag \triple_{-1}^2$,
\begin{align*}
\triple Ag \triple_{-1}^2 & = \sup_{a \in \R, h \in \Cy} \frac{ \ll
  Ag, a W^S_{0,1} +Sh  \gg_{-1}^2} { \triple a W^S_{0,1} +S h
  \triple_{-1}^2}   \\ 
& = \max \left\{ \sup_{h \in \Cy}  \frac{ \ll Ag, Sh  \gg_{-1}^2} {
    \triple Sh  \triple_{-1}^2  }  , \sup_{a \neq 0, h \in \Cy} 
  \frac{ \ll Ag, a W^S_{0,1} +Sh  \gg_{-1}^2} { \triple a W^S_{0,1}
    +Sh  \triple_{-1}^2} \right\}  \\ 
& = \max \left\{ \sup_{h \in \Cy}  \frac{ \ll Ag, Sh  \gg_{-1}^2} {
    \triple Sh  \triple_{-1}^2  }  , \ \sup_{h \in \Cy} 
  \frac{ \ll Ag,  W^S_{0,1} +Sh  \gg_{-1}^2} { \triple W^S_{0,1}
    +Sh  \triple_{-1}^2} \right\} . 
\end{align*}
By Lemma \ref{lem:cyboundch3} in Section \ref{sec-sec}, there exists a
positive constant $C$ such that 
$ \ll Ag, Sh \gg_{-1} ^2 \le C \triple  Sh \triple_{-1}^2 \triple Sg
\triple_{-1}^2$ for all $g, h \in \Cy$. Therefore, we have  
\begin{displaymath}
\sup_{h \in \Cy} \frac{ \ll Ag, Sh  \gg_{-1}^2} { \triple Sh  \triple_{-1}^2  }  \le C \triple Sg  \triple_{-1}^2. 
\end{displaymath}
 On the other hand, by Lemma \ref{lem:anti2ch3}, we have 
$$
\ll Ag, W^S_{0,1} \gg_{-1}^2 = \ll Sg, W^A_{0,1} \gg_{-1}^2
\le \triple Sg \triple_{-1}^2 \triple W^A_{0,1}
\triple_{-1}^2.
$$
 Therefore,
 \begin{align*}
 \sup_{ h \in \Cy}  \frac{ \ll Ag, W^S_{0,1} + Sh  \gg_{-1} ^2}{ \triple  W^S_{0,1} +Sh  \triple_{-1}^2  }  
  \le  \triple Sg  \triple_{-1}^2 
\sup_{ h \in \Cy}  \frac{ 2 \triple W^A_{0,1} \triple_{-1}^2  
+2 C \triple Sh  \triple_{-1}^2} { \triple W^S_{0,1} +Sh  \triple_{-1}^2}.
\end{align*}
Now, we only have to show that 
\begin{displaymath}
\sup_{ h \in \Cy} \frac{1} { \triple  W^S_{0,1} +Sh  \triple_{-1}  } 
 < \infty, \qquad   \sup_{ h \in \Cy}  \frac{ \triple Sh
   \triple_{-1}} { \triple  W^S_{0,1} +Sh  \triple_{-1}  } < \infty. 
\end{displaymath}
The first inequality follows from Corollary \ref{cor:directgradient}. To prove the second identity, since we have the first inequality, it is enough to show that
\begin{align*}
\sup_{ \substack{ t \ge 2, h \in \Cy \\   \triple Sh
   \triple_{-1}^2 = t \triple  W^S_{0,1}  \triple_{-1}^2 } } \Big\{
\frac{ \triple Sh \triple_{-1}} { \triple  W^S_{0,1} +Sh  \triple_{-1}  } 
  \Big\} < \infty.
\end{align*}
The triangle inequality shows that 
\[
\triple  W^S_{0,1} +Sh  \triple_{-1} \ge   \triple Sh \triple_{-1}
 -\triple  W^S_{0,1}  \triple_{-1} =(\sqrt{t}-1) \triple  W^S_{0,1}  \triple_{-1}
\]
 for any $h$ satisfying $ \triple Sh\triple_{-1}^2 = t \triple
 W^S_{0,1}  \triple_{-1}^2 $. Then, we obtain that  
\begin{align*}
 \sup_{ \substack{ t \ge 2, h \in \Cy \\  \triple Sh
   \triple_{-1}^2 = t \triple  W^S_{0,1}  \triple_{-1}^2 } } \Big\{ \frac{ \triple Sh \triple_{-1}} { \triple  W^S_{0,1} +Sh  \triple_{-1}  }  \Big\} \le \sup_{t \ge 2} \{ \frac{t}{(\sqrt{t}-1)^2} \} < \infty.
\end{align*}
\end{proof}

Now, we have all elements to show the desired decomposition of the
Hilbert spaces $\mathcal{H}_{-1}$. 

\begin{proposition}\label{prop:generatech3}
Denote by $L \Cy$ the space $\{ Lg; \ g \in \Cy \}$. Then, we have
\begin{displaymath}
\mathcal{H}_{-1} = \overline{L\Cy}|_{\mathcal{N}} + \{  p_1^2 - p_0^2\}.
\end{displaymath}
\end{proposition}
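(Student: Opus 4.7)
The plan hinges on rephrasing the statement in operator-theoretic terms and exploiting the antisymmetry of $A$ with respect to the inner product generated by $S$. Since $\overline{L\Cy}|_{\mathcal{N}} + \{p_1^2-p_0^2\} \subseteq \mathcal{H}_{-1}$ is immediate from Lemma \ref{lem:cy0}, combining with Corollary \ref{cor:directgradient} the task reduces to establishing the reverse inclusion $\overline{S\Cy}|_{\mathcal{N}} \subseteq \overline{L\Cy}|_{\mathcal{N}} + \{p_1^2-p_0^2\}$.

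First I would define a bounded linear operator $T : \overline{S\Cy}|_{\mathcal{N}} \to \mathcal{H}_{-1}$ extending the formal rule $Sg \mapsto Ag$ for $g \in \Cy$; both well-definedness modulo $\mathcal{N}$ and boundedness are direct consequences of the sector estimate $\triple Ag \triple_{-1}^2 \le C \triple Sg \triple_{-1}^2$ from Proposition \ref{prop:aboundch3}. Because the identity $\ll Sh, p_1^2-p_0^2 \gg_{-1} = 0$ makes the decomposition of Corollary \ref{cor:directgradient} actually $\ll\cdot,\cdot\gg_{-1}$-orthogonal, I would split $T = T_1 + T_2$ along this orthogonal decomposition, where $T_1 : \overline{S\Cy}|_{\mathcal{N}} \to \overline{S\Cy}|_{\mathcal{N}}$ is the orthogonal projection and $T_2$ lands in the one-dimensional subspace $\R(p_1^2-p_0^2)$.

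The crucial step is to verify that $T_1$ is antisymmetric. For $g, h \in \Cy$, orthogonality kills the $T_2 Sg$ contribution and gives $\ll T_1 Sg, Sh \gg_{-1} = \ll Ag, Sh \gg_{-1}$; Lemma \ref{lem:antich3} then rewrites this as $-\ll Sg, Ah \gg_{-1} = -\ll Sg, T_1 Sh \gg_{-1}$, and antisymmetry extends to $\overline{S\Cy}|_{\mathcal{N}}$ by continuity. Consequently $\triple (I+T_1) u \triple_{-1}^2 = \triple u \triple_{-1}^2 + \triple T_1 u \triple_{-1}^2 \ge \triple u \triple_{-1}^2$, so $I + T_1$ is injective with closed range; the identical estimate for the adjoint $(I+T_1)^* = I - T_1$ shows density of the range, hence $I + T_1$ is a bijection on $\overline{S\Cy}|_{\mathcal{N}}$.

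To close the argument, for every $g \in \Cy$ the decomposition $Lg = Sg + Ag = (I+T_1)(Sg) + T_2(Sg)$ with $T_2(Sg) \in \R(p_1^2-p_0^2)$ places $(I+T_1)(Sg)$ inside $L\Cy + \R(p_1^2-p_0^2)$; surjectivity of $I + T_1$ together with density of $S\Cy$ in $\overline{S\Cy}|_{\mathcal{N}}$ and the fact that the one-dimensional summand $\R(p_1^2-p_0^2)$ keeps $\overline{L\Cy}|_{\mathcal{N}} + \R(p_1^2-p_0^2)$ closed then yield $\overline{S\Cy}|_{\mathcal{N}} \subseteq \overline{L\Cy}|_{\mathcal{N}} + \{p_1^2-p_0^2\}$, as required. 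The only real care needed is tracking the well-definedness of $T$ modulo $\mathcal{N}$ and deducing the surjectivity of $I + T_1$ from antisymmetry; the heavy analytic input (the sector condition and the antisymmetric pairing) is already supplied by Proposition \ref{prop:aboundch3} and Lemma \ref{lem:antich3}.
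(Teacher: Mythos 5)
Your proof is correct, but it organizes the argument quite differently from the paper. The paper argues by the ``trivial orthogonal complement'' route: it takes an arbitrary $h\in\mathcal H_{-1}$ orthogonal to $\overline{L\Cy}$ and to $p_1^2-p_0^2$, writes $h=\lim_k Sg_k$ via Corollary~\ref{cor:directgradient}, uses $\ll Sg_k,Ag_k\gg_{-1}=0$ (Lemma~\ref{lem:antich3}) to replace $\ll Sg_k,Sg_k\gg_{-1}$ by $\ll Sg_k,Lg_k\gg_{-1}$, and then invokes Proposition~\ref{prop:aboundch3} to get a uniform bound on $\triple Lg_k\triple_{-1}$ so the telescoping $\ll Sg_k-h,Lg_k\gg_{-1}$ is forced to zero. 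You instead package the same inputs into an operator $T:Sg\mapsto Ag$, project onto $\overline{S\Cy}|_{\mathcal N}$ to obtain an antisymmetric bounded operator $T_1$ (antisymmetry coming from Lemma~\ref{lem:antich3} together with the observation that $p_1^2-p_0^2\perp S\Cy$ makes the decomposition of Corollary~\ref{cor:directgradient} genuinely orthogonal), and then run the standard $\triple(I+T_1)u\triple^2=\triple u\triple^2+\triple T_1u\triple^2$ argument to get invertibility of $I+T_1$, from which $\overline{S\Cy}\subseteq\overline{L\Cy}+\R(p_1^2-p_0^2)$ follows. This is a legitimate and, if anything, more structural proof: it makes the mechanism visible as the invertibility of $I+$ (antisymmetric bounded operator), and it anticipates the bounded operator $T$ on $\mathcal H_{-1}$ that the paper introduces later in Section~\ref{sec-diff-coeff} for the diffusion coefficient. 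The paper's argument is more elementary (no appeal to the open mapping theorem or adjoints) and fits more compactly into the sequence Proposition~\ref{prop:generatech3}--Lemma~\ref{lem:decompositionch3}, where almost the same three-line computation is reused. One small point worth being explicit about in your write-up: that $\overline{L\Cy}|_{\mathcal N}+\R(p_1^2-p_0^2)$ is closed (so taking closures of $(I+T_1)(S\Cy)\subseteq L\Cy+\R(p_1^2-p_0^2)$ lands you there) is the standard fact that a closed subspace plus a finite-dimensional one is closed; you allude to this, and it is what makes the final containment go through.
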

\begin{proof}
Since $\{  p_1^2 - p_0^2 \}$ and $L\Cy$ are contained in
$\Cy_0$ by definition, $\mathcal{H}_{-1}$ contains the right
hand space. To prove the converse inclusion, %  since we know already
% that $\mathcal{H}_{-1} =
% \overline{S \Cy}|_{\mathcal{N}} + \{  p_1^2 - p_0^2\}$, we only have
% to show that if $h\in \Cy$ is such that  
% $\ll L h, Sg \gg_{-1} = 0$ for all $g\in \Cy$, then $\triple L
% h\triple_{-1}$ = 0. 
% In fact
% $$
% \triple S h\triple_{-1}^2 = \ll Sh, Lh \gg_{-1} = 0 
% $$
% and by proposition  \ref{prop:aboundch3} $\triple L h\triple_{-1}$ = 0. 
let $h \in \mathcal{H}_{-1}$ so that 
$\ll h,  p_1^2 -p_0^2 \gg_{-1}=0$ and 
$\ll h, L g \gg_{-1}=0$ for all $g \in \Cy$. By Corollary
\ref{cor:directgradient}, $h=\lim_{k \to
\infty}S^{\gamma}g_k$ in $\mathcal{H}_{-1}$ for some sequence 
$g_k \in \Cy$. 
Namely, 
$$
\triple h \triple_{-1}^2 = \lim_{k \to \infty} \ll Sg_k,
S g_k \gg_{-1} = \lim_{k \to \infty} \ll S g_k, L g_k \gg_{-1}
$$ since 
$\ll S g_k, Ag_k \gg_{-1}=0$ by Lemma \ref{lem:antich3}. 
On the other hand, by the assumption $\ll h, Lg_k \gg_{-1}=0$ for all
$k$. Also, by Proposition \ref{prop:aboundch3}, 
$$
\sup_k \triple L g_k \triple_{-1} \le (C+1) \sup_k \triple
S g_k \triple_{-1} :=C_h
$$ 
is finite. Therefore, 
\begin{equation*}
  \begin{split}
    \triple h\triple_{-1}^2 = \lim_{k \to \infty} \ll S g_k, L g_k
    \gg_{-1} =\lim_{k \to \infty} \ll Sg_k-h, Lg_k \gg_{-1} \\
    \le \lim_{k \to\infty} C_h \triple Sg_k - h \triple_{-1} =0 .
  \end{split}
\end{equation*}
This concludes the proof. 
\end{proof}

\begin{lemma}\label{lem:decompositionch3}
We have
\begin{displaymath}
\mathcal{H}_{-1} = \overline{L\Cy}|_{\mathcal{N}} \oplus \{  p_1^2 - p_0^2 \}.
\end{displaymath}
\end{lemma}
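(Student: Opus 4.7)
The plan is to upgrade Proposition \ref{prop:generatech3}'s decomposition to a direct sum by showing $\overline{L\Cy}|_{\mathcal{N}} \cap \text{span}\{p_1^2-p_0^2\} = \{0\}$. Arguing by contradiction, I assume there is $c\neq 0$ and a sequence $\{g_k\}\subset\Cy$ with $Lg_k \to c(p_1^2-p_0^2)$ in $\mathcal{H}_{-1}$, and I derive from the variational formula (\ref{eq:normvariationalch3}), specialized at a carefully chosen test pair, a quantitative lower bound on $\triple Lg_k - c(p_1^2-p_0^2)\triple_{-1}^2$ that, combined with Lemma \ref{lem:anti2ch3} and Cauchy--Schwarz in $\mathcal{H}_{-1}$, forces $c=0$.

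The preliminary step is to observe $\ll Sg, p_1^2-p_0^2\gg_{-1}=0$ for every $g\in\Cy$. By the first identity of (\ref{eq:keyidch3}) this reduces to $\sum_i \langle g,\, p_{i+1}^2 - p_i^2\rangle = 0$, which holds by shift--telescoping: since $g$ is local, both $\sum_i \langle g\, p_{i+1}^2\rangle$ and $\sum_i \langle g\, p_i^2\rangle$ are finite sums and coincide. Next I apply (\ref{eq:normvariationalch3}) to $h := Lg_k - c(p_1^2-p_0^2)$ with test pair $(a, -g_k)$. Using $\ll Ag_k, Sg_k\gg_{-1}=0$ (Lemma \ref{lem:antich3}), the just-established vanishing, the value $\ll p_1^2-p_0^2, W^S_{0,1}\gg_{-1}=-\beta^{-2}$, and the symmetry of the inner product (which cancels the two $\ll Sg_k, W^S_{0,1}\gg_{-1}$ cross-terms), the expression reduces to a quadratic in $a$; maximizing over $a\in\R$ yields
\begin{equation*}
\triple Lg_k - c(p_1^2-p_0^2)\triple_{-1}^2 \;\geq\; \triple Sg_k\triple_{-1}^2 \;+\; \frac{\bigl(\ll Ag_k,\, W^S_{0,1}\gg_{-1} + c/\beta^2\bigr)^2}{\triple W^S_{0,1}\triple_{-1}^2}.
\end{equation*}

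The left-hand side tends to $0$, so both nonnegative summands on the right vanish in the limit: $\triple Sg_k\triple_{-1}\to 0$ and $\ll Ag_k, W^S_{0,1}\gg_{-1}\to -c/\beta^2$. Now Lemma \ref{lem:anti2ch3} rewrites $\ll Ag_k, W^S_{0,1}\gg_{-1} = -\ll Sg_k, W^A_{0,1}\gg_{-1}$, and Cauchy--Schwarz in $\mathcal{H}_{-1}$ bounds this in absolute value by $\triple Sg_k\triple_{-1}\cdot\triple W^A_{0,1}\triple_{-1}$, which tends to $0$ since $W^A_{0,1}=Y_{0,1}(-V(r_1))$ has finite seminorm $\triple W^A_{0,1}\triple_{-1}^2 \le \tfrac{2}{\gamma}\langle V(r_1)^2\rangle < \infty$ via the general upper bound recorded after (\ref{eq:seminormdef2}). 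Hence $c/\beta^2 = 0$, contradicting $c\neq 0$.

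The main obstacle is engineering the test pair in the variational formula so that the symmetric contributions of $Lg_k$ cancel and one is left with a strictly positive quadratic penalty in $c$; the choice $-g_k$ is forced by this requirement, reducing the problem to controlling the antisymmetric quantity $\ll Ag_k, W^S_{0,1}\gg_{-1}$. The essential algebraic input is then Lemma \ref{lem:anti2ch3}, which, via Cauchy--Schwarz, converts the already-established smallness of $\triple Sg_k\triple_{-1}$ into smallness of $\ll Ag_k, W^S_{0,1}\gg_{-1}$ and closes the argument.
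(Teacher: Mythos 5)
Your proof is correct, and it takes a genuinely different route than the paper's. The paper's argument first establishes $\triple Sg_k\triple_{-1}\to 0$ via Cauchy--Schwarz (using $\ll Ag_k, Sg_k\gg_{-1}=0$ and $\ll p_1^2-p_0^2, Sg_k\gg_{-1}=0$), and then invokes Proposition~\ref{prop:aboundch3}, the bound $\triple Ag\triple_{-1}^2\le C\triple Sg\triple_{-1}^2$, to upgrade this to $\triple Lg_k\triple_{-1}\to 0$, whence $a=0$. That proposition in turn rests on the sector condition Lemma~\ref{lem:cyboundch3}, which is the heaviest input available in Section~\ref{sec-hilbert}. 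You instead evaluate the variational formula (\ref{eq:normvariationalch3}) at the test pair $(a,-g_k)$ and optimize over $a$; this yields, in one shot, the lower bound
\begin{equation*}
\triple Lg_k - c(p_1^2-p_0^2)\triple_{-1}^2 \ \ge\ \triple Sg_k\triple_{-1}^2 \;+\; \frac{\bigl(\ll Ag_k, W^S_{0,1}\gg_{-1} + c/\beta^2\bigr)^2}{\triple W^S_{0,1}\triple_{-1}^2},
\end{equation*}
which simultaneously forces $\triple Sg_k\triple_{-1}\to 0$ and $\ll Ag_k, W^S_{0,1}\gg_{-1}\to -c/\beta^2$. The remaining work is the identity of Lemma~\ref{lem:anti2ch3} ($\ll Ag, W^S_{0,1}\gg_{-1}=-\ll Sg, W^A_{0,1}\gg_{-1}$) plus Cauchy--Schwarz and the finiteness of $\triple W^A_{0,1}\triple_{-1}$. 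The upshot: your argument uses only the algebraic antisymmetry identities Lemmas~\ref{lem:antich3} and \ref{lem:anti2ch3}, avoiding the sector condition entirely, so it is more self-contained; the paper's version is shorter on the page because it leans on the already-proved Proposition~\ref{prop:aboundch3}, which in any case is needed elsewhere (e.g. Proposition~\ref{prop:generatech3}). One small stylistic point: the observations $\ll Sg, p_1^2-p_0^2\gg_{-1}=0$ and $\ll W^S_{0,1}, p_1^2-p_0^2\gg_{-1}=-1/\beta^2$ are already recorded in the paper shortly after (\ref{eq:keyidch3}), so you could cite them directly rather than re-deriving the telescoping identity.
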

\begin{proof}
Let a sequence $g_k \in \Cy$ satisfy 
$\lim_{k \to\infty} L g_k=a (p_1^2 - p_0^2)$ in
$\mathcal{H}_{-1}$ for some $a \in \R$. By a similar
argument of the proof of Proposition \ref{prop:generatech3},
\begin{equation*}
  \begin{split}
    \limsup_{k \to \infty} \ll S g_k, S g_k \gg_{-1}=\limsup_{k \to
      \infty} \ll Lg_k, Sg_k \gg_{-1} \\
    = \limsup_{k \to \infty}\ll Lg_k-a(p_1^2 - p_0^2), Sg_k \gg_{-1}=0
  \end{split}
\end{equation*}
since $\ll p_1^2 - p_0^2, Sg_k \gg_{-1}=0$ for all
$k$. On the other hand, by Proposition \ref{prop:aboundch3}, 
$\triple Lg_k\triple_{-1}^2 \le (C +1) \triple Sg_k \triple_{-1}^2$, 
then $a=0$. 
\end{proof}

\begin{corollary}\label{cor:decompositionch3}
For each $g \in \Cy_0$, there exists a unique constant $a \in \R$ such
that 
\begin{displaymath}
g+ a( p_1^2 - p_0^2 ) \in \overline{L\Cy}. 
\end{displaymath}
\end{corollary}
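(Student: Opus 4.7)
The plan is to deduce this corollary directly from the direct sum decomposition in Lemma \ref{lem:decompositionch3}. Since every element of $\Cy_0$ is, by construction, an element of the Hilbert space $\mathcal{H}_{-1}$, the decomposition $\mathcal{H}_{-1} = \overline{L\Cy}|_{\mathcal{N}} \oplus \{p_1^2 - p_0^2\}$ applied to $g$ yields, existence-wise, a pair $(h, b)$ with $h \in \overline{L\Cy}$ and $b \in \R$ such that $\triple g - h - b(p_1^2 - p_0^2) \triple_{-1} = 0$. Setting $a := -b$ then gives $g + a(p_1^2 - p_0^2) = h \in \overline{L\Cy}$ in the sense of $\mathcal{H}_{-1}$, which is the claim.

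For uniqueness of $a$, suppose two constants $a_1, a_2 \in \R$ both satisfy the conclusion, with corresponding approximating elements $h_1, h_2 \in \overline{L\Cy}$. Taking the difference we find
\begin{equation*}
(a_1 - a_2)(p_1^2 - p_0^2) = h_2 - h_1 \in \overline{L\Cy}.
\end{equation*}
But Lemma \ref{lem:decompositionch3} asserts that the one-dimensional subspace $\{p_1^2 - p_0^2\}$ meets $\overline{L\Cy}|_{\mathcal{N}}$ only trivially, so $(a_1 - a_2)(p_1^2 - p_0^2)$ must be zero in $\mathcal{H}_{-1}$. Since $\triple p_1^2 - p_0^2 \triple_{-1}^2 > 0$ (the explicit computation $\ll W^S_{0,1}, p_1^2 - p_0^2\gg_{-1} = -1/\beta^2 \neq 0$ shows $p_1^2 - p_0^2 \notin \mathcal{N}$), this forces $a_1 = a_2$.

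The whole argument is essentially unpacking the direct sum notation of Lemma \ref{lem:decompositionch3} and observing that $\Cy_0 \subset \mathcal{H}_{-1}$. There is no substantial obstacle; the real work was done in Proposition \ref{prop:generatech3} (which produced the generating property) and Lemma \ref{lem:decompositionch3} (which established directness of the sum via Proposition \ref{prop:aboundch3}). This corollary is merely the convenient reformulation of that decomposition that will be used to define the diffusion coefficient $D_\beta$ in the fluctuation-dissipation identity.
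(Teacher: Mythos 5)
Your argument is correct and is exactly the (implicit, omitted) proof in the paper: the corollary is a routine unpacking of the direct sum decomposition in Lemma \ref{lem:decompositionch3}, with uniqueness following from the triviality of the intersection together with the fact that $p_1^2-p_0^2\notin\mathcal N$, which you rightly verify via $\ll W^S_{0,1},\,p_1^2-p_0^2\gg_{-1}=-1/\beta^2\neq 0$. (Minor slip: in the uniqueness step the difference should read $h_1-h_2$, not $h_2-h_1$, but this does not affect the argument.)
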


By this corollary,  it is obvious that
there exists a sequence of local functions $F_K$ in $\Cy$ such that 
 $$
\triple W_{0,1} + \kappa_\beta (p_1^2 - p_0^2) - L^\gamma F_K \triple_{-1}
\to 0
$$ 
as $K\to \infty$ which conclude the first statement of Lemma \ref{lem:CLT}. We prove the rest of the claim of Lemma \ref{lem:CLT} in Proposition \ref{prop:CLT2} in Section \ref{sec-diff-coeff}.

\section{Boltzmann-Gibbs principle}\label{sec-bgp}

In this section, we prove Lemma \ref{lem:Boltzmann-Gibbs}. 
First, recall that for each $E >0$, $\beta(E)$ is the inverse of the
function $E(\beta) = \bar\E_\beta = \frac 1{2\beta} + \tilde{V}(\beta)$,
where
\newline
$\tilde{V}(\beta) = <V(r_0)>_\beta$.
Then, by simple calculations, we have
\begin{equation*}
  \begin{split}
    \frac{d}{d E} \langle p_1^2 \rangle_{\beta(E)} &=\frac{d}{d E} \Big
    (\frac{1}{\beta(E)} \Big) =\frac{-1}{\beta^2}
    \left(\frac{dE(\beta)}{d\beta}\right)^{-1} \\&=\frac{-1}{\beta(E)^2}
    \left\{ -\frac{1}{2\beta(E)^2}+\tilde{V}'(\beta(E)) \right\} ^{-1}
    = \frac{1}{\chi(\beta(E))\beta(E)^2}.
  \end{split}
\end{equation*}
Now, we can rewrite the term $I^3_{N,t} (H)$ as 
\begin{displaymath}
\int^t_0 \frac{1}{\sqrt{N}}\sum_{i \in \T_N} H''_N (\frac{i}{N})[ p_i^2 - h(E) -h'(E)( \E_i - E ) ]\; ds
\end{displaymath}
where $h(E)= \frac{1}{\beta(E)}=\langle p_1^2 \rangle_{\beta(E)}$, and
$H''_N = \Delta^N H$ . % and
% $h'(e)$ stands for the derivative of $h(e)$ with respect to $e$.
Lemma \ref{lem:Boltzmann-Gibbs} follows from standard arguments
(cf. \cite{KL}). We sketch it here for completeness.

Here one can introduce a further average in a microscopic block of
length $K<<N$ and substitute the following expression
\begin{equation*}
  \int^t_0 \frac{1}{\sqrt{N}}\sum_{i \in \T_N} H''_N (\frac{i}{N}) 
  \tau_i \varphi_K(\omega_s) \; ds
\end{equation*}
with $\varphi_K = \frac 1{2K+1}\sum_{|j|\le K} [ p_j^2 - h(E)
-h'(E)( \E_j - E ) ]$.  
Using Schwarz inequality, the difference can be estimated to be of
order $K/N$. 

Define $\hat\varphi_K = \varphi_K - \langle \varphi_K
\rangle_{\Lambda_K,\bar{\E}_K}$, with $\bar{\E}_K = (2K+1)^{-1}
\sum_{j\in\Lambda_K} \E_j$.
By \eqref{eq-genbound}
\begin{equation}\label{bbb}
  \begin{split}
    \mathbb E_{\nu_\beta} &\left(\sup_{0\le t \le T} \left[ \int^t_0
        \frac{1}{\sqrt{N}}\sum_{i \in \T_N} H''_N (\frac{i}{N})
        \tau_i \hat\varphi_K (\omega_s) \; ds\right]^2 \right)\\
   & \le 16 T \sup_f \left\{  \frac{1}{\sqrt{N}}\sum_{i \in
        \T_N} H''_N (\frac{i}{N})  \langle f\tau_i
      \hat\varphi_K\rangle - N^2 \gamma <f, (-S_N) f> \right\}.
  \end{split}
\end{equation}

By the spectral gap on $-S_{\Lambda_K}$ (see section
\ref{sec:spectral-gap-bound}),  we can define $U_K =
(-S_{\Lambda_K})^{-1} \hat\varphi_K$.
Then
\begin{equation*}
  \begin{split}
    \frac{1}{\sqrt{N}}\sum_{i \in \T_N} H''_N (\frac{i}{N}) & \langle
    f\tau_i \hat\varphi_K\rangle \\
    &\le \left( \frac{1}{N}\sum_{i \in
        \T_N} H''_N (\frac{i}{N})^2 \langle \hat\varphi_K
      U_K\rangle\right)^{1/2} \left(N <f, (-S_{\Lambda_K})
      f>\right)^{1/2}.
  \end{split}
\end{equation*}
Consequently the right hand side of \eqref{bbb} is bounded by 
\begin{equation*}
  \begin{split}
    16 T \sup_f \Big\{ \left( \frac{1}{N}\sum_{i \in
        \T_N} H''_N (\frac{i}{N})^2  \langle \hat\varphi_K
      U_K\rangle\right)^{1/2} \left(N  <f, (-S_{\Lambda_K})
      f>\right)^{1/2}\\ -  N^2 \gamma <f, (-S_{\Lambda_K}) f>
    \Big\}
    \le \frac{C_K}{N}    
  \end{split}
\end{equation*}
and it follows that
\begin{equation*}
  \lim_{N\to\infty} \mathbb E_{\nu_\beta} \left(\sup_{0\le t \le T}
   \left[ \int^t_0 \frac{1}{\sqrt{N}}\sum_{i \in \T_N} H''_N
     (\frac{i}{N}) \tau_i \hat\varphi_K (\omega_s) \; ds\right]^2
 \right) = 0.
\end{equation*}
We are left to estimate the corresponding term with $\langle \varphi_K
\rangle_{\Lambda_K,\bar{\E}_K}$. Denote $\bar\varphi_K = \langle \varphi_K
\rangle_{\Lambda_K,\bar{\E}_K}$ and observe that it has support on
$\Lambda_k$ and its variance with respect to $\nu_\beta$ 
is of order $K^{-2}$. Then by Schwarz inequality and stationarity of
$\nu_\beta$ 
\begin{equation*}
  \begin{split}
     \mathbb E_{\nu_\beta} \left(\sup_{0\le t \le T}
      \left[ \int^t_0 \frac{1}{\sqrt{N}}\sum_{i \in \T_N} H'_N
        (\frac{i}{N}) \tau_i \bar\varphi_K(\omega_s)
        \; ds\right]^2 \right) \\
    \le CT^2 \langle  \left[ \frac{1}{\sqrt{N}}\sum_{i \in \T_N} H''_N
        (\frac{i}{N}) \tau_i \bar\varphi_K\right]^2 \rangle \\
        = \frac{CT^2}{N} \sum_{i,j} H''_N (\frac{i}{N}) H''_N
        (\frac{j}{N}) 
        \langle  \tau_{i-j} \bar\varphi_K \bar\varphi_K \rangle\\
        \le \frac{CT^2}{2N} \sum_{i,j} (H''_N (\frac{i}{N})^2 +
        H''_N(\frac{j}{N})^2) 
        \langle  \tau_{i-j} \bar\varphi_K \bar\varphi_K \rangle\\
        = \frac{CT^2}{N} \sum_{i} H''_N (\frac{i}{N})^2 
        \sum_j \langle  \tau_{j} \bar\varphi_K \bar\varphi_K \rangle\\
        \le \frac{C'T^2 }{N} \sum_{i} H''_N (\frac{i}{N})^2 
        K \langle \bar\varphi_K^2\rangle
  \end{split}
\end{equation*}
that goes to $0$ as $K\to\infty$, uniformly in $N$.

\section{Tightness}
\label{sec:tightness}

The argument exposed above proves the convergence of the finite
dimensional distribution of $\mathbb Q_N$. In order to conclude the
proof of Theorem \ref{thm:main}, we need to prove the tightness of the
sequence in $C([0,T], \mathfrak H_{-k})$. The argument we use is standard.
% and essentially identical as in \cite{H}
 We report it here for completeness.

Compactness follows from the following two statements:
\begin{equation}
  \label{eq:19}
  \lim_{A\to\infty} \limsup_{N\to\infty} \mathbb P_{\nu_\beta^N}\left(
    \sup_{0\le t \le T}\|Y^N_t\|_{-k} \ge A \right) = 0,
\end{equation}
\begin{equation}
  \label{eq:21}
   \lim_{\delta\to 0} \limsup_{N\to\infty} \mathbb P_{\nu_\beta^N}\left(
     w_{-k}(Y^N,\delta) \ge \epsilon \right) = 0
  \qquad \forall \epsilon>0 .
\end{equation}
where $\|\cdot\|_{-k}$ is the norm in $\mathfrak H_{-k}$ and
$w_{-k}(Y^N,\delta)$ is the corresponding modulus of continuity in 
$C([0,T], \mathfrak H_{-k})$. We recall that  $\|\cdot\|_{-k}$ can be
written as
\begin{equation*}
   \| Y\|_{-k}^2 = \sum_{n\ge 1} (\pi n)^{-2k} |Y(e_n)|^2 
\end{equation*}
with $e_n(x) = \sqrt 2 \sin(\pi n x)$.

Recall the decomposition of $Y_t^N$ given by (\ref{eq:decomp}):
\begin{equation*}
Y^N_t(e_n) =  Y^N_0(e_n)  +   (\pi n)^2 D \int^t_0 Y^N_s ( e_n) ds +
M^1_{N,F_K,t}(e_n) + Z_{N,F_K,t}(e_n)
\end{equation*}
where $E_{\nu^N_\beta}\left(\sup_{0\le t\le T} (Z_{N,F_K,t}(e_n))^2 \right)$ can be estimated by the proof of Lemmas \ref{lem:rest}, \ref{lem:Boltzmann-Gibbs}. On the other hand, $E_{\nu^N_\beta}\left( (M^1_{N,F_K,t}(e_n))^2 \right)$ can be computed explicitly.
%\begin{equation*}
%  \lim_{K\to\infty} \lim_{N\to\infty} \mathbb E_{\nu^N_\beta}\left(
%    \sup_{0\le t\le T} (Z^{N,F_K}_t)^2 \right) = 0
%\end{equation*}
% and $M^1_{N,F_K,t}(e_n)$ is a continuous martingale with quadratic
% variation given by
% \begin{equation*}
%   \frac 1N \sum_{i\in \mathbb T_N} \int_0^t f_N(\frac iN)^2 \xi_i^2 ds
% \end{equation*}
% where $f_N$ is a bounded function on $\mathbb T$, and
% $\langle\xi_i^2\rangle < +\infty$.
Then, for $k > 5/2$, (\ref{eq:19}) and (\ref{eq:21}) follows by
standard arguments (cf. \cite{KL}). 

\section{Sector property of $A$ in $\mathcal{H}_{-1}$}
\label{sec-sec}

In this section, we show that the operator $A$ satisfy a sector
condition in $\mathcal{H}_{-1}$.
% First, for each finite subset of $\Z$, we define a subspace $\mathbb{L}_{\Lambda}$ of $\Cy$ as a set of functions $f$ satisfying that $f$ is an odd function as a function of $p_i$ for all $i \in \Lambda$ and $f$ is an even function as a function of $p_i$ for all $i \notin \Lambda$:
% \begin{displaymath}
% \mathbb{L}_{\Lambda}=\{f \in \Cy; f((p^i,r))=-f((p,r))\ \text{for all} \  i \in \Lambda,  f((p^i,r))=f((p,r)) \  \text{for all} \ i \notin \Lambda \}
% \end{displaymath}
% where $p^i=(..., p_{i-2}, p_{i-1}, -p_i, p_{i+1}, p_{i+2}...)$. Then, it is obvious that $\Cy=\oplus_{|\Lambda| < \infty} \mathbb{L}_{\Lambda}$, and $\mathbb{L}_{\Lambda}$ and $\mathbb{L}_{\tilde{\Lambda}}$ are orthogonal in $L^2(\nu_{\beta})$ if $\Lambda \neq \tilde{\Lambda}$.
% Next, we define $\mathbb{L}_i$ as a direct sum of $\mathbb{L}_{\Lambda}$ satisfying $|\Lambda|=i$: $\mathbb{L}_i=\oplus_{|\Lambda|=i} \mathbb{L}_{\Lambda}$. Obviously, $\mathbb{L}_i$ and $\mathbb{L}_j$ are orthogonal in $L^2(\nu_{\beta})$ if $i \neq j$.
First, we prepare a useful lemma.

\begin{lemma}\label{lem:gamma1ch3}
For all $f,g\in \Cy$, 
$$
\ll  Sf, Ag \gg_{-1} = - \langle \Ga_f (X_0 -Y_{0,1}) \Ga_g
\rangle = \langle \Ga_g (X_0 -Y_{0,1}) \Ga_f \rangle .
$$
\end{lemma}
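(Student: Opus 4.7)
The plan is to chain together three elementary facts: the first identity of \eqref{eq:keyidch3}, the translation invariance of the formal sum $\Gamma_g$, and the antisymmetry of $X_0$ and $Y_{0,1}$ in $L^2(\nu_\beta)$. Throughout one must remember that $f$ and $g$ are local, so although $\Gamma_f$ and $\Gamma_g$ are formal, every quantity that actually needs to converge is a finite sum once one pairs it with a local test function under $\langle\cdot\rangle$.

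First I would use the symmetry of $\ll\cdot,\cdot\gg_{*}$ (immediate from translation invariance of $\nu_\beta$) together with the identity $\ll h,Sk\gg_{-1}=-\ll h,k\gg_{*}$ from \eqref{eq:keyidch3} to write
\begin{equation*}
\ll Sf,Ag\gg_{-1}=-\ll f,Ag\gg_{*}=-\sum_{i\in\Z}\langle f\,\tau_{i}Ag\rangle=-\langle f\,\Gamma_{Ag}\rangle,
\end{equation*}
where the last step is legitimate because the sum in $\Gamma_{Ag}$ terminates when tested against the local function $f$. The key algebraic step is now to recognise that, since $X_{j}=\tau_{j}X_{0}$ and $Y_{j,j+1}=\tau_{j}Y_{0,1}$ commute with translations and $g$ is local,
\begin{equation*}
\Gamma_{Ag}\;=\;\sum_{i\in\Z}\tau_{i}\!\sum_{k}(X_{k}-Y_{k,k+1})g\;=\;\sum_{j\in\Z}(X_{j}-Y_{j,j+1})\Gamma_{g},
\end{equation*}
where the right-hand side is again a finite sum when paired with a local function.

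Second, I would push all the derivative action to site $0$ by translation. For each $j$, invariance of $\nu_\beta$ under $\tau_{-j}$ and invariance of $\Gamma_g$ as a formal sum give
\begin{equation*}
\langle f\,(X_{j}-Y_{j,j+1})\Gamma_{g}\rangle=\langle\tau_{-j}f\,(X_{0}-Y_{0,1})\Gamma_{g}\rangle.
\end{equation*}
Summing over $j$ (a finite sum after pairing) assembles $\Gamma_f$ on the left slot, and I obtain
\begin{equation*}
\ll Sf,Ag\gg_{-1}=-\langle\Gamma_{f}\,(X_{0}-Y_{0,1})\Gamma_{g}\rangle,
\end{equation*}
which is the first equality.

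For the second equality I would integrate by parts using the antisymmetry of $X_{0}$ and $Y_{0,1}$ on $L^{2}(\nu_\beta)$. Since $(X_{0}-Y_{0,1})\Gamma_{g}$ is a local function, expanding $\Gamma_{f}=\sum_{i}\tau_{i}f$ and integrating by parts term by term yields
\begin{equation*}
-\langle\Gamma_{f}\,(X_{0}-Y_{0,1})\Gamma_{g}\rangle=\sum_{i}\langle(X_{0}-Y_{0,1})(\tau_{i}f)\cdot\Gamma_{g}\rangle,
\end{equation*}
and repeating the same translation trick as above (moving the derivative back to site $0$ and collecting the translates of $f$ into $\Gamma_f$) converts this into $\langle\Gamma_{g}\,(X_{0}-Y_{0,1})\Gamma_{f}\rangle$. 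The only place needing any care is justifying these manipulations on formal sums; the main (minor) obstacle is bookkeeping—verifying termwise that every infinite sum collapses to a finite one because one of the two factors in each pairing is local with null average under $\nu_\beta$. No further analytic input beyond \eqref{eq:keyidch3} and the antisymmetry of $X_{0}$, $Y_{0,1}$ is needed.
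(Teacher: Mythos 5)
Your proof is correct and follows essentially the same route as the paper: first use the symmetry of $\ll\cdot,\cdot\gg_{*}$ together with the identity $\ll h,Sk\gg_{-1}=-\ll h,k\gg_{*}$, then re-index the double sum and translate the vector fields to site $0$ so that $\Gamma_f$ and $\Gamma_g$ assemble themselves; your reorganization through $\Gamma_{Ag}$ is just a cosmetic regrouping of the paper's own index shuffles. The paper states only the first equality in its displayed computation (the second equality by antisymmetry of $X_0,Y_{0,1}$ is left implicit), which you correctly spell out by integrating by parts term by term and noting that each pairing collapses to a finite sum because one factor is local.
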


\begin{proof}
By the first identity of (\ref{eq:keyidch3}),
\begin{align*}
- & \ll Sf, Ag \gg_{-1} = \ll f, Ag \gg_{*} =
\sum_{i \in \Z}\langle  \tau_i f, Ag \rangle = \sum_{i,j \in
  \Z}\langle  \tau_i f, (X_j -Y_{j,j+1}) g \rangle\\ 
&=\sum_{i,k \in \Z}\langle  \tau_i f, (X_{k+i} -Y_{k+i,k+i+1})  g \rangle
=\sum_{i,k \in \Z}\langle  \tau_i f, \tau_i ( (X_k-Y_{k,k+1})
\tau_{-i} g) \rangle \\
&= \sum_{k \in \Z}\langle  f,(X_k-Y_{k,k+1}) \Ga_g \rangle = \sum_{k
  \in \Z} \langle \tau_k f, (X_0-Y_{0,1})  \Ga_g \rangle =\langle
\Ga_f (X_0-Y_{0,1}) \Ga_g \rangle.  
\end{align*}
\end{proof}

Define $\tilde\Gamma_f$ as $\tilde\Gamma_f= \sum_{|i|\le s_f+1} \tau_i f$.
Observe that in the above expression, one can restrict the definition of
$\Gamma_f$ and $\Gamma_g$ as finite sums, namely, we can replace them by $\tilde\Gamma_f$ and $\tilde\Gamma_g$.
Decompose now $\tilde\Gamma_f = \tilde\Gamma_f^e + \tilde\Gamma_f^o$, where $\tilde\Gamma_f^e$
is even in $p_0$ and $\tilde\Gamma_f^o$ is odd. Observe that the vector fields 
 $X_0$ and $Y_{0,1}$ change the parity of $p_0$, so we have
 \begin{equation*}
     \langle \tilde\Ga_f (X_0-Y_{0,1}) \tilde\Ga_g \rangle = 
     \langle \tilde\Ga_f^o (X_0-Y_{0,1}) \tilde\Ga_g^e \rangle - \langle
\tilde\Ga_g^o (X_0-Y_{0,1}) \tilde\Ga_f^e \rangle   .  
 \end{equation*}
%Observe that in the above expression one can restrict the definition of
%$\Gamma_f$ and $\Gamma_g$ as finite sums,
%that we will denote by $\tilde\Gamma_f$ and $\tilde\Gamma_g$

Applying Schwarz inequality, we can bound the above expression by 
\begin{equation*}
  \label{eq:11}
  \langle (\tilde\Ga_f^o)^2\rangle^{1/2} \langle[(X_0-Y_{0,1})
  \tilde\Ga_g^e]^2 
  \rangle^{1/2}  + \langle (\tilde\Ga_g^o)^2\rangle^{1/2}
  \langle[(X_0-Y_{0,1}) \tilde\Ga_f^e]^2 \rangle^{1/2}
\end{equation*}
and applying the spectral gap for $X_0^2$ plus Schwarz inequality again, the last
term is bounded by 
\begin{equation*}
 C\{ \langle (X_0\tilde\Ga_f^o)^2\rangle^{1/2} \langle(X_0 \tilde\Ga_g^e)^2 +
  (Y_{0,1} \tilde\Ga_g^e)^2 
  \rangle^{1/2}  + \langle (X_0\tilde\Ga_g^o)^2\rangle^{1/2}
  \langle(X_0 \tilde\Ga_f^e)^2 +
  (Y_{0,1} \tilde\Ga_f^e)^2  \rangle^{1/2} \}
\end{equation*}
with some positive constant $C$.

Recall that 
\begin{displaymath}
\triple Sf \triple_{-1}^2 = \frac{\gamma}{2}
\langle (Y_{0,1} \tilde\Gamma_f)^2 + ( X_{0} \tilde\Gamma_f )^2 \rangle = \frac{\gamma}{2} \langle (Y_{0,1} \tilde\Gamma_f^e)^2 + ( Y_{0,1} \tilde\Gamma_f^o )^2 + (X_{0} \tilde\Gamma_f^e)^2 + ( X_{0} \tilde\Gamma_f^o )^2 \rangle.     
\end{displaymath}
Then we obtain the sector condition:

\begin{lemma}[sector condition]\label{lem:cyboundch3}
There exists a positive constant $C$ such that for all $f,g\in \Cy$, 
\begin{equation*}
| \ll Sf, Ag \gg_{-1} | \le C \triple Sf \triple_{-1} \triple Sg \triple_{-1}. \label{eq:sector3ch3}
\end{equation*}
\end{lemma}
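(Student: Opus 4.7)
The plan is to follow the chain of steps already sketched in the excerpt just before the lemma, turning them into a clean argument. Starting from Lemma \ref{lem:gamma1ch3}, I would first restrict the formal sums $\Gamma_f, \Gamma_g$ to the finite sums $\tilde\Gamma_f, \tilde\Gamma_g$: because $f, g \in \Cy$ have compact support, the only translates contributing to $\langle \Gamma_f (X_0 - Y_{0,1}) \Gamma_g \rangle$ are those whose support meets $\{0,1\}$, so the truncation is harmless. This gives
\[
\ll Sf, Ag \gg_{-1} = -\langle \tilde\Gamma_f (X_0 - Y_{0,1}) \tilde\Gamma_g \rangle.
\]

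Next I would exploit the parity argument. Decompose $\tilde\Gamma_f = \tilde\Gamma_f^e + \tilde\Gamma_f^o$, and similarly for $g$, according to parity in $p_0$. Because $\nu_\beta$ is symmetric in $p_0$ and each of $X_0 = p_0\partial_{r_0} - V'(r_0)\partial_{p_0}$ and $Y_{0,1} = p_0\partial_{r_1} - V'(r_1)\partial_{p_0}$ reverses the parity in $p_0$, the only nonvanishing pieces of $\langle \tilde\Gamma_f (X_0-Y_{0,1}) \tilde\Gamma_g \rangle$ are the two mixed even/odd terms. Then by Schwarz and the symmetry in $f,g$ pointed out in Lemma \ref{lem:gamma1ch3},
\[
|\ll Sf, Ag\gg_{-1}| \le \langle (\tilde\Gamma_f^o)^2\rangle^{1/2}\langle [(X_0-Y_{0,1})\tilde\Gamma_g^e]^2\rangle^{1/2} + \langle (\tilde\Gamma_g^o)^2\rangle^{1/2}\langle [(X_0-Y_{0,1})\tilde\Gamma_f^e]^2\rangle^{1/2}.
\]

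The remaining step is a spectral gap bound in the $p_0$ variable. For fixed $r_0$, the Gaussian measure in $p_0$ has spectral gap $\beta^{-1}$ on the subspace of mean-zero functions, and in particular on odd functions (which automatically have zero mean). Since $X_0^2$ acts, modulo an $r_0$-derivative that also enters but whose contribution is dominated, like $(p_0\partial_{r_0} - V'(r_0)\partial_{p_0})^2$, an easy direct computation gives a bound of the form $\langle (\tilde\Gamma^o)^2\rangle \le C \langle (X_0 \tilde\Gamma^o)^2\rangle$ uniformly in $f$; I would phrase this as a lemma on the Ornstein--Uhlenbeck-type operator $X_0$ restricted to odd functions, using that $\delta_- \le V''$. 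Applying this to $\tilde\Gamma_f^o$ and $\tilde\Gamma_g^o$, together with $\langle [(X_0-Y_{0,1})\tilde\Gamma^e]^2\rangle \le 2\langle (X_0\tilde\Gamma^e)^2 + (Y_{0,1}\tilde\Gamma^e)^2\rangle$, yields the bound displayed in the excerpt just before the lemma.

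To conclude, I would recognize each of the four terms $\langle (X_0 \tilde\Gamma_f^o)^2\rangle$, $\langle (X_0 \tilde\Gamma_f^e)^2\rangle$, $\langle (Y_{0,1}\tilde\Gamma_f^e)^2\rangle$, and the analogous ones for $g$, as summands in
\[
\tfrac{\gamma}{2}\langle (X_0\tilde\Gamma_f)^2 + (Y_{0,1}\tilde\Gamma_f)^2\rangle = \triple Sf \triple_{-1}^2,
\]
using that cross terms between even and odd pieces vanish by parity. One last application of Schwarz in the two summands then produces $C\triple Sf\triple_{-1}\triple Sg\triple_{-1}$. The main obstacle is the Gaussian-type Poincaré estimate for odd functions under $X_0$: one needs to check that the coupling with the $r_0$ direction in $X_0^2$ does not destroy the bound, which is where the hypothesis $\delta_- \le V''$ enters and gives a uniform constant independent of $f$.
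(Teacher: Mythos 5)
Your proposal follows the paper's proof almost exactly: the identity from Lemma~\ref{lem:gamma1ch3}, the truncation to the finite sums $\tilde\Gamma_f,\tilde\Gamma_g$, the parity decomposition in $p_0$, the first Schwarz step, a one-site spectral gap for $X_0$ acting on odd functions, and the reassembly into $\triple Sf\triple_{-1}\triple Sg\triple_{-1}$ are all there and in the right order.

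The one place where your sketch genuinely departs from the paper, and where you correctly sensed an obstacle, is the mechanism behind the bound $\langle (\tilde\Gamma^o)^2\rangle \le C\langle (X_0\tilde\Gamma^o)^2\rangle$. Conditioning on $r_0$ and invoking the Gaussian Poincar\'e in $p_0$ gives you $\langle (\tilde\Gamma^o)^2\rangle \le \beta^{-1}\langle(\partial_{p_0}\tilde\Gamma^o)^2\rangle$, but there is no direct way to dominate $\langle(\partial_{p_0}h)^2\rangle$ by $\langle(X_0 h)^2\rangle$: the vector field $X_0=p_0\partial_{r_0}-V'(r_0)\partial_{p_0}$ is a nontrivial mix of the two directions, and after expanding $\langle(X_0 h)^2\rangle$ the cross term $-2\langle p_0 V'(r_0)\,\partial_{r_0}h\,\partial_{p_0}h\rangle$ has no sign, so it cannot simply be discarded. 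Trying to ``dominate the $r_0$-derivative'' will not close. The correct conditioning, which is what the paper actually uses via Lemma~\ref{lem:2ch3} in Section~\ref{sec:spectral-gap-bound}, is on the single-site energy $\E_0 = p_0^2/2 + V(r_0)$, not on $r_0$. The point is that $X_0$ is tangent to the level sets $\Sigma_{1,E}$, so on each such one-dimensional manifold $X_0$ really is a (non-degenerate, up to bounded factors thanks to $\delta_-\le V''\le\delta_+$) angular derivative, and Lemma~\ref{lem:2ch3} gives $E_{\nu_{1,E}}[(h-E_{\nu_{1,E}}[h])^2]\le C\,E_{\nu_{1,E}}[(X_0 h)^2]$ uniformly in $E$. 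Odd-in-$p_0$ functions have $E_{\nu_{1,E}}[h]=0$ because $\nu_{1,E}$ is symmetric in $p_0$; integrating over $E$ then yields precisely $\langle(\tilde\Gamma^o)^2\rangle\le C\langle(X_0\tilde\Gamma^o)^2\rangle$. Replace your ``fixed $r_0$'' Poincar\'e with this microcanonical one and your proof matches the paper's.
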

% \begin{proof}
% By Lemma \ref{lem:gamma1ch3} and \ref{lem:gamma2ch3},
% \begin{align*}
% |\ll & S^{\gamma}f, Ag \gg_{\beta,\gamma}| = |-\sum_{i=0}^{\infty}\ll S^{\gamma}g_{i+1}, Af_{i} \gg_{\beta,\gamma} + \sum_{i=0}^{\infty}\ll S^{\gamma}f_{i+1}, Ag_{i} \gg_{\beta,\gamma}| \\
% & \le \sum_{i=0}^{\infty} |\langle \Ga_{g_{i+1}} (X_0 -Y_{0,1}) \Ga_{f_{i}} \rangle_{\rho} | + \sum_{i=0}^{\infty} |\langle \Ga_{f_{i+1}} (X_0 -Y_{0,1}) \Ga_{g_{i}} \rangle_{\rho}| \\
% & \le C \sum_{i=0}^{\infty} \sqrt{ \langle ( X_0 \Ga_{g_{i+1}} )^2 \rangle \langle (X_{0}\Ga_{f_{i}})^2+(Y_{0,1}\Ga_{f_{i}})^2 \rangle} \\
% & + C \sum_{i=0}^{\infty} \sqrt{ \langle ( X_0 \Ga_{f_{i+1}} )^2 \rangle \langle (X_{0}\Ga_{g_{i}})^2+(Y_{0,1}\Ga_{g_{i}})^2 \rangle} \\
% & \le C \frac{4}{\gamma} \sqrt { \ll S^{\gamma}f \gg_{\beta,\gamma} \ll S^{\gamma}g \gg_{\beta,\gamma} }
% \end{align*}
% since $\ll S^{\gamma}f \gg_{\beta,\gamma}=\sum_{i=0}^{\infty} \frac{\gamma}{2}[ \langle (X_{0}\Ga_{f_i})^2 \rangle+ \langle (Y_{0,1}\Ga_{f_i})^2 \rangle]$.
% \end{proof}

\section{Closed forms}
\label{sec:closedformch2}

In this section, to complete the proof of Lemma \ref{lem:clt2}, we
introduce the notion of closed form and give a characterization for it.  
We generalize some ideas developed in the PhD thesis of Hernandez
\cite{H}, where microcanonical surfaces were given by spheres,
following the general setup of the seminal work of Varadhan \cite{V}
(see also \cite{KL}, appendix 3, section 4). The nonlinearity of our
interaction reflected in the non-constant curvature of our
microcanonical manifolds, requires some substantial modification of
the original approach.   We will make use of the spectral gap estimate
proved in section \ref{sec:spectral-gap-bound}. 
%We use the following notation:
%\begin{align*}
%E_y &=E^{Q_y}[\frac{p_0^2}{2}+V(r_0)], \quad
%\phi_y =E^{Q_y}[V^{\prime}(r_0)^2], \\
%\tilde{\phi}_y &=E^{Q_y}[V^{\prime}(r_0)^4]-\phi_y^2, \quad
%\psi_y=E^{Q_y}[\big(\frac{V^{\prime}(r_0)^2}{y}-V^{\prime \prime}(r_0)\big)^2].
%\end{align*}
%We assume the following properties of a $C^2$-function $V$:
%(i) $V(r)$ is strictly increasing in $\R_+$ and decreasing in $\R_-$.
%(ii) $Z_y,E_y,\phi_y,\tilde{\phi}_y$ and $\psi_y$ are finite for any $y \in \R_+$.
%(iii) There exists some constants $C_+$ and $C_-$ such that $0 < C_- \le |\frac{\sqrt{2V(r)}}{V^{\prime}(r)}| \le C_+ < \infty$ for $r \in \R$.
%\begin{remark}
%By the assumption, $\lim_{r \to \pm \infty}V(r)=\infty$. Especially, for any $E>0$, $\Sigma_E:=\{r; V(r) \le E\}$ is a compact set in $\R$ and therefore $sup_{r \in \Sigma_E}|V^{\prime(r)}|$ is finite.
%\end{remark}
%\begin{remark}
%By the definition, $E^{Q_y}[\frac{V^{\prime}(r_0)^2}{y}-V^{\prime \prime}(r_0)]=0$.
%\end{remark}

%Let $\Omega=(\R^2)^{\Z}$ and $(p,r)=(p_i,r_i)_{i \in \Z}$ a tipical element of this set. Define for $i\in \Z$ the shift operator $\tau^{i}:\Omega\to\Omega$  by $(\tau^{i}(p,r))_j=(p_{j+i},r_{j+i})$ , and for a function  $f:\Omega\to \mathbb{R}$ define $\tau^{i}f(p,r)=f(\tau^{i}(p,r))$. We will consider a class of product measures $\{Q_y\}_{y\in\mathbb{R}^+}$ on $\Omega$ of the form $dQ_y=\prod_{i=-\infty}^{\infty}\frac{\exp(\frac{-1}{y}(\frac{p_i^2}{2}+V(r_i))}{\sqrt{2\pi y}Z_{y}}dp_idr_i$. Here, $Z_y= \int_{\R}\exp(-\frac{1}{y}V(r))dr$.

Let us decompose $\Cy =\cup_{k\geq 1} \Cy_k$, where
$\Cy_k$ is the space of functions $F\in \Cy$ depending only on
the variables $(p_i,r_i)_{-k \le i \le k}$. Given $F\in\Cy_k$
recall the definition of the formal sum
\[
\Gamma_{F}(p,r)=\sum_{j=-\infty}^{\infty}\tau_jF(p,r)
\]
and that for every $i\in\Z$ the expressions
\[
\frac{\partial\Gamma_{F}}{\partial p_i}(p,r)=\sum_{i-k\leq j\leq i+k}\frac{\partial}{\partial p_i}\tau_jF(p,r)
\]
and
\[
\frac{\partial\Gamma_{F}}{\partial r_i}(p,r)=\sum_{i-k\leq j\leq i+k}\frac{\partial}{\partial r_i}\tau_jF(p,r)
\]
are well defined.
The formal invariance $\Gamma_{F}(\tau_i(p,r))=\Gamma_{F}(p,r)$ leads us to the relation
\begin{equation}\label{covariance}
\frac{\partial\Gamma_{F}}{\partial p_i}(p,r)=\frac{\partial\Gamma_{F}}{\partial p_0}(\tau_i (p,r)).
\end{equation}

Remember that $Y_{i,j}=p_i\partial_{r_j}-V^{\prime}(r_j)\partial_{p_i}$ and $X_{i}:=Y_{i,i}$. Given $F\in\Cy$ and $i \in\mathbb{Z}$, $X_{i}(\Gamma_{F})$ and $Y_{i,i+1}(\Gamma_{F})$ are well defined and satisfy
\[
X_{i}(\Gamma_{F})(p,r)=\tau_{i}X_{0}(\Gamma_{F})(p,r), \quad Y_{i,i+1}(\Gamma_{F})(p,r)=\tau_{i}Y_{0,1}(\Gamma_{F})(p,r).
\]
Now we consider the following linear space
\[
\mathcal{B} = \{(X_{0}(\Gamma_{F}),Y_{0,1}(\Gamma_{F}))\in L^2(\nu_{\beta}) \times L^2(\nu_{\beta}): F\in \Cy \}.
\]

We denote by $\mathfrak{H}$ the linear space generated by the closure
of $\mathcal{B}$ in $L^2(\nu_{\beta}) \times L^2(\nu_{\beta})$
and $(0,p_0V^{\prime}(r_1))$
\begin{equation}
  \label{eq:12}
  \mathfrak{H} \ = \ \overline{\mathcal{B} + \{(0, p_0 V^{\prime}(r_1))\}}.
\end{equation}

First, we observe that defining a vector-valued function
$\xi=(\xi^0,\xi^1)$ as $(X_{0}(\Gamma_{F}),Y_{0,1}(\Gamma_{F}))$ for
$F\in \Cy$ or $(0,p_0V^{\prime}(r_1))$, the following properties are
satisfied: 
\begin{enumerate}[\it i)]
\item $X_{i}(\tau_j\xi^0)=X_{j}(\tau_i\xi^0)$ \quad \text{for all}
  \quad  $ i,j \in \Z$, \label{cond-closed}
\item $Y_{i,i+1}(\tau_j\xi^1)=Y_{j,j+1}(\tau_i\xi^1)$ \quad \text{for all} \quad  $i,j \in \Z$,
\item $X_{i}(\tau_j\xi^1)=Y_{j,j+1}(\tau_i\xi^0)$ \quad \text{if} \quad  $\{i\}\cap\{j,j+1\}=\emptyset$, 
\item $p_{i}[X_{i}(\tau_i\xi^1)-Y_{i,i+1}(\tau_{i}\xi^0)]=V^{\prime}(r_{i+1})\tau_i\xi^0-V^{\prime}(r_i)\tau_{i}\xi^1$\ \text{for all} $i\in \mathbb{Z}$,
\item $V^{\prime}(r_{i+1})[X_{i+1}(\tau_i\xi^1)-Y_{i,i+1}(\tau_{i+1}\xi^0)]=V^{\prime \prime}(r_{i+1})p_{i+1}\tau_i\xi^1-V^{\prime \prime}(r_{i+1})p_{i}\tau_{i+1}\xi^0$\ \ \text{for all} $i\in \mathbb{Z}$.
\end{enumerate}

We call a weakly closed form (or \emph{germ} of a weakly closed form,
cf. \cite{KL}), a couple of functions 
$\xi =(\xi^0,\xi^1)\in L^2(\nu_{\beta})\times L^2(\nu_{\beta})$, that
satisfy   \ref{cond-closed}) to v) in a weak sense. A smooth
approximation of weakly closed form is not necessarily closed, and some
type of Hodge decomposition is needed. This will be done only after
localization in the proof of the following theorem that is the main
result of this section.  

\begin{theorem}\label{thm:closedformch3}
If $\xi=(\xi^0,\xi^1)\in L^2(\nu_{\beta})\times L^2(\nu_{\beta})$
satisfies conditions \ref{cond-closed}) to v) in a weak sense, then $\xi\in\mathfrak{H}$.
\end{theorem}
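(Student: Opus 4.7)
The strategy is to adapt Varadhan's program for closed forms \cite{V}, extending Hernandez's arguments \cite{H} from microcanonical spheres to the non-spherical level sets $\Sigma_{n,E}$ produced by the anharmonic $V$. The fundamental mechanism is that on each connected, simply connected microcanonical surface, closed $1$-forms are exact, and the spectral gap of Section \ref{sec:spectral-gap-bound} upgrades this topological statement to an $L^{2}$ approximation by exact forms of the type $(X_{0}\Gamma_{F},Y_{0,1}\Gamma_{F})$, modulo a one-dimensional cohomology class represented by $(0,p_{0}V'(r_{1}))$.

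\emph{Step 1 (Localization).} For each positive integer $n$ define $\xi_{n}=(\xi_{n}^{0},\xi_{n}^{1})$ by conditioning the components of $\xi$ on the $\sigma$-algebra generated by $(p_{i},r_{i})_{|i|\le n}$; by translation invariance and $L^{2}$ integrability, $\xi_{n}\to\xi$ in $L^{2}(\nu_{\beta})\times L^{2}(\nu_{\beta})$. The weak closedness conditions \ref{cond-closed})--v), tested against smooth compactly supported functions, pass to $\xi_{n}$ on the box $\Lambda_{n-1}$.

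\emph{Step 2 (Reduction to microcanonical surfaces).} Decompose $\nu_{\beta}$ restricted to $\Lambda_{n}$ as a mixture $\int \nu_{n,E}\,\rho_{n}(E)\,dE$ over the microcanonical surfaces $\Sigma_{n,E}=\{\sum_{|i|\le n}\E_{i}=(2n+1)E\}$. By the Lie-algebra result of Section \ref{sec:lie-algebra}, the family $\{X_{i},Y_{i,i+1}:|i|\le n-1\}$ spans the tangent bundle of $\Sigma_{n,E}$, so $\xi_{n}$ assigns to each tangent vector a scalar; conditions \ref{cond-closed})--iii) translate into the usual closedness of this $1$-form, while conditions iv)--v) encode the Lie brackets of $X_i$ and $Y_{j,j+1}$ acting tangentially. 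Since $V$ is smooth, symmetric and strictly convex, $\Sigma_{n,E}$ is a connected, compact, simply connected manifold, so its first de Rham cohomology vanishes and the restriction of $\xi_{n}$ is exact in the smooth sense.

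\emph{Step 3 ($L^{2}$-Hodge decomposition via the spectral gap).} The key analytic step is to convert exactness into an $L^{2}(\nu_{n,E})$ statement with bounds uniform in $n$ and $E$. Applying the spectral gap for $-S_{\Lambda_{n}}$ on each $\Sigma_{n,E}$, one obtains a function $F_{n,E}\in L^{2}(\nu_{n,E})$, normalized by $\int F_{n,E}\,d\nu_{n,E}=0$, satisfying
\begin{equation*}
\sum_{|i|\le n-1}\langle(\xi_{n}^{0}-X_{i}F_{n,E})^{2}+(\xi_{n}^{1}-Y_{i,i+1}F_{n,E})^{2}\rangle_{n,E}\,\le\,C_{\mathrm{sg}}\,\|d\xi_{n}\|_{L^{2}(\nu_{n,E})}^{2},
\end{equation*}
where the right-hand side vanishes by closedness. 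This is the point where the anharmonic geometry forces us beyond Hernandez's spherical Hodge computations: instead of explicit harmonic analysis, I use the spectral gap together with the vanishing cohomology to solve $(-S_{\Lambda_{n}})F_{n,E}=\delta^{*}\xi_{n}$ on the microcanonical surface, with $\delta^{*}$ the formal adjoint of $d$. Gluing $F_{n,E}$ into a measurable function $F_{n}$ on $\Omega^{n}$ by setting $F_{n}(p,r)=F_{n,\bar E_{n}(p,r)}(p,r)$ produces $(X_{0}F_{n},Y_{0,1}F_{n})$ that recovers $\xi_{n}$ along every tangential direction.

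\emph{Step 4 (The extra direction and passage to the limit).} The only component of $\xi_{n}$ not captured by $(X_{0}F_{n},Y_{0,1}F_{n})$ is the projection onto the normal bundle of $\Sigma_{n,E}$ inside $\Omega^{n}$. A direct computation shows that, up to lower order in $n$, this normal direction is collinear with $(0,p_{0}V'(r_{1}))$; hence there exists a scalar $a_{n}\in\R$ with $\xi_{n}-a_{n}(0,p_{0}V'(r_{1}))-(X_{0}F_{n},Y_{0,1}F_{n})\to 0$ in $L^{2}\times L^{2}$. Approximating $F_{n}$ by a sequence of $\Gamma_{F}$ with $F\in\Cy$, which is possible since $F_{n}$ is square integrable and the shifts of local functions are dense, yields the desired $\xi\in\mathfrak{H}$.

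\emph{Main obstacle.} The crux is Step 3: obtaining the $L^{2}$-Hodge decomposition on $\Sigma_{n,E}$ with constants uniform in the energy and box size, using only the spectral gap of $S$ and the algebraic closedness conditions. The absence of explicit eigenfunctions, which were available in the spherical case, forces a more abstract argument based on the range of $S$; verifying that the resulting $F_{n,E}$ depends measurably on $E$ and assembles into a function with controlled $L^{2}(\nu_{\beta})$ norm is the most delicate part of the proof.
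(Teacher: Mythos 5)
Your proposal captures the broad strategy of the paper -- localize, use a Hodge decomposition on the finite-box microcanonical surfaces (whose convexity guarantees vanishing first cohomology), and invoke the spectral gap of Section~\ref{sec:spectral-gap-bound} to control the $L^2$ error -- but there are two genuine gaps.

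\emph{Step 3 is not complete as stated.} You write $\|d\xi_n\|_{L^2(\nu_{n,E})}^2$ on the right and assert it vanishes ``by closedness,'' but $\xi_n = \textbf{E}_{\nu_\beta}[\xi^m|\mathcal F_L]$ is only an $L^2$ function, not smooth, so $d\xi_n$ is not a priori defined pointwise and the Hodge decomposition $\zeta=dg+\delta H$ cannot be applied directly to it. The paper handles this by first taking a smooth $\mathcal F_L$-measurable approximation $\zeta_i^{m,(L)}$ of $\xi_i^{m,(L)}$; this approximation is generally \emph{not} closed, and the actual argument proceeds by decomposing the smooth $\zeta$ and bounding $\|\delta H\|$ by the $L^2$ distance from $\zeta$ to the weakly closed $\xi^{(L)}$, via a Cauchy--Schwarz step using orthogonality of $dg$ and $\delta H$. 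You also omit the cutoff function $\varphi$ inserted in the definition of $\xi_i^{m,(L)}$, which the paper later needs for uniform bounds.

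\emph{Step 4 is conceptually wrong.} The vector fields $X_i$ and $Y_{i,i+1}$ are all tangent to $\Sigma_{N,E}$, and $\xi$ is a form on that tangent bundle; there is no ``projection onto the normal bundle'' that could produce the extra term. The actual origin of $(0, p_0V'(r_1))$ is a \emph{boundary effect} of translation-averaging: when $Y_{0,1}(\Gamma_F)$ is expanded as in \eqref{suminvgrad2}, one picks up boundary contributions $V'(r_{k+1})\partial F/\partial p_k$ and $p_{-k-1}\partial F/\partial r_{-k}$ in addition to the interior terms $\sum_i Y_{i,i+1}(F)(\tau_{-i}\cdot)$. The paper constructs a weighted conditional average $g^{(L,k)}$ (with the crucial weight $p^2_{-L-k-1}V'(r_{L+k+1})^2$, which produces the correct normalization) and averages over $k\in[L/2,3L/4]$ to form $\widehat g^L$; the interior error terms $I_L^j$, $J_L^j$ are shown to vanish in $L^2$, and the remaining boundary terms $R^1_L$, $R^2_L$ are shown -- via a separate weak-compactness and rigidity argument (the limit $h$ is annihilated by $X_i$, $Y_{i,i+1}$ for $i\le 0$, hence is constant) -- to converge weakly to $c\,p_0V'(r_1)$. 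This delicate mechanism is the crux of the proof and is entirely absent from your plan.

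In short, the scaffolding is right but the two ideas that make the proof work in the anharmonic setting -- the approximate-Hodge decomposition for non-smooth weakly-closed forms, and the boundary analysis of the translation-averaged gradient producing the $p_0V'(r_1)$ direction -- are respectively glossed over and misidentified.
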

\begin{proof}
The goal is to find a sequence $(F_L)_{L\geq 1}$ in $\Cy$ such that 
$$
(\xi^0-X_0(\Gamma_{F_L}),\xi^1-Y_{0,1}(\Gamma_{F_L})) 
\ \mathop{\longrightarrow}_{L\to\infty}\ (0,cp_0V^{\prime}(r_1))
$$
in $L^2(\nu_{\beta}) \times L^2(\nu_{\beta})$
for some constant $c$.

First, observe that for a function $F \in \Cy_k$ we can
rewrite, by using 
(\ref{covariance}), 
\begin{equation}\label{suminvgrad} 
X_{0}(\Gamma_{F}) = \sum_{i=-k}^{k}X_{i}(F)(\tau_{-i}(p,r))
\end{equation}
and 
\begin{equation}\label{suminvgrad2}
  \begin{split}
    Y_{0,1}(\Gamma_{F}) =
    \sum_{i=-k}^{k-1}Y_{i,i+1}(F)(\tau_{-i}(p,r))-
    \left(V^{\prime}(r_{k+1})\frac{\partial F}{\partial p_{k}} \right)
    (\tau_{-k}p)\\
    +\left(p_{-k-1}\frac{\partial F}{\partial r_{-k}}
    \right)(\tau_{k+1}(p,r)) .
  \end{split}
\end{equation}

We define for $m=0,1$
\[
\xi_{i}^{m,(L)}=\textbf{E}_{\nu_{\beta}}[\xi_{i}^m|\mathcal{F}_{L}]\varphi\left(\frac{1}{2L+1}\sum_{i=-L}^{L} \{\frac{p_i^2}{2}+V(r_i)\}
\right)
\]
where $\xi_{i}^m(p,r)=\tau_i\xi^m(p,r)$, $\mathcal{F}_{L}$ is the sub
$\sigma$-field of $\Omega$ generated by $(p_i,r_i)_{i=-L}^L$ and
$\varphi$ is a smooth positive function with compact support such that
$\varphi(E(\beta))=1$ and bounded by $1$ (we need this cutoff in order to do uniform
bounds later). 
Because $\nu_{\beta}$ is a product measure and $\varphi$ satisfies
that   
\[
X_{i}\varphi\left(\frac{1}{2L+1}\sum_{i=-L}^{L} \{\frac{p_i^2}{2}+V(r_i)\} \right)=0
\]
for $-L \le i \le L$ and 
\[
Y_{i,i+1}\varphi\left(\frac{1}{2L+1}\sum_{i=-L}^{L} \{\frac{p_i^2}{2}+V(r_i)\} \right)=0
\]
for $-L \le i \le L-1$, the set of functions
$\{(\xi_{i}^{0,L})\}_{-L\leq i\leq  L}$ and
$\{(\xi_{i}^{1,L})\}_{-L\leq i\leq  L-1}$ even satisfies the
conditions {\it\ref{cond-closed})} to {\it v)} on the finite set
$\{-L,-L+1, \dots, L \}$ if we replace $\tau_i \xi^0$ by $\xi_i^{0,(L)}$ and $\tau_i \xi^1$ by $\xi_i^{1,(L)}$. Therefore,
they define a closed form in a weak sense on a finite dimensional space. 
To obtain a closed form on each microcanonical
manifold $\left\{\omega \in (\R^2)^{2L+1}; \sum_{-L}^L \mathcal E_i = (2L+1) E
\right\}$, we first take a smooth $\mathcal{F}_L$-measurable approximation of $\{(\xi_{i}^{0,L})\}_{-L\leq i\leq  L}$ and
$\{(\xi_{i}^{1,L})\}_{-L\leq i\leq  L-1}$ in $L^2(\nu_{\beta})$, and denote it by $\{(\zeta_i^{0,(L)} ) \}_{-L \le i \le L}$ and  $\{(\zeta_i^{1,(L)} ) \}_{-L \le i \le L-1}$. For arbitrary chosen $\epsilon_L >0$, we choose $\{(\zeta_{i}^{0,L})\}_{-L\leq i\leq  L}$ and
$\{(\zeta_{i}^{1,L})\}_{-L\leq i\leq  L-1}$ satisfying
\begin{align*}
\sum_{i=-L}^{L} \|\xi_i^{0,(L)} - \zeta_i^{0,(L)} \|_{L^2(\nu_{\beta})}^2 +  \sum_{i=-L}^{L-1} \| \xi_i^{1,(L)} - \zeta_i^{1,(L)}  \|_{L^2(\nu_{\beta})}^2 \le \frac{\epsilon_L}{4}.
\end{align*}
Unfortunately, this smooth approximation may not be closed. Therefore, to obtain a smooth closed form, we consider a one-form $\zeta=\sum_{i=-L}^L \zeta_i^{0,(L)}dX_i + \sum_{i=-L}^{L-1} \zeta_i^{1,(L)}dY_{i,i+1}$ on each microcanonical manifold. As shown in Section \ref{sec:lie-algebra},
 $\text{Lie}\{ \{X_i, i=-L, \dots, L \} \{ Y_{i,i+1}, i=-L, \dots, L-1 \} \}$ generates
the all tangent space of each microcanonical
manifold, so $\zeta$ is well-defined on any chart. By the Hodge decomposition (cf. \cite{M}) with respect to a Riemannian
structure associated to our microcanonical measure, there exists a smooth function $g$ and a smooth two-form $H$ satisfying
\begin{align*}
\zeta=dg+\delta H
\end{align*}
since $0$ is the only harmonic function on each of these manifolds by the assumptions on $V$. Let $H_i^{0}$ and $H_i^1$ given by $\delta H=\sum_{i=-L}^L H_i^{0}dX_i + \sum_{i=-L}^{L-1} H_i^{1}dY_{i,i+1}$. Since $dg$ and $\delta H$ are orthogonal and the set of functions$\{(\xi_{i}^{0,L})\}_{-L\leq i\leq  L}$ and
$\{(\xi_{i}^{1,L})\}_{-L\leq i\leq  L-1}$ is closed,
\begin{align*}
& \textbf{E}_{\nu_{\beta}}\left[\sum_{i=-L}^L (H_i^{0})^2 +
  \sum_{i=-L}^{L-1} (H_i^{1})^2\right]
=\textbf{E}_{\nu_{\beta}}\left[\sum_{i=-L}^L H_i^{0}\zeta_i^{0,(L)} +
\sum_{i=-L}^{L-1} H_i^{1}\zeta_i^{1,(L)}\right] \\ 
&= \textbf{E}_{\nu_{\beta}}\left[\sum_{i=-L}^L H_i^{0} ( \zeta_i^{0,(L)} -\xi_i^{0,(L)} )  + \sum_{i=-L}^{L-1} H_i^{1} ( \zeta_i^{1,(L)}-\xi_i^{1,(L)} ) \right] \\
& \le \textbf{E}_{\nu_{\beta}}\left[\sum_{i=-L}^L (H_i^{0})^2 +
    \sum_{i=-L}^{L-1} (H_i^{1})^2\right]^{1/2}  \ 
\textbf{E}_{\nu_{\beta}}\left[\sum_{i=-L}^L ( \zeta_i^{0,(L)} -\xi_i^{0,(L)} )^2 + \sum_{i=-L}^{L-1} (\zeta_i^{1,(L)}-\xi_i^{1,(L)})^2\right]^{1/2}.
\end{align*}
%where $\delta H=\sum_{i=-L}^L H_i^{0}dX_i + \sum_{i=-L}^{L-1} H_i^{1}dY_{i,i+1}$ and
%\[
%<\delta H, \delta H> = \sum_{i=-L}^L (H_i^{0})^2 + \sum_{i=-L}^{L-1} (H_i^{1})^2, <\delta H, \zeta> = \sum_{i=-L}^L H_i^{0}\zeta_i^{0,(L)} + \sum_{i=-L}^{L-1} H_i^{1}\zeta_i^{1,(L)}.
%\]
Therefore, we know that 
\begin{align*}
\textbf{E}_{\nu_{\beta}}[\sum_{i=-L}^L (H_i^{0})^2 &+ \sum_{i=-L}^{L-1} (H_i^{1})^2]  = \sum_{i=-L}^L \|X_i g- \zeta_i^{0,(L)} \|_{L^2(\nu_{\beta})}^2 + \sum_{i=-L}^{L-1} \|Y_{i,i+1} g - \zeta_i^{1,(L)}  \|_{L^2(\nu_{\beta})} ^2\\
&  \le \sum_{i=-L}^{L} \|\xi_i^{0,(L)} - \zeta_i^{0,(L)} \|_{L^2(\nu_{\beta})}^2 +  \sum_{i=-L}^{L-1} \| \xi_i^{1,(L)} - \zeta_i^{1,(L)}  \|_{L^2(\nu_{\beta})}^2 \le \frac{\epsilon_L}{4}.
\end{align*}

Therefore,  
\begin{align*}
& \sum_{i=-L}^{L} \|X_i g- \xi_i^{0,(L)} \|_{L^2(\nu_{\beta})}^2 +  \sum_{i=-L}^{L-1} \|Y_{i,i+1} g - \xi_i^{1,(L)}  \|_{L^2(\nu_{\beta})}^2  \\
&  \le 2 \Big\{ \sum_{i=-L}^{L} \|X_i g- \zeta_i^{0,(L)} \|_{L^2(\nu_{\beta})}^2 +  \sum_{i=-L}^{L-1} \|Y_{i,i+1} g - \zeta_i^{1,(L)}  \|_{L^2(\nu_{\beta})}^2  \Big\} \\
& + 2 \Big\{ \sum_{i=-L}^{L} \|\zeta_i^{0,(L)} - \xi_i^{0,(L)} \|_{L^2(\nu_{\beta})}^2 +  \sum_{i=-L}^{L-1} \|\zeta_i^{1,(L)} - \xi_i^{1,(L)}  \|_{L^2(\nu_{\beta})}^2  \Big\} \le \epsilon_L.
\end{align*}
From now on, we show that we can choose a $\mathcal{F}_L$-measurable function $g^{(L)}$ which is smooth on 
each microcanonical manifold
(with respect to the vector fields of the tangent space) and
satisfies
%there exists a
%$\mathcal{F}_L$-measurable 
%function $g^{(L)}$ which is smooth on each microcanonical manifold
%(with respect to the vector fields of the tangent space) and  we can
%choose the smooth approximations such that the corresponding Hodge
%decomposition is
\begin{equation}\label{system}
\begin{split}
X_{i}(g^{(L)})&=\xi_{i}^{0,(L)}+ \epsilon_{i}^{0,(L)} \ \quad  \textit{for} \quad -L \leq i \leq L, \\
Y_{i,i+1}(g^{(L)})&=\xi_{i}^{1,(L)} + \epsilon_{i}^{1,(L)} \ \quad  \textit{for} \quad -L \leq i \leq L-1
\end{split}
\end{equation}
where $\sum_{i=-L}^L \textbf{E}_{\nu_{\beta}}[(\epsilon_{i}^{0,(L)})^2]+\sum_{i=-L}^{L-1} \textbf{E}_{\nu_{\beta}}[(\epsilon_{i}^{1,(L)})^2] \le \epsilon_{L}$ for arbitrary given $\epsilon_L >0 $. 
% Because
% $\textbf{E}_{\nu_{\beta}}[g^{(L)}|\mathcal{E}_{-L}+\cdots+\mathcal{E}_L]$ is radial and the integration was performed over spheres,
Then, by the spectral gap proved in Section \ref{sec:spectral-gap-bound}, $g^{(L)}$ is in $L^2(\nu_{\beta})$, so we have a sequence of functions $\{g_n\}_{n \in \N}$ in $\Cy_L$ such that 
\begin{equation*}
g_n \to g^{(L)} \  \textit{in} \ L^2 (\nu_{\beta}), \quad  X_i (g_n) \to X_i (g^{(L)}) \ \textit{in} \ L^2 (\nu_{\beta}) \quad \textit{for} \quad -L \leq i \leq L
\end{equation*}
and 
\begin{equation*}
Y_{i,i+1} (g_n) \to Y_{i,i+1} (g^{(L)}) \ \textit{in} \ L^2 (\nu_{\beta}) \quad \textit{for} \quad -L \leq i \leq L-1.
\end{equation*}
It means that we can choose a function $g^{(L)}$ in $\Cy_L$ satisfying
(\ref{system}) at the beginning for arbitrary given $\epsilon_L$.  
From now on, we fix a sequence $\{ \epsilon_{L} \}_{L}$ that $\epsilon_{L}\to 0$ as $L \to \infty$. Observe that
$g^{(L)}-\textbf{E}_{\nu_{\beta}}[g^{(L)}|\mathcal{E}_{-L}+\cdots+\mathcal{E}_L]$
still satisfies (\ref{system}). So we can suppose
that
$\textbf{E}_{\nu_{\beta}}[g^{(L)}|\mathcal{E}_{-L}+\cdots+\mathcal{E}_L=
(2L +1) E]=0$ for every $E>0$. \\
Define
\[
{g}^{(L,k)}=\frac{\beta }{2(L+k)  \phi_{\beta}}\textbf{E}_{\nu_{\beta}}[p^2_{-L-k-1}V^{\prime}(r_{L+k+1})^2g^{(2L)}|\mathcal{F}_{L+k}]
\]
and
\[
\widehat{g}^{L}=\frac{4}{L}\sum_{k=L/2}^{3L/4}{g}^{(L,k)}
\]
where $\phi_{\beta}:=\textbf{E}_{\nu_{\beta}}[V^{\prime}(r_0)^2]$.

Using (\ref{suminvgrad}) and (\ref{suminvgrad2}) for ${g}^{(L,k)}$ and then averaging over $k$ we obtain that
\[
X_{0}\left(\sum_{j=-\infty}^{\infty}\tau_j\widehat{g}^{L}\right)=\xi^0+\frac{\beta}{\phi_{\beta}}[I^1_L+I^2_L+I^3_L+I^4_L+I^5_L]
\]
and
\[
Y_{0,1}\left(\sum_{j=-\infty}^{\infty}\tau_j\widehat{g}^{L}\right)=\xi^1+\frac{\beta}{\phi_{\beta}}[J^1_L+J^2_L+J^3_L+J^4_L-R^1_L+R^2_L],
\]
where
\begin{align*}
I^1_L&=\widehat{\sum_{k=L/2}^{3L/4}}\ \ \widehat{\sum_{i=-L-k}^{L+k-1}}\tau_{-i}\textbf{E}_{\nu_{\beta}}[V^{\prime}(r_{L+k+1})^2p^2_{-L-k-1}(\xi^{0,(2L)}_{i}-\xi^{0,(L+k)}_{i})\varphi(\mathcal{E}_{-2L,2L})|\mathcal{F}_{L+k}], 
\\
I^2_L&=\widehat{\sum_{k=L/2}^{3L/4}}\ \ \widehat{\sum_{i=-L-k}^{L+k-1}}\tau_{-i}\{(\xi^{0,(L+k)}_{i}-\xi_{i}^0)\textbf{E}_{\nu_{\beta}}[V^{\prime}(r_{L+k+1})^2p^2_{-L-k-1}\varphi(\mathcal{E}_{-2L,2L})|\mathcal{F}_{L+k}]\}, 
\\
I^3_L&=\widehat{\sum_{k=L/2}^{3L/4}}\ \ \widehat{\sum_{i=-L-k}^{L+k-1}}\xi^0(p,r)\tau_{-i}\textbf{E}_{\nu_{\beta}}[V^{\prime}(r_{L+k+1})^2p^2_{-L-k-1}(\varphi(\mathcal{E}_{-2L,2L})-1)|\mathcal{F}_{L+k}], 
\\
I^4_L&=\widehat{\sum_{k=L/2}^{3L/4}}\ \frac{1}{2(L+K)}\tau_{-L-k}\textbf{E}_{\nu_{\beta}}[V^{\prime}(r_{L+k+1})^2p^2_{-L-k-1}\xi^{0,(2L)}_{L+k}\varphi(\mathcal{E}_{-2L,2L})|\mathcal{F}_{L+k}], 
\\
I^5_L&=\widehat{\sum_{k=L/2}^{3L/4}}\ \ \sum_{i=-L-k}^{L+k}\frac{1}{2(L+K)}\tau_{-i}\textbf{E}_{\nu_{\beta}}[V^{\prime}(r_{L+k+1})^2p^2_{-L-k-1}\epsilon^{0,(2L)}_{i} \varphi(\mathcal{E}_{-2L,2L})|\mathcal{F}_{L+k}], 
\\
J^1_L&=\widehat{\sum_{k=L/2}^{3L/4}}\ \ \widehat{\sum_{i=-L-k}^{L+k-1}}\tau_{-i}\textbf{E}_{\nu_{\beta}}[V^{\prime}(r_{L+k+1})^2p^2_{-L-k-1}(\xi^{1,(2L)}_{i}-\xi^{1,(L+k)}_{i})\varphi(\mathcal{E}_{-2L,2L})|\mathcal{F}_{L+k}], 
\\
J^2_L&=\widehat{\sum_{k=L/2}^{3L/4}}\ \ \widehat{\sum_{i=-L-k}^{L+k-1}}\tau_{-i}\{(\xi^{1,(L+k)}_{i}-\xi^1_{i})\textbf{E}_{\nu_{\beta}}[V^{\prime}(r_{L+k+1})^2p^2_{-L-k-1}\varphi(\mathcal{E}_{-2L,2L})|\mathcal{F}_{L+k}] \}, 
\\
J^3_L&=\widehat{\sum_{k=L/2}^{3L/4}}\ \ \widehat{\sum_{i=-L-k}^{L+k-1}}\xi^1(p,r)\tau_{-i}\textbf{E}_{\nu_{\beta}}[V^{\prime}(r_{L+k+1})^2p^2_{-L-k-1}(\varphi(\mathcal{E}_{-2L,2L})-1)|\mathcal{F}_{L+k}],  
\\
J^4_L&=\widehat{\sum_{k=L/2}^{3L/4}}\ \ \widehat{\sum_{i=-L-k}^{L+k-1}}\tau_{-i}\textbf{E}_{\nu_{\beta}}[V^{\prime}(r_{L+k+1})^2p^2_{-L-k-1}\epsilon^{1,(2L)}_{i}\varphi(\mathcal{E}_{-2L,2L})|\mathcal{F}_{L+k}], 
\\
R^1_L&=\widehat{\sum_{k=L/2}^{3L/4}}\tau_{-L-k}\{V^{\prime}(r_{L+k+1})\frac{\partial}{\partial p_{L+k}}g^{(L,k)}\}, 
\\
R^2_L&=\widehat{\sum_{k=L/2}^{3L/4}}\tau_{L+k+1}\{p_{-L-k-1}\frac{\partial}{\partial r_{-L-k}}g^{(L,k)}\}. 
\end{align*}
Here the hat over the sum symbol means that it is in fact an average,
and $\mathcal{E}_{-2L,2L}$ is equal to
$\frac{1}{4L+1}\sum_{i=-2L}^{2L} \mathcal{E}_i$. 
%Remark that the derivatives appearing in $R^1_L$ and $R^1_L$ are not
%tangent to the microcanonical manifold, so in principle are not
%defined. But we will show later in step 2 that by integration by parts
%these are well defined, and in fact the spectral gap bound will be
%used to obtain $L^2$ bounds on these boundary terms.

The proof of the theorem will be concluded in the following way. First we show that the middle terms $I^1_L,I^2_L,I^3_L,I^4_L,I^5_L$ and $J^1_L,J^2_L,J^3_L,J^4_L$ tend to zero in $L^2({\nu_{\beta}})$. Then, the proof will be concluded by showing the existence of a subsequence of  $\{-R^1_L + R^2_L\}_{L \geq 1}$ weakly convergent to $cp_0V^{\prime}(r_1)$ with some constant $c$.
\\
For the sake of clarity, the proof is divided in three steps. Before that, let us state two remarks. 
\begin{remark}\label{condconv}
We know that for $m=0,1$, $\textbf{E}_{\nu_{\beta}}[\xi^m|\mathcal{F}_{L}]\xrightarrow{L^2}\xi^m$, \textit{i.e} given $\epsilon>0$ there exist $L_0\in \mathbb{N}$ such that
\[
\textbf{E}_{\nu_{\beta}}[|\xi^m-\xi^{m,(L)}|^2]\leq \epsilon\ \  \textit{if}\ \ L\geq L_0.
\]
Moreover, by the translation invariance we have
\[
\textbf{E}_{\nu_{\beta}}[|\xi_{i}^m-\xi^{m,(L)}_{i}|^2]\leq \epsilon\ \  \textit{if}\ \ [-L_0+i,L_0+i]\subseteq [-L,L].
\]
In fact, given $\tau_{-i}A \in \mathcal{F}_{L}$
\begin{align*}
\int_{A}\xi^{m,(L)}_{i}(\tau_{-i}(p,r))\nu_{\beta}(dpdr)&=\int_{\tau_{-i}(A)}\xi^{m,(L)}_{i}(p,r)\nu_{\beta}(dpdr)\\=\int_{\tau_{-i}(A)}\xi_{i}^m(p,r)\nu_{\beta}(dpdr)
                                       &=\int_{A}\xi_{i}^m(\tau_{-i}(p,r))\nu_{\beta}(dpdr)=\int_{A}\xi^m(p,r)\nu_{\beta}(dpdr).
\end{align*}
In addition, since $\xi^{m,(L)}_{i}(\tau_{-i})\in \mathcal{F}^{L-i}_{-L-i}$ we have 
\[ \xi^{m,(L)}_{i}(\tau_{-i})=\textbf{E}_{\nu_{\beta}}[\xi^m|\mathcal{F}^{L-i}_{-L-i}]\]
and therefore
\[
\textbf{E}_{\nu_{\beta}}[|\xi^m_{i}-\xi^{m,(L)}_{i}|^2]=\textbf{E}_{\nu_{\beta}}[|\xi^m-\xi^{m,(L)}_{i}(\tau_{-i})|^2]\leq \textbf{E}_{\nu_{\beta}}[|\xi^m-\xi^{m,(L_0)}_{0}|^2].
\]
\end{remark}
\begin{remark}\label{stronglaw} Besides a Strong law of large numbers for $(p^2_iV^{\prime}(r_i)^2)_{i\in \mathbb{Z}}$ we have
\[
\textbf{E}_{\nu_{\beta}}\left[\left(\frac{1}{L}\sum_{i=1}^{L}p_i^2V^{\prime}(r_i)^2 -\frac{\phi_{\beta}}{\beta} \right)^2\right]\leq\frac{C_{\beta}}{L}
\]
for some finite constant $C_{\beta}$.
\end{remark}
\textbf{Step 1. The convergence of the middle terms to $0$.}
The convergence to zero as L tends to infinity of $I^1_L$, $I^2_L$ and $I^5_L$ in $L^2({\nu_{\beta}})$ follows from Schwarz inequality, Remark \ref{condconv}, the condition of $\{ \epsilon_L \}$ and the fact that $\varphi$ is a bounded function. \\
Using the symmetry of the measure about exchanges of variables, $I^3_L$ can be rewritten as
\[
\xi^0(p,r)\widehat{\sum_{k=L/2}^{3L/4}}\
\widehat{\sum_{i=-L-k}^{L+k-1}}\textbf{E}_{\nu_{\beta}}[\widehat{\sum_{j=1}^{L-k}}V^{\prime}(r_{L+k+j})^2p^2_{-L-k-j}(\varphi(\mathcal{E}_{-2L,2L})-1)|\mathcal{F}_{L+k}](\tau_{-i}(p,r)) 
\]
and then we decompose it as $I^6_L+\frac{\phi_{\beta}}{\beta}I^7_L$, where $I^6_L$ and $I^7_L$ are respectively
\begin{align*}
\xi^0(p,r) & \widehat{\sum_{k=L/2}^{3L/4}} \
\widehat{\sum_{i=-L-k}^{L+k-1}} \\
& \textbf{E}_{\nu_{\beta}}[\widehat{\sum_{j=1}^{L-k}}\{V^{\prime}(r_{L+k+j})^2p^2_{-L-k-j}-\frac{\phi_{\beta}}{\beta} \}(\varphi(\mathcal{E}_{-2L,2L})-1)|\mathcal{F}_{L+k}](\tau_{-i}(p,r)) 
\end{align*}
and
\[
\xi^0(p,r)\widehat{\sum_{k=L/2}^{3L/4}}\
\widehat{\sum_{i=-L-k}^{L+k-1}}\textbf{E}_{\nu_{\beta}}[\varphi(\mathcal{E}_{-2L,2L})-1|\mathcal{F}_{L+k}](\tau_{-i}(p,r)). 
\]
For the first term, observe that
\[
|I^6_L|^2\leq |\xi^0(p,r)|^2\widehat{\sum_{k=L/2}^{3L/4}}\
\widehat{\sum_{i=-L-k}^{L+k-1}}\textbf{E}_{\nu_{\beta}}\left[\left(\widehat{\sum_{j=1}^{L-k}}\{V^{\prime}(r_{L+k+j})^2p^2_{-L-k-j}-\frac{\phi_{\beta}}{\beta} \}\right)^2\right],
\]
and the expectation inside the last expression is bounded by $\frac{C_{\beta}}{L-k}$, so
\[
||I^6_L||^2_{L^2({\nu_{\beta}})}\leq \frac{C_{\beta}}{L}||\xi^0||^2_{L^2({\nu_{\beta}})}.
\]
For the second term, written explicitly the conditional expectation we see that $|I^7_L|^2$ is  bounded by
\[
|\xi^0(p,r)|^2\widehat{\sum_{k=L/2}^{3L/4}}\
\widehat{\sum_{i=-L-k}^{L+k-1}}\int |\varphi(\frac{1}{4L+1}\sum_{|j|>L+k}\mathcal{E}^{\prime}_{j}+\frac{1}{4L+1}\sum_{|j|\leq L+k}\mathcal{E}_{j+i})-1|^2d\nu_{\beta}.
\]
We rewrite the integral part as
\[
\int |\varphi(\frac{1}{4L+1}\sum_{|j|>L+k}(\mathcal{E}^{\prime}_{j} - E_{\beta})+\frac{1}{4L+1}\sum_{|j|\leq L+k}(\mathcal{E}_{j+i}-E_{\beta})+E_{\beta})-1|^2d\nu_{\beta}.
\]
% where $E_{\beta}=\frac{1}{2\beta}+\psi(\beta)$.
Using the fact that $\varphi$ is a Lipschitz positive function bounded by $1$ such that $\varphi(E_{\beta})=1$, we obtain that $|I^7_L|^2$ is bounded from above by
\[
|\xi^0(p,r)|^2\widehat{\sum_{k=L/2}^{3L/4}}\
\widehat{\sum_{i=-L-k}^{L+k-1}}1\wedge\int|\frac{1}{4L+1}\sum_{|j|>L+k} (\mathcal{E}^{\prime}_{j}-E_{\beta})+\frac{1}{4L+1}\sum_{|j|\leq L+k}(\mathcal{E}_{j+i}-E_{\beta})|^2d\nu_{\beta}
\]
where $a \wedge b$ denote the minimum of $\{a,b\}$. So, taking expectation and using the Strong law of large numbers together with the dominated convergence theorem, the convergence to zero as L tends to infinity of $I^3_L$ in $L^2(\nu_{\beta})$ is proved.

Same arguments can be applied for $J^1_L$, $J^2_L$, $J^3_L$ and $J^4_L$.
For $I^4_L$, we can bound the $L^2$-norm of the term from above by $\frac{C_{\beta}}{L}||\xi^0||^2_{L^2(\nu_{\beta})}$ for some constant $C_{\beta}$.

\textbf{Step 2. The uniform bound of the $L^2(\nu_{\beta})$ norms of the boundary terms.} 

Remember that $R^1_L$ is defined as
\begin{align*}
\widehat{\sum_{k=L/2}^{3L/4}} & \frac{1}{2(L+k)} \tau_{-L-k}\{V^{\prime}(r_{L+k+1})\textbf{E}_{\nu_{\beta}}[p^2_{-L-k-1}V^{\prime}(r_{L+k+1})^2\frac{\partial}{\partial p_{L+k}}g^{(2L)}|\mathcal{F}_{L+k}]\}\\ 
= -\widehat{\sum_{k=L/2}^{3L/4}} & \frac{1}{2(L+k)} \\
& \tau_{-L-k}\{V^{\prime}(r_{L+k+1})\textbf{E}_{\nu_{\beta}}[p^2_{-L-k-1}V^{\prime}(r_{L+k+1})Y_{L+k,L+k+1}g^{(2L)}|\mathcal{F}_{L+k}]\}\\ 
+\widehat{\sum_{k=L/2}^{3L/4}} & \frac{1}{2(L+k)} \\
& \tau_{-L-k} \{p_{L+k}V^{\prime}(r_{L+k+1})\textbf{E}_{\nu_{\beta}}[p^2_{-L-k-1}V^{\prime}(r_{L+k+1})\frac{\partial}{\partial r_{L+k+1}}g^{(2L)}|\mathcal{F}_{L+k}]\}.
\end{align*}
By Schwarz inequality and (\ref{system}), we can see that the $L^2(\nu_{\beta})$ norm of the first term in the right hand side of the last equality is bounded by $\frac{C_{\beta}}{L}||\xi ^1||_{L^2(\nu_{\beta})}$ for some constant $C_{\beta}$. After an integration by parts, the second term can be written as 
\begin{equation}
\label{secondterm}
\begin{split}
& \widehat{\sum_{k=L/2}^{3L/4}} \frac{1}{2(L+k)} \\
&\tau_{-L-k}\{p_{L+k}V^{\prime}(r_{L+k+1})\textbf{E}_{\nu_{\beta}}[p^2_{-L-k-1}\big(\beta V^{\prime}(r_{L+k+1})^2-V^{\prime \prime}(r_{L+k+1})\big)g^{(2L)}|\mathcal{F}_{L+k}]\}.
\end{split}
\end{equation}
Using the symmetry of the measure again, the conditional expectation appearing in the last expression can be rewritten as
\[
\textbf{E}_{\nu_{\beta}}[p^2_{-L-k-1}\widehat{\sum_{j=L+k+1}^{2L}}(\beta V^{\prime}(r_j)^2-V^{\prime \prime}(r_j))(g^{(2L)}\circ\pi^{j,L+k+1}_r)|\mathcal{F}_{L+k}]\;,
\]
where $\pi^{j,L+k+1}_r$ stands for the exchange operator of $r_j$ and $r_{L+k+1}$. After that, we decompose the last expression as the sum of the following two terms,
\[
\textbf{E}_{\nu_{\beta}}[p^2_{-L-k-1}\widehat{\sum_{j=L+k+1}^{2L}}(\beta V^{\prime}(r_j)^2-V^{\prime \prime}(r_j))g^{(2L)}|\mathcal{F}_{L+k}],
\]
and
\[
\textbf{E}_{\nu_{\beta}}[p^2_{-L-k-1}\widehat{\sum_{j=L+k+1}^{2L}}(\beta V^{\prime}(r_j)^2 -V^{\prime \prime}(r_j))(g^{(2L)}\circ\pi^{j,L+k+1}_r-g^{(2L)})|\mathcal{F}_{L+k}].
\]
The square of the last expressions are respectively bounded from above by
\[
C_{\beta}L^{-1}\textbf{E}_{\nu_{\beta}}[(g^{(2L)})^2|\mathcal{F}_{L+k}],
\quad
C_{\beta} \textbf{E}_{\nu_{\beta}}[\widehat{\sum_{j=L+k+1}^{2L}}(g^{(2L)}\circ\pi^{j,L+k+1}_r-g^{(2L)})^2|\mathcal{F}_{L+k}]
\]
for some constant $C_{\beta}$. Using Schwarz inequality we can see that the square of each term of the sum is respectively bounded from above by 
\begin{equation}
\frac{C_{\beta}}{L^3}\textbf{E}_{\nu_{\beta}}\big[\big(\widehat{\sum_{k=L/2}^{3L/4}}p^2_{L+k}\big)(g^{(2L)})^2\big],\label{eq:27}
\end{equation}
and
\begin{align*}
& \frac{C'_{\beta}}{L^2}\widehat{\sum_{k=L/2}^{3L/4}}\textbf{E}_{\nu_{\beta}}\big[p^2_{L+k}   \widehat{\sum_{j=L+k+1}^{2L}}(g^{(2L)}\circ\pi^{j,L+k+1}_r-g^{(2L)})^2\big] \\
& \le \frac{C'_{\beta}}{L^2}\widehat{\sum_{k=L/2}^{3L/4}}\textbf{E}_{\nu_{\beta}}\big[p^2_{L+k}
   \widehat{\sum_{j=L+k+1}^{2L}}2\{j-(L+k+1)\}\sum_{i=L+k+1}^{j-1}(g^{(2L)}\circ\pi^{i,i+1}_r-g^{(2L)})^2\big] \\
& \le \frac{C'_{\beta}}{L}\widehat{\sum_{k=L/2}^{3L/4}}\textbf{E}_{\nu_{\beta}} \big[p^2_{L+k}
   \sum_{i=3L/2+1}^{2L}(g^{(2L)}\circ\pi^{i,i+1}_r-g^{(2L)})^2\big] 
\end{align*}
for some constants $C_{\beta}$ and $C'_{\beta}$.
One can now estimate $\widehat{\sum_{k=L/2}^{3L/4}}p^2_{L+k}$
uniformly because of the cutoff. Using the spectral gap estimate
(\ref{eq:18}) proved in Section \ref{sec:spectral-gap-bound}, we can
bound \eqref{eq:27} by a constant. 

Finally, we state that we can bound the term $\textbf{E}_{\nu_{\beta}} \big[(g^{(2L)}\circ\pi^{i,i+1}_r-g^{(2L)})^2\big]$ by the Dirichlet form of $g^{(2L)}$ which concludes the proof.
\begin{proposition}
There exists some constant $C$ such that for every smooth function $f:\Omega \to \R$,
\[
\textbf{E}_{\nu_{\beta}} \big[(f \circ\pi^{i,i+1}_r-f)^2\big] \le C \{ \textbf{E}_{\nu_{\beta}}\big[(X_if)^2\big]+\textbf{E}_{\nu_{\beta}} \big[(Y_{i,i+1}f)^2\big] \}.
\]
\end{proposition}
\begin{proof}
The change of variables and simple computations conclude the proof.
\end{proof}

\textbf{Step 3. The existence of a weakly convergent subsequence of  $\{R^1_L\}_{L \geq 1}$.}
Firstly, observe that the expression (\ref{secondterm}) is equal to 
\[
p_0V^{\prime}(r_1)h^{1}_L(p_0,r_0, \dots, p_{-7L/2},r_{-7L/2})
\] 
where
\begin{equation*}
h^{1}_L=\widehat{\sum_{k=L/2}^{3L/4}}\frac{1}{2(L+k)}\tau^{-L-k}\textbf{E}_{\nu_{\beta}}[p^2_{-L-k-1}(\beta V^{\prime}(r_{L+k+1})^2-V^{\prime \prime}(r_{L+k+1}))g^{(2L)}|\mathcal{F}_{L+k}].
\end{equation*}
On the other hand, we had proved in \textbf{Step 2} that $\{p_0V^{\prime}(r_1)h^{1}_L\}_{L \geq 1}$ is bounded in $L^2(\nu_{\beta})$, therefore it contains a weakly convergent subsequence $\{p_0V^{\prime}(r_1)h_{L'}\}_{L'}$. We can conclude in a similar way that $\{h^{1}_L\}_{L \geq 1}$ is bounded in $L^2(\nu_{\beta})$, therefore $\{h^{1}_{L'}\}_{L'}$ contains a weakly convergent subsequence, whose limit will be denoted by $h$.
It is easy to see that
\[
||X_{i}h^{1}_L||_{L^2(\nu_{\beta})} \leq \frac{C}{L}||\xi^0||_{L^2(\nu_{\beta})} \quad \quad \textit{for} \quad \quad i \in \{0,-1,-2,\cdots\}
\]
and
\[
||Y_{i,i+1}h^{1}_L||_{L^2(\nu_{\beta})} \leq \frac{C}{L}||\xi^1||_{L^2(\nu_{\beta})} \quad \quad \textit{for} \quad \quad \{i,i+1\} \subseteq \{0,-1,-2,\cdots\}
\]
which implies that $X_{i}h=0$ for $i \in \{0,-1,-2,\cdots\}$ and $Y_{i,i+1} h=0$ for $\{i,i+1\} \subseteq \{0,-1,-2,\cdots\}$.
Since the function $h$ depends only on $\{p_0, r_0, p_{-1}, r_{-1}, p_{-2}, r_{-2} \cdots\}$ one can show that $h$ is a constant function, let's say $c$. Taking suitable test functions, we can conclude that in fact $\{p_0V^{\prime}(r_1)h^{1}_{L'}\}_{L'}$ converges weakly to $cp_0V^{\prime}(r_1)$.
This proves that for every weakly convergent subsequence of $\{R^1_L\}_{L \geq 1}$ there exist a constant $c$ such that the limit is $cp_0V^{\prime}(r_1)$. Exactly the same can be said about $\{R^2_L\}_{L \geq 1}$.
\end{proof}

\begin{remark}
Observe that the roles of the vector fields $X_0$ and $Y_{0,1}$ are
symmetric, in the sense that changing the definition of the energy of
the particle $i$ to $\mathcal E_i = p_i^2/2 + V(r_{i+1})$ their actions
in the boundary terms in the above approximation are exchanged. The
space of closed forms does not depend on this choice of the definition
of the energy $\mathcal E_i$, so we also have the equivalent
characterization of the closed forms:
  \begin{equation}
  \label{eq:12al}
  \mathfrak{H}_c \ = \ \overline{\mathcal{B} + \{( p_0 V^{\prime}(r_0), 0)\}}.
\end{equation}
This imply that, defining by $\xi_F = (X_0 \Gamma_F, Y_{0,1}
\Gamma_F)$,
 a closed form $\xi$ can be approximated by $\xi_F + c_0 (p_0
 V^{\prime}(r_0), 0)$ and by $\xi_G + c_1 (0, p_0 V^{\prime}(r_1))$,
 then $c_0 = -c_1 = c$ and $F-G = - c \frac{p_0^2}{2}$.
\end{remark}

\section{Diffusion coefficient}\label{sec-diff-coeff}

In this section, we describe the diffusion coefficient in several variational formulas and prove the second statement of Lemma \ref{lem:CLT}. From Corollary
\ref{cor:decompositionch3}, there exists a unique number
$\kappa_\beta$
such that 
\begin{displaymath}
W_{0,1} + \kappa_\beta (p_1^2- p_0^2)  \in \overline{L\Cy} \quad
\text{in} \quad \mathcal{H}_{-1}.  
\end{displaymath}

Our purpose now is to obtain a more explicit formula for $\kappa_\beta$. To
do this, we follow the argument in \cite{LOY1}.    
\begin{lemma}
We have
\begin{displaymath}
\mathcal{H}_{-1} = \overline{L\Cy}|_{\mathcal{N}} \oplus
\{W_{0,1}\} = \overline{L^{*}\Cy}|_{\mathcal{N}} \oplus
\{W^{*}_{0,1}\} 
\end{displaymath}
where $W^{*}_{0,1}:=W^S_{0,1}-W^A_{0,1}$ and $L^{*}=S-A $. 
\end{lemma}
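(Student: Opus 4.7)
The plan is to derive both direct sum decompositions from Lemma \ref{lem:decompositionch3} and Corollary \ref{cor:decompositionch3}, after first extending those results to the adjoint generator $L^{*}$, and then reducing everything to one non-vanishing claim on the constant $\kappa_{\beta}$.

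First I would observe that the entire development of Section \ref{sec-hilbert}---in particular the identity $\ll Sg, Ag\gg_{-1}=0$ of Lemma \ref{lem:antich3}, the sector bound of Proposition \ref{prop:aboundch3}, and the proofs of Proposition \ref{prop:generatech3}, Lemma \ref{lem:decompositionch3} and Corollary \ref{cor:decompositionch3}---depends on the generator only through the pair $(S,A)$, and every structural relation used there is preserved under the substitution $A\to -A$. Since the semi-norm $\triple\cdot\triple_{-1}$, the semi-inner product $\ll\cdot,\cdot\gg_{-1}$, and the space $\Cy$ do not see this substitution, the very same proofs give
\begin{equation*}
\mathcal{H}_{-1} \;=\; \overline{L^{*}\Cy}|_{\mathcal{N}} \,\oplus\, \{p_{1}^{2}-p_{0}^{2}\},
\end{equation*}
together with the existence of a unique constant $\kappa^{*}_{\beta}\in\R$ such that $W^{*}_{0,1}+\kappa^{*}_{\beta}(p_{1}^{2}-p_{0}^{2}) \in \overline{L^{*}\Cy}$.

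Next I would deduce both desired identities by a change of generator in the one-dimensional complement. By Corollary \ref{cor:decompositionch3} there exists $h\in\overline{L\Cy}$ with $W_{0,1} = -\kappa_{\beta}(p_{1}^{2}-p_{0}^{2}) + h$; assuming $\kappa_{\beta}\ne 0$ we may solve for the gradient term to get $p_{1}^{2}-p_{0}^{2} = -\kappa_{\beta}^{-1}W_{0,1} + \kappa_{\beta}^{-1}h \in \{W_{0,1}\}+\overline{L\Cy}$, so $\{p_{1}^{2}-p_{0}^{2}\}+\overline{L\Cy} = \{W_{0,1}\}+\overline{L\Cy}$, which equals $\mathcal{H}_{-1}$ by Lemma \ref{lem:decompositionch3}. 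Directness of this sum is equally easy: if $cW_{0,1}\in\overline{L\Cy}$ then $c\kappa_{\beta}(p_{1}^{2}-p_{0}^{2})\in\overline{L\Cy}$, and Lemma \ref{lem:decompositionch3} forces $c=0$. The identical argument, using $\kappa^{*}_{\beta}\ne 0$, yields the second decomposition.

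The main obstacle is thus the non-vanishing $\kappa_{\beta},\kappa^{*}_{\beta}\ne 0$. Since $\kappa_{\beta}=D_{\beta}\chi_{\beta}\beta^{2}$, this is precisely the strict positivity of the thermal diffusivity, which will be established in Section \ref{sec-diff-coeff} through the variational expression \eqref{eq:variationalch3} (together with the strictly positive lower bound of Section \ref{sec:upperboundch3}). The corresponding statement for $\kappa^{*}_{\beta}$ follows from the same variational principle applied to $L^{*}$---indeed, the resulting expression depends only on the symmetric part $S$ and on $\nu_{\beta}$, and one expects $\kappa^{*}_{\beta}=\kappa_{\beta}$. Modulo this input from later sections, the proof of the lemma is a purely algebraic consequence of Lemma \ref{lem:decompositionch3} and its symmetric counterpart for $L^{*}$.
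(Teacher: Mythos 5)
Your algebraic reduction is clean, but it rests on the input $\kappa_\beta\neq 0$ (and $\kappa^*_\beta\neq 0$), and this creates a genuine circularity. In the paper, $\kappa_\beta$ is merely \emph{defined} by Corollary~\ref{cor:decompositionch3} at this point; nothing has yet been proved about its sign or non-vanishing. The strict lower bound you invoke from Section~\ref{sec:upperboundch3} is derived from the variational formula \eqref{eq:variationalch3}, which in turn is established via the operators $T$ and $T^*$ of Section~\ref{sec-diff-coeff} — and those operators are only well-defined \emph{because of the present lemma} (one needs the sum $\overline{L\Cy}|_{\mathcal N}\oplus\{W_{0,1}\}$ to be direct in order for $T(aW_{0,1}+Lf):=aW^S_{0,1}+Sf$ to make sense). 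So your proof uses a consequence of the lemma to prove the lemma. There is also no quick independent argument for $W_{0,1}\notin\overline{L\Cy}$: the mechanism that made Lemma~\ref{lem:decompositionch3} work, namely $\ll p_1^2-p_0^2,Sh\gg_{-1}=0$ for all $h$, has no analogue for $W_{0,1}$.

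The paper avoids all of this by proving the density statement directly, without any knowledge of $\kappa_\beta$. Since Lemma~\ref{lem:decompositionch3} already gives a one-dimensional complement to $\overline{L\Cy}|_{\mathcal N}$, it suffices to show $\overline{L\Cy}+\{W_{0,1}\}$ is dense. Taking $h\in\mathcal H_{-1}$ orthogonal to both $W_{0,1}$ and $L\Cy$, one writes $h=\lim_k(aW^S_{0,1}+Sh_k)$ via Proposition~\ref{prop:directch3}, uses Lemma~\ref{lem:anti3ch3} to replace the $S$-objects by the corresponding $L$-objects inside the inner product, bounds $\triple aW_{0,1}+Lh_k\triple_{-1}$ uniformly by Proposition~\ref{prop:aboundch3}, and bootstraps to $\triple h\triple_{-1}=0$ — exactly the pattern of the proofs of Proposition~\ref{prop:generatech3} and Lemma~\ref{lem:decompositionch3}. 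In particular, the lemma itself is what later \emph{yields} $\kappa_\beta\neq 0$ (once $\overline{L\Cy}+\{W_{0,1}\}$ is dense and $p_1^2-p_0^2\notin\overline{L\Cy}$, one immediately gets $W_{0,1}\notin\overline{L\Cy}$). Your observation that the $L^*$ case follows by the symmetry $A\to -A$ is correct and matches the paper's remark, but the core step still needs the direct bootstrap argument rather than the non-vanishing of $\kappa_\beta$.
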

\begin{proof}
We shall prove the first decomposition since the same arguments apply
to the second one. Because we have already proved in Lemma
\ref{lem:decompositionch3} that $\overline{L\Cy}|_{\mathcal{N}}$ has a
one-dimensional complementary subspace in $\mathcal{H}_{-1}$, 
 it is sufficient to show that $\mathcal{H}_{-1}$ is generated by
 $\overline{L\Cy}$ and the current.
 Let $h \in \mathcal{H}_{-1}$ so that $\ll h, W_{0,1} \gg_{-1}=0$ and
 $\ll h, Lg \gg_{-1}=0$ for all $g \in \Cy$. 
 By Proposition \ref{prop:directch3}, $h=\lim_{k \to \infty}
 (aW^S_{0,1}+S h_k)$ in $\mathcal{H}_{-1}$ for some $a \in \R$ and
 $h_k \in \Cy$. 
In particular, 
\begin{equation*}
  \begin{split}
    \triple h \triple_{-1}^2 = \lim_{k \to \infty} \ll aW^S_{0,1} + Sh_k,
    aW^S_{0,1} + Sh_k \gg_{-1}\\
    =\lim_{k \to \infty} \ll aW^S_{0,1} +
    Sh_k, aW_{0,1}+ Lh_k \gg_{-1}\label{eq:13}
  \end{split}
\end{equation*}
since $\ll aW^S_{0,1} + Sh_k, a W^A_{0,1}+ A h_k \gg_{-1}=0$ by Lemma
\ref{lem:anti3ch3}. 
 On the other hand, by assumption $\ll h, aW_{0,1}+Lh_k
 \gg_{-1}=0$ for all $k$. 
 Also, by Proposition \ref{prop:aboundch3}, 
$$
\sup_k \triple aW_{0,1}+Lh_k \triple_{-1}^2 \le 2a^2 \triple
W_{0,1} \triple_{-1}^2 + 2 (C+1) \sup_k \triple
Sh_k \triple_{-1}^2 :=C_h
$$ 
is finite. Therefore, 
\begin{equation*}
  \begin{split}
    \triple h \triple_{-1}^2 = \lim_{k \to \infty} \ll
    aW^S_{0,1}+Sh_k, aW_{0,1}+Lh_k
    \gg_{-1}\\
    =\lim_{k \to \infty} \ll
    aW^S_{0,1}+Sh_k-h, aW_{0,1} +Lh_k
    \gg_{-1} \\
    \le \limsup_{k \to \infty} C_h \triple
      aW^S_{0,1}+S h_k-h \triple_{-1} =0 .
  \end{split}
\end{equation*}
 This concludes the proof.
\end{proof}
Now, we can define bounded linear operators $T:\mathcal{H}_{-1} \to
\mathcal{H}_{-1}$   \\and $T^*:\mathcal{H}_{-1} \to \mathcal{H}_{-1}$ as
\begin{displaymath}
  \begin{split}
    T(aW_{0,1}+ Lf):=aW^S_{0,1}+Sf, \\
    T^*(aW^*_{0,1}+L^{*}f):=aW^S_{0,1}+Sf \ .
  \end{split}
\end{displaymath}
Since 
$$
\triple aW_{0,1} + Lf \triple_{-1}^2=
\triple aW^*_{0,1} + L^{*}f \triple_{-1}^2 =
\triple aW^S_{0,1} + Sf \triple_{-1}^2
+\triple a W^A_{0,1} + Af \triple_{-1}^2 .
$$
 we can easily show that $T^*$ is the adjoint operator of $T$ and also we have the relations
\begin{displaymath}
\ll T(p_1^2-p_0^2), W^{*}_{0,1} \gg_{-1} = \ll T^*(p_1^2-p_0^2), W_{0,1} \gg_{-1} = -\frac{1}{\beta^2},
\end{displaymath}
and
\begin{displaymath}
\ll T(p_1^2-p_0^2), L^{*}f \gg_{-1} = \ll T^*(p_1^2-p_0^2), Lf \gg_{-1}=0
\end{displaymath}
for all $f \in \Cy$. 
In particular, 
$$
\mathcal{H}_{-1} = \overline{L^{*}\Cy}|_{\mathcal{N}} \oplus
  \{T(p_1^2-p_0^2)\}
$$ 
and there exists a unique number $Q_\beta$ such that 
\begin{displaymath}
W^*_{0,1} + Q_\beta T(p_1^2-p_0^2) \in \overline{L^{*}\Cy} \quad
\text{in} \quad \mathcal{H}_{-1}.  
\end{displaymath}
It will turn out later that $Q_\beta = \kappa_\beta$.

\begin{lemma}\label{prop:qvariationalch3}
\begin{equation}\label{eq:qvariationalch3}
Q_\beta = \frac{1}{\beta^2 \triple T(p_1^2-p_0^2) \triple_{-1}^2} 
=\beta^2 \inf_{f \in \Cy}\triple W^{*}_{0,1}-L^{*}f \triple_{-1}^2. 
\end{equation}
\end{lemma}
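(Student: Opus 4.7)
The plan is to exploit the fact that, in the decomposition
$\mathcal{H}_{-1} = \overline{L^{*}\Cy}|_{\mathcal{N}} \oplus \{T(p_1^2-p_0^2)\}$,
the one-dimensional complement is in fact \emph{orthogonal} to $\overline{L^{*}\Cy}$. This follows directly from the identity already established in the text,
\[
\ll T(p_1^2-p_0^2),\, L^{*}f \gg_{-1} \;=\; \ll T^{*}(p_1^2-p_0^2),\, Lf \gg_{-1} \;=\; 0, \qquad f\in\Cy,
\]
together with the continuity of $\ll\cdot,\cdot\gg_{-1}$, which extends the orthogonality to all of $\overline{L^{*}\Cy}$.

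First I would derive the closed expression for $Q_\beta$. By definition, $W^{*}_{0,1} + Q_\beta T(p_1^2-p_0^2) \in \overline{L^{*}\Cy}$, so taking the $\mathcal{H}_{-1}$ inner product with $T(p_1^2-p_0^2)$ and using the orthogonality above gives
\[
\ll W^{*}_{0,1},\, T(p_1^2-p_0^2) \gg_{-1} \;+\; Q_\beta \triple T(p_1^2-p_0^2)\triple_{-1}^{2} \;=\; 0.
\]
The inner product on the left was already computed to equal $-1/\beta^{2}$, so
\[
Q_\beta \;=\; \frac{1}{\beta^{2}\,\triple T(p_1^2-p_0^2)\triple_{-1}^{2}},
\]
which is the first equality of \eqref{eq:qvariationalch3}. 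In particular this confirms that $T(p_1^2-p_0^2)$ is not in $\mathcal{N}$, so all quantities are meaningful.

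For the variational identity, I would interpret
$\inf_{f\in\Cy}\triple W^{*}_{0,1} - L^{*}f \triple_{-1}^{2}$
as the squared distance in $\mathcal{H}_{-1}$ from $W^{*}_{0,1}$ to the closed subspace $\overline{L^{*}\Cy}$. Since this subspace is orthogonal to the line spanned by $T(p_1^2-p_0^2)$ and together they span $\mathcal{H}_{-1}$, the orthogonal projection of $W^{*}_{0,1}$ onto the orthogonal complement of $\overline{L^{*}\Cy}$ equals $-Q_\beta T(p_1^2-p_0^2)$, by the first step. Hence the Pythagorean identity gives
\[
\inf_{f\in\Cy}\triple W^{*}_{0,1} - L^{*}f \triple_{-1}^{2}
\;=\; Q_\beta^{2}\,\triple T(p_1^2-p_0^2)\triple_{-1}^{2}
\;=\; Q_\beta \cdot \frac{1}{\beta^{2}},
\]
where in the last step I substituted the closed form of $Q_\beta$ just obtained. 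Multiplying by $\beta^{2}$ yields the second equality.

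The argument is structurally short; the only point that requires care is ensuring the orthogonal projection argument is valid despite the subspace $\overline{L^{*}\Cy}$ being only a closed subspace (not necessarily all of the orthogonal complement of $T(p_1^2-p_0^2)$ a priori). This is handled by the decomposition proved in the preceding lemma, which guarantees that $\overline{L^{*}\Cy}|_{\mathcal{N}}$ and the line through $T(p_1^2-p_0^2)$ together span $\mathcal{H}_{-1}$, and orthogonality then forces one to be the orthogonal complement of the other. No substantial additional obstacle is anticipated.
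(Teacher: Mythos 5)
Your proof is correct and rests on exactly the same two facts the paper uses: the orthogonality $\ll T(p_1^2-p_0^2), L^*f\gg_{-1}=0$ (established just before the lemma) and the inner product $\ll W^*_{0,1}, T(p_1^2-p_0^2)\gg_{-1}=-1/\beta^2$. Your derivation of the first equality is essentially identical to the paper's. For the second equality, the paper expands $\triple W^*_{0,1}+Q_\beta T(p_1^2-p_0^2)-L^*f\triple_{-1}^2$ directly and cancels terms, while you reorganize this as an orthogonal projection / Pythagorean identity: since the direct sum $\mathcal H_{-1}=\overline{L^*\Cy}|_{\mathcal N}\oplus\{T(p_1^2-p_0^2)\}$ is in fact orthogonal, the line spanned by $T(p_1^2-p_0^2)$ is the orthogonal complement of $\overline{L^*\Cy}$, and the distance from $W^*_{0,1}$ to $\overline{L^*\Cy}$ is the norm of its component $-Q_\beta T(p_1^2-p_0^2)$ in that complement. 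This is the same computation recast geometrically; it is cleaner and makes the role of orthogonality explicit, but it does not shorten the logical dependency chain or avoid any step the paper needs. One small point worth keeping in mind: the direct-sum symbol $\oplus$ in this paper does not by itself mean orthogonal (the authors explicitly warn that $\overline{S\Cy}$ and $\{W^S_{0,1}\}$ are not orthogonal), so your invocation of the explicit identity $\ll T(p_1^2-p_0^2), L^*f\gg_{-1}=0$ to upgrade the algebraic direct sum to an orthogonal one is exactly the right justification and should not be left implicit.
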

\begin{proof}
First identity follows from the fact that 
\begin{displaymath}
\ll T(p_1^2-p_0^2), W^*_{0,1} + Q_\beta T(p_1^2-p_0^2) \gg_{-1} 
=-\frac{1}{\beta^2} + Q_\beta \triple T(p_1^2-p_0^2)
\triple_{-1}^2 =0. 
\end{displaymath}
Second identity is obtained by the expression
\begin{displaymath}
\inf_{f \in \Cy}\triple W^*_{0,1} + Q_\beta T(p_1^2-p_0^2) -
L^{*}f \triple_{-1}  =0
\end{displaymath}
since 
\begin{align*}
\inf_{f \in \Cy}  & \triple W^*_{0,1} + Q_\beta T(p_1^2-p_0^2) 
- L^{*}f \triple_{-1}^2 \\
&= \inf_{f \in \Cy}  \triple W^*_{0,1} - L^{*}f \triple_{-1}^2  -
\frac{2Q_\beta}{\beta^2} + Q_\beta^2
\triple T(p_1^2-p_0^2) \triple_{-1}^2 \\
&= \inf_{f \in \Cy}  \triple W^*_{0,1} - L^{*}f \triple_{-1}^2  -
\frac{2Q_\beta}{\beta^2} +  \frac{Q_\beta}{\beta^2}.
\end{align*}
\end{proof}
By a simple computation, we can show that 
$\ll Tg, g \gg_{-1} =\ll Tg, Tg \gg_{-1}$ for all 
$g \in \mathcal{H}_{-1}$, and therefore 
$(p_1^2-p_0^2)-T(p_1^2-p_0^2) \in \overline{L^{*}\Cy_0}$ 
since $(p_1^2-p_0^2)-T(p_1^2-p_0^2)$ is orthogonal to 
$T(p_1^2-p_0^2)$. 
Then we obtain the variational formula for 
$\triple T(p_1^2-p_0^2) \triple_{-1}^2$:   
\begin{lemma}
\begin{equation}\label{eq:qvariational2ch3}
\triple T(p_1^2-p_0^2) \triple_{-1}^2
=\inf_{f \in \Cy}\triple p_1^2-p_0^2-L^{*}f \triple_{-1}^2. 
\end{equation}
\end{lemma}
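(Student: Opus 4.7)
The plan is to derive the lemma directly from the orthogonal decomposition
\[
\mathcal{H}_{-1} = \overline{L^{*}\Cy}\big|_{\mathcal{N}} \oplus \{T(p_1^2-p_0^2)\}
\]
combined with the identity $\ll Tg, g\gg_{-1} = \ll Tg, Tg\gg_{-1}$ recalled just before the lemma (itself a direct consequence of Lemma~\ref{lem:anti3ch3}, since $g-Tg$ is exactly the antisymmetric piece $aW^{A}_{0,1}+Ah$ when $g = aW_{0,1}+Lh$).

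First I would show that $(p_1^2-p_0^2) - T(p_1^2-p_0^2) \in \overline{L^{*}\Cy}$. Applying the identity with $g = p_1^2-p_0^2$ gives
\[
\ll T(p_1^2-p_0^2),\,(p_1^2-p_0^2) - T(p_1^2-p_0^2)\gg_{-1} = 0,
\]
so this vector is orthogonal to the one-dimensional complement $\{T(p_1^2-p_0^2)\}$ in $\mathcal{H}_{-1}$ and must therefore lie in $\overline{L^{*}\Cy}$. Choosing $f_n\in\Cy$ with $L^{*}f_n \to (p_1^2-p_0^2) - T(p_1^2-p_0^2)$ in $\mathcal{H}_{-1}$ yields
\[
\triple p_1^2-p_0^2 - L^{*}f_n \triple_{-1} \;\longrightarrow\; \triple T(p_1^2-p_0^2) \triple_{-1},
\]
which establishes the inequality $\inf_{f\in\Cy}\triple p_1^2-p_0^2 - L^{*}f \triple_{-1}^2 \le \triple T(p_1^2-p_0^2) \triple_{-1}^2$.

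For the reverse inequality I would apply Pythagoras in the above orthogonal decomposition. For arbitrary $f\in\Cy$, write
\[
p_1^2-p_0^2 - L^{*}f \;=\; T(p_1^2-p_0^2) \;+\; \bigl[(p_1^2-p_0^2) - T(p_1^2-p_0^2) - L^{*}f\bigr].
\]
The first summand is orthogonal to $\overline{L^{*}\Cy}$ by the very definition of the decomposition, while the second summand belongs to $\overline{L^{*}\Cy}$ by the previous step. Hence
\[
\triple p_1^2-p_0^2 - L^{*}f \triple_{-1}^2 \;=\; \triple T(p_1^2-p_0^2) \triple_{-1}^2 \;+\; \triple (p_1^2-p_0^2) - T(p_1^2-p_0^2) - L^{*}f \triple_{-1}^2 \;\ge\; \triple T(p_1^2-p_0^2) \triple_{-1}^2,
\]
and taking the infimum over $f\in\Cy$ concludes the proof.

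I do not anticipate any genuine obstacle, as all the structural ingredients have already been established; the lemma is essentially the variational (Dirichlet-type) reformulation of the assertion that $T(p_1^2-p_0^2)$ is the component of $p_1^2-p_0^2$ transverse to $\overline{L^{*}\Cy}$ in $\mathcal{H}_{-1}$.
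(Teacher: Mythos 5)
Your proof is correct and follows essentially the same route as the paper: you first establish $(p_1^2-p_0^2)-T(p_1^2-p_0^2)\in\overline{L^{*}\Cy}$ via the identity $\ll Tg,g\gg_{-1}=\ll Tg,Tg\gg_{-1}$ (the paper states this orthogonality argument verbatim just before the lemma), and then obtain both bounds by the Pythagorean expansion in the orthogonal decomposition $\mathcal{H}_{-1}=\overline{L^{*}\Cy}\big|_{\mathcal N}\oplus\{T(p_1^2-p_0^2)\}$, which is exactly what the paper does by "a similar argument with the proof of Proposition \ref{prop:qvariationalch3}."
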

\begin{proof}
By the similar argument with the proof of Proposition \ref{prop:qvariationalch3}, we have 
\begin{displaymath}
\inf_{f \in \Cy}  \triple p_1^2-p_0^2 - T(p_1^2-p_0^2) - L^{*}f
\triple_{-1}^2   =0
\end{displaymath}
and 
\begin{align*}
& \inf_{f \in \Cy}   \triple p_1^2-p_0^2 - T(p_1^2-p_0^2) - L^{*}f
\triple_{-1}^2 \\
&= \inf_{f \in \Cy}  \triple p_1^2-p_0^2 - L^{*}f  \triple_{-1}^2
- \triple T(p_1^2-p_0^2) \triple_{-1}^2
\end{align*}
which concludes the proof.
\end{proof}

\begin{proposition}
\begin{equation}\label{eq:variationalch3}
\kappa_\beta= \beta^2 \inf_{f \in \Cy}\triple W^{*}_{0,1}-L^{*}f
 \triple_{-1}^2 = \frac{1}{\beta^2 \inf_{f \in \Cy }\triple
  p_1^2-p_0^2-L^{*}f  \triple_{-1}^2}. 
\end{equation}
\end{proposition}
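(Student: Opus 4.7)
The plan is to reduce the proposition to the single identity $\kappa_\beta = Q_\beta$, since the two variational expressions in the statement are then immediate consequences of the earlier results: the first equality is exactly Lemma \ref{prop:qvariationalch3} (second form) with $Q_\beta$ replaced by $\kappa_\beta$, and the second follows by substituting the variational formula \eqref{eq:qvariational2ch3} for $\triple T(p_1^2 - p_0^2)\triple_{-1}^2$ into the first form of Lemma \ref{prop:qvariationalch3}. So the only real content is to prove that the constant $\kappa_\beta$, defined by $W_{0,1} + \kappa_\beta (p_1^2 - p_0^2) \in \overline{L\Cy}$, coincides with the constant $Q_\beta$, defined by $W^{*}_{0,1} + Q_\beta T(p_1^2 - p_0^2) \in \overline{L^{*}\Cy}$.

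To do this, I would first identify the one--dimensional orthogonal complement of $\overline{L\Cy}$ inside $\mathcal{H}_{-1}$ as the line through $T^{*}(p_1^2 - p_0^2)$. For any $f \in \Cy$, the computation
\begin{equation*}
\ll T^{*}(p_1^2 - p_0^2),\, Lf \gg_{-1} \;=\; \ll p_1^2 - p_0^2,\, T(Lf) \gg_{-1} \;=\; \ll p_1^2 - p_0^2,\, Sf \gg_{-1} \;=\; 0
\end{equation*}
(the last equality being the identity noted right after Corollary \ref{cor:directgradient}) shows the orthogonality to $\overline{L\Cy}$, and nondegeneracy follows from $\ll T^{*}(p_1^2 - p_0^2), W_{0,1} \gg_{-1} = \ll T(W_{0,1}), p_1^2 - p_0^2 \gg_{-1} = \ll W^{S}_{0,1}, p_1^2 - p_0^2 \gg_{-1} = -1/\beta^2 \neq 0$. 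By Lemma \ref{lem:decompositionch3} the orthogonal complement of $\overline{L\Cy}$ is exactly one-dimensional, so $T^{*}(p_1^2 - p_0^2)$ spans it.

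Now pairing the defining relation $W_{0,1} + \kappa_\beta(p_1^2 - p_0^2) \in \overline{L\Cy}$ with $T^{*}(p_1^2 - p_0^2)$ yields
\begin{equation*}
0 \;=\; \ll W_{0,1}, T^{*}(p_1^2 - p_0^2) \gg_{-1} + \kappa_\beta \ll p_1^2 - p_0^2, T^{*}(p_1^2 - p_0^2) \gg_{-1}.
\end{equation*}
The first inner product equals $-1/\beta^2$ as above. For the second, using adjointness and then the identity $\ll Tg, g\gg_{-1} = \ll Tg, Tg \gg_{-1}$ (valid because $Ag$-type contributions drop out by Lemma \ref{lem:anti3ch3}, as already exploited just after Lemma \ref{prop:qvariationalch3}), one has
\begin{equation*}
\ll p_1^2 - p_0^2, T^{*}(p_1^2 - p_0^2) \gg_{-1} \;=\; \ll T(p_1^2 - p_0^2), p_1^2 - p_0^2 \gg_{-1} \;=\; \triple T(p_1^2 - p_0^2)\triple_{-1}^{2}.
\end{equation*}
Therefore $\kappa_\beta = 1/\bigl(\beta^{2}\triple T(p_1^2 - p_0^2)\triple_{-1}^{2}\bigr) = Q_\beta$ by the first identity of Lemma \ref{prop:qvariationalch3}, and combining with Lemma \ref{prop:qvariationalch3} and \eqref{eq:qvariational2ch3} gives the two variational expressions for $\kappa_\beta$ claimed in the proposition.

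The only delicate point is the identification of $T^{*}(p_1^2 - p_0^2)$ as spanning the orthogonal complement of $\overline{L\Cy}$; once that is in place, everything else is a book-keeping computation with the adjoint relation between $T$ and $T^{*}$ and the symmetry identity $\ll Tg,g\gg_{-1} = \triple Tg\triple_{-1}^{2}$.
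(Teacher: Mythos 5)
Your proof is correct and follows essentially the same approach as the paper: pair the defining relation $W_{0,1}+\kappa_\beta(p_1^2-p_0^2)\in\overline{L\Cy}$ against $T^*(p_1^2-p_0^2)$ using the orthogonality $\ll T^*(p_1^2-p_0^2),Lf\gg_{-1}=0$ and the identity $\ll Tg,g\gg_{-1}=\triple Tg\triple_{-1}^2$, conclude $\kappa_\beta=Q_\beta$, and read off the two variational expressions from Lemma \ref{prop:qvariationalch3} and \eqref{eq:qvariational2ch3}. You merely supply a few intermediate steps the paper leaves to the surrounding text (the explicit adjointness computations and the observation that $T^*(p_1^2-p_0^2)$ spans the orthogonal complement of $\overline{L\Cy}$, which is a bit more than is strictly needed).
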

\begin{proof}
By the definition, $W_{0,1} + \kappa_\beta (p_1^2-p_0^2)  \in \overline{L\Cy}$ and therefore 
\begin{displaymath}
\ll W_{0,1} + \kappa_\beta (p_1^2-p_0^2),
T^*(p_1^2-p_0^2) \gg_{-1}=  -\frac{1}{\beta^2} +
\kappa_\beta\triple T(p_1^2-p_0^2)   \triple_{-1}^2 =0.
\end{displaymath}
Then, $\kappa_\beta=Q_\beta$ follows and we obtain two variational formulas
from (\ref{eq:qvariationalch3}) and (\ref{eq:qvariational2ch3}).  
\end{proof}

\begin{proposition}\label{prop:CLT2}
For any sequence $F_K$ in $\Cy$ such that 
$$
\lim_{K \to \infty} \triple
W_{0,1}+\kappa_\beta(p_1^2-p_0^2)- LF_K
 \triple_{-1}=0,
$$
we have
\begin{equation*}
\lim_{K \to \infty} [  \frac{ \gamma}{2} \langle ( p_0V'(r_1)- Y_{0,1}\Gamma_{F_K})^2 \rangle +  \frac{ \gamma}{2} \langle  ( X_{0}\Gamma_{F_K} )^2 \rangle ] = \frac{\kappa_\beta}{\beta^2}.
\end{equation*}
\end{proposition}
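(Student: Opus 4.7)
The plan is to use the bounded operator $T: \mathcal H_{-1} \to \mathcal H_{-1}$ introduced in Section~\ref{sec-diff-coeff}, which is defined by $T(aW_{0,1} + Lf) = aW^S_{0,1} + Sf$, and simply to apply it to the limit hypothesis. The boundedness (with operator norm at most one) is immediate from Lemma~\ref{lem:anti3ch3}, which gives the orthogonal decomposition
$$
\triple aW_{0,1}+Lf\triple_{-1}^2 = \triple aW^S_{0,1}+Sf\triple_{-1}^2 + \triple aW^A_{0,1}+Af\triple_{-1}^2,
$$
so $T$ extends continuously from the dense subspace $\{aW_{0,1}+Lf\}$ to all of $\mathcal H_{-1}$.

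First, I would rewrite the quantity of interest as a seminorm. Applying the explicit formula $\triple aW^S_{0,1}+Sg\triple_{-1}^2 = \frac{\gamma}{2}\langle(ap_0V'(r_1)+Y_{0,1}\Gamma_g)^2\rangle + \frac{\gamma}{2}\langle(X_0\Gamma_g)^2\rangle$ (established in Section~\ref{sec-hilbert}) with $a=-1$ and $g=F_K$ yields the identity
$$
\frac{\gamma}{2}\langle(p_0V'(r_1)-Y_{0,1}\Gamma_{F_K})^2\rangle + \frac{\gamma}{2}\langle(X_0\Gamma_{F_K})^2\rangle \;=\; \triple SF_K - W^S_{0,1}\triple_{-1}^2.
$$
Then, since the hypothesis says $LF_K \to W_{0,1}+\kappa_\beta(p_1^2-p_0^2)$ in $\mathcal H_{-1}$, applying $T$ and using linearity together with $T(LF_K)=SF_K$ and $T(W_{0,1})=W^S_{0,1}$ gives
$$
SF_K \longrightarrow W^S_{0,1} + \kappa_\beta\, T(p_1^2-p_0^2) \qquad \text{in }\mathcal H_{-1},
$$
so that $\triple SF_K - W^S_{0,1}\triple_{-1}^2 \to \kappa_\beta^2\, \triple T(p_1^2-p_0^2)\triple_{-1}^2$ by continuity of the norm.

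To conclude, I would invoke Lemma~\ref{prop:qvariationalch3} together with the identification $Q_\beta=\kappa_\beta$ already established in the section, which yields $\triple T(p_1^2-p_0^2)\triple_{-1}^2 = 1/(\kappa_\beta\beta^2)$. Substituting this into the previous display gives the limit $\kappa_\beta^2/(\kappa_\beta\beta^2) = \kappa_\beta/\beta^2$, which is precisely the claim. No real obstacle arises: the operator $T$ constructed earlier is exactly the tool that extracts the ``symmetric current'' component from the fluctuation--dissipation decomposition $LF_K \approx W_{0,1}+\kappa_\beta(p_1^2-p_0^2)$, and the value $\kappa_\beta/\beta^2$ emerges automatically from the variational characterization of $\triple T(p_1^2-p_0^2)\triple_{-1}^2$.
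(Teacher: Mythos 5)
Your proposal is correct and follows essentially the same route as the paper: apply the bounded operator $T$ to the hypothesis, identify $\frac{\gamma}{2}\langle(p_0V'(r_1)-Y_{0,1}\Gamma_{F_K})^2\rangle + \frac{\gamma}{2}\langle(X_0\Gamma_{F_K})^2\rangle$ with $\triple W^S_{0,1}-SF_K\triple_{-1}^2$, pass to the limit to get $\kappa_\beta^2\triple T(p_1^2-p_0^2)\triple_{-1}^2$, and finish with the identity $\kappa_\beta = Q_\beta = 1/(\beta^2\triple T(p_1^2-p_0^2)\triple_{-1}^2)$ from Lemma~\ref{prop:qvariationalch3}. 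The paper's proof is the same, stated a bit more tersely.
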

\begin{proof}
By the assumption, 
$$
\lim_{K \to \infty} \triple T \{
W_{0,1}+\kappa_\beta(p_1^2-p_0^2)- LF_K \}
\triple_{-1} =0
$$ 
and therefore 
\begin{displaymath}
\lim_{K \to \infty}\triple W^S_{0,1} -SF_K \triple_{-1}^2 =
\kappa_\beta^2\triple T(p_1^2-p_0^2)  \triple_{-1}^2. 
\end{displaymath}
Then, since 
$$
\kappa_\beta=Q_\beta =
\frac{1}{ \beta^2 \triple T(p_1^2-p_0^2)  \triple_{-1}^2} 
$$ 
and 
$$
\triple W^S_{0,1} -SF_K  \triple_{-1}^2= 
\frac{ \gamma}{2} \langle ( p_0V'(r_1)- Y_{0,1}\Gamma_{F_K})^2
\rangle +  \frac{ \gamma}{2} \langle  ( X_{0}\Gamma_{F_K} )^2
\rangle,
$$
we complete the proof.  
\end{proof}

\section{Detailed estimates of the diffusion coefficient}
\label{sec:upperboundch3}

In this section, we give some detailed estimates of the diffusion coefficient as a function of $\gamma$. Note that they are not necessary to prove our main theorem.

First, we rewrite the variational formula for the diffusion coefficient given by the terms of the norm of $\mathcal{H}_{-1}$ in a tractable way.

Observe that $\Cy$ is divided into two orthogonal spaces
$\mathbb{L}_e$ and $\mathbb{L}_o$ where $\mathbb{L}_e$ is the set of even functions in $p$ and $\mathbb{L}_o$ is the set of odd functions in $p$. More precisely, for $f \in \Cy$, $ f \in \mathbb{L}_e$ if and only if $f(p,r)=f(-p,r)$ and $ f \in \mathbb{L}_o$ if and only if $f(p,r)=-f(-p,r)$  where $(-p)_i=-p_i$ for all $i$. 

Consider two subspaces of $\mathcal{H}_{-1}$ defined as  
$\mathcal{H}^{e}_{-1} := \overline { S \mathbb{L}_e} | _{\mathcal{N}}
\oplus \{ W^S_{0,1} \}$  and $\mathcal{H}^{o}_{-1} :=
\overline{S\mathbb{L}_o} |_{\mathcal{N}} $. 

\begin{lemma}
We have
\begin{displaymath}
\mathcal{H}_{-1} =\mathcal{H}^{e}_{-1} \oplus \mathcal{H}^{o}_{-1}
\end{displaymath}
and they are orthogonal to each other. Moreover, $W^A_{0,1} \in
\mathcal{H}^{o}_{-1}$, $Af \in\mathcal{H}^{o}_{-1}$  if $f \in
\mathbb{L}_e$  and $Af \in \mathcal{H}^{e}_{-1}$ if $f \in \mathbb{L}_o$.
\end{lemma}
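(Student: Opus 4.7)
The plan is to exploit the reflection symmetry $P:(p,r)\mapsto(-p,r)$, under which $\nu_\beta$ is invariant. First I would record how the basic vector fields transform: since $X_i$ and $Y_{i,i+1}$ each carry a factor linear in $p$ together with one $\partial_{p_i}$, a direct computation shows $PX_iP^{-1}=-X_i$ and $PY_{i,i+1}P^{-1}=-Y_{i,i+1}$. Consequently $S$ commutes with $P$ while $A$ anticommutes with it, so $S$ preserves parity ($S\mathbb{L}_e\subset\mathbb{L}_e$, $S\mathbb{L}_o\subset\mathbb{L}_o$) and $A$ flips parity ($A\mathbb{L}_e\subset\mathbb{L}_o$, $A\mathbb{L}_o\subset\mathbb{L}_e$). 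By inspection, $W^S_{0,1},\ p_1^2-p_0^2\in\mathbb{L}_e$ and $W^A_{0,1}\in\mathbb{L}_o$.

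The second step is the orthogonality $\mathcal{H}^e_{-1}\perp\mathcal{H}^o_{-1}$. For generators, apply the identities (\ref{eq:keyidch3}): for $g\in\mathbb{L}_e$ and $h\in\mathbb{L}_o$ one has $\ll Sh,Sg\gg_{-1}=-\ll h,g\gg_{*}=-\sum_i\langle h,\tau_ig\rangle=0$ since the integrand is odd under $P$ and $\nu_\beta$ is $P$-invariant. Similarly $\ll W^S_{0,1},Sh\gg_{-1}=\gamma\langle p_0V'(r_1)Y_{0,1}\Gamma_h\rangle$ vanishes when $h\in\mathbb{L}_o$, because $Y_{0,1}\Gamma_h$ is then even and $p_0V'(r_1)$ odd. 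So any generator of $\mathcal{H}^e_{-1}$ is $\ll\cdot,\cdot\gg_{-1}$-orthogonal to any generator of $\mathcal{H}^o_{-1}$, and the orthogonality passes to closures.

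Third, I would prove $\mathcal{H}_{-1}=\mathcal{H}^e_{-1}+\mathcal{H}^o_{-1}$. By Proposition \ref{prop:directch3} every $h\in\mathcal{H}_{-1}$ is the limit in $\triple\cdot\triple_{-1}$ of sequences $a_kW^S_{0,1}+Sg_k$ with $g_k\in\Cy$. Decompose $g_k=g_k^e+g_k^o$; since $S$ preserves parity and $W^S_{0,1}$ is even, the decomposition $a_kW^S_{0,1}+Sg_k=(a_kW^S_{0,1}+Sg_k^e)+Sg_k^o$ is orthogonal in $\ll\cdot,\cdot\gg_{-1}$ by Step~2, so Pythagoras forces both components to be Cauchy. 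Their limits lie respectively in $\mathcal{H}^e_{-1}$ and $\mathcal{H}^o_{-1}$, yielding the sum; combined with orthogonality this is the direct sum $\mathcal{H}_{-1}=\mathcal{H}^e_{-1}\oplus\mathcal{H}^o_{-1}$.

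Finally, for the membership claims: by Lemma~\ref{lem:cy0}, $W^A_{0,1}$ and $Af$ (for $f\in\Cy$) both lie in $\Cy_0\subset\mathcal{H}_{-1}$. Repeating the parity computation of Step~2, with $W^A_{0,1}$ in place of $h$ and using $\ll h,W^S_{0,1}\gg_{-1}=-\ll h\gg_{**}$, $\ll h,Sg\gg_{-1}=-\ll h,g\gg_{*}$, one sees $\ll W^A_{0,1},\cdot\gg_{-1}$ vanishes on every generator of $\mathcal{H}^e_{-1}$; the direct sum then forces $W^A_{0,1}\in\mathcal{H}^o_{-1}$. The same recipe, tracking parities of $Af$ against generators of the opposite-parity subspace, gives $Af\in\mathcal{H}^o_{-1}$ if $f\in\mathbb{L}_e$ and $Af\in\mathcal{H}^e_{-1}$ if $f\in\mathbb{L}_o$. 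I do not expect a serious obstacle here: everything reduces to the $P$-invariance of $\nu_\beta$ and the bilinear identities (\ref{eq:keyidch3}) already established, and all relevant objects lie in $\Cy_0$ so those identities apply without any approximation issue.
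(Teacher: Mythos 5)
Your argument is correct and is exactly the parity argument the paper leaves unstated when it writes ``Straightforward'': the reflection $P:(p,r)\mapsto(-p,r)$ leaves $\nu_\beta$ invariant, anticommutes with $X_i,Y_{i,i+1}$ and hence with $A$, and commutes with $S$; orthogonality then follows from the bilinear identities \eqref{eq:keyidch3} plus vanishing of odd integrands, and Proposition~\ref{prop:directch3} gives the spanning. One small slip: from \eqref{eq:keyidch3} one gets $\ll Sh,Sg\gg_{-1}=-\ll Sh,g\gg_{*}$ rather than $-\ll h,g\gg_{*}$, but since $S$ preserves parity both expressions vanish by the same reasoning, so the conclusion is unaffected.
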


\begin{proof}
Straightforward.
\end{proof}

\begin{proposition}
\begin{equation} \label{eq:variational2ch3}
\begin{split}
\kappa_\beta & = \beta^2 \inf_{f \in \mathbb{L}_e }\sup_{g \in
  \mathbb{L}_o}  \Big\{  \gamma [  \frac{1}{2} \langle ( p_0V'(r_1)-
Y_{0,1}\Gamma_f)^2 \rangle +  \frac{1}{2} \langle  (
X_{0}\Gamma_f )^2 \rangle ]  \\  
+ &  2  \langle (W^{A}_{0,1}- Af) \Gamma_g  \rangle -  \gamma
[ \frac{1}{2} \langle (Y_{0,1}\Gamma_g)^2 \rangle +
\frac{1}{2} \langle  ( X_{0}\Gamma_g)^2 \rangle ] \Big\}. 
\end{split}
\end{equation}
\end{proposition}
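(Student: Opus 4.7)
The plan is to start from the variational formula \eqref{eq:variationalch3} for $\kappa_\beta$ and convert the single infimum over $\Cy$ into an inf-sup over the parity classes $\mathbb{L}_e$ and $\mathbb{L}_o$. First I write
\[
W^{*}_{0,1} - L^{*}f \;=\; (W^S_{0,1} - Sf) - (W^A_{0,1} - Af),
\]
and observe that these two pieces are orthogonal in $\mathcal{H}_{-1}$: applying Lemma \ref{lem:anti3ch3} with $a=1$ and $g$ replaced by $-f$ gives $\ll W^S_{0,1} - Sf,\, W^A_{0,1} - Af \gg_{-1} = 0$, hence
\[
\triple W^{*}_{0,1} - L^{*}f \triple_{-1}^2 \;=\; \triple W^S_{0,1} - Sf \triple_{-1}^2 + \triple W^A_{0,1} - Af \triple_{-1}^2.
\]

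Second, I reduce the infimum to $f\in\mathbb{L}_e$ by parity. Decomposing $f = f_e + f_o$ with $f_e\in\mathbb{L}_e$ and $f_o\in\mathbb{L}_o$, and noting that $X_i$ and $Y_{i,i+1}$ flip parity in $p$ (so $S$ preserves and $A$ reverses it), the functions $W^S_{0,1}, Sf_e, Af_o$ are even while $W^A_{0,1}, Sf_o, Af_e$ are odd. Combined with the orthogonal splitting $\mathcal{H}_{-1} = \mathcal{H}^{e}_{-1} \oplus \mathcal{H}^{o}_{-1}$ from the preceding lemma, Pythagoras gives
\[
\triple W^S_{0,1} - Sf \triple_{-1}^2 = \triple W^S_{0,1} - Sf_e \triple_{-1}^2 + \triple Sf_o \triple_{-1}^2,
\]
\[
\triple W^A_{0,1} - Af \triple_{-1}^2 = \triple W^A_{0,1} - Af_e \triple_{-1}^2 + \triple Af_o \triple_{-1}^2.
\]
The two $f_o$-dependent contributions are non-negative and vanish at $f_o\equiv 0$, so the infimum collapses to
\[
\kappa_\beta \;=\; \beta^2 \inf_{f \in \mathbb{L}_e}\Bigl\{ \triple W^S_{0,1} - Sf \triple_{-1}^2 + \triple W^A_{0,1} - Af \triple_{-1}^2 \Bigr\}.
\]

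The third step recasts the odd-parity norm as a supremum. By the preceding lemma, $W^A_{0,1} - Af \in \mathcal{H}^{o}_{-1} = \overline{S\mathbb{L}_o}|_{\mathcal{N}}$ for every $f\in\mathbb{L}_e$, so the standard Hilbert identity $\triple h\triple^2 = \sup_{v\in V}\{2\ll h,v\gg - \triple v\triple^2\}$ valid whenever $h\in\overline V$ yields
\[
\triple W^A_{0,1} - Af \triple_{-1}^2 \;=\; \sup_{g \in \mathbb{L}_o}\Bigl\{ 2 \ll W^A_{0,1} - Af,\, Sg \gg_{-1} - \triple Sg \triple_{-1}^2\Bigr\}.
\]
Using $\ll h, Sg \gg_{-1} = -\ll h, g \gg_{*} = -\langle h\,\Gamma_g \rangle$ from \eqref{eq:keyidch3} and substituting $g\mapsto -g$ in the supremum (which leaves $\triple Sg\triple_{-1}^2$ unchanged) flips the linear term to $+2\langle (W^A_{0,1} - Af)\Gamma_g\rangle$. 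Inserting the explicit formulas
\[
\triple W^S_{0,1} - Sf \triple_{-1}^2 = \tfrac{\gamma}{2}\langle(p_0V'(r_1) - Y_{0,1}\Gamma_f)^2\rangle + \tfrac{\gamma}{2}\langle(X_0\Gamma_f)^2\rangle,
\]
\[
\triple Sg \triple_{-1}^2 = \tfrac{\gamma}{2}\langle(Y_{0,1}\Gamma_g)^2\rangle + \tfrac{\gamma}{2}\langle(X_0\Gamma_g)^2\rangle
\]
recalled from Section \ref{sec-hilbert} then produces precisely \eqref{eq:variational2ch3}.

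The only delicate point is the membership $W^A_{0,1} - Af \in \mathcal{H}^{o}_{-1}$, which is what allows the supremum to be taken over $S\mathbb{L}_o$ instead of all of $S\Cy$; this rests entirely on the orthogonal parity decomposition of $\mathcal{H}_{-1}$ stated just above the proposition. Everything else is routine bookkeeping with the identities for $\ll \cdot,\cdot \gg_{-1}$ already collected in Sections \ref{sec-hilbert} and \ref{sec-diff-coeff}.
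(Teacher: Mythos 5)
Your proposal is correct and follows essentially the same route as the paper's own proof: the orthogonal decomposition of $W^*_{0,1}-L^*f$ into symmetric and antisymmetric parts (via Lemma \ref{lem:anti3ch3}), the parity split $f=f_e+f_o$ with the infimum over $f_o$ collapsing to zero by the decomposition $\mathcal{H}_{-1}=\mathcal{H}^e_{-1}\oplus\mathcal{H}^o_{-1}$, and the conversion of $\triple W^A_{0,1}-Af\triple_{-1}^2$ into a supremum over $S\mathbb{L}_o$ using the elementary identity from \eqref{eq:keyidch3}. The one point you are more explicit about than the paper---that the variational identity for the norm requires $W^A_{0,1}-Af\in\overline{S\mathbb{L}_o}|_{\mathcal{N}}$, so the supremum may legitimately be restricted to $\mathbb{L}_o$---is indeed the crux, and you justify it correctly.
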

\begin{proof}
We can rewrite the first variational formula for $\kappa_\beta$ in
(\ref{eq:variationalch3}) as 
\begin{align*}
& \beta^2 \inf_{f \in \Cy}\{\triple W^{S}_{0,1} - Sf
\triple_{-1}^2 +\triple W^{A}_{0,1}- Af  \triple_{-1}^2\}   \\
&=\beta^2  \inf_{f_e \in \mathbb{L}_e}\inf_{f_o \in\mathbb{L}_o}\{
\triple W^{S}_{0,1} - Sf_e\triple_{-1}^2 + \triple Sf_o\triple_{-1}^2
\nonumber +\triple W^{A}_{0,1} - Af_e\triple_{-1}^2 + \triple
Af_o\triple_{-1}^2 \}  \nonumber \\ 
 &=\beta^2  \inf_{f \in \mathbb{L}_e} \{ \triple W^{S}_{0,1} -
 Sf\triple_{-1}^2 \nonumber +\triple W^{A}_{0,1} - Af\triple_{-1}^2  \}  \nonumber \\
 &= \beta^2  \inf_{f \in \mathbb{L}_e }\sup_{g \in  \mathbb{L}_o} \{
\triple W^{S}_{0,1} - Sf\triple_{-1}^2
- 2 \ll W^A_{0,1}-Af, Sg \gg_{-1} -\triple Sg \triple_{-1}^2\}
\nonumber  \\  
&= \beta^2  \inf_{f \in \mathbb{L}_e }\sup_{g \in  \mathbb{L}_o} \{
\gamma [  \frac{1}{2} \langle ( p_0V'(r_1)- Y_{0,1}\Gamma_f)^2 \rangle +  \frac{1}{2} \langle  ( X_{0}\Gamma_f )^2 \rangle
] +  2  \langle (W^{A}_{0,1}- Af) \Gamma_g  \rangle
\nonumber  \\ 
& \quad \quad - \gamma  [ \frac{1}{2} \langle (Y_{0,1}\Gamma_g)^2
\rangle + \frac{1}{2} \langle  ( X_{0}\Gamma_g)^2 \rangle ] \} \nonumber. 
\end{align*}
\end{proof}

\begin{proposition}\label{prop:upperboundch3}
\begin{displaymath}
\kappa_\beta \le \frac{\gamma}{4}\langle V''(r_0) \rangle+\frac{3}{4\gamma}.
\end{displaymath}
\end{proposition}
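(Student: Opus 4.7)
My plan is to apply the variational formula of Proposition~11.1 with the explicit trial function $f = \tfrac12 V(r_1) \in \mathbb L_e$, and to bound carefully the resulting supremum over $g\in\mathbb L_o$. Since $\Gamma_f = \tfrac12\sum_j V(r_j)$, only the $j=0,1$ terms contribute to the local derivatives, giving $X_0\Gamma_f = \tfrac12 p_0 V'(r_0)$ and $Y_{0,1}\Gamma_f = \tfrac12 p_0 V'(r_1)$. The ``even'' part in (\ref{eq:variational2ch3}) then evaluates, using Gaussian integration by parts in $p_0$ and the identity $\langle V'(r)^2\rangle = \langle V''(r)\rangle/\beta$ (integration by parts in $r$), to
\begin{equation*}
\beta^2 \cdot \tfrac{\gamma}{2}\Big[\langle(\tfrac12 p_0V'(r_1))^2\rangle + \langle(\tfrac12 p_0V'(r_0))^2\rangle\Big] = \tfrac{\gamma}{4}\langle V''(r_0)\rangle,
\end{equation*}
which produces the first term of the stated bound.

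A direct computation yields $Af = \tfrac12 (p_1-p_0)V'(r_1)$, hence $W^A_{0,1} - Af = -\tfrac12(p_0+p_1)V'(r_1)$, and what remains is
\begin{equation*}
\beta^2\sup_{g\in\mathbb L_o}\Big\{-\langle (p_0+p_1)V'(r_1)\,\Gamma_g\rangle - \triple Sg\triple_{-1}^2\Big\} \le \tfrac{3}{4\gamma}.
\end{equation*}
By rescaling $g\mapsto\lambda g$ and optimizing $\lambda$, this supremum equals $\tfrac14\triple(p_0+p_1)V'(r_1)\triple_{-1}^2$, so the desired inequality reduces to
\begin{equation*}
\triple(p_0+p_1)V'(r_1)\triple_{-1}^2 \le \tfrac{3}{\beta^2\gamma}.
\end{equation*}

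To establish this last estimate, I use the representation $(p_0+p_1)V'(r_1) = Y_{0,1}V(r_1) + X_1V(r_1)$, which places $h=(p_0+p_1)V'(r_1)$ in $\Cy_0$ with $F_1 = G_0 = V(r_1)$. Applying the general upper bound derived at the end of Section~\ref{sec:space-time-variance-bounds}, namely $\triple h\triple_{-1}^2 \le \tfrac{2}{\gamma}(\langle\tilde F^2\rangle + \langle\tilde G^2\rangle)$, and using the freedom to replace $V(r_j)$ by $V(r_j) - \langle V\rangle$ (since $X_0$ and $Y_{0,1}$ annihilate constants), gives
\begin{equation*}
\triple(p_0+p_1)V'(r_1)\triple_{-1}^2 \le \tfrac{4}{\gamma}\operatorname{Var}_{\nu_\beta}(V(r_0)).
\end{equation*}
The remaining task is a universal variance estimate $\beta^2\operatorname{Var}(V) \le \tfrac34$, which I would derive by combining Brascamp--Lieb's inequality $\operatorname{Var}(V)\le \beta^{-1}\langle(V')^2/V''\rangle$ with the integration-by-parts identity $\langle(V')^2\rangle=\beta^{-1}\langle V''\rangle$, and finally invoking the structural bounds (ii)--(iii) on $V$ to obtain the clean constant.

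The hard part of the argument is precisely this final step: turning a bound on $\operatorname{Var}(V)$ (which a priori depends on both $V$ and $\beta$) into the universal constant $\tfrac{3}{4\beta^2}$. The sharp inequality uses both sides of the two-sided condition $\delta_-\le V''\le\delta_+$ in a more refined way than Brascamp--Lieb alone, and is where the near-harmonic hypothesis (iii) (the same one that enters the spectral gap argument of Section~\ref{sec:spectral-gap-bound}) plays its role.
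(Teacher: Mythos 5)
Your setup is sound and closely parallels the paper's: the paper chooses $f=-\tfrac{p_0^2}{4}$ in \eqref{eq:variational2ch3}, and one can check that your choice $f=\tfrac12 V(r_1)$ produces an \emph{identical} even-part contribution $\tfrac{\gamma}{4}\langle V''(r_0)\rangle$ and, by shift invariance of the formal sums, an identical odd-part supremum (indeed $\langle p_0(V'(r_0)+V'(r_1))\Gamma_g\rangle = \langle (p_0+p_1)V'(r_1)\Gamma_g\rangle$). Your reduction of the odd part to $\tfrac{\beta^2}{4}\triple (p_0+p_1)V'(r_1)\triple_{-1}^2$ is also fine (with ``$\le$'' rather than ``$=$''; dropping the $a$-parameter only helps for an upper bound).

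The gap is the last step, which you flag yourself as ``the hard part'' and then only gesture at. Writing $(p_0+p_1)V'(r_1)=Y_{0,1}V(r_1)+X_1 V(r_1)$ and applying the crude bound $\triple h\triple_{-1}^2\le\tfrac{2}{\gamma}(\langle\tilde F^2\rangle+\langle\tilde G^2\rangle)$ from Section~\ref{sec:space-time-variance-bounds} leads you to require $\beta^2\operatorname{Var}_{\nu_\beta}(V(r_0))\le \tfrac34$. But the chain you propose --- Brascamp--Lieb, $\operatorname{Var}(V)\le\beta^{-1}\langle(V')^2/V''\rangle$, followed by $V''\ge\delta_-$, $\langle(V')^2\rangle=\beta^{-1}\langle V''\rangle\le\delta_+/\beta$ --- gives only $\beta^2\operatorname{Var}(V)\le\delta_+/\delta_-$, which under assumption (iii) is $<(4/3)^{1/16}\approx 1.018$ and in particular \emph{not} $\le 3/4$. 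Since the harmonic case already has $\beta^2\operatorname{Var}(V)=\tfrac12$ while Brascamp--Lieb yields the bound $1$ there, the inequality is not merely ``a more refined use of (ii)--(iii)'' away; it would require an argument you have not supplied, and it is not clear it holds with this constant.

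The paper avoids this entirely: after integrating $W^A_{0,1}-Af$ by parts against $\tfrac{p_0^2}{2}$ (using $W^A_{0,1}=Y_{0,1}(\tfrac{p_0^2}{2})$ and $-p_0V'(r_0)=X_0(\tfrac{p_0^2}{2})$), the odd-part supremum completes the square as $-\tfrac12\langle(X_0\Gamma_g+\tfrac{p_0^2}{2})^2\rangle-\tfrac12\langle(Y_{0,1}\Gamma_g+\tfrac{p_0^2}{2})^2\rangle+\tfrac14\langle p_0^4\rangle\le\tfrac14\langle p_0^4\rangle$, and the exact Gaussian moment $\langle p_0^4\rangle=3/\beta^2$ gives $\tfrac{3}{4\gamma}$ with no information on $\operatorname{Var}(V)$ needed. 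Equivalently, in your framework you could have used the alternative representation $(p_0+p_1)V'(r_1)=-X_1(\tfrac{p_1^2}{2})-Y_{0,1}(\tfrac{p_0^2}{2})$, for which $\tilde F=\tilde G=-\tfrac{p_0^2}{2}$ and the same crude $L^2$ bound gives $\triple h\triple_{-1}^2\le\tfrac{2}{\gamma}\cdot 2\cdot\tfrac14\langle p_0^4\rangle=\tfrac{3}{\gamma\beta^2}$, whence $\tfrac{\beta^2}{4}\triple h\triple_{-1}^2\le\tfrac{3}{4\gamma}$ exactly. The moral: the coefficient functions should be $p^2/2$, not $V(r)$, because the momentum marginal is Gaussian and its fourth moment is computable and universal, while $\operatorname{Var}(V)$ is not.
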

\begin{proof}
Take $f=-\frac{p_0^2}{4}$ in the variational formula (\ref{eq:variational2ch3}), then we have
\begin{align*}
\kappa_\beta & \le \beta^2 \sup_{g \in  \mathbb{L}_o} \{   \frac{\gamma}{4} \langle  p_0 ^2 V'(r_0) ^2 \rangle + 2  \langle \{ W^{A}_{0,1}+A(\frac{p_0^2}{4}) \} \Gamma_g  \rangle -  \gamma [\frac{1}{2} \langle (Y_{0,1}\Gamma_g)^2 \rangle + \frac{1}{2} \langle  ( X_{0}\Gamma_g)^2 \rangle ] \} \\
& =  \frac{\gamma}{4}\langle V''(r_0) \rangle + \frac{\beta^2}{\gamma}  \sup_{g \in  \mathbb{L}_o} \{  2  \langle (W^{A}_{0,1}+A(\frac{p_0^2}{4})) \Gamma_g  \rangle -  \frac{1}{2} \langle (Y_{0,1}\Gamma_g)^2 \rangle - \frac{1}{2} \langle  ( X_{0}\Gamma_g)^2 \rangle \}.
\end{align*}
Since $ W^A_{0,1}= Y_{0,1}(\frac{p_0^2}{2})$,
\begin{align*}
\sup_{g \in  \mathbb{L}_o} & \{  2  (W^{A}_{0,1}+A(\frac{p_0^2}{4})) \Gamma_g  \rangle -  \frac{1}{2} \langle (Y_{0,1}\Gamma_g)^2 \rangle - \frac{1}{2} \langle  ( X_{0}\Gamma_g)^2 \rangle \} \\
&= \sup_{g \in  \mathbb{L}_o} \{ - \frac{1}{2} \langle p_0^2, X_{0} \Gamma_g \rangle  - \frac{1}{2} \langle p_0^2, Y_{0,1} \Gamma_g \rangle- \frac{1}{2} \langle (Y_{0,1} \Ga_g)^2 \rangle - \frac{1}{2} \langle  ( X_{0}\Gamma_g)^2 \rangle \} \\
&= \sup_{g \in  \mathbb{L}_o} \{ - \frac{1}{2} \langle(X_{0} \Ga_g + \frac{p_0^2}{2})^2 \rangle - \frac{1}{2} \langle(Y_{0,1} \Ga_g + \frac{p_0^2}{2})^2  \rangle + \frac{1}{4} \langle p_0^4 \rangle \}  \le  \frac{1}{4} \langle p_0^4 \rangle.
\end{align*}
\end{proof}
\begin{proposition}
\begin{displaymath}
\kappa_\beta \ge \frac{\gamma }{4\beta \langle r_0^2 \rangle }.
\end{displaymath}
\end{proposition}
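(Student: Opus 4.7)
The plan is to exploit the second equality in the variational formula \eqref{eq:variationalch3}, which expresses
\[
\kappa_\beta \;=\; \frac{1}{\beta^2 \inf_{f \in \Cy}\triple p_1^2 - p_0^2 - L^{*}f\triple_{-1}^2}.
\]
A lower bound on $\kappa_\beta$ therefore reduces to an upper bound on the infimum, and the trivial test function $f \equiv 0$ will already be enough; no real minimization is needed. So the whole task is to estimate $\triple p_1^2 - p_0^2 \triple_{-1}^2$ from above by $4\langle r_0^2\rangle/(\gamma\beta)$.

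For this I will use the generic inequality recorded immediately after \eqref{eq:seminormdef2}, namely $\triple h\triple_{-1}^2 \le \frac{2}{\gamma}(\langle \tilde F^2\rangle + \langle \tilde G^2\rangle)$, whenever $h = \sum_j X_j F_j + Y_{j,j+1} G_j$ with $\tilde F = \sum_j \tau_{-j} F_j$ and $\tilde G = \sum_j \tau_{-j} G_j$. The decomposition of Lemma~\ref{lem:cy0}(ii) using $F = -G = (p_0+p_1)r_1$ is wasteful: both $p_0$ and $p_1$ appear and the resulting constant is $8\langle r_0^2\rangle/(\gamma\beta)$, off by a factor $2$. The main (and essentially only) algebraic input of the proof is the cleaner representation
\[
p_1^2 - p_0^2 \;=\; X_1(p_1 r_1) \;-\; Y_{0,1}(p_0 r_1),
\]
which follows from $X_i(p_i r_i) = p_i^2 - V'(r_i) r_i$ and $Y_{i,i+1}(p_i r_{i+1}) = p_i^2 - V'(r_{i+1}) r_{i+1}$: the $V'\!\cdot r$ terms cancel exactly. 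With $F_1 = p_1 r_1$ and $G_0 = -p_0 r_1$ this corresponds to $\tilde F = \tau_{-1}(p_1 r_1) = p_0 r_0$ and $\tilde G = -p_0 r_1$.

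Finally, I will compute these two $L^2$ norms. Since $\{p_i,r_j\}$ are independent under $\nu_\beta$, with $\langle p_0^2\rangle_\beta = \beta^{-1}$ and $\langle r_0^2\rangle = \langle r_1^2\rangle$, one has $\langle \tilde F^2\rangle + \langle \tilde G^2\rangle = 2\beta^{-1}\langle r_0^2\rangle$, whence
\[
\triple p_1^2 - p_0^2 \triple_{-1}^2 \;\le\; \frac{2}{\gamma}\cdot \frac{2\langle r_0^2\rangle}{\beta} \;=\; \frac{4\langle r_0^2\rangle}{\gamma\beta},
\]
and plugging back into the reciprocal variational formula yields $\kappa_\beta \ge \gamma/(4\beta \langle r_0^2\rangle)$. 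There is no real analytic obstacle: the lemma is a one-shot evaluation of \eqref{eq:variationalch3} at a trivial $f$; the only non-obvious step is the choice of the two-term decomposition of $p_1^2 - p_0^2$ in which the potential contributions cancel, which saves exactly the factor of two that separates the correct constant $1/4$ from the naive $1/8$.
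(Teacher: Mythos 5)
Your proof is correct, and it reaches the same constant by a route that is genuinely different in mechanics from the paper's, though the two arguments are dual to each other.

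The paper starts from the minimax formula (\ref{eq:variational2ch3}) and, discarding the supremum over $g$, lower-bounds
$\kappa_\beta \ge \gamma\beta^2\inf_{f\in\mathbb{L}_e}\bigl[\tfrac12\langle(p_0V'(r_1)-Y_{0,1}\Gamma_f)^2\rangle+\tfrac12\langle(X_0\Gamma_f)^2\rangle\bigr]$.
It then uses the identity $\tfrac{1}{\beta^2}=\langle p_0V'(r_1)-Y_{0,1}\Gamma_f,\,p_0 r_1\rangle+\langle X_0\Gamma_f,\,p_0 r_0\rangle$, valid for every even $f$ because $\langle V'(r_0)r_0-V'(r_1)r_1,\Gamma_f\rangle=0$, and closes with Cauchy--Schwarz against the pair $(p_0 r_1,p_0 r_0)$, which produces the factor $\langle(p_0r_1)^2+(p_0r_0)^2\rangle=\tfrac{2}{\beta}\langle r_0^2\rangle$ and hence the bound $\kappa_\beta\ge\gamma/(4\beta\langle r_0^2\rangle)$.

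You instead take the reciprocal expression in (\ref{eq:variationalch3}), set $f\equiv0$, and invoke the crude $L^2$ upper bound $\triple h\triple_{-1}^2\le\tfrac{2}{\gamma}(\langle\tilde F^2\rangle+\langle\tilde G^2\rangle)$ noted after (\ref{eq:seminormdef2}); that bound indeed holds for any representation $h=\sum_j X_jF_j+Y_{j,j+1}G_j$, since the supremum over $(X_0\Gamma_g,\,Y_{0,1}\Gamma_g+ap_0V'(r_1))$ in (\ref{eq:seminormdef2}) is dominated by the supremum over all pairs in $L^2\times L^2$. Your decomposition $p_1^2-p_0^2=X_1(p_1r_1)-Y_{0,1}(p_0r_1)$ is correct and, as you observe, is tighter than the one recorded in Lemma~\ref{lem:cy0}(ii); this is precisely what recovers $\tfrac14$ rather than $\tfrac18$. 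The computation $\langle\tilde F^2\rangle+\langle\tilde G^2\rangle=\langle(p_0r_0)^2\rangle+\langle(p_0r_1)^2\rangle=\tfrac{2}{\beta}\langle r_0^2\rangle$ is correct.

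It is no accident that the same functions $p_0r_0$, $p_0r_1$ appear in both proofs: your $\tilde F=p_0r_0$, $\tilde G=-p_0r_1$ are (up to sign) the paper's Cauchy--Schwarz test pair. The paper applies Cauchy--Schwarz directly inside the minimizing formula; you drop the supremum in the dual (reciprocal) formula and solve the resulting unconstrained quadratic. The two proofs are duals of each other and necessarily give the same constant.
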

\begin{proof}
By the variational formula (\ref{eq:variational2ch3})
\begin{align*}
\kappa_\beta & \ge \gamma \beta^2 \inf_{f \in \mathbb{L}_e } \{ [  \frac{1}{2} \langle ( p_0V'(r_1)+ Y_{0,1}\Gamma_f)^2 \rangle +  \frac{1}{2} \langle  ( X_{0}\Gamma_f )^2 \rangle ] \}.
\end{align*}
Since $\frac{1}{\beta^2} = \langle p_0V'(r_1),  p_0 r_1 \rangle$ and $ \langle p_0 r_0, X_0(\Gamma_f )  \rangle -  \langle p_0 r_1, Y_{0,1}(\Gamma_f )  \rangle = \langle V'(r_0)r_0-V'(r_1)r_1, \Gamma_f \rangle = 0$ for any $f \in \mathbb{L}_e$, we have
\[
\frac{1}{\beta^2} = \langle p_0V'(r_1) - Y_{0,1}(\Gamma_f ),  p_0 r_1 \rangle +  \langle p_0 r_0, X_0(\Gamma_f )  \rangle
\]
for any $f \in \mathbb{L}_0$. Then, by Schwarz inequality,
\begin{align*}
\frac{1}{\beta^4} & \le \inf_{f \in \mathbb{L}_0 } \langle (p_0V'(r_1) - Y_{0,1}(\Gamma_f ))^2 + (X_0(\Gamma_f ))^2 \rangle   \langle (p_0 r_1)^2+  (p_0 r_0)^2\rangle \\
& = \frac{2}{\beta} \langle  r_0^2  \rangle  \inf_{f \in \mathbb{L}_0 }\langle (p_0V'(r_1) - Y_{0,1}(\Gamma_f ))^2 + (X_0(\Gamma_f ))^2 \rangle.
\end{align*}
\end{proof}

\begin{remark}\label{rmk-harmonic}
For the harmonic case with $V(r)=\frac{r^2}{2}$, we have an
explicit fluctuation-dissipation given by
\begin{equation}
  \label{eq:26}
  \begin{split}
    W^A_{0,1} + W^S_{0,1} &= -p_0 r_1 + \frac{\gamma}2 (p_0^2 - r_1^2) \\
    &= -\nabla \left[ \left(\frac 1{6\gamma} + \frac
        {\gamma}{4}\right) p_0^2 + \frac 12 r_0 r_1 \right] + L\left(
      \frac 1{6\gamma} (p_0 + p_1) r_1 + \frac{r_1^2}{4} \right)
  \end{split}
\end{equation}
i.e. the diffusion coefficient is given by 
$D_\beta=\frac{\gamma}{4}+\frac{1}{6\gamma}$ which does not depend on $\beta$.
\end{remark}

\section{Spectral gap}\label{sec:spectral-gap-bound}
In this section, we prove the spectral gap estimates for the process of finite oscillators without the periodic boundary condition, which is used in the proof of Theorem \ref{thm:closedformch3} in Section \ref{sec:closedformch2}. We use the following notation:
\begin{equation*}
E_{\nu_{L,E}} [ \ \cdot \ ] :=E_{\nu_{\beta}^L}[\ \cdot \ | \frac{1}{L}\sum_{i=1}^L(\frac{p_i^2}{2}+V(r_i))=E].
\end{equation*}
Recall that we assume that $0 < \delta_{-} \le V''(r) \le \delta_{+} < \infty$. Then it is easy to see that $V$ satisfies
\[
0 < d_- \le \Big| \frac{\sqrt{2V(r)}}{V^{\prime}(r)} \Big| \le d_+ < \infty
\]
for all $r \in \R \setminus \{0\}$ where $d_-=\frac{\sqrt{\delta_-}}{\delta_+}$ and $d_+=\frac{\sqrt{\delta_+}}{\delta_-}$. 
Under these assumptions, we can operate the change of variables $(p, r) \to (\E, \theta)$ as $\sqrt{\E} \cos\theta=\frac {p}{\sqrt{2}}$ and $\sqrt{\E} \sin\theta= sgn(r) \sqrt{V(r)}$, and we obtain that
\begin{equation*}
  \int_{\R^2} f(p,r) d \nu_{\beta}^1 = \frac{1}{\sqrt{2\pi \beta^{-1}}Z_{\beta}} \int_0^\infty \int_0^{2\pi} \tilde{f}(\E, \theta) e^{-\beta \E} q(\E,\theta) d\E d\theta
\end{equation*}
where $q(\E,\theta)=|\frac{\sqrt{2V(r(\E, \theta))}}{V^{\prime}(r(\E, \theta))}|$, which satisfies $ d_- \le q(\E,\theta) \le d_+$ for all $\E$ and $\theta$.
Here, $\tilde{f}(\E,\theta):=f(p(\E,\theta),r(\E,\theta))$.

Let $h_{\beta}(x) dx$ be the probability distribution on $\R_+$ of 
$p^2/2 + V(r)$  under $d \nu_{\beta}^1$, i.e. 
\begin{equation*}
  \int_{\R^2} g(p^2/2 + V(r)) d \nu_{\beta}^1 = \int_0^\infty g(x) h_{\beta}(x) dx
\end{equation*}
for any $g: \R_+ \to \R$. Then, since
$h_{\beta}(x)= \frac{1}{\sqrt{2\pi \beta^{-1}}Z_{\beta}} \int_0^{2\pi} e^{-\beta x } q(x,\theta) d\theta$, we obtain
 \[
\frac{\delta_-}{\delta_+}\beta e^{- \beta x} \le h_{\beta}(x) \le \frac{\delta_+}{\delta_-} \beta e^{- \beta x}
\]
for all $x>0$.

With these notations, we prepare two lemmas before we state the main result of this section.

\begin{lemma} \label{lem:2ch3}
There exists a positive constant $C$ such that
  \begin{equation*}
    E_{\nu_{1,E}}[(f-E_{\nu_{1,E}}[f])^2] 
    \le C E_{\nu_{1,E}}[(X_1 f)^2] 
  \end{equation*}
for every $E>0$, and every smooth function $f$.
\end{lemma}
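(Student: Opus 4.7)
The plan is to reduce the inequality to a one-dimensional Poincaré inequality on the circle parametrized by the angle variable $\theta$ introduced right before the lemma statement. The vector field $X_1 = p_1 \partial_{r_1} - V'(r_1)\partial_{p_1}$ is the Hamiltonian vector field of $\E_1 = p_1^2/2 + V(r_1)$, so $X_1 \E_1 = 0$ and $X_1$ is tangent to every microcanonical curve $\{\E_1 = E\}$. In the coordinates $(\E,\theta)$ defined by $p_1/\sqrt{2} = \sqrt{\E}\cos\theta$ and $\mathrm{sgn}(r_1)\sqrt{V(r_1)} = \sqrt{\E}\sin\theta$, I expect (by running the Hamiltonian flow and matching time derivatives of $p_1 = \sqrt{2\E}\cos\theta$ against $\dot p_1 = -V'(r_1)$) that
\[
X_1 = \frac{1}{q(\E,\theta)}\,\partial_\theta.
\]
So the whole problem becomes a weighted Poincaré inequality on $[0,2\pi)$.

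Next I would identify the conditional measure $\nu_{1,E}$. From the change-of-variables formula already derived just above the lemma, the joint density of $(\E,\theta)$ under $\nu_\beta^1$ is proportional to $e^{-\beta\E}q(\E,\theta)$, so the conditional measure of $\theta$ given $\E = E$ is
\[
\mu_E(d\theta) = \frac{q(E,\theta)}{Q(E)}\,d\theta, \qquad Q(E) = \int_0^{2\pi} q(E,\theta)\,d\theta,
\]
and the bounds $d_- \le q \le d_+$ give $2\pi d_- \le Q(E) \le 2\pi d_+$.

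Then the inequality to prove reads, for $\tilde f(\theta) = f(p(E,\theta),r(E,\theta))$,
\[
\int_0^{2\pi}\bigl(\tilde f - \bar{\tilde f}\bigr)^2 \frac{q}{Q}\,d\theta
\;\le\; C\int_0^{2\pi} \frac{\tilde f'(\theta)^2}{q^2}\,\frac{q}{Q}\,d\theta
= \frac{C}{Q}\int_0^{2\pi} \frac{\tilde f'(\theta)^2}{q}\,d\theta.
\]
Since the variance with respect to a probability measure is minimized by the mean, I can replace $\bar{\tilde f}$ on the left by the uniform mean $\tilde f_u := (2\pi)^{-1}\int_0^{2\pi}\tilde f\,d\theta$, estimate $q/Q \le d_+/(2\pi d_-)$, and apply the classical Poincaré inequality on the circle of circumference $2\pi$, namely $\int_0^{2\pi}(\tilde f - \tilde f_u)^2\,d\theta \le \int_0^{2\pi}\tilde f'(\theta)^2\,d\theta$. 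Finally I bound $\int (\tilde f')^2\,d\theta \le d_+\int (\tilde f')^2/q\,d\theta$ to match the right-hand side, and combine all constants into a single $C$ depending only on $\delta_\pm$.

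Since this reduces to the standard Poincaré inequality on a circle with weights that are uniformly bounded above and below, there is no real obstacle. The only bookkeeping step requiring care is verifying the formula $X_1 = q^{-1}\partial_\theta$; everything else is a chain of elementary comparisons using the two-sided bound on $q$.
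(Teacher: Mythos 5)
Your proof is correct and follows essentially the same route as the paper's: both pass to the $(\E,\theta)$ coordinates, use $X_1 = q^{-1}\partial_\theta$, exploit the two-sided bound $d_- \le q \le d_+$, and reduce to the standard Poincar\'e inequality on the circle. The only presentational difference is that you make the identity $X_1 = q^{-1}\partial_\theta$ explicit (and correctly so), whereas the paper states the resulting integral formula for $E_{\nu_{1,E}}[(X_1f)^2]$ directly.
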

\begin{proof}
By simple computations with the change of variable,
\begin{equation*}
  E_{\nu_{1,E}}[(f-E_{\nu_{1,E}}[f])^2] =  \frac{ \int_0^{2\pi} (\tilde{f}(E,\theta)-E_{\nu_{1,E}}[f])^2 q(E,\theta) d\theta}{\int_0^{2\pi}  q(E,\theta) d\theta}
\end{equation*}
and 
\begin{equation*}
  E_{\nu_{1,E}}[(X_1 f)^2] =  \frac{ \int_0^{2\pi} \{ q(E,\theta)^{-1}\partial_{\theta} \tilde{f}(E,\theta) \}^2 q(E,\theta) d\theta}{\int_0^{2\pi}  q(E,\theta) d\theta}.
\end{equation*}
Therefore, it sufficient to show that there exists a positive constant $C$ such that
\begin{equation*}
\int_0^{2\pi} (\tilde{f}(E,\theta)-E_{\nu_{1,E}}[f])^2  q(E,\theta) d\theta \le C \int_0^{2\pi} ( \partial_{\theta} \tilde{f}(E,\theta) )^2  q(E,\theta)^{-1} d\theta 
\end{equation*}
for every $E>0$ and every smooth function $f$. Then, since $d_- \le q(E,\theta) \le d_+$ for all $E>0$ and $\theta$, and
\begin{equation*}
\int_0^{2\pi} (\tilde{f}(E,\theta)-E_{\nu_{1,E}}[f])^2 q(E,\theta) d\theta \le \int_0^{2\pi} \Big(\tilde{f}(E,\theta)- (\int_0^{2\pi} \tilde{f}(E,\theta) d\theta) \Big)^2 q(E,\theta) d\theta
\end{equation*}
holds for every $E>0$ and every smooth function $f$, the desired inequality follows from the Poincar\'e inequality.
\end{proof}
\begin{lemma}\label{lem:3ch3}
There exist positive constants $0 < c \le C <\infty$ such that
  \begin{equation*}
  \label{eq:11}
 c\; E \le  \alpha_i(E)  \le C E
\end{equation*}
for all $E>0$ and for $i=1,2$ where $\alpha_1(E):=E_{\nu_{1,E}}[p_1^2]$ and $\alpha_2(E):=E_{\nu_{1,E}}[V'^2(r_1)]$.
\end{lemma}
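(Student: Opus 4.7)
The plan is to exploit the change of variables $(p_1,r_1)\mapsto(\E,\theta)$ introduced just before the lemma, under which the microcanonical measure $\nu_{1,E}$ becomes explicit. Recall $\sqrt{\E}\cos\theta=p_1/\sqrt{2}$ and $\sqrt{\E}\sin\theta=\mathrm{sgn}(r_1)\sqrt{V(r_1)}$, so that
\begin{equation*}
p_1^2=2E\cos^2\theta,\qquad V(r_1)=E\sin^2\theta,\qquad V'(r_1)^2=\frac{2V(r_1)}{q(E,\theta)^2}=\frac{2E\sin^2\theta}{q(E,\theta)^2},
\end{equation*}
where the last identity uses the defining relation $q(\E,\theta)=\sqrt{2V(r)}/|V'(r)|$. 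Under $\nu_{1,E}$ the angle $\theta$ has density proportional to $q(E,\theta)$ on $[0,2\pi)$, so for any observable $\tilde f$ of $(E,\theta)$,
\begin{equation*}
E_{\nu_{1,E}}[\tilde f]=\frac{\int_0^{2\pi}\tilde f(E,\theta)\,q(E,\theta)\,d\theta}{\int_0^{2\pi}q(E,\theta)\,d\theta}.
\end{equation*}

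First I would apply this to $\alpha_1(E)$. Since $p_1^2$ factors as $2E$ times a function of $\theta$ alone, one has
\begin{equation*}
\alpha_1(E)=2E\cdot\frac{\int_0^{2\pi}\cos^2\theta\,q(E,\theta)\,d\theta}{\int_0^{2\pi}q(E,\theta)\,d\theta}.
\end{equation*}
The hypothesis $d_-\le q(E,\theta)\le d_+$ (uniformly in $E$ and $\theta$) gives immediately
\begin{equation*}
\frac{d_-}{d_+}\cdot\frac{1}{2\pi}\int_0^{2\pi}\cos^2\theta\,d\theta\ \le\ \frac{\int_0^{2\pi}\cos^2\theta\,q\,d\theta}{\int_0^{2\pi}q\,d\theta}\ \le\ \frac{d_+}{d_-}\cdot\frac{1}{2\pi}\int_0^{2\pi}\cos^2\theta\,d\theta,
\end{equation*}
and since the angular mean of $\cos^2\theta$ equals $1/2$, we obtain $(d_-/d_+)E\le\alpha_1(E)\le(d_+/d_-)E$.

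Exactly the same scheme handles $\alpha_2(E)$: the representation above yields
\begin{equation*}
\alpha_2(E)=2E\cdot\frac{\int_0^{2\pi}\sin^2\theta\,q(E,\theta)^{-1}\,d\theta}{\int_0^{2\pi}q(E,\theta)\,d\theta},
\end{equation*}
and using $d_+^{-1}\le q^{-1}\le d_-^{-1}$ together with $d_-\le q\le d_+$ bounds the ratio above and below by universal constants (times $1/2$), giving $\alpha_2(E)\asymp E$. No obstacle is expected; the only ingredients are the explicit $(\E,\theta)$-parametrization of the microcanonical surface and the two-sided bound on $q$ which is a direct consequence of $0<\delta_-\le V''\le\delta_+<\infty$.
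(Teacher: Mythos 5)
Your proposal is correct and follows essentially the same route as the paper: both use the $(\E,\theta)$-parametrization with the density $q(E,\theta)\,d\theta$, together with the two-sided bound $d_-\le q\le d_+$. The only cosmetic difference is that the paper first bounds $V'(r)^2$ above and below by constant multiples of $V(r)$ and then integrates $V(r_1)=E\sin^2\theta$, whereas you substitute $V'(r)^2=2E\sin^2\theta/q^2$ directly into the integral; these are trivially equivalent rearrangements and yield the same conclusion (with slightly different, immaterial, explicit constants).
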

\begin{proof}
By the change of variables introduced above, 
\begin{equation*}
E_{\nu_{1,E}}[p_1^2]=\frac{\int_0^{2\pi} 2E \cos \theta^2  q(E,\theta) d\theta}{\int_0^{2\pi} q(E,\theta) d\theta}
\end{equation*}
and it is easy to show that $\frac{d_-}{d_+}E \le E_{\nu_{1,E}}[p_1^2] \le 2E$.
Similarly, 
\begin{equation*}
E_{\nu_{1,E}}[V'^2(r_1)] \le \frac{2}{d_-^2} E_{\nu_{1,E}}[V(r_1)] =\frac{2\int_0^{2\pi} E \sin \theta^2  q(E,\theta) d\theta}{d_-^2 \int_0^{2\pi} q(E,\theta) d\theta} \le \frac{2E}{d_-^2}
\end{equation*}
and
\begin{equation*}
E_{\nu_{1,E}}[V'^2(r_1)] \ge \frac{2}{d_+^2} E_{\nu_{1,E}}[V(r_1)] =\frac{2\int_0^{2\pi} E \sin \theta^2  q(E,\theta) d\theta}{d_+^2 \int_0^{2\pi} q(E,\theta) d\theta}  \ge  \frac{d_-E}{d_+^3}.
\end{equation*}
\end{proof}

The following is the main theorem in this section.
\begin{theorem}
There exists a positive constant $C$ such that
  \begin{equation}
    \label{eq:18}
     \begin{split}
    E_{\nu_{L,E}}[f^2] 
    \le C \sum_{k=1}^{L} E_{\nu_{L,E}}[(X_kf)^2] +
    C L^2 \sum_{k=1}^{L-1} E_{\nu_{L,E}}[(Y_{k,k+1}f)^2]
  \end{split}
  \end{equation}
for every positive integer $L$, every $E>0$, and every smooth function $f$ satisfying $E_{\nu_{L,E}}[f] =0$.
\end{theorem}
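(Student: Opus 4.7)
The plan is to split $f$ according to the vector of local energies and handle two contributions separately: fluctuations within each single-particle energy shell, which are controlled by the one-site Poincar\'e inequality of Lemma \ref{lem:2ch3} without any $L$-factor, and fluctuations of the energy profile $(\E_1,\ldots,\E_L)$ on the microcanonical simplex $\{\sum_k \E_k = LE\}$, which are controlled by the spectral gap of an induced nearest-neighbor energy-exchange dynamics that scales like $1/L^2$. Set $\pi f := E_{\nu_{L,E}}[f\mid \E_1,\ldots,\E_L]$, so that $E_{\nu_{L,E}}[f^2] = E_{\nu_{L,E}}[(f-\pi f)^2] + E_{\nu_{L,E}}[(\pi f)^2]$.

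Conditional on $(\E_1,\ldots,\E_L)$ the measure $\nu_{L,E}$ factorizes as $\bigotimes_{k=1}^L \nu_{1,\E_k}$, so tensorization of Lemma \ref{lem:2ch3} across sites yields
\[
 E_{\nu_{L,E}}[(f-\pi f)^2] \le C \sum_{k=1}^L E_{\nu_{L,E}}[(X_k f)^2],
\]
producing the first term on the right hand side of \eqref{eq:18}. For the second term, a direct computation shows that for any $g=g(\E_1,\ldots,\E_L)$ one has $Y_{k,k+1} g = p_k V'(r_{k+1})\,(\partial_{\E_{k+1}} - \partial_{\E_k}) g$, and Lemma \ref{lem:3ch3} identifies the conditional moment $E[p_k^2 V'(r_{k+1})^2 \mid \E_k,\E_{k+1}] \asymp \E_k\E_{k+1}$. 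Hence the full Dirichlet form dominates the induced energy-exchange Dirichlet form
\[
 \mathcal D^{\mathrm{en}}_L(\pi f) := \sum_{k=1}^{L-1} E_{\nu_{L,E}}\bigl[\E_k \E_{k+1}\,\bigl((\partial_{\E_k} - \partial_{\E_{k+1}})\pi f\bigr)^2\bigr]
\]
on the simplex, modulo cross terms produced by the non-commutativity of $\pi$ and $Y_{k,k+1}$, which are absorbed into $\sum_k E[(X_k f)^2]$ via Schwarz and a second application of Lemma \ref{lem:2ch3}.

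It remains to establish the spectral gap $E_{\nu_{L,E}}[(\pi f)^2] \le C L^2\, \mathcal D^{\mathrm{en}}_L(\pi f)$ for functions on the energy simplex, which I would prove by a Lu-Yau / Caputo induction on $L$. The base case $L=2$ reduces to a weighted one-dimensional Poincar\'e inequality on a compact interval with weight $\E_1\E_2$ against the marginal density of $\E_1$ given $\E_1+\E_2=2E$; Lemma \ref{lem:3ch3} together with the uniform bounds $d_-\le q(\E,\theta)\le d_+$ on the density established in the preamble give a gap independent of $E$. The induction step from $L-1$ to $L$ proceeds via a martingale decomposition conditioning on $\E_L$: the conditional variance of $\pi f$ given $\E_L$ is handled by the $(L-1)$-site inductive hypothesis on each level set, while the variance of $E[\pi f\mid\E_L]$ is estimated by a one-dimensional Poincar\'e inequality on the interval for $\E_L$ with weight given by the marginal density of $\E_L$ under $\nu_{L,E}$.

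The main obstacle is the sharp tracking of constants through the induction: the per-step multiplicative factor involves a comparison of marginal densities of $\E_L$ under $\nu_{L,E}$ versus $\E_{L-1}$ under $\nu_{L-1,E'}$ at the appropriate adjusted mean $E'$, and this comparison degrades by a factor bounded in terms of $\delta_+/\delta_-$ at each iteration. The technical hypothesis $\delta_-/\delta_+ > (3/4)^{1/16}$ enters here as the precise condition ensuring that the product of these per-step constants stays bounded, so that the recursion closes with a uniform $L^2$ scaling rather than blowing up with $L$.
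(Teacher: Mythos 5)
Your decomposition $f = \pi f + (f-\pi f)$ with $\pi f = E[f\mid \E_1,\ldots,\E_L]$ is essentially equivalent to the paper's sequential martingale decomposition through the $\sigma$-fields $\G_k=\sigma(\E_1,\ldots,\E_k,p_{k+1},r_{k+1},\ldots,p_L,r_L)$; both produce the same first term via the one-site gap of Lemma \ref{lem:2ch3}, and both leave the variance of a function of the energy profile to be controlled by an energy-exchange Dirichlet form via Lemma \ref{lem:3ch3}. So far the two routes match.

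The genuine divergence, and the gap in your proposal, is in how you would prove the spectral gap on the energy simplex, i.e.\ Theorem~\ref{thm:sgenergy}. The paper does \emph{not} run a Lu--Yau induction. It first invokes Caputo's spectral gap estimate (Lemma~\ref{Caputo}) for the \emph{long-range} binary-exchange walk, giving $E_{\mu_{L,E}}[g^2]\le \tfrac{C}{L}\sum_{i,j}E_{\mu_{L,E}}[(D_{i,j}g)^2]$, then telescopes the long jumps onto nearest neighbors at the cost of an $L^2$ factor (Lemma~\ref{telescopic}), and finally bounds each two-site block by the weighted Poincar\'e inequality (Lemma~\ref{lem:2state}). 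Your proposed martingale-on-$\E_L$ induction is a plausible alternative in principle, but it is exactly the step you do not carry out, and the uniformity in $E$ and in $L$ of the per-step constants is the whole difficulty; declaring it would ``close with a uniform $L^2$ scaling'' is asserting the conclusion.

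More concretely, your explanation of where the hypothesis $\delta_-/\delta_+ > (3/4)^{1/16}$ enters is wrong. That constant is not a Lu--Yau iteration budget. Writing $h_1(x) = e^{-x+b(x)}$ with $|b|_\infty \le \log(\delta_+/\delta_-)$, the condition $\delta_-/\delta_+ > (3/4)^{1/16}$ is exactly what makes $\tfrac49 e^{-16|b|_\infty} > \tfrac13$, which is the hypothesis of Caputo's theorem for the binary-collision walk (Theorem~3.1 and Example~3.1 of \cite{Ca08}). So the specific exponent $16$ and the threshold $3/4$ are artifacts of Caputo's matrix/tensorization argument, not of any marginal-density comparison through an induction on $L$. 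An inductive proof, if one exists, would in general produce a different (and likely worse) sufficient condition.

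Finally, you correctly flag the noncommutativity of $\pi$ and $Y_{k,k+1}$ and say the cross terms ``are absorbed into $\sum_k E[(X_k f)^2]$ via Schwarz and a second application of Lemma~\ref{lem:2ch3}.'' This is the one place where you are more explicit than the paper (which passes from $Y_{k,k+1}f_L$ to $Y_{k,k+1}f$ without comment), but your one-line claim is not a proof: the cross term $E[Y_{k,k+1}\pi f\cdot Y_{k,k+1}(f-\pi f)]$ involves, after integrating by parts with the antisymmetry of $Y_{k,k+1}$, the conditional covariance of $f-\pi f$ against quantities like $V'(r_{k+1})^2$ and $p_k^2V''(r_{k+1})$, which do not obviously reduce to $X_k$ Dirichlet forms. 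You should make this step precise before relying on it.
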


\begin{proof}
We start the proof by the usual martingale decomposition. Let $\G_k$ be the $\sigma$-field generated by variables 
$\{ \E_1, \dots, \E_k, p_{k+1}, r_{k+1}, \dots, p_L, r_L\}$. Define $f_k := E_{\nu_{L,E}}[f|\G_k]$ for $k=0,1,\cdots, L$. Note that $f_0 = f$ and $f_L = f_L(\E_1, \dots,\E_L)$. Then, we obtain
\begin{equation*}
  \label{eq:5}
  E_{\nu_{L,E}}[f^2] = \sum_{k=0}^{L-1} E_{\nu_{L,E}}[(f_k - f_{k+1})^2] +
  E_{\nu_{L,E}}[f_L^2].   
\end{equation*}
We analyze each term separately.

By Lemma \ref{lem:2ch3}, for any $k$
\begin{equation*}
  \label{eq:6}
  E_{\nu_{L,E}}[(f_k - f_{k+1})^2|\G_k] \le C  
  E_{\nu_{L,E}}[(X_{k+1}f_{k})^2|\G_k] 
\end{equation*}
and therefore we have
\begin{equation*}
  \label{eq:7}
  \begin{split}
    E_{\nu_{L,E}}[f^2] \le C \sum_{k=1}^{L}E_{\nu_{L,E}}[(X_kf_{k-1})^2] +
    E_{\nu_{L,E}}[f_L^2]\\
    \le C \sum_{k=1}^{L} E_{\nu_{L,E}}[(X_kf)^2] +
    E_{\nu_{L,E}}[f_L^2].
  \end{split}
\end{equation*}
So we are left to estimate $E_{\nu_{L,E}}[f_L^2]$ in terms of the
Dirichlet form \linebreak $\sum_{k=1}^{L-1} E_{\nu_{L,E}} [(Y_{k,k+1}f_L)^2]$.

Observe that $Y_{k,k+1} f_L = p_k V'(r_{k+1}) \left(\partial_{\E_{k}} - \partial_{\E_{k+1}} \right) f_L (\E_1,\dots,\E_L)$.
Since $\nu_{L,E}$ is the conditional probability of the product measure $\nu^{L}_\beta$,
\begin{equation*}
  \label{eq:9}
    E_{\nu_{L,E}}[p_k^2V'(r_{k+1})^2|\G_L] = E_{\nu_{1,\E_k}}[p^2]
    E_{\nu_{1,\E_{k+1}}} [V'^2(r)]
    = \alpha_1(\E_{k})\alpha_2(\E_{k+1}). 
\end{equation*}

By Lemma \ref{lem:3ch3}, the Dirichlet form  $\sum_{k=1}^{L-1}  E_{\nu_{L,E}}[
(Y_{k,k+1}f_L)^2]$, is equivalent to 
\begin{equation*}
  \label{eq:12}
   \sum_{k=1}^{L-1} E_{\nu_{L,E}} \left[ \E_{k}\E_{k+1} \left\{ \left(\partial_{\E_{k}}
    - \partial_{\E_{k+1}} \right) f_L \right\}^2\right]. 
\end{equation*}

Now the problem is reduced to the estimates of the spectral gap for
the generator of an autonomous energy dynamics, depending only on
variables $\E_1, \dots, \E_L$. Since we can write the probability
distribution $\nu_{L,E}(\cdot|\G_L)$ on  
$\{(\E_1, \dots, \E_L): \sum_i \E_i = LE\}$ as the product measure
$\prod_{i=1}^L h_1(x_i) dx_i$ (or $\prod_{i=1}^L h_{\beta}(x_i) dx_i$
for any $\beta$) conditioned on the same surface $\{\sum_{i=1}^L x_i =
EL\}$,   
Theorem \ref{thm:sgenergy} in the next subsection completes the proof.
\end{proof}

\subsection{Spectral gap for the energy dynamics}
\label{sec:spectral-gap-energy}
Consider the product measure $\prod_{i=1}^L h_1(x_i) dx_i$ on $\R_+^L$ and $d\mu_{L,E}$ the 
conditional distribution of it on the surface 
$\Sigma_{L,E}=\{\sum_{i=1}^L x_i = LE\}$.
We have the following expression
$$
d\mu_{L,E} = \prod_{i=1}^L h(x_i) d\lambda_{L,E}(x_1, \dots, x_L)
$$
where $d\lambda_{L,E}$ is the uniform measure on the surface 
$\Sigma_{L,E}$.
\begin{theorem}\label{thm:sgenergy}
There exists a positive constant $C$ such that
\begin{equation*}
  \label{eq:14}
  E_{\mu_{L,E}}[g^2] \le CL^2 \sum_{i=1}^{L-1} E_{\mu_{L,E}}\left[x_i x_{i+1}
  \left(\partial_{x_i} g -\partial_{x_{i+1}}g\right)^2\right]
\end{equation*}
for every positive integer $L$, every $E>0$ and every smooth function $g:\Sigma_{L,E} \to \R$ satisfying $E_{\mu_{L,E}}[g] = 0$.
\end{theorem}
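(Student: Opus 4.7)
I would approach Theorem \ref{thm:sgenergy} by the standard two-step strategy: first establish a spectral gap for the mean-field (complete-graph) version of the exchange dynamics, and then apply a moving-particle lemma to pass from complete-graph to nearest-neighbor exchanges at the cost of a factor $L^2$.

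The first step is a reduction to a reference measure. Since $\frac{\delta_-}{\delta_+} e^{-x} \le h_1(x) \le \frac{\delta_+}{\delta_-} e^{-x}$, the natural reference is the Dirichlet$(1,\ldots,1)$ measure $\mu^*_{L,E}$ on $\Sigma_{L,E}$ (i.e.\ the case $h_1(x)=e^{-x}$, which is just the uniform measure on $\Sigma_{L,E}$). A global Radon–Nikodym comparison is useless because $\prod_i (h_1(x_i)/e^{-x_i})$ has range of order $(\delta_+/\delta_-)^L$. Instead I would compare locally: conditioning on the two-site sum $x_i+x_{i+1}$, the conditional density on $[0,x_i+x_{i+1}]$ under $\mu_{L,E}$ differs from the uniform one by a factor bounded by $(\delta_+/\delta_-)^2$ uniformly in $L$. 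This controls both one-site variances and the nearest-neighbor Dirichlet integrands sitewise, so that the problem is reduced to proving the target inequality for $\mu^*_{L,E}$.

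In the second step, for $\mu^*_{L,E}$ I would establish the mean-field spectral gap
\[
E_{\mu^*_{L,E}}[g^2] \;\le\; \frac{C}{L}\sum_{i\neq j} E_{\mu^*_{L,E}}\!\left[x_i x_j\bigl(\partial_{x_i}g-\partial_{x_j}g\bigr)^2\right],
\]
using the high symmetry of the Dirichlet distribution: one can either diagonalize the mean-field generator explicitly via an orthogonal basis on the simplex, or invoke a Bakry–Emery / curvature-dimension argument. Then the third step is the moving-particle lemma: writing $\partial_{x_i}-\partial_{x_j}=\sum_{k=i}^{j-1}(\partial_{x_k}-\partial_{x_{k+1}})$ and applying Cauchy–Schwarz gives, after taking expectations,
\[
E_{\mu^*_{L,E}}\!\left[x_i x_j(\partial_{x_i}g-\partial_{x_j}g)^2\right] \;\le\; C\,|j-i|\sum_{k=i}^{j-1} E_{\mu^*_{L,E}}\!\left[x_k x_{k+1}(\partial_{x_k}g-\partial_{x_{k+1}}g)^2\right].
\]
Summing over pairs $(i,j)$ yields a factor $L^3$ on the right, and combining with the mean-field gap gives the desired $L^2$ factor.

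The main obstacle is the moving-particle step: the weight $x_ix_j$ cannot be pointwise bounded by intermediate weights $x_k x_{k+1}$, since some intermediate $x_k$ may be arbitrarily small even when $x_i$ and $x_j$ are not. The resolution is to take expectations before bounding, and exploit the full permutation symmetry of $\mu^*_{L,E}$ so that the ``conductance'' $x_k x_{k+1}$ averages correctly; the bounded-density reduction from Step 1 then transports the estimate back to $\mu_{L,E}$.
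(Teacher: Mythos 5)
Your high-level plan (mean-field gap, then a moving-particle reduction to nearest-neighbor) matches the skeleton of the paper's argument, but the moving-particle step as you propose it has a genuine gap, and it is precisely the one you flag at the end without resolving.

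The paper handles the mean-field gap not by reducing to the Dirichlet$(1,\dots,1)$ measure but by citing Caputo's result directly for the measure $\mu_{L,E}$, which gives
$E_{\mu_{L,E}}[g^2] \le \frac{C}{L}\sum_{i,j}E_{\mu_{L,E}}[(D_{i,j}g)^2]$
in terms of the \emph{conditional-expectation} operators $D_{i,j}g = E_{\mu_{L,E}}[g\,|\,\mathcal F_{i,j}] - g$, and it then telescopes these $D_{i,j}$'s down to nearest-neighbor ones (Lemma on telescoping) \emph{before} ever introducing the gradient form. The key decomposition is
\begin{equation*}
g(R^t_{i,j}x)-g(x)=\big\{g(\tilde\sigma^{i,j-1}(R^t_{j-1,j}(\sigma^{i,j-1}x)))-g(R^t_{j-1,j}(\sigma^{i,j-1}x))\big\}+\big\{g(R^t_{j-1,j}(\sigma^{i,j-1}x))-g(\sigma^{i,j-1}x)\big\}+\big\{g(\sigma^{i,j-1}x)-g(x)\big\},
\end{equation*}
where $\sigma^{i,j-1}$, $\tilde\sigma^{i,j-1}$ are chains of nearest-neighbor \emph{swaps} $\pi^{k,k+1}$, each of which preserves $\mu_{L,E}$. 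Swaps are controlled by $E_{\mu_{L,E}}[(g\circ\pi^{k,k+1}-g)^2]\le 4E_{\mu_{L,E}}[(D_{k,k+1}g)^2]$ with no degenerate conductance appearing. Only \emph{after} this reduction, and only for a single pair $(k,k+1)$, is $(D_{k,k+1}g)^2$ converted to $x_kx_{k+1}(\partial_{x_k}g-\partial_{x_{k+1}}g)^2$ via a one-dimensional Poincar\'e inequality on $[0,E]$ (Lemma on two sites). The degenerate weight therefore never has to be pushed through a long chain.

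Your version telescopes the \emph{gradient} form and tries to bound $E[x_ix_j(\partial_{x_k}g-\partial_{x_{k+1}}g)^2]$ by $E[x_kx_{k+1}(\partial_{x_k}g-\partial_{x_{k+1}}g)^2]$. This is not justified by exchangeability of $\mu^*_{L,E}$: permuting coordinates changes $g$ as well, so the conductances $x_ix_j$ and $x_kx_{k+1}$ cannot be interchanged in expectation while holding the gradient factor fixed. Already for $L=3$, $E[x_1x_3(\partial_{x_1}g-\partial_{x_2}g)^2]$ and $E[x_1x_2(\partial_{x_1}g-\partial_{x_2}g)^2]$ are not comparable for general smooth $g$, since the first integrand survives near $\{x_2=0\}$ while the second vanishes there. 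So the step ``take expectations and exploit permutation symmetry'' does not, as stated, close the argument. If you want to keep a gradient-level telescoping you would need to make the intermediate terms appear with the correct conductances from the start, which is exactly what the paper's route through the $D_{i,j}$'s and nearest-neighbor swaps accomplishes. Your Step~1 density comparison is plausible but, in the paper's argument, unnecessary; the real missing idea is the swap/$D_{i,j}$ telescoping of Step~2--3.
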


To prove this, we first refer Caputo's result \cite{Ca08}. Recall that $\delta_-/ \delta_+ e^{-x} \le  h_1(x) \le \delta_+/ \delta_- e^{-x}$. Then, we can define a bounded measurable function $b(x)$ satisfying $\int_{0}^{\infty} dx \exp(-x+b(x)) =1$ by $h_1(x)=\exp(-x+b(x))$. The product measure $\prod_{i=1}^L h_1(x_i) dx_i$ is in the class studied in Example 3.1 of \cite{Ca08}. 

Let $E_{i,j}$ and $D_{i,j}$ be operators
defined by $E_{i,j}f=E_{\mu_{L,E}}[f | \F_{i,j}]$ and $D_{i,j}f=E_{i,j}f-f$ where $\F_{i,j}$ is the $\sigma$-algebra generated by variables $\{x_k \}_{k \neq i,j}$. From Theorem 3.1 and the argument of Example 3.1 in \cite{Ca08}, the next lemma follows easily. 
\begin{lemma}[Caputo, \cite{Ca08}]
  \label{Caputo}
If $\frac{4}{9}e^{-16|b|_{\infty}} > \frac{1}{3}$, then there exists a positive constant $C$ such that
  \begin{equation*}
      E_{\mu_{L,E}}[g^2] \le \frac CL \sum_{i, j=1}^L E_{\mu_{L,E}}[(D_{i,j}g)^2]
 \end{equation*}
for every $E>0$, every positive integer $L$ and every smooth function $g:\Sigma_{L,E} \to \R$ satisfying $E_{\mu_{L,E}}[g] = 0$.
\end{lemma}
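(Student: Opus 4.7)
The plan is to deduce the nearest-neighbor Poincaré inequality with $L^2$ scaling from Caputo's mean-field Poincaré inequality (Lemma~\ref{Caputo}) via a path-decomposition argument along the chain.

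First, I would invoke Lemma~\ref{Caputo}: under the standing hypothesis $\delta_-/\delta_+ > (3/4)^{1/16}$, which is precisely the condition $\frac{4}{9} e^{-16|b|_\infty} > \frac{1}{3}$ needed there, it gives
\[
E_{\mu_{L,E}}[g^2] \le \frac{C}{L} \sum_{i,j=1}^L E_{\mu_{L,E}}[(D_{i,j}g)^2].
\]
The task is thus reduced to estimating each mean-field Dirichlet form $E[(D_{i,j}g)^2]$ in terms of the weighted nearest-neighbor gradient forms, and then showing that the resulting double sum is bounded by $CL^3 \sum_k E[x_k x_{k+1}(\partial_{x_k}g - \partial_{x_{k+1}}g)^2]$, so that the $1/L$ factor yields the desired $L^2$ constant.

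For a fixed pair $(i,j)$, the conditional measure $\mu_{L,E}(\cdot|\F_{i,j})$ is supported on the one-dimensional segment $\{x_i + x_j = s\}$, with density proportional to $e^{b(x_i)+b(s-x_i)}$, hence (by boundedness of $b$) equivalent to the uniform density on $[0,s]$ up to constants depending only on $|b|_\infty$. After the rescaling $v = x_i/s \in [0,1]$, the classical Jacobi Poincaré inequality on $[0,1]$ with weight $v(1-v)$ has a universal constant, yielding
\[
E[(D_{i,j}g)^2 \mid \F_{i,j}] \le C\, E[x_i x_j(\partial_{x_i}g - \partial_{x_j}g)^2 \mid \F_{i,j}].
\]
I would then use the tangent-vector-field identity $\partial_{x_i} - \partial_{x_j} = \sum_{k=i}^{j-1}(\partial_{x_k} - \partial_{x_{k+1}})$ on $\Sigma_{L,E}$ together with Cauchy-Schwarz, producing
\[
(\partial_{x_i}g - \partial_{x_j}g)^2 \le |j-i| \sum_{k=i}^{j-1}(\partial_{x_k}g - \partial_{x_{k+1}}g)^2.
\]

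The hard part will be the weight mismatch: after summing, each nearest-neighbor term is multiplied by the accumulated weight $W_k = \sum_{(i,j): i \le k < j}|j-i|\,x_i x_j$, while the theorem requires $x_k x_{k+1}$. A pointwise bound on $W_k/(x_k x_{k+1})$ is impossible because $x_k x_{k+1}$ may vanish. The resolution is to strengthen the second step by replacing the 1D Poincaré on the segment by a block-level Poincaré inequality on the conditional measure supported on the sub-simplex $\{\sum_{k=i}^j x_k = t\}$: since $\F^c_{\{i,\dots,j\}} \subset \F_{i,j}$, we have $E[(D_{i,j}g)^2] \le E[\mathrm{Var}(g \mid \F^c_{\{i,\dots,j\}})]$, and the block Poincaré (proved inductively on block length) delivers the weighted nearest-neighbor Dirichlet form directly. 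Combining this with the $1/L$ factor from Caputo and a careful pair-counting argument using the energy constraint $\sum_k x_k = LE$ then produces the claimed $L^2$ scaling.
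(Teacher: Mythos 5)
You have misidentified the target. The statement you were asked to prove \emph{is} Lemma~\ref{Caputo}: the mean-field Poincar\'e inequality
\[
E_{\mu_{L,E}}[g^2] \le \frac{C}{L}\sum_{i,j=1}^{L} E_{\mu_{L,E}}[(D_{i,j}g)^2],
\]
attributed to Caputo. Your very first step is to ``invoke Lemma~\ref{Caputo}''; from there you spend the rest of the proposal turning the mean-field Dirichlet form into the weighted nearest-neighbor one with an $L^2$ scale. That is not a proof of the lemma --- it is a sketch of how the lemma is \emph{used} to prove Theorem~\ref{thm:sgenergy}, which is a different (and downstream) statement. Nothing in your argument addresses why the mean-field inequality itself should hold, nor what role the numerical condition $\tfrac{4}{9}e^{-16|b|_\infty} > \tfrac13$ plays.

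In the paper, Lemma~\ref{Caputo} is not re-derived at all: it is obtained directly from Theorem~3.1 and Example~3.1 of \cite{Ca08}, the point being that the conditioned measure $\mu_{L,E}$ is of exactly the product-conditioned form covered by Caputo's general result once one writes $h_1(x)=\exp(-x+b(x))$ with $b$ bounded (and the hypothesis $\tfrac49 e^{-16|b|_\infty}>\tfrac13$ is precisely Caputo's sufficient condition, satisfied here thanks to assumption~(iii) on $V$). A ``proof'' of this lemma should therefore either be a citation accompanied by the verification that the present setting matches Caputo's hypotheses, or a genuine re-derivation of the Kac-walk / mean-field spectral gap; it cannot presuppose the lemma. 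For what it is worth, even as a proof of Theorem~\ref{thm:sgenergy} your proposal has a real gap where you acknowledge it: the ``weight mismatch'' is not resolved --- you merely assert the existence of a block Poincar\'e inequality ``proved inductively on block length'' without giving it, whereas the paper sidesteps the weight problem entirely by telescoping the conditional-expectation operators $D_{i,j}$ through nearest-neighbor exchange maps $\pi^{k,k+1}$ (Lemma~\ref{telescopic}) before ever converting to the gradient form (Lemma~\ref{lem:2state}).
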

\begin{remark}
Since $|b|_{\infty} \le \log(\frac{\delta_+}{\delta_-})$, the condition $\delta_-/ \delta_+ > (3/4)^{1/16}$ is a sufficient condition for the assumption of Lemma \ref{Caputo}.
\end{remark}
Next, we show that we can take a telescopic sum.
\begin{lemma}
  \label{telescopic}
There exists a positive constant $C$ such that
  \begin{equation*}
      \frac 1L \sum_{i, j=1}^L E_{\mu_{L,E}}[(D_{i,j}g)^2] 
       \le CL^2 \sum_{i=1}^{L-1} E_{\mu_{L,E}}[(D_{i,i+1}g)^2]
 \end{equation*}
  for every $E>0$, every positive integer $L$ and every smooth function $g:\Sigma_{L,E} \to \R$.
 \end{lemma}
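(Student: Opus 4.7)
This lemma is the ``moving-particle'' step that compares the all-pairs Dirichlet form (controlled by Lemma \ref{Caputo}) with the nearest-neighbor one, losing the expected $L^{2}$ factor. The plan is to first establish, for every $1\le i<j\le L$, a single-pair bound
\begin{equation}\label{eq:sp-plan}
E_{\mu_{L,E}}\bigl[(D_{i,j}g)^{2}\bigr] \ \le\ C\,(j-i)\sum_{k=i}^{j-1} E_{\mu_{L,E}}\bigl[(D_{k,k+1}g)^{2}\bigr]
\end{equation}
with $C$ independent of $i,j,L,E$. Granting \eqref{eq:sp-plan}, summing over $(i,j)$ with $i<j$ and exchanging the order of summation uses the elementary identity $\sum_{i=1}^{k}\sum_{j=k+1}^{L}(j-i)=\tfrac{1}{2}k(L-k)L\le L^{3}/8$ valid for each fixed $k$, which gives $\sum_{i,j=1}^{L}E_{\mu_{L,E}}\bigl[(D_{i,j}g)^{2}\bigr]\le C'L^{3}\sum_{k=1}^{L-1}E_{\mu_{L,E}}\bigl[(D_{k,k+1}g)^{2}\bigr]$ after using that $D_{i,i}g=0$ and $D_{i,j}=D_{j,i}$. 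Dividing by $L$ yields the announced $CL^{2}$ factor.

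To prove \eqref{eq:sp-plan} I would adapt the classical moving-particle / path argument for Kawasaki-type conservative dynamics, bringing the pair $(i,j)$ down to a nearest-neighbor pair step by step along consecutive indices. At each step, all coordinates outside the triple of indices currently in play are frozen, and one reduces to a local Poincar\'e inequality on the two-dimensional simplex $\{y+z+w=s\}\subset\R_{+}^{3}$ carrying the measure proportional to $h_{1}(y)h_{1}(z)h_{1}(w)$. This three-site Poincar\'e inequality is exactly Lemma \ref{Caputo} specialized to $L=3$, and holds with constant uniform in $s$ and in the frozen environment thanks to the two-sided exponential bound $\tfrac{\delta_{-}}{\delta_{+}}\beta e^{-\beta x}\le h_{1}(x)\le \tfrac{\delta_{+}}{\delta_{-}}\beta e^{-\beta x}$ already used in Section \ref{sec:spectral-gap-bound}. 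Careful bookkeeping along the path, together with a Cauchy--Schwarz inequality to distribute the factor $(j-i)$ across the $(j-i)$ edges, produces the constant $C(j-i)$ in \eqref{eq:sp-plan}.

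The main obstacle is the non-commutation of the operators $E_{i,j}$ for different pairs: since two such conditional expectations in general share a coordinate, the naive decomposition $\|g-E_{i,j}g\|^{2}\le 2\|g-E_{i,j-1}g\|^{2}+2\|E_{i,j-1}g-E_{i,j}g\|^{2}$ cannot be iterated, because the residual $\|E_{i,j-1}g-E_{i,j}g\|^{2}$ does not reduce to a single nearest-neighbor Dirichlet form. The resolution is to work locally at each step on a three-site subsystem and invoke its uniform spectral gap, rather than trying to rearrange conditional expectations globally. Once this three-site gap is available --- which is precisely the input provided by the exponential comparison for $h_{1}$ --- the path argument closes and \eqref{eq:sp-plan} yields the lemma.
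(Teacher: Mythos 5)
Your overall strategy coincides with the paper's: establish a single-pair bound of the moving-particle type and then sum over all pairs, using the elementary counting identity to recover the $L^{2}$ prefactor after dividing by $L$. Your summation bookkeeping, with $\sum_{i\le k<j}(j-i)=\tfrac12 k(L-k)L$, is exactly right, and your version of the single-pair bound (with the optimal weight $j-i$) is if anything slightly sharper than what the paper uses — the paper settles for the cruder factor $L$ in place of $j-i-1$ when it applies Cauchy--Schwarz along the permutation path, which still gives the $CL^{2}$ bound after summation.

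The place where you differ from the paper, and where your plan is not yet secure, is the mechanism inside the single-pair bound. The paper never invokes a three-site Poincar\'e inequality. It writes $E_{i,j}g$ in integral form via a resampling operator $R^{t}_{i,j}$ and then decomposes $g(R^{t}_{i,j}x)-g(x)$ \emph{exactly} into three pieces: conjugate $R^{t}_{j-1,j}$ by the cyclic permutation $\sigma^{i,j-1}=\pi^{j-2,j-1}\circ\cdots\circ\pi^{i,i+1}$, so that the difference splits into a forward permutation path, a genuine nearest-neighbor resampling at $(j-1,j)$, and a backward permutation path. The permutation pieces are then telescoped into single swaps $\pi^{k,k+1}$, and the whole thing is closed by the purely two-site estimate $E_{\mu_{L,E}}\bigl[(g\circ\pi^{k,k+1}-g)^{2}\bigr]\le 4\,E_{\mu_{L,E}}\bigl[(D_{k,k+1}g)^{2}\bigr]$, which follows from the symmetry of the conditional measure under $\pi^{k,k+1}$. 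One single Cauchy--Schwarz along the length of the path then gives the linear factor. This is a concrete decomposition into $O(j-i)$ single-bond increments; no Poincar\'e inequality is iterated.

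By contrast, your plan asks for a ``local Poincar\'e inequality on the two-dimensional simplex'' to be applied ``at each step.'' If you iterate a one-step three-site estimate of the form $E[(D_{a,c}g)^{2}]\le C\bigl(E[(D_{a,b}g)^{2}]+E[(D_{b,c}g)^{2}]\bigr)$, the constant compounds and you end up with $C^{\,j-i}$, not $C(j-i)$; the Cauchy--Schwarz you invoke ``to distribute the factor'' only produces a linear factor if you first exhibit an explicit decomposition of $g(R^{t}_{i,j}x)-g(x)$ into $O(j-i)$ pieces each of which is controlled by a single bond Dirichlet form. That decomposition is the missing ingredient, and it is precisely what the permutation/exchange construction supplies. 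So the gap is not in the count or the summation, but in not specifying how the $(j-i)$-term decomposition is built; once you replace the three-site Poincar\'e step by the two-site exchange bound $E[(g\circ\pi^{k,k+1}-g)^{2}]\le 4E[(D_{k,k+1}g)^{2}]$ applied along the permutation path, your argument closes and recovers the paper's proof (with the sharper $(j-i)$ weight).
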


\begin{proof}
First, we rewrite the term $E_{i,j}g$ in an integral form:
\begin{equation*}
E_{i,j}g (x)= \frac{1}{\Xi_{x_i+x_j}} \int^1_0  g(R_{i,j}^{t} x)h((x_i+x_j) t)
h((x_i+x_j)(1-t)) dt
\end{equation*}
where $\Xi_a= \int^1_0  h(a t)
h(a (1-t)) dt$ and $R_{i,j}^{t}x \in \R_+^L$ is a configuration defined by
\begin{equation*}  
 (R_{i,j}^t x )_k = \begin{cases}
	x_k  & \text{if $k \neq i,j$,} \\
	(x_i+x_j)t & \text{if $k=i$,} \\
	(x_i+x_j) (1-t)  & \text{if $k=j$.} 
\end{cases}
\end{equation*}

Then, by Schwarz's inequality we have 
\begin{equation*}
 \begin{split}
(D_{i,j}g (x) ) ^2 &= ( \frac{1}{\Xi_{x_i+x_j}} \int^1_0  \{ g(R_{i,j}^{t} x)-g(x) \} h((x_i+x_j) t) h((x_i+x_j)(1-t)) dt )^2 \\
    & \le \frac{1}{\Xi_{x_i+x_j}} \int^1_0  \{ g(R_{i,j}^{t} x)-g(x) \}^2 h((x_i+x_j) t) h((x_i+x_j)(1-t)) dt. 
\end{split}
\end{equation*}

Now, we introduce operators $\pi^{i,j}$, $\sigma^{i,j}$ and $\tilde{\sigma}^{i,j}: \R_+^L \to \R_+^L$ for $i <j$ as 
\begin{equation*}  
 (\pi^{i,j} x )_k = \begin{cases}
	x_k  & \text{if $k \neq i,j$,} \\
	x_j & \text{if $k=i$,} \\
	x_i  & \text{if $k=j,$} 
\end{cases}
\end{equation*}
$\sigma^{i,j}:= \pi^{j-1,j} \circ \pi^{j-2,j-1} \cdots \circ \pi^{i,i+1}$ and $\tilde{\sigma}^{i,j}:= \pi^{i,i+1} \circ \pi^{i+1,i+2} \cdots \circ \pi^{j-1,j}$.
With these notations, for any $i<j$, we can rewrite the term $g(R_{i,j}^{t} x)-g(x)$ as 
\begin{equation*}
\begin{split}
g(R_{i,j}^{t} x)-g(x) & = \{g(\tilde{\sigma}^{i,j-1}( R_{j-1,j}^{t}( \sigma^{i,j-1}x )))-g(R_{j-1,j}^{t}( \sigma^{i,j-1}x ))\} \\
 &+ \{g(R_{j-1,j}^{t}( \sigma^{i,j-1}x ))-g( \sigma^{i,j-1}x )\} + \{g( \sigma^{i,j-1}x )-g(x)\}.
\end{split}
\end{equation*}
Therefore, we can bound the term $E_{\mu_{L,E}} [ (D_{i,j}g(x))^2 ]$ from above by
\begin{equation}
\label{schwarz}
\begin{split}
3 E_{\mu_{L,E}} [ \frac{1}{\Xi_{x_i+x_j}} & \int^1_0 \{g(\tilde{\sigma}^{i,j-1}( R_{j-1,j}^{t}( \sigma^{i,j-1}x )))-g(R_{j-1,j}^{t}( \sigma^{i,j-1}x ))\}^2 \\
     &  h((x_i+x_j) t) h((x_i+x_j)(1-t)) dt ] \\
    +  3 E_{\mu_{L,E}} [ \frac{1}{\Xi_{x_i+x_j}} &\int^1_0  \{g(R_{j-1,j}^{t}( \sigma^{i,j-1}x ))  -g( \sigma^{i,j-1}x )\}^2 \\
    & h((x_i+x_j) t) h((x_i+x_j)(1-t)) dt]\\
    + 3 E_{\mu_{L,E}} [ \frac{1}{\Xi_{x_i+x_j}} & \int^1_0   \{g(\sigma^{i,j-1} x )   -g( x )\}^2 h((x_i+x_j) t) h((x_i+x_j)(1-t)) dt].
\end{split}
\end{equation}
We estimate three terms separately.
The last term of equation (\ref{schwarz}) is equal to
\begin{equation*}
3 E_{\mu_{L,E}} [ \{g(\sigma^{i,j-1}x )-g( x )\}^2 ]
\end{equation*}
and therefore bounded from above by
\begin{equation*}
3L \sum_{k=i}^{j-2}E_{\mu_{L,E}} [ \{g(\pi^{k,k+1}x )-g( x )\}^2 ].
\end{equation*}
By simple computations, we obtain that 
\begin{equation*}
\begin{split}
& E_{\mu_{L,E}} [ \{g(\pi^{k,k+1}x )-g( x )\}^2 ] \\
&=  E_{\mu_{L,E}} [ \{g(\pi^{k,k+1}x )-(E_{k.k+1}g)(\pi^{k,k+1}x)+(E_{k.k+1}g)(x)- g( x )\}^2 ] \\
& \le 2 E_{\mu_{L,E}} [ \{g(\pi^{k,k+1}x )-(E_{k.k+1}g)(\pi^{k,k+1}x) \}^2] + 2E_{\mu_{L,E}}  [\{(E_{k.k+1}g)(x)- g( x )\}^2 ]  \\
& = 4E_{\mu_{L,E}}  [(D_{k.k+1}g)^2 ].
\end{split}
\end{equation*}
By the change of variable with $y= \sigma^{i,j-1}x$, we can rewrite the second term of equation (\ref{schwarz}) as
\begin{equation*}
\begin{split}
 3 E_{\mu_{L,E}} & [ \frac{1}{\Xi_{y_{j-1}+y_j}} \int^1_0   \{g(R_{j-1,j}^{t}y)-g( y)\}^2 h((y_{j-1}+y_j) t) h((y_{j-1}+y_j)(1-t)) dt] \\
 & = 3 E_{\mu_{L,E}}[E_{j,j+1}(g^2)-2gE_{j.j+1}g+g^2] \\
 &= 6E_{\mu_{L,E}}[g^2-(E_{j.j+1}g)^2]= 6 E_{\mu_{L,E}} [(D_{j,j+1}g)^2].
 \end{split}
\end{equation*}
Similarly, the first term of equation (\ref{schwarz}) is rewritten as 
\begin{equation*}
\begin{split}
3 E_{\mu_{L,E}} & [ \frac{1}{\Xi_{y_{j-1}+y_j}}  \int^1_0  \{g(\tilde{\sigma}^{i,j-1}( R_{j-1,j}^{t}y))-g(R_{j-1,j}^{t}y )\}^2  \\ 
 & h((y_{j-1}+y_j) t) h((y_{j-1}+y_j)(1-t)) dt ] \\
= 3 E_{\mu_{L,E}} & [ E_{j,j+1}( \{g \circ \tilde{\sigma}^{i,j-1}-g \}^2 )  ] = 3 E_{\mu_{L,E}} [ \{g \circ \tilde{\sigma}^{i,j-1}-g \}^2  ] .
 \end{split}
\end{equation*}
In the same way as the first term of (\ref{schwarz}),
it is bounded from above by \linebreak $12L \sum_{k=i}^{j-2} E_{\mu_{L,E}} [ (D_{k,k+1}g)^2  ] $.
Therefore, we complete the proof.
\end{proof}

\begin{lemma}\label{lem:2state}
There exists a constant $C$ such that
  \begin{equation} \label{eq:twosg}
      E_{\mu_{2,E}} [(D_{1,2}g)^2]  \le C E_{\mu_{2,E}} [ x_1 x_2 (\partial_{x_1}g-\partial_{x_2}g)^2 ].
 \end{equation}
 for every $E>0$ and every smooth function $g:\Sigma_{2,E} \to \R$. 
 \end{lemma}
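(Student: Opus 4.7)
The plan is to reduce Lemma \ref{lem:2state} to a classical weighted Poincar\'e inequality on $[0,1]$ by an explicit change of coordinates on the one-dimensional surface $\Sigma_{2,E}$. First I would note that for $L=2$ the $\sigma$-algebra $\mathcal{F}_{1,2}$ is trivial, so $D_{1,2}g = E_{\mu_{2,E}}[g] - g$ and the left-hand side equals $\operatorname{Var}_{\mu_{2,E}}(g)$. Parametrizing $\Sigma_{2,E}$ by $t \in [0,1]$ via $(x_1,x_2) = (2Et,\,2E(1-t))$ and setting $f(t) := g(2Et, 2E(1-t))$, the density of $t$ under $\mu_{2,E}$ is proportional to $h_1(2Et)\,h_1(2E(1-t)) = e^{-2E + b(2Et) + b(2E(1-t))}$. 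Since $\|b\|_\infty$ is bounded by $\log(\delta_+/\delta_-)$, after normalization the factor $e^{-2E}$ cancels and the resulting density $\rho_E(t)$ is bounded above and below by positive constants depending only on $\delta_\pm$, uniformly in $E$.

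A direct computation using $f'(t) = 2E[\partial_{x_1} - \partial_{x_2}]g$ and $x_1 x_2 = 4E^2 t(1-t)$ shows that the factors of $E$ cancel, producing
\[
\operatorname{Var}_{\mu_{2,E}}(g) = \int_0^1 \bigl(f - \bar f_{\rho_E}\bigr)^2 \rho_E(t)\, dt, \qquad E_{\mu_{2,E}}\bigl[x_1 x_2 (\partial_{x_1} g - \partial_{x_2} g)^2\bigr] = \int_0^1 t(1-t)\, f'(t)^2\, \rho_E(t)\, dt.
\]
Using the uniform two-sided bound on $\rho_E$ together with the minimization property of the $\rho_E$-mean (which lets me replace $\bar f_{\rho_E}$ by the Lebesgue mean $\bar f$ at the cost of a constant), the claim reduces to the weighted Poincar\'e inequality on $[0,1]$ with Lebesgue measure:
\[
\int_0^1 (f - \bar f)^2\, dt \;\le\; C_* \int_0^1 t(1-t)\, f'(t)^2\, dt,
\]
where $\bar f = \int_0^1 f\, dt$.

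The final step, which is the only non-routine point but entirely classical, is to verify this Jacobi-type Poincar\'e inequality. The bilinear form $\int_0^1 t(1-t)\, f'(t)^2\, dt$ is the Dirichlet form of the symmetric Jacobi operator $Lf = \frac{d}{dt}\bigl(t(1-t)\, f'\bigr)$ on $[0,1]$ with reversible Lebesgue measure, whose eigenfunctions are the shifted Legendre polynomials $P_n(2t-1)$ with eigenvalues $-n(n+1)$; the spectral gap is therefore $2$ and one obtains $C_* = 1/2$. Multiplying by the constant coming from the two-sided bound on $\rho_E$ yields the constant $C$ in the statement and concludes the proof. The main obstacle is conceptual rather than technical: one must recognize that the degeneracy of the weight $x_1 x_2$ at the boundary of $\Sigma_{2,E}$ is exactly matched by the Jacobi structure that remains, after rescaling, uniform in the energy $E$.
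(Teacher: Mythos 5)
Your proof is correct and follows the same overall reduction as the paper: use the translation invariance of both sides (equivalently, the variance minimization property) together with the uniform two-sided comparison between $\mu_{2,E}$ and the flat measure on the simplex to strip away the temperature-dependent density, then parametrize $\Sigma_{2,E}$ by a single coordinate so that the powers of $E$ cancel and the problem becomes the scale-invariant weighted Poincar\'e inequality
\[
\int_0^1 (f-\bar f)^2\,dt \ \le\ C_*\int_0^1 t(1-t)\,f'(t)^2\,dt .
\]
Where you differ from the paper is only in the final step: you establish this inequality by recognizing $\int_0^1 t(1-t)f'^2\,dt$ as the Dirichlet form of the Jacobi operator $\frac{d}{dt}\bigl(t(1-t)\,\frac{d}{dt}\bigr)$ and reading off $C_*=1/2$ from the spectral gap (shifted Legendre polynomials, eigenvalues $-n(n+1)$), whereas the paper proves it directly by writing $\int_0^1 f^2\,dt=\int_0^1\int_0^t\bigl(\int_s^t f'(r)\,dr\bigr)^2ds\,dt$ (using $\int f=0$), applying Cauchy--Schwarz, and interchanging the order of integration twice to land on $\frac12\int_0^1 r(1-r)f'(r)^2\,dr$. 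The two arguments give the same sharp constant $1/2$; the paper's version is completely self-contained and elementary, while yours is conceptually transparent (it explains why the boundary degeneracy of the weight $x_1x_2$ is exactly compatible with a uniform-in-$E$ gap) but leans on classical Sturm--Liouville theory.
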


\begin{proof}
Since the both sides of (\ref{eq:twosg}) do not change if we replace $g$ with $g+a$ for any constant $a$, it is sufficient to show that the inequality holds for every smooth function $g:\Sigma_{2,E} \to \R$ satisfying $E_{\lambda_{2,E}} [ g ]=0$. In particular, since $E_{\mu_{2,E}} [(D_{1,2}g)^2] \le E_{\mu_{2,E}} [g^2]$, it is sufficient to show that
\begin{equation*} 
      E_{\mu_{2,E}} [g^2]  \le C E_{\mu_{2,E}} [ x_1 x_2 (\partial_{x_1}g-\partial_{x_2}g)^2 ].
 \end{equation*}
Note that for any positive function $f:\Sigma_{2,E} \to \R_+$ and for
any $E>0$,  
\begin{equation*}
(\frac{\delta_-} {\delta_+}) ^4 E_{\mu_{2,E}} [ f ] \le E_{\lambda_{2,E}} [ f ]  \le ( \frac{\delta_+} {\delta_-})^4 E_{\mu_{2,E}} [ f ].
 \end{equation*}
In fact, 
\begin{equation*}
 E_{\mu_{2,E}} [ f ]= \frac{1}{\Xi_E} \int^1_0  f (tE, (1-t)E )h(tE) h((1-t)E) dt
 \end{equation*}
where $\Xi_E=\int^1_0 h(tE) h((1-t)E) dt$. Then, since $\frac{\delta_-} {\delta_+} \le h(x) \le \frac{\delta_+} {\delta_-}$, the above estimate holds.
Now, all we have to show is that, there exists a constant $C$ such that  
 \begin{equation}\label{eq:2pointsg}
      E_{\lambda_{2,E}} [g^2]  \le C E_{\lambda_{2,E}} [ x_1 x_2
      (\partial_{x_1}g-\partial_{x_2}g)^2 ] 
 \end{equation}
 for every $E>0$ and every smooth function $g:\Sigma_{2,E} \to \R$ satisfying $E_{\lambda_{2,E}} [ g ]=0$. 
 By the definition of $\lambda_{2,E}$, the inequality (\ref{eq:2pointsg}) is rewritten as
  \begin{equation*}
        \int_0^E g(t)^2 dt \le  C \int_0^E t(E-t) g'(t)^2 dt
 \end{equation*}
 and by a suitable change of variable, the problem is reduced to the case with $E=1$.
Applying Schwarz inequality and changing the order of integration repeatedly, we have
   \begin{align*}
        \int_0^1 & g(t)^2 dt =  \int^1_0 \int_0^t \{ \int^t_s g'(r) dr \}^2 ds dt   \le  \int^1_0 \int_0^t  (t-s)  \int^t_s g'(r)^2 dr  ds dt \\
        %=  \frac{1}{2} \int^1_0 \int_0^1 \{g(t)-g(s)\}^2 ds dt =  \int^1_0 \int_0^t \{g(t)-g(s)\}^2 ds dt \\
        %&= \int^1_0 \int_0^t  g'(r)^2 \int^r_0 (t-s)  ds  dr dt = \int^1_0 \int_0^t  g'(r)^2  [ts-\frac{s^2}{2}]^r_0  dr dt \\
        & = \int^1_0 \int_0^t  (tr-\frac{r^2}{2})   g'(r)^2  dr dt = \frac{1}{2} \int^1_0   g'(r)^2  r(1-r) dr.
%        &= \int^1_0   g'(r)^2  [\frac{t^2r}{2}-\frac{r^2t}{2}]^1_r dr 
  \end{align*} 
 
 \end{proof} 

\begin{lemma}
There exists a positive constant $C$ such that
\begin{equation*}
  \begin{split}
     E_{\mu_{L,E}}[(D_{i,i+1}g)^2] \le C E_{\mu_{L,E}}[ x_ix_{i+1} (\partial_{x_i}g-\partial_{x_{i+1}}g)^2]
    \end{split}
 \end{equation*}
 for every positive integer $L$, $i =1, \dots, L-1$, and every smooth function $g:\Sigma_{L,E} \to \R$.
 \end{lemma}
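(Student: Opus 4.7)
The plan is to reduce this $L$-particle estimate to the two-particle estimate of Lemma \ref{lem:2state} by conditioning on the $\sigma$-algebra $\F_{i,i+1}$ generated by the variables $\{x_k\}_{k \neq i,i+1}$. Since both sides of the desired inequality involve only the pair $(x_i,x_{i+1})$ (on the left via $D_{i,i+1}g = E_{i,i+1}g - g$, on the right via the local gradient), the inequality should localize on the hyperplane where all other coordinates are fixed.

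First I would use the tower property to write
\[
E_{\mu_{L,E}}[(D_{i,i+1}g)^2] \;=\; E_{\mu_{L,E}}\!\left[\,E_{\mu_{L,E}}[(D_{i,i+1}g)^2 \mid \F_{i,i+1}]\,\right],
\]
and similarly for the right-hand side. Next I would identify the conditional distribution of $(x_i,x_{i+1})$ given $\F_{i,i+1}$. Since $\mu_{L,E}$ is the conditional distribution of the product $\prod_{k} h_1(x_k)\,dx_k$ on $\Sigma_{L,E}$, once we fix $\{x_k\}_{k\neq i,i+1}$ the sum $s := x_i + x_{i+1} = LE - \sum_{k\neq i,i+1} x_k$ is determined, and the conditional law on $\{(x_i,x_{i+1})\in\R_+^2: x_i+x_{i+1}=s\}$ is exactly $\mu_{2,s/2}$, i.e.\ proportional to $h_1(x_i)h_1(x_{i+1})$ restricted to that line.

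Let $\tilde g(x_i,x_{i+1})$ denote $g$ with the other coordinates frozen, viewed as a smooth function on $\Sigma_{2,s/2}$. Then the conditional expectation $E_{i,i+1}g$ coincides with the two-variable conditional expectation $E_{\mu_{2,s/2}}[\tilde g]$, so
\[
E_{\mu_{L,E}}[(D_{i,i+1}g)^2 \mid \F_{i,i+1}] \;=\; E_{\mu_{2,s/2}}\bigl[(D_{1,2}\tilde g)^2\bigr].
\]
Applying Lemma \ref{lem:2state} to $\tilde g$ (with the constant $C$ uniform in $s>0$) gives
\[
E_{\mu_{2,s/2}}\bigl[(D_{1,2}\tilde g)^2\bigr] \;\le\; C\, E_{\mu_{2,s/2}}\!\left[x_1 x_2 (\partial_{x_1}\tilde g - \partial_{x_2}\tilde g)^2\right].
\]
The right-hand side is precisely the conditional expectation $E_{\mu_{L,E}}\!\left[x_i x_{i+1}(\partial_{x_i}g - \partial_{x_{i+1}}g)^2 \mid \F_{i,i+1}\right]$. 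Taking the outer expectation against $\mu_{L,E}$ yields the claim.

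No serious obstacle is expected here; the whole content of the statement is already in Lemma \ref{lem:2state}, and the only thing to verify carefully is the identification of the conditional law on $\F_{i,i+1}$ with $\mu_{2,s/2}$, which follows from the product structure of $\prod_k h_1(x_k)\,dx_k$ and the definition of $\mu_{L,E}$ as its conditional on $\Sigma_{L,E}$. The uniformity of the constant $C$ (independent of $L$, of $i$, and of $E$) is inherited directly from the uniformity in $E$ of the constant in Lemma \ref{lem:2state}.
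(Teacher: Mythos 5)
Your proof is correct and takes exactly the same route as the paper: condition on $\F_{i,i+1}$, observe that the conditional law of the pair $(x_i,x_{i+1})$ is the two-particle microcanonical measure, apply Lemma \ref{lem:2state}, and take the outer expectation. The paper states this more tersely but the argument is identical.
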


\begin{proof}
By Lemma \ref{lem:2state}, 
\begin{equation*}
E_{\mu_{L,E}}[(D_{i,i+1}g)^2 | \F_{i,i+1}] \le C E_{\mu_{L,E}}[ x_ix_{i+1} (\partial_{x_i}g-\partial_{x_{i+1}}g)^2 | \F_{i,i+1}]
\end{equation*}
holds. Then, by taking the expectation, we complete the proof.
\end{proof}

\section{Lie Algebra}
\label{sec:lie-algebra}

We prove here that $\text{Lie}\{X_i, Y_{i,i+1}, i=1, \dots, N\}$ generates
the all tangent space of $\Sigma_{N,E}$. We have used this property in
section \ref{sec:closedformch2}, for the characterization of the
finite dimensional closed forms.

We have
\begin{equation*}
  [X_{j+1}, Y_{j,j+1}] = V''(r_{j+1}) Z_{j,j+1}
\end{equation*}
where
\begin{equation*}
  Z_{i,j} = p_i \partial_{p_j} - p_j \partial_{p_i} 
\end{equation*}
Since $[ Z_{j,j+1}, Z_{j+1,j+2}] = Z_{j,j+2}$ and $V''(r) > \delta
>0$,
we have that $Z_{i,j} \in \text{Lie}\{X_i, Y_{i,i+1}, i=1, \dots, N\}$ for
any $i$ and $j$.

On the other hand $[Z_{j,i}, X_j] = Y_{i,j}$, and we have enough
vector fields to generate the all tangent space.

\begin{remark}
By the above argument, it is obvious that $\text{Lie}\{ \{ X_i, i=1, \dots, N \} \{ Y_{i,i+1}, i=1, \dots, N-1 \} \}$ also generates the all tangent space of $\Sigma_{N,E}$.
\end{remark}

\end{document}